\newcommand{\capacity}{\operatorname{cap}}
\definecolor{light-gray}{gray}{0.95}
\newcommand{\code}[1]{\colorbox{light-gray}{\texttt{#1}}}
\newif\ifnotes\notestrue
\definecolor{mygrey}{gray}{0.50}
\newcommand{\notename}[2]{{\textcolor{mygrey}{\footnotesi\bm{z}e{\bf (#1:} {#2}{\bf ) }}}}
\newcommand{\pnote}[1]{{\endnote{#1}}}
\newcommand{\notename}[2]{{}}
\newcommand{\pnote}[1]{}
\newcommand{\secondtitle}[1]{
   \begin{center}
      \LARGE #1
   \end{center}
   \vspace{1em}  
}
\tikzset{>=latex} 
\colorlet{myred}{red!80!black}
\colorlet{myblue}{blue!80!black}
\colorlet{mygreen}{green!60!black}
\colorlet{mydarkred}{myred!40!black}
\colorlet{mydarkblue}{myblue!40!black}
\colorlet{mydarkgreen}{mygreen!40!black}
\tikzstyle{node}=[very thick,circle,draw=myblue,minimum size=22,inner sep=0.5,outer sep=0.6]
\tikzstyle{connect}=[->,thick,mydarkblue,shorten >=1]
\tikzset{ 
  node 1/.style={node,mydarkgreen,draw=mygreen,fill=mygreen!25},
  node 2/.style={node,mydarkblue,draw=myblue,fill=myblue!20},
  node 3/.style={node,mydarkred,draw=myred,fill=myred!20},
}
\pgfplotsset{compat=1.18}
\begin{document}
\title{Certifying almost all quantum states with\\ few single-qubit measurements}

\author[1,2,3]{Hsin-Yuan Huang}
\author[1,4]{John Preskill}
\author[1]{Mehdi Soleimanifar}
\affil[1]{California Institute of Technology}
\affil[2]{Google Quantum AI}
\affil[3]{Massachusetts Institute of Technology}
\affil[4]{AWS Center for Quantum Computing}

\date{\today}

\maketitle

\begin{abstract}
Certifying that an $n$-qubit state~$\rho$ synthesized in the lab is close to the target state $\ket{\psi}$ is a fundamental task in quantum information science.
However, existing rigorous protocols either require deep quantum circuits or exponentially many single-qubit measurements.
In this work, we prove that almost all $n$-qubit target states $\ket{\psi}$, including those with exponential circuit complexity, can be certified from only $\mathcal{O}(n^2)$ single-qubit measurements.
This result is established by a new technique that relates certification to the mixing time of a random walk.
Our protocol has applications for benchmarking quantum systems, for optimizing quantum circuits to generate a desired target state, and for learning and verifying neural networks, tensor networks, and various other representations of quantum states using only single-qubit measurements.
We show that such verified representations can be used to efficiently predict highly non-local properties of~$\rho$ that would otherwise require an exponential number of measurements on~$\rho$.
We demonstrate these applications in numerical experiments with up to $120$ qubits, and observe advantage over existing methods such as cross-entropy benchmarking (XEB).
\end{abstract}

\thispagestyle{empty}
\clearpage
\tableofcontents
\thispagestyle{empty}
\clearpage
\pagenumbering{arabic} 
\setcounter{page}{1}

\section{Introduction}\label{sec:intro}

Our empirical knowledge of a quantum system often relies on statistical comparisons with a target model of its state.
This, for instance, can be achieved by certifying that an $n$-qubit quantum state $\rho$, which is synthesized in the lab and can be measured experimentally, is close to a target quantum state $\ket{\psi}$.
For many relevant applications, the description of this target state $\ket{\psi}$ is provided to us through a query model $\Psi$.
When queried with $x \in \{0,1\}^n$, the model $\Psi$ returns the complex amplitude $\braket{x}{\psi}$, up to an unimportant overall normalization factor. Here $\{\ket{x}: x \in \{0,1\}^n\}$ denotes a chosen orthonormal basis for the $n$ qubits.

For systems of small size $n$, models with query access to the amplitudes may simply be obtained by storing the entire description of the state $\ket{\psi}$ on a classical memory. 
For larger systems, powerful models, such as neural network quantum states or tensor networks, can be employed instead.  
These models have been the focus of many past and more recent works \cite{carleo2017solving, VerstraeteTruncationMPS, Shi2006TreeTN, pfau2020, Hibat2020RNN, carrasquilla2019reconstructing, melko2019restricted, torlai2018neural, iouchtchenko2023neural, zhao2023empirical, Torlai2020Precise} that explore their promising practical performance. 
There are also many quantum states, including some highly entangled ones, which have simple classical descriptions and are therefore natively equipped with a query access model; examples include phase states, coherent Gibbs states, GHZ states, and W states.

We can query the model $\Psi$ to obtain information about the target state $\ket{\psi}$.
We can also perform measurements on independent copies to gather information about the lab state $\rho$.
\emph{Quantum state certification} is a 
task that uses the results of the measurements to certify that the~fidelity~$\bra{\psi}\rho\ket{\psi}$ is sufficiently close to one.
To be most effective, this certification procedure should demand minimal experimental and computational resources. 
Therefore we aim for the measurements on the state $\rho$ to consist of simple single-qubit measurements.
These measurements are compatible with a wide range of experimental platforms and typically yield data that is easier to analyze computationally.
The primary question that we seek to address is: \emph{how many copies of the state $\rho$ do we need to measure in order to accurately determine if the fidelity $\bra{\psi}\rho\ket{\psi}$ is close to one or not?}

Despite the considerable previous research, it has remained open whether a certification procedure exists that (a) relies solely on a few single-qubit measurements on separate copies of a general $n$-qubit state $\rho$ and (b) can validate the overlap with a generic highly-entangled target state $\ket{\psi}$.
On the face of it, demanding both features (a) and (b) may seem contradictory.
After all, the relevant information in an entangled state is distributed non-locally among its constituent qubits, and it might appear that single-qubit measurements on $\rho$ lack the capacity to probe global properties such as the fidelity $\bra{\psi}\rho\ket{\psi}$ with a highly-entangled target state $\ket{\psi}$.
Indeed prior results either require measurements involving deep quantum circuits \cite{huang2020predicting, odonnell2016tomography, odonnell2017tomography2, haah2017sample}, need exponentially many single-qubit measurements \cite{Flammia2011DirectFidelity, Poulin2011Characterization, aolita2015reliable}, are limited to special families of target states \cite{gluza2018fidelity, takeuchi2018verification}, or lack rigorous guarantee for certification \cite{arute2019quantumsupremacy, choi2023preparing, cotler2023emergent}.
In the next section, we present our main result --- a certification procedure, featuring a surrogate for the fidelity which we call the \emph{shadow overlap}, that can certify almost all quantum states with few single-qubit measurements, achieving both (a) and (b).

\section{Our main results}\label{sec:mainResult}
In this work, we devise a simple procedure (shown in \fig{verificationprotocol} and introduced in detail in \secref{verificationProcedure}) for certifying the overlap $\bra{\psi}\rho\ket{\psi}$ between a (possibly mixed) state $\rho$ and a target pure state $\ket{\psi}$ over $n$ qubits.
This certification procedure proceeds by performing single-qubit Pauli measurements on each qubit of the state $\rho$ and outputs an estimate $\hat{\bm{\omega}}$. 
The expectation $\E[\hat{\bm{\omega}}]$, which we refer to as the \emph{shadow overlap}, satisfies the following relation with the fidelity $\bra{\psi} \rho \ket{\psi}$:
\begin{align}
 \E[\hat{\bm{\omega}}] \geq 1 - \epsilon & \,\,\,\text{implies}\,\,\, \bra{\psi}\rho\ket{\psi} \geq 1 - \tau\epsilon,\label{eq:localToGlobalFidelityMixing}\\
 \bra{\psi}\rho\ket{\psi} \geq 1 - \epsilon & \,\,\,\text{implies}\,\,\, \E[\hat{\bm{\omega}}] \geq 1 - \epsilon.\label{eq:GlobalToLocalFidelity}
\end{align}
Here, the parameter $\tau$ corresponds to the relaxation time of a Markov chain for sampling from the measurement distribution $\pi(x) : =|\braket{x}{\psi}|^2$ induced by the state $\ket{\psi}$.
This Markov chain is introduced more formally in \secref{verificationProcedure}.
When we neglect $\log(n)$ factors, the relaxation time of a Markov chain is bounded by its mixing time and relates to the number of random bit flips needed to sample from the stationary distribution.
When the relaxation time $\tau$ is bounded, the shadow overlap $\E[\hat{\bm{\omega}}]$ offers a good surrogate for the fidelity $\bra{\psi} \rho \ket{\psi}$.
Stated equivalently in the following theorem, we show that one can efficiently certify the fidelity $\bra{\psi}\rho\ket{\psi}$ for any target state $\ket{\psi}$ with a polynomial relaxation (or mixing) time.
The theorem is proved in \appref{Performanceguarantees}.

\begin{theorem}[Certification of quantum states, informal]\label{thm:verificationFastMixing}
    Given an $n$-qubit target pure state $\ket{\psi}$ with a relaxation time $\tau \geq 1$.
    There is a certification procedure that performs single-qubit Pauli measurements on $T = \mathcal{O}(\tau^2 / \epsilon^2)$ samples of an unknown $n$-qubit state $\rho$ and, with high probability, outputs \textsc{Failed} if the fidelity is low $\bra{\psi}\rho\ket{\psi} < 1 - \epsilon$ and outputs \textsc{Certified} if the fidelity is high $\bra{\psi}\rho\ket{\psi} \geq 1 - \frac{\epsilon}{2 \tau}$.

    When one allows more general single-qubit measurements on the unknown $n$-qubit state $\rho$, the sample complexity can be improved to $T = \mathcal{O}(\tau / \epsilon)$.
\end{theorem}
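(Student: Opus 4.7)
The plan is to leverage the two relations \eqref{eq:localToGlobalFidelityMixing} and \eqref{eq:GlobalToLocalFidelity} to reduce quantum state certification to the classical task of estimating the mean of the shadow overlap estimator $\hat{\bm{\omega}}$, and then apply concentration inequalities.

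First, I would set up the decision rule. In the accept regime $\bra{\psi}\rho\ket{\psi} \geq 1 - \epsilon/(2\tau)$, relation \eqref{eq:GlobalToLocalFidelity} gives $\E[\hat{\bm{\omega}}] \geq 1 - \epsilon/(2\tau)$. In the reject regime $\bra{\psi}\rho\ket{\psi} < 1 - \epsilon$, the contrapositive of \eqref{eq:localToGlobalFidelityMixing}, applied with $\epsilon \mapsto \epsilon/\tau$, yields $\E[\hat{\bm{\omega}}] < 1 - \epsilon/\tau$. The two regimes are therefore separated by a gap of $\epsilon/(2\tau)$ in $\E[\hat{\bm{\omega}}]$, so the certifier should output \textsc{Certified} exactly when the empirical mean $\bar{\omega}$ of $T$ independent samples exceeds the midpoint threshold $1 - 3\epsilon/(4\tau)$.

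Next, I would bound $T$ so that $|\bar{\omega} - \E[\hat{\bm{\omega}}]|$ lies below half the gap, namely $\epsilon/(4\tau)$, with constant failure probability. Each single-qubit Pauli sample $\hat{\bm{\omega}}$ is bounded in $[0,1]$, so Hoeffding's inequality directly yields $T = \mathcal{O}(\tau^2/\epsilon^2)$, which gives the first claim of the theorem.

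For the improved $T = \mathcal{O}(\tau/\epsilon)$ bound under more general single-qubit measurements, the strategy is to design an unbiased estimator $\hat{\bm{\omega}} \in \{0,1\}$ whose Bernoulli success probability equals the shadow overlap, so that the variance $\mathrm{Var}[\hat{\bm{\omega}}] = \E[\hat{\bm{\omega}}](1 - \E[\hat{\bm{\omega}}])$ contracts with the infidelity. One then counts ``failures'' $N = \sum_{i=1}^T (1 - \hat{\bm{\omega}}_i) \sim \mathrm{Binomial}(T, q)$ with $q := 1 - \E[\hat{\bm{\omega}}]$, and observes that the two regimes correspond to $q \leq \epsilon/(2\tau)$ versus $q > \epsilon/\tau$ — a constant multiplicative gap. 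A multiplicative Chernoff bound then ensures that $T = \mathcal{O}(\tau/\epsilon)$ samples suffice for $N$ to concentrate around $Tq$ sharply enough to separate the two cases. The main obstacle, and the step I would work out most carefully, is constructing an adaptive single-qubit POVM whose outcome realizes this $\{0,1\}$-valued estimator: the natural candidate uses the query model $\Psi$ to compute the local conditional distribution at the queried qubit and measures in the corresponding basis, so that on the ideal target state $\ket{\psi}$ each shot returns ``success'' deterministically and the variance vanishes with the infidelity.
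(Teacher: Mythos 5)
Your approach is essentially the same as the paper's: combine the two fidelity--shadow-overlap relations \eqref{eq:localToGlobalFidelityMixing}--\eqref{eq:GlobalToLocalFidelity} (proved via the spectral gap of the observable $L$ in \thmref{expectationEqualsOperatoL}) with a concentration argument, thresholding the empirical mean at $1 - 3\epsilon/(4\tau)$, and then switch to a $\{0,1\}$-valued Bernoulli estimator obtained by measuring qubit $\bm{k}$ in the basis $\{\ketbra{\Psi_{\bm{k,z}}}{\Psi_{\bm{k,z}}}, \iden - \ketbra{\Psi_{\bm{k,z}}}{\Psi_{\bm{k,z}}}\}$ and applying a multiplicative Chernoff bound to win the $\tau/\epsilon$ factor. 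This matches \thmref{verificationFastMixing-formal} and \thmref{sampleComplexityImprovedInformal}. One small but concrete slip: you claim each Pauli-measurement sample $\bm{\omega}$ lies in $[0,1]$, but the classical-shadow estimator $\bm{\omega} = \bra{\Psi_{\bm{k,z}}}(3\ketbra{\bm{s}}{\bm{s}} - \iden)\ket{\Psi_{\bm{k,z}}}$ is \emph{not} bounded in $[0,1]$ --- the operator $3\ketbra{\bm{s}}{\bm{s}} - \iden$ has eigenvalues $2$ and $-1$, so $\bm{\omega} \in [-1,2]$ (only its expectation $\E[\bm{\omega}] = \Tr(L\rho) \in [0,1]$). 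This is exactly why the paper tracks the bound $|\bm{\omega}| \leq 2^{2m-1}$ in \thmref{sampleComplexity}. Fortunately the range is still an $\mathcal{O}(1)$ interval, so Hoeffding still yields $T = \mathcal{O}(\tau^2/\epsilon^2)$ with merely a worse constant, and your overall conclusion is unaffected; but be careful not to conflate the boundedness of the estimator with that of its mean.
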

The certification procedure in \thmref{verificationFastMixing} uses $\mathcal{O}(T)$ queries to the model $\Psi$ of the target state~$\ket{\psi}$.
As we will explain soon, when $\tau \leq \poly(n)$ and the model $\Psi$ can be queried efficiently, this certification procedure is also \emph{computationally efficient}, an important feature made available by the single-qubit nature of our measurements.

Although not all quantum states in a given basis exhibit a polynomial relaxation time, our analysis based on an intricate study of random walks on the Boolean hypercube $\{0,1\}^n$ proves that the relaxation time $\tau$ is bounded by $\tau \leq \tau^* = \mathcal{O}(n^2)$ for almost all $n$-qubit pure states.
This result is established in \appref{HaarSpectralGap}.
Together with \thmref{verificationFastMixing}, we prove that one can certify almost all quantum states, including highly entangled states with exponential circuit complexity, from a few single-qubit measurements.

\begin{theorem}[Certification of almost all quantum states, informal]\label{thm:HaarSampleComplexity}
    For all except an exponentially small $2^{- \Omega(n)}$ fraction of $n$-qubit target pure states $\ket{\psi}$,
    there is a certification procedure that performs single-qubit measurements on $\mathcal{O}(n^2 / \epsilon)$ samples of an unknown state $\rho$ and, with high probability, outputs \textsc{Failed} if the fidelity is low $\bra{\psi}\rho\ket{\psi} < 1 - \epsilon$ and outputs \textsc{Certified} if the fidelity is high $\bra{\psi}\rho\ket{\psi} \geq 1 - \frac{\epsilon}{2 \tau^*}$, where~$\tau^* = \mathcal{O}(n^2)$.
\end{theorem}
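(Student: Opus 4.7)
The plan is to derive Theorem~\ref{thm:HaarSampleComplexity} by combining Theorem~\ref{thm:verificationFastMixing} with a typical-case upper bound $\tau \le \tau^* = \mathcal{O}(n^2)$ on the relaxation time associated with a Haar-random target state $\ket{\psi}$. Once the bound on $\tau^*$ is in hand, the conclusion is immediate: substituting $\tau = \tau^*$ into the general single-qubit branch of Theorem~\ref{thm:verificationFastMixing} gives sample complexity $\mathcal{O}(\tau^*/\epsilon) = \mathcal{O}(n^2/\epsilon)$, and the failure/certification fidelity thresholds become exactly those stated. So essentially all of the work lies in the relaxation-time bound for typical $\ket{\psi}$.

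First I would write down explicitly the Markov chain $P_\psi$ underlying Theorem~\ref{thm:verificationFastMixing}. It is a Metropolis-type chain on the Boolean hypercube $\{0,1\}^n$: propose a uniformly random single-bit flip $x \to y = x \oplus e_i$ and accept with a probability depending on the amplitude ratio $|\braket{y}{\psi}|^2 / |\braket{x}{\psi}|^2$, so that the stationary distribution is $\pi(x) = |\braket{x}{\psi}|^2$. The task reduces to lower bounding the spectral gap of $P_\psi$.

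For a Haar-random state $\ket{\psi}$ the amplitudes $\braket{x}{\psi}$ are jointly distributed as a complex Gaussian vector normalized to the unit sphere, so each $|\braket{x}{\psi}|^2$ concentrates sharply around $2^{-n}$. The second step is to use Levy's lemma together with a union bound over the $2^n$ basis strings to show that, except on an exponentially small set of pure states, every acceptance ratio is bounded, e.g.\ $|\braket{y}{\psi}|^2 / |\braket{x}{\psi}|^2 \in [1/C, C]$ for $C$ at most polynomial in $n$. A concentration gap of $\Omega(n)$ in the exponent is enough to absorb the $2^n$-fold union bound and still leave a $1 - 2^{-\Omega(n)}$ fraction of good states. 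Under this good event, $P_\psi$ is a bounded-ratio perturbation of the lazy simple random walk $P_0$ on the hypercube, whose spectral gap is well-known to be $\Theta(1/n)$.

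The third and main technical step is a Dirichlet-form comparison, in the spirit of Diaconis and Saloff-Coste, that transfers the spectral gap from $P_0$ to $P_\psi$. Since any two hypercube vertices are joined by canonical paths of length at most $n$ and the amplitude ratios along each edge are controlled by the previous step, the comparison loses only a polynomial factor in $n$, yielding $1/\mathrm{gap}(P_\psi) = \mathcal{O}(n^2)$. The main obstacle I anticipate is controlling this loss tightly enough: getting down to $\mathcal{O}(n^2)$ rather than a larger power of $n$ requires sharp simultaneous $\ell^\infty$ control on all $2^n$ amplitude ratios, not just a fixed-$x$ bound, and it is precisely in calibrating the Haar concentration at that level that one has to be careful while keeping the exceptional set of size $2^{-\Omega(n)}$.
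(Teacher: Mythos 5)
The overall reduction is right: Theorem~\ref{thm:HaarSampleComplexity} does follow from Theorem~\ref{thm:verificationFastMixing} once you establish $\tau \leq \tau^* = \mathcal{O}(n^2)$ for all but a $2^{-\Omega(n)}$ fraction of states, and that is indeed how the paper proceeds (the mixing-time bound is \thmref{multiCommodityFlowBound} in \appref{HaarSpectralGap}). The problem is your second step, which the rest of the argument rests on and which is false.

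For a Haar-random $\ket{\psi}$, the variable $2^n|\braket{x}{\psi}|^2$ does \emph{not} concentrate sharply around $1$; in the $n\to\infty$ limit it is exponentially distributed (Porter–Thomas), which is exactly how \appref{HaarSpectralGap} models it. Levy's lemma plus a union bound over $2^n$ strings fails at the scale you need: the Lipschitz constant of $\psi\mapsto|\braket{x}{\psi}|^2$ is $\Theta(2^{-n/2})$ and the mean is $2^{-n}$, so the deviation probability at scale $t\sim 2^{-n}$ is $\Theta(1)$. Concretely, $\Pr[2^n|\braket{x}{\psi}|^2 \le 1/C]\approx 1/C$, so there are typically about $2^n/C$ vertices with weight below $C^{-1}2^{-n}$; to beat a $2^n$-fold union bound you would need $C\geq 2^n$, which makes the ``bounded ratio'' $C$ exponential, not polynomial. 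In a typical Haar state the smallest basis amplitude is about $2^{-2n}$ while the largest is about $n2^{-n}$, so nearest-neighbor acceptance ratios are not uniformly polynomially bounded, and a direct Dirichlet-form comparison against the lazy hypercube walk loses an exponential factor. This is not a calibration issue that more careful Haar concentration repairs; the $\ell^\infty$ control you want simply does not hold. The paper's resolution is different in kind: it declares vertices with too-small or too-large $\pi(x)$ ``bad'' (a constant fraction of vertices), proves the \emph{local escape property} (\lemref{localEscapeProperty}) holds with probability $1-2^{-\Omega(n)}$, and then constructs a multi-commodity flow on the hypercube that routes around bad vertices using short detours, bounding the resistance $R(f)\leq \mathcal{O}(n^2)$ via \thmref{multiCommodityFlowBound}. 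Even the simpler single-canonical-path version of this idea, which also avoids bad vertices, only yields $\mathcal{O}(n^4\log n)$ (\thmref{upperBoundCongestion}); the genuine multi-commodity flow and the careful level/cycle bookkeeping of \appref{tighterAnalysis} are what bring it down to $\mathcal{O}(n^2)$.
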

While \thmref{HaarSampleComplexity} establishes an efficient certification procedure for generic quantum states, we can also prove a polynomial bound $\tau  \leq \poly(n)$ for a variety of structured quantum states, rendering our certification scheme efficient for such quantum states. 
Quantum phase states and the GHZ-like states both have a relaxation time $\tau = \mathcal{O}(n)$ as shown in \appref{Quantum phase states} and \appref{GHZ}. 
We also prove in \appref{gappedGroundStates} that $\tau = \mathcal{O}(n^{\kappa})$ for a family of quantum states such that the probability distribution $|\braket{x}{\psi}|^2$ matches the distribution for the ground state of a gapped sign-free $\kappa$-local Hamiltonian but the phases~$\tfrac{\braket{x}{\psi}}{|\braket{x}{\psi}|}$ can be arbitrary.

Even though exhibiting fast mixing can be considered an assumption concerning the target state, in \appref{enforcingMixing}, we also present a scheme that could be used to enforce this assumption by modifying a given query model $\Psi$ of a target state $\ket{\psi}$.
This scheme runs efficiently and has the following feature:
If the model $\Psi$ satisfies a sufficient condition for fast mixing, known as the \emph{local escape property} defined formally in \defref{localEscape}, then it remains unchanged by the enforcement procedure. 
If not, a new model $\Psi'$ is induced which satisfies the local escape property and exhibits fast mixing.
However, this comes with the potential cost of $\Psi'$ being very different from the original model $\Psi$.
We show that local escape property holds for almost all quantum states, which is the essential ingredient in our proof of \thmref{HaarSampleComplexity}.

\begin{figure}[t!]
    \centering
\includegraphics[width=1\textwidth]{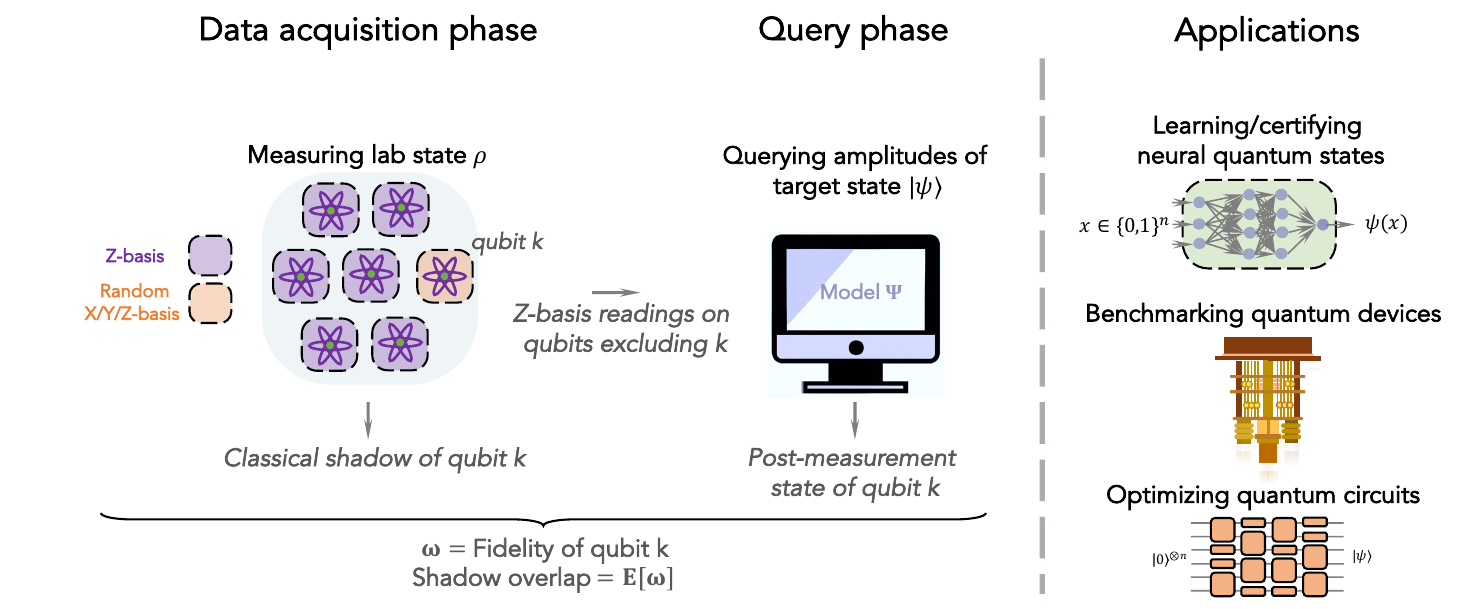}
    \caption{\textbf{Estimating the shadow overlap}. \textit{Data collection phase:} For each copy of the lab state $\rho$, a random qubit $\bm{k}$ is selected. 
    All qubits except $\bm{k}$ are measured in the $Z$ basis.
    Qubit $\bm{k}$ is measured in a random $X$, $Y$, or $Z$ basis to obtain its classical shadow.
    \textit{Query phase:}
   By querying the amplitudes of the target state $\ket{\psi}$ twice, the ideal  post-measurement state $\ket{\psi_{\bm{k}, \bm{z}}}$ of qubit $\bm{k}$ is found. 
   Using the classical shadow of qubit $\bm{k}$ from the lab state, its overlap $\bm{\omega}$ with $\ket{\psi_{\bm{k}, \bm{z}}}$ is evaluated. 
   Finally, the shadow overlap  $\E[\bm{\omega}]$ is estimated by averaging $\bm{\omega}$ across all copies. }
    \label{fig:verificationprotocol}
\end{figure}

This certification procedure has much broader applications that we also investigate in this work.
Here we briefly summarize these results, postponing a more detailed discussion until \secref{applications}.

\vspace{0.5em}

\noindent \textbf{ML tomography of quantum states:}
In \secref{LearningMLModels}, we demonstrate that the shadow overlap provides a theoretically backed yet practically feasible procedure for learning a machine learning model of a quantum state or certifying the fidelity of an already trained model. 

\vspace{0.5em}

\noindent \textbf{Near-term benchmarking of quantum devices:}
We show in \secref{BenchmarkingQuantumDevices} that our certification procedure offers a flexible method for benchmarking noisy quantum devices with limited measurements and gate controls, producing results closely mirroring the fidelity.

\vspace{0.5em}

\noindent \textbf{Optimizing quantum circuits for state preparation:} In \secref{Optimizingcircuits}, we show that the shadow overlap exhibits favorable properties for training quantum circuits.
Unlike fidelity, which faces the barren plateau phenomenon \cite{mcclean2018barren, cerezo2021cost}, shadow overlap acts similarly to the Hamming distance when the target state has no global correlations and its amplitudes are spread across the hypercube $\{0,1\}^n$. That is, the shadow overlap improves steadily as we increase the number of state preparation steps; in contrast, the fidelity remains stuck near zero until the preparation circuit is sufficiently large, and then increases abruptly.

\vspace{0.5em}

Before proceeding with a detailed explanation of each of these results, we introduce our certification scheme, the overall idea of why it works, and its high-level analysis.

\section{Certification procedure}\label{sec:verificationProcedure}

In its simplest form, depicted in \fig{verificationprotocol}, our certification procedure consists of two steps.
In the first step, we acquire a single copy of the state $\rho$, 
and randomly choose one of its $n$ qubits, denoted by $\bm{k}\in \{1,\dots,n\}$. 
We then measure all the qubits of $\rho$ except for qubit $\bm{k}$ in the Pauli $Z$-basis. 
We denote the measurement outcomes collectively by $\bm{z}\in\{0,1\}^{n-1}$.
Following these measurements, we select a Pauli $X$, $Y$, or $Z$-basis measurement uniformly at random and measure the $\bm{k}$th qubit of $\rho$ in that basis. 
The post-measurement state of the $\bm{k}$th qubit is denoted by $\ket{\bm{s}}$.

Having obtained measurement results from the state $\rho$, the certification protocol then moves to the second phase where we query a model $\Psi$ that represents the state $\ket{\psi} = \sum_{x \in \{0,1\}^n} \psi(x) \ket{x}$ by giving us access to the amplitudes $\psi(x)$ of the state in the computational basis.
That is, upon receiving an input string $x \in \{0,1\}^n$, the model $\Psi$ returns a complex number $\Psi(x)$. The numbers $\{\Psi(x)\}$ are assumed to be proportional to the amplitudes $\{\psi(x)\}$, but we do not require the outputs to be normalized.
This extends the applicability of our results because computing the normalization constant $\sum_{x\in \{0,1\}^n} |\Psi(x)|^2$ is in general intractable.

In the query phase of the certification procedure, the model $\Psi$ is queried twice with binary strings $\bm{z}^{(0)}$, $\bm{z}^{(1)}$.
Here the string input $\bm{z}^{(a)}$ equals $a\in\{0,1\}$ on its $\bm{k}$th bit and matches the measurement outcome $\bm{z}\in \{0,1\}^{n-1}$ from the first step of the protocol on the remaining $n{-}1$ bits.
We use these queries to compute the single-qubit state $\ket{\Psi_{\bm{k,z}}}$ defined by 
\begin{align}
\ket{\Psi_{\bm{k,z}}}:=\frac{\Psi(\bm{z}^{(0)})\cdot \ket{0}+\Psi(\bm{z}^{(1)})\cdot \ket{1}}{\sqrt{|\Psi(\bm{z}^{(0)})|^2+|\Psi(\bm{z}^{(1)})|^2}}. \label{eq:querystate-inline}
\end{align}
Finally, the measurement and the query data are used to compute the \emph{local overlap}
\begin{align}        \bm{\omega}:=\bra{\Psi_{\bm{k,z}}}\cdot \left(3\ketbra{\bm{s}}{\bm{s}}-\iden \right)\cdot \ket{\Psi_{\bm{k,z}}}.\label{eq:overlapinprotocol-inline}
    \end{align}
If both queries $\Psi(\bm{z}^{(0)}), \Psi(\bm{z}^{(1)}) = 0$, we report $\bm{\omega} = 0$.
We then repeat this two-step protocol for a total of $T$ times on independent copies of the state $\rho$ to obtain overlaps $\bm{\omega_1},\dots,\bm{\omega_T}$. 
We report the empirical average $\hat{\bm{\omega}} = \frac{1}{T}\sum_{t=1}^T \bm{\omega_t}$ as our estimated shadow overlap between $\rho$ and the state represented by model $\Psi$.
The complete certification protocol is summarized in~\protoref{certification}. 

\begin{table}[ht!]
\begin{protocol}{Certifying that states $\rho$ and $\ketbra{\psi}{\psi}$ are close to each other using shadow overlaps}{proto:certification}
\textbf{Input:} $T$ samples of an unknown state $\rho$, a model $\Psi$ that gives query access to the amplitudes of $\ket{\psi}$, an error $0\leq \epsilon < 1$, and the relaxation time $\tau$ associated with the Markov chain from $|\braket{x}{\psi}|^2$.

\sbline

\textbf{Goal:} Certify that the overlap $\bra{\psi}\rho\ket{\psi} \geq 1-\epsilon$.\\
\quad\quad\quad  If the fidelity is low $\bra{\psi}\rho\ket{\psi} < 1 - \epsilon$, output \textsc{Failed} with high probability.\\
\quad\quad\quad  If the fidelity is high $\bra{\psi}\rho\ket{\psi} \geq 1 - \frac{\epsilon}{2 \tau}$, output \textsc{Certified} with high probability.

\sbline

\textbf{Procedure:}
 \begin{enumerate}
   \item Select $\bm{k}\in \{1,\dots,n\}$ uniformly at random.
    \item Perform single qubit $Z$-basis measurements on all but the $\bm{k}$'th qubit of $\rho$. Denote the measurement outcomes collectively by $\bm{z}\in\{0,1\}^{n-1}$.
    \item Choose an $X$, $Y$, or $Z$-basis measurement uniformly at random and measure the $\bm{k}$'th qubit of $\rho$ in that basis. Denote the post-measurement state of the $\bm{k}$'th qubit by $\ket{\bm{s}}$.
    \item Query the model $\Psi$ twice to obtain the normalized state 
    \begin{align}
        \ket{\Psi_{\bm{k,z}}}:=\frac{\Psi(\bm{z}^{(0)})\cdot \ket{0}+\Psi(\bm{z}^{(1)})\cdot \ket{1}}{\sqrt{|\Psi(\bm{z}^{(0)})|^2+|\Psi(\bm{z}^{(1)})|^2}} \label{eq:querystate}
    \end{align}
    where $\bm{z}^{(a)}$ is a binary string that equals $a\in\{0,1\}$ on its $\bm{k}$'th bit and equals to $\bm{z}\in \{0,1\}^{n-1}$ on the remaining $n-1$ bits. 
    \item Compute the overlap
    \begin{align}        \bm{\omega}:=\bra{\Psi_{\bm{k,z}}}\cdot \left(3\ketbra{\bm{s}}{\bm{s}}-\iden \right)\cdot \ket{\Psi_{\bm{k,z}}}\label{eq:overlapinprotocol}
    \end{align}
    \item Repeat steps 1. to 5. for $T$ times to obtain overlaps $\bm{\omega_1},\dots,\bm{\omega_T}$. Report the estimated shadow overlap $\hat{\bm{\omega}} := \frac{1}{T}\sum_{t=1}^T \bm{\omega_t}$.
     \item If the estimated shadow overlap $\hat{\bm{\omega}} \geq 1 - \frac{3\epsilon}{4\tau}$, output \textsc{Certified}. Otherwise, output \textsc{Failed}.
    \end{enumerate}
\end{protocol}
\end{table}

\subsection{Overview of the analysis}\label{sec:whyItWorks}
We discuss the performance of this certification procedure in depth in \appref{PerformanceguaranteesGeneral}.
Here we explain the high-level idea behind why our protocol works. 
Consider a modified version of this test where instead of randomized Pauli measurements, the qubit $\bm{k}$ is measured in the orthogonal basis $\{ \ketbra{\Psi_{\bm{k,z}}}{\Psi_{\bm{k,z}}}, \iden -\ketbra{\Psi_{\bm{k,z}}}{\Psi_{\bm{k,z}}}\}$ where the single-qubit state $\ket{\Psi_{\bm{k,z}}}$ is the state specified in Equation \eqref{eq:querystate-inline}.
Upon measuring the outcome $\ketbra{\Psi_{\bm{k,z}}}{\Psi_{\bm{k,z}}}$, we output value $\bm{\omega} = 1$, and otherwise the measurement returns $\bm{\omega} = 0$. 
First, suppose $\rho = \ketbra{\psi}{\psi}$ and assume that after measuring $n{-}1$ qubits in the $Z$-basis, we obtain the outcome $\bm{z}$.
In this case, the post-measurement state of qubit $\bm{k}$ equals $\ketbra{\Psi_{\bm{k,z}}}{\Psi_{\bm{k,z}}}$. 
Hence, all the local overlaps measured in this manner are $\bm{\omega} = 1$. 
This means that the estimated shadow overlap is
\begin{equation}
\hat{\bm{\omega}} = \frac{1}{T}\sum_{t=1}^T \bm{\omega_t} = \bra{\psi}\rho\ket{\psi} = 1.
\end{equation}

In \protoref{certification}, however, we do not measure qubit $\bm{k}$ in this orthogonal basis; instead we measure in a randomly chosen Pauli-operator basis, and compute the local overlap $\bm{\omega}$ using expression \eqref{eq:overlapinprotocol-inline}. Fortunately, as briefly reviewed in \appref{randomizedPauliMeasurements}, the expression $\left(3\ketbra{\bm{s}}{\bm{s}}-\iden \right)$ matches the measured state $\ketbra{\Psi_{\bm{k,z}}}{\Psi_{\bm{k,z}}}$ of qubit $\bm{k}$ when averaged over the choice of Pauli basis and the measurement outcome. This feature is the key observation underlying the ``classical shadow'' protocol for learning properties of a quantum state \cite{huang2020predicting}; it ensures that if the lab state is $\rho= |\psi\rangle\langle \psi|$, then the expectation value of $\omega$ is 1. Moreover, the empirical average $\hat \omega$ will be close to 1 if $\rho$ is close to $|\psi\rangle$ and the number of samples $T$ is sufficiently large. 
In summary, this means that if the lab state is close to the target state, our protocol will successfully certify that the fidelity $\langle\psi|\rho|\psi\rangle$ is close to 1. 

But can the protocol be fooled into certifying a lab state that does not have high fidelity with the target state? To address this question, we observe that the expectation of $\omega$ is  $\Tr[L \rho]$ for an observable $L$ that can be constructed by querying the model $\Psi$ of the state $\ket{\psi}$. 
Ideally, we would want this observable to be the projector $\ketbra{\psi}{\psi}$ onto the state $\ket{\psi}$.
What we show instead in \appref{PerformanceguaranteesGeneral} is that this observable satisfies $L \ket{\psi} = \ket{\psi}$ and $\bra{\psi^{\perp}}L\ket{\psi^{\perp}} \leq 1- \frac{1}{\tau}$ for any state $\ket{\psi^{\perp}}$ orthogonal to $\ket{\psi}$,
where $\tau\geq 1$ is a parameter that depends on $\ket{\psi}$. 
In this sense, the observable $L$ forms an approximate projector onto the target state $\ket{\psi}$. As a result, we can distinguish a lab state $\rho$ that has high fidelity with $|\psi\rangle$ from a lab state that has low fidelity with $|\psi\rangle$ by measuring $T=\mathcal{O}(\tau^2)$ samples of $\rho$.

An empowering fact, established in \appref{PerformanceguaranteesGeneral}, is that the observable $L$ has the same eigenvalues as the (normalized) transition matrix $P$ of a suitably-defined random walk (or Markov chain) on the $n$-dimensional hypercube $\{0,1\}^n$. Thus $1/\tau$ is the eigenvalue gap of $P$, the difference between its largest and second largest eigenvalue; correspondingly, $\tau$ is the relaxation time of the Markov chain. This is very useful, because we can draw on the extensive literature concerning relaxation times of Markov chains to infer upper bounds on $\tau$ and hence on the sample complexity of our certification protocol.
We emphasize that this random walk defined by $P$ is not itself part of the protocol. Rather, it is merely used in the analysis of the performance of the protocol.

The transition matrix $P$ is determined by the measurement distribution $\pi(x)= |\langle x|\psi\rangle|^2$ which is sampled when the state $|\psi\rangle$ is measured in the computational basis. Specifically, for a state $\ket{\psi}=\sum_{x\in\{0,1\}^n} \sqrt{\pi(x)} e^{i\phi(x)} \ket{x}$, the corresponding walk transitions from vertex $x\in\{0,1\}^n$ to vertex $y \in \{0,1\}^n$ with probability 
\begin{align}
    P(x,y)=\begin{cases} 
       \frac{1}{n}\cdot \frac{\pi(y)}{\pi(x)+\pi(y)} & x\sim y, \\
       \frac{1}{n}\cdot \sum_{x':x'\sim x}\frac{\pi(x)}{\pi(x)+\pi(x')} & x=y,\\
       0 & \text{otherwise},
       \end{cases}\label{eq:TransitionMatrixIntro}
\end{align}
where two vertices $x$ and $y$ are connected (denoted $x{\sim}y$) when they differ in exactly $1$ bit.
The walk is designed such that in its unique stationary distribution vertex $x$ is occupied with probability $\pi(x)$. When $\pi(x) = \frac{1}{2^n}$ for all $x\in \{0, 1\}^n$, this transition matrix defines a lazy random walk on the Boolean hypercube $\{0, 1\}^n$, which remains at vertex $x$ with probability $1/2$ and moves to one of its $n$ neighboring vertices, chosen equiprobably, with probability $1/2$.

We can leverage results concerning the relaxation times of random walks to analyze the performance of our certification test. 
We consider various families of quantum states for which the relaxation time is $\tau\leq \poly(n)$, and therefore, our certification protocol is efficient as well. 
This includes generic quantum states drawn from the Haar measure in \appref{HaarSpectralGap}, as well as various structured entangled states such as quantum phase states in \appref{Quantum phase states}, ground states in \appref{gappedGroundStates}, and GHZ-like states in \appref{GHZ}.

Our analysis of the relaxation time for Haar random $n$-qubit states in \appref{HaarSpectralGap} draws on the concept of multi-commodity flows. 
In this framework, we distribute a unit flow from each vertex $x$ to another vertex $y$, dividing it across multiple paths such that no edge is congested. 
Stated more formally, given a set of simple directed paths connecting $x$ to $y$ denoted by $\mathcal{P}_{xy}$, a multi-commodity flow is a function $f: \cup_{x\neq y} \mathcal{P}_{xy} \mapsto \mathbb{R}$ such that $\sum_{p \in \mathcal{P}_{xy}}f(p)=1$ for all two distinct vertices $x\neq y$.
The resistance $R(f)$ of a flow $f$ is defined by
\begin{align}
    R(f):= \max_{e} \frac{1}{Q(e)} \sum_{x,y} \sum_{p\in \mathcal{P}_{xy}: p \ni e} \pi(x)\pi(y) f(p) |p|,\label{eq:flowDef}
\end{align}
where the weight $Q(e)=\pi(e^{+})P(e^{+},e^{-})$ of an edge $e=(e^+,e^-)$ is the probability of the transition $(e^+,e^-)$ occurring in the random walk. 
It is well-known that the relaxation time of a Markov chain is bounded by $\tau \leq R(f)$ for any flow $f$ \cite{sinclair1992improved}.

The measurement distribution of random quantum states exhibits probabilities $\pi(x)$ that can vary significantly, being either excessively small or large. 
As outlined in the expression \eqref{eq:flowDef}, this can lead to a large resistance $R(f)$.
To get around this and find a tighter upper bound on the relaxation time, the flow $f$ needs to avoid such congested vertices. 
We achieve this using the concept of \emph{local escape property} introduced in \lemref{localEscapeProperty} and \defref{localEscape}.
We show that this property holds with high probability in random states and allows us to spread the flow from any vertex $x$ to its neighbors using edges within some constant Hamming distance of $x$ while avoiding congested vertices.
A similar approach has been followed before in \cite{McDiarmidHypercube}, where instead of congested vertices, some of the edges are removed. 

\section{Applications}\label{sec:applications}

We stated in \thmref{HaarSampleComplexity} that almost all quantum states can be certified using shadow overlaps, which can be reliably estimated with few single-qubit measurements. 
In what follows, we give an overview of various interesting applications of the shadow overlap formalism.

\subsection{Neural network quantum state tomography}\label{sec:LearningMLModels}

To facilitate learning and simulation of quantum systems, we desire classical models that are expressive enough to capture essential features of intricate quantum states, which are also well-suited for predicting various properties of the systems.
A rich class of such models grant us direct access to the amplitudes of quantum states. 
More precisely, given an $n$-qubit quantum state  $\ket{\psi}=\sum_{x\in\{0,1\}^n} \psi(x)\ket{x}$ in a fixed basis, such models can provide us with \emph{query access}: the ability to compute the amplitudes $\psi(x) \in \mathbb{C}$ up to an overall normalization constant.

A family of models that provide query access are machine learning (ML) models of quantum states based on neural networks or tensor networks such as those considered in many prior works  \cite{carleo2017solving, pfau2020, Hibat2020RNN, carrasquilla2019reconstructing, melko2019restricted, torlai2018neural, iouchtchenko2023neural, zhao2023empirical, Torlai2020Precise, NNrepresentationTNSharir, Carleo2023Tensor}.
Neural networks with $\poly(n)$-bounded depth and width can compute the amplitudes $\psi(x)$ of the represented $n$-qubit state efficiently. 
Tensor networks that admit an efficient contraction method, such as matrix product states \cite{VerstraeteTruncationMPS} and tree tensor networks \cite{Shi2006TreeTN}, also yield efficient query access to the amplitudes. 

\vspace{0.5em}

\noindent \textbf{Learning ML models via hypothesis selection:} 
Our certification scheme yields an algorithm for learning ML models of quantum states with rigorous sample complexity guarantees. 
This is achieved using \emph{learning by hypothesis selection}, which can generally be applied to a set of models $\{\Psi_1, \dots, \Psi_M\}$ each describing an $n$-qubit state $\ket{\psi_i}$, for $i \in [M]$.
Our objective is to use the measurement data obtained from identical copies of a state $\rho$ and learn a model $\Psi_i$ among $i\in [M]$ which achieves the highest overlap $\bra{\psi_i}\rho\ket{\psi_i}$.
This approach to learning is relevant in applications where we either naturally have a set of $M$ hypotheses (e.g., from different theories describing the physics of a quantum system) or where we can obtain such a discrete set by casting a covering net (or carrying out some form of coarse-graining) over a larger and more expressive family of~models.

We show in \appref{learningHypothesisselection} that assuming the fast mixing condition for the set of models $\{\Psi_1,\cdots, \Psi_M\}$, we can use the shadow overlap to learn a model that achieves a high fidelity with the lab state  $\rho$ using $\mathcal{O}(\log M)$ copies of $\rho$. 
In \appref{learningHypothesisselection}, we give a concrete application of this scheme for learning a feedforward neural network representation of a quantum state.
We show that the sample complexity of this problem scales as $\widetilde{\mathcal{O}}\left(n L^3 W^3 s^{2L}\right)$ for a network of depth $L$, width $W$, and spectral norm~$s$ that takes $n$-bit strings as input.
In \appref{LearningNoisyGS}, we also discuss another application of this learning algorithm in the context of gapped ground states.

\begin{figure}[t!]
    \centering    \includegraphics[width=0.95\textwidth]{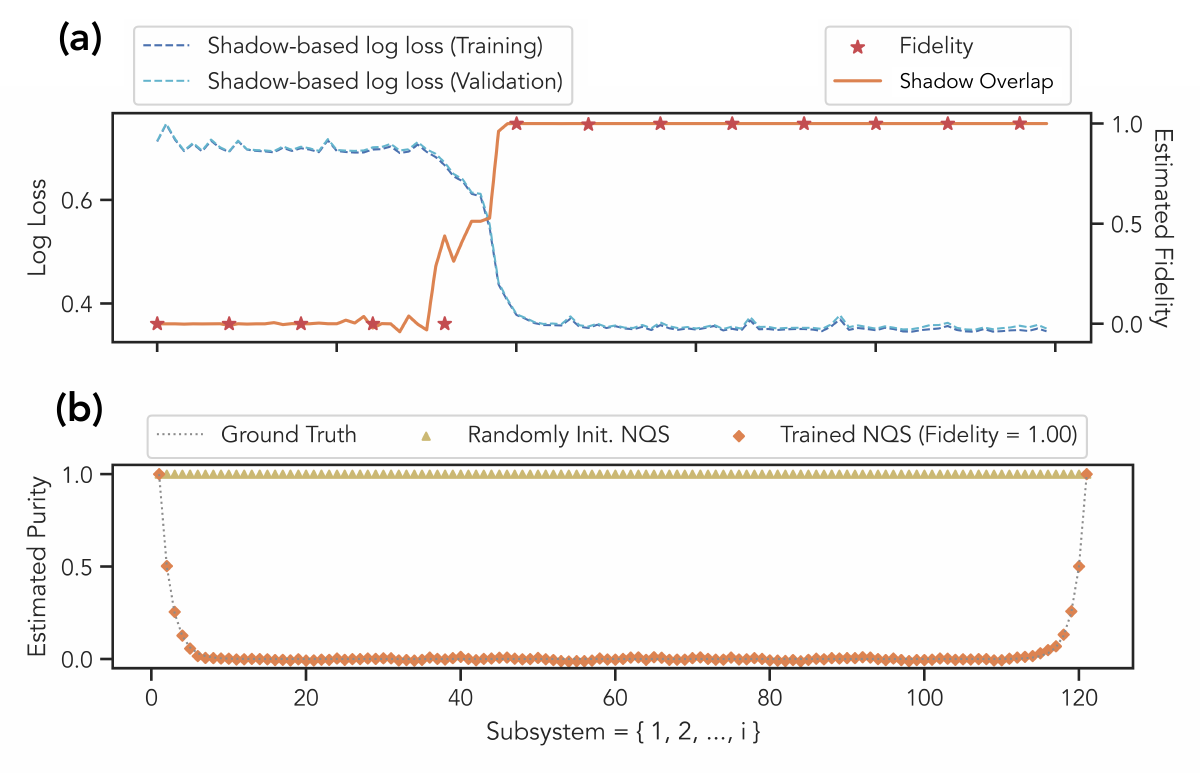}
    \captionsetup{justification=raggedright,singlelinecheck=false}
    \caption{\textbf{Neural network quantum state tomography: training and certifying a neural quantum state with the shadow overlap.} \textbf{(a)} A dual-input neural network is trained to learn a quantum phase state  \eqref{eq:randomPhaseStateML} with random phases $\phi(x)$ on $n=120$ qubits using single qubit measurements.
    A shadow-based loss function trains the model on $50,000$ measurement data acquired as outlined in \protoref{certification}.
    The model is then certified using fidelity and shadow overlap on a separate data set of size $10,000$.
    \textbf{(b)} The trained neural quantum state is used to estimate the subsystem purity of the random phase state, exhibiting a high degree of entanglement compared to a randomly initialized neural quantum state.}
    \label{fig:NQSTraining}
\end{figure}

\vspace{0.5em}

\noindent \textbf{Training neural quantum states with shadow overlap:} Although hypothesis selection provides a learning scheme with a rigorous sample complexity, the run time of this algorithm scales linearly with the number of models $M$, rendering it inefficient for many applications where $M$ grows exponentially with the number of qubits $n$.
In practice, though, as shown in \fig{NQSTraining} and detailed in \appref{NumericalTraining}, we can use the shadow overlap along with the stochastic gradient descent (SGD) to efficiently train and certify an ML model of a quantum state.

To this end, we consider training a neural network representation of an $n$-qubit state $\ket{\psi}$.
Conventionally, such neural quantum states take as input an $n$-bit string $x$ and directly output a complex value proportional to the amplitude $\braket{x}{\psi}$. 
To achieve an improved performance, we instead train a \emph{dual-input} neural network that admits two inputs $x_0, x_1 \in \{0,1\}^n$ which differ only in one bit.
This neural network computes $\frac{\braket{x_0}{\psi}}{\braket{x_1}{\psi}}$ as its output.
We will see in \appref{NumericalTraining} that $n$ applications of this neural network architecture allow us to compute the amplitude $\braket{x}{\psi}$ for a given~$x$.

The dual-input neural quantum states can be trained using a shadow-based log loss, leveraging data acquired by single-qubit measurements as prescribed in \protoref{certification}.
The log loss is minimized via stochastic gradient descent.
\fig{NQSTraining} shows an application of this scheme to learning highly entangled phase states
\begin{align}
    \ket{\psi} = \frac{1}{\sqrt{2^n}} \sum_{x\in\{0,1\}^n} e^{i \phi(x)} \ket{x}\label{eq:randomPhaseStateML}
\end{align}
with random binary phases $\phi(x)$ on $n = 120$ qubits, and training data consist of tuples $(x_0, x_1, |\phi(x_0) - \phi(x_1)|)$ which represent the phase difference between two adjacent strings $x_0$ and $x_1$.
Such quantum states have exponentially large circuit complexity \cite{Brandao2021ComplexityGrowth}, are indistinguishable from Haar-random states with polynomially many copies with high probability \cite{ji2018pseudorandom, brakerski2019pseudo}, and exhibit volume-law scaling of entanglement \cite{Aaronson2024Pseudoentanglement}. 
The findings reported in \fig{NQSTraining} indicate that beyond a certain training threshold, the model attains a fidelity of $1.00$ with the target state. 
As explained next, this performance can also be certified using the shadow overlap, as an efficient alternative to the fidelity.

\vspace{0.5em}

\noindent \textbf{Certifying ML models:} One drawback of machine learning models for quantum states is that their training usually relies on heuristic algorithms.
The absence of performance guarantees highlights the need for certification procedures capable of efficiently verifying the accuracy of the trained models.
The result of \thmref{verificationFastMixing} can be restated in terms of certifying the overlap between an $n$-qubit state $\ket{\psi}$ and its trained ML model with a relaxation time~$\tau$.
This is achieved using single-qubit Pauli measurements performed independently on $\mathcal{O}(\tau^2/\epsilon^2)$ copies of $\ket{\psi}$ along with two queries to the trained ML model per each copy of $\ket{\psi}$.

\fig{NQSTraining} shows a numerical implementation of this certification procedure for a dual-input neural network representation of a 120-qubit random phase state introduced before.
After training the neural net with $50,000$ measurements using shadow-overlap-based stochastic gradient decent, we estimate and compare the shadow overlap of the resulting model with its fidelity.
We observe that the predicted shadow overlap closely mirrors the fidelity, serving as an effective proxy. 

\vspace{0.5em}

\noindent \textbf{Estimating sparse observables:} The certified ML models of quantum states can be employed to statistically estimate many properties of interest \cite{iouchtchenko2023neural, Torlai2020Precise} if in addition to query access, we assume the models are also equipped with \emph{sampling access:} the ability to sample from the measurement distribution corresponding to~$|\psi(x)|^2:= |\braket{x}{\psi}|^2$.
The sampling access can be obtained in various ways.
Once the lab state $\rho$ has been certified to have large overlap with the target state $|\psi\rangle$, we may obtain sampling access to $|\psi\rangle$ by measuring $\rho$ in the computational basis. 
Alternatively, we can use Markov chain sampling, running the random walk defined in \eqref{eq:TransitionMatrixIntro} for a number of steps given by mixing time and then sampling from the walk. 
This procedure is closely related to the Metropolis-Hastings algorithm conventionally used in ML applications. Another option is using autoregressive methods to obtain direct sampling access~\cite{sharir2020deep}. 
We show in \appref{Estimatingsparseobservables} how to apply a verified ML model of a quantum state with query and sampling access to estimate the expectation value of any sparse observable $G$, such as the energy of a local Hamiltonian, or highly non-local properties such as R\'enyi entanglement entropies, up to an error $\epsilon$ with a number of samples that scales as $T = \mathcal{O}\left( \bra{\psi}G^2\ket{\psi}/\epsilon^2\right)$.
When no certified ML model is available, estimating certain non-linear observables such as the subsystem purity $\Tr(\rho_A^2$) requires a number of samples exponential in the size of the subsystem $A$; e.g., see \cite{chen2022exponential} for an exponential lower bound that applies to any single-copy measurements, and \cite{huang2020predicting} for an upper bound via the classical shadow formalism.
However, as shown in \appref{Estimatingsparseobservables}, the same task can be conducted using a verified ML model with a sample complexity $\mathcal{O}\left(1/\epsilon^2\right)$, independent of the system size.

In \fig{NQSTraining}, we demonstrate this feature with a numerical experiment on the trained neural network representation of the random phase state in equation \eqref{eq:randomPhaseStateML}.
The purity $\Tr(\rho_A^2$) of the phase state is estimated for subsystems of size $|A| \in \{1,\dots, 120\}$, confirming that the state of the subsystem is close to maximally mixed for sufficiently large subsets $A$.

\begin{figure}
    \centering
    \includegraphics[width=\textwidth]{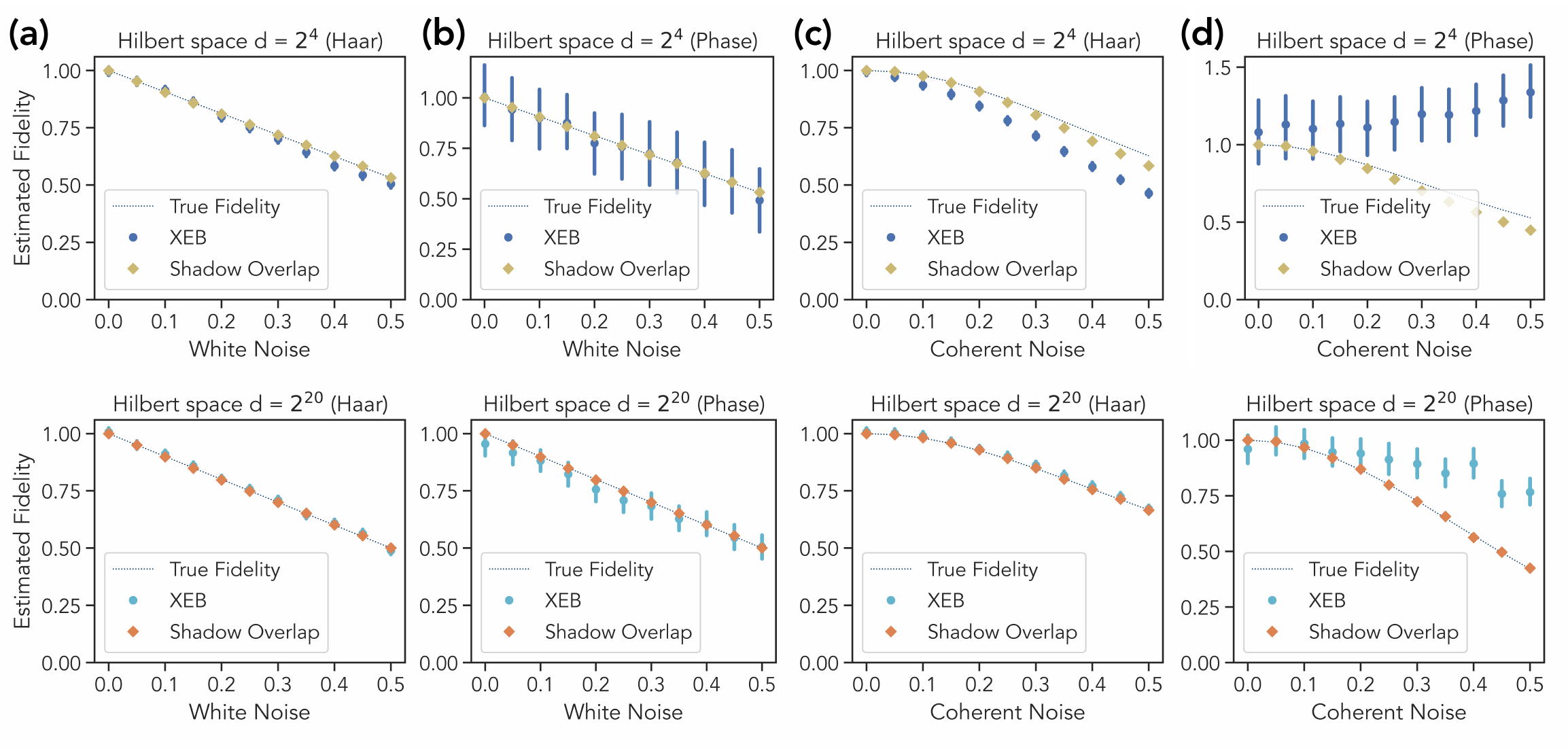}
\captionsetup{justification=raggedright,singlelinecheck=false}
    \caption{\textbf{Benchmarking with the shadow overlap.} The performance of the (normalized) shadow overlap, fidelity, and cross entropy benchmark (XEB) are compared in benchmarking noisy quantum states on 4 and 20 qubits for \textbf{(a)} a Haar random state subject to white noise, \textbf{(b)} a structured state, specifically a random phase state (generated from random product states) subjected to white noise, \textbf{(c)} a Haar random state with coherent noise, and \textbf{(d)} a random phase state with coherent noise.
    The error bars indicate statistical measurement errors, with shadow overlap displaying notably lower variance than XEB.}
    \label{fig:plotsBenchmarking}
\end{figure}

\subsection{Benchmarking quantum devices}\label{sec:BenchmarkingQuantumDevices}
Certifying the fidelity between a state $\rho$ prepared using a quantum device and a known quantum state $\ket{\psi}$ offers a rigorous approach for benchmarking quantum machines. 
However, the exponential resources and the high level of control needed for estimating fidelity limit the applicability of this approach in practice.
To address these challenges, a number of studies have proposed and deployed other statistical quantities that act as a form of \emph{proxy} for the fidelity in the situations often encountered practically \cite{boixo2018characterizing, arute2019quantumsupremacy, mark2022benchmarking, choi2023preparing}.
For a proxy of fidelity to be the most informative, one may ask for features such as (1) minimal hardware requirements and easy statistical and computational evaluation, (2) close tracking of fidelity, and (3) being equipped with rigorous bounds.
The shadow overlap meets these requirements by providing a provable lower bound on the fidelity $\bra{\psi}\rho\ket{\psi}$ as stated in \thmref{verificationFastMixing}.
As discussed earlier, \protoref{certification} is also computationally efficient and experimentally feasible, requiring only minimal hardware control. 

In \fig{plotsBenchmarking}, we compare the performance of the shadow overlap with that of fidelity and the cross entropy benchmark (XEB), a prominent metric employed in the evaluation of quantum supremacy experiments with local random quantum circuits \cite{boixo2018characterizing, arute2019quantumsupremacy}.
In this numerical experiment, the shadow overlap is normalized, as explained in \appref{benchmarkingComplexity}, such that the target state attains value $1$ and the maximally mixed state attains value $1/2^n$.
We benchmark two families of states with $4$ and $20$ qubits: (1) Haar random states and (2) structured states which are phase states of the form $U_{\operatorname{phase}}\cdot \otimes_{i=1}^n \ket{\psi_i}$.
Here, each $\ket{\psi_i}$ is a single qubit state with random real amplitudes, and $U_{\operatorname{phase}}$ is diagonal with random complex phases.
We explore the effect of white noise (i.e. global depolarizing noise) as well as coherent noise realized as small Gaussian errors in both the magnitude and the phase of the probability amplitude, as discussed in \appref{NumericalBenchmarking}.

We observe that XEB performs well in Haar-random states but tends to overestimate fidelity for phase states, and has decreased effectiveness for smaller system size, likely due to reduced concentration effects.
In contrast, the shadow overlap closely matches the fidelity across different noise regimes and system sizes. 

These findings suggest that, much like XEB or similar benchmarks \cite{mark2022benchmarking, choi2023preparing}, an estimated shadow overlap may be accepted at face value, yielding a statistical figure of merit for the quality of the prepared states.
Otherwise, in the high-fidelity regime where the error $\epsilon \ll 1/\tau$, one can also apply \thmref{verificationFastMixing} to uncover a provable lower bound on the actual fidelity of the prepared state.
As a concrete example, in \appref{benchmarkingComplexity}, we show that, in a quantum processor that prepares a family of quantum states with tunable circuit complexity $C$, one can benchmark the fidelity of the device using a number of single-qubit measurements that scales polylogarithmically~with~$C$.

\subsection{Optimizing quantum circuits for state preparation}\label{sec:Optimizingcircuits}

Many variational quantum algorithms use the fidelity between two quantum states as their cost function. 
Such cost functions are known to suffer from exponentially vanishing gradients, known as barren plateaus \cite{mcclean2018barren, cerezo2021cost, caro2023out, jerbi2023power}, and require a high sample complexity to be statistically estimated.
One may use the shadow overlap $\E[\bm{\omega}]$ in place of the fidelity $\bra{\psi}\rho\ket{\psi}$ in some of these algorithms.
Besides demanding a substantially lower sample complexity, shadow overlaps may offer an improved optimization landscape with non-vanishing gradients. 
In particular, the shadow overlap displays behavior similar to the Hamming distance in cases where the target state shows no global correlations and its probability amplitudes are well-distributed across the Boolean hypercube $\{0,1\}^n$.
Indeed, for the special case of bit strings in the $X$-basis, the Hamming distance and the shadow overlap precisely coincide. 
One can see this for the simple case of $\ket{\psi} = \ket{+}^{\otimes n}$ where, as discussed in \appref{PerformanceguaranteesGeneral}, our protocol effectively measures the expectation $\Tr(L \rho)$ for the observable $L=\frac{1}{n}\sum_{i=1}^n \ketbra{+}{+}_i \otimes \iden_{\setminus i}$.
This observable is local and has favorable features when used as the cost function compared to the non-local observable $\ketbra{+}{+}^{\otimes n}$ used in the fidelity estimation; see the discussion in \cite{cerezo2021cost, caro2023out, jerbi2023power}.
\begin{figure}
    \centering
    \includegraphics[width=0.9\textwidth]{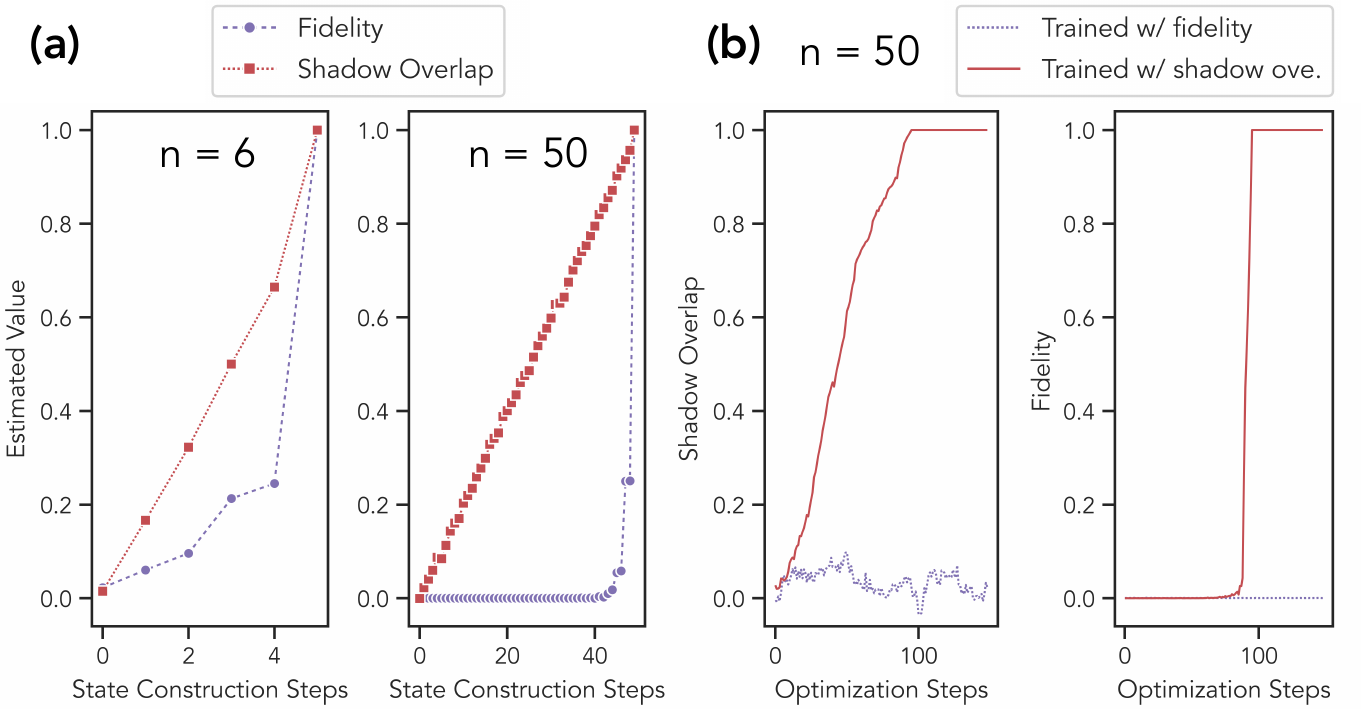}
    \captionsetup{justification=raggedright,singlelinecheck=false}
    \caption{\textbf{Optimizing quantum circuits for state preparation.} Training a low-depth quantum circuit consisting of Hadamard, controlled-$Z$, and $T$ gates to prepare a target state $\ket{\psi}$ given as a matrix product state (MPS). \textbf{(a)} As we approach the target state by building an appropriate circuit, the shadow overlap increases steadily with the number of circuit steps; in contrast, the fidelity sticks close to zero for many steps before growing abruptly.
    \textbf{(b)} Because the optimization landscape of fidelity has a barren plateau, training with fidelity fails to find a high-fidelity state-preparation circuit. In contrast, training with shadow overlap successfully finds a high-fidelity circuit. }
    \label{fig:OptStatePrep}
\end{figure}

In a numerical experiment presented in \fig{OptStatePrep} and discussed in \secref{NumericalStatePrep}, we investigate this feature of the shadow overlap in the context of training quantum circuits to optimally prepare a target state.
We have access to the matrix product state (MPS) representation of the target state, which corresponds to the output of a one-dimensional IQP circuit \cite{bremner2011classical} infused with random $T$ gates.
Through a variational optimization, we train a quantum circuit employing Hadamard, controlled-$Z$, and $T$ gates to generate the target state, optimizing for maximum shadow overlap. 
We then assess this method's performance against fidelity-based training.
Changes in both fidelity and shadow overlap are monitored across optimization steps.
When employing $n=50$ qubits, fidelity-based training encounters barren plateaus, whereas shadow overlap-based training successfully prepares the target state with a fidelity very close to $1$.
We also note that, akin to the linear decrease in Hamming distance between two binary strings as suitable bits are flipped, under shadow-overlap-based training the deviation of the shadow overlap from 1 decreases linearly as suitable gates are added to the circuit. This contrasts with the fidelity, which fails to exhibit a steady, gradual increase as the number of state construction steps increases.

\section{Outlook}
Further extending the reach of our certification protocol based on the shadow overlap raises many interesting open questions.

\vspace{0.5em}
    
\noindent \textbf{Quantum states with fast relaxation times}: What families of quantum states provably admit a $\poly(n)$ relaxation time with respect to the Markov chain \eqref{eq:TransitionMatrixIntro} introduced in our analysis?
We show that Haar random quantum states exhibits a relaxation time bounded by $\tau \leq O(n^2)$.
Can our arguments for Haar random states be extended to ``state $t$-designs'' whose first $t$ moments match that of the Haar measure?
Such quantum states can be efficiently prepared with random quantum circuits of size $\poly(n, t)$  \cite{Brandao16Design, Haferkamp2022randomquantum}. 
More generally, can we show that states prepared with (random) quantum circuits of \emph{arbitrary} depth satisfy a relaxation time $\tau \leq\poly(n)$?

\vspace{0.5em}

\noindent \textbf{States that cannot be certified with few single-qubit measurements:} Are there concrete examples of target quantum states that cannot be certified using any protocol that only relies on a $\poly(n)$ number of single-qubit measurements?

\vspace{0.5em}

\noindent \textbf{Mixed states}: Can a similar protocol be developed when the target state belongs to a certain family of \emph{mixed} quantum states?
If we allow arbitrary mixed states, then known lower bounds from certifying maximally mixed state \cite{Buadescu2019Certification} rule out a protocol with $\poly(n)$ sample complexity even with entangled measurements.
Going beyond worst cases, the instance-optimal sample complexity of certifying mixed states has been studied in \cite{Chen2022InstanceOptimal}, where approximately low-rank mixed states can be certified efficiently using highly-entangled measurements.
Can almost all approximately low-rank mixed states be certified with few single-qubit measurements?

\vspace{0.5em}
\subsection*{Code and Data Availability:}

The code and data for conducting the numerical experiments and for generating the figures in this work are openly available on Google Drive at \url{https://bit.ly/3U93gvl}.

\vspace{0.5em}
\subsection*{Acknowledgments:}

{ The authors thank Anurag Anshu, Ryan Babbush, Michael Broughton, David Gosset, Robin Kothari, and Jarrod R. McClean for valuable input and inspiring discussions.
HH is supported by a Google PhD fellowship and a MediaTek Research Young Scholarship.
HH acknowledges the visiting associate position at the Massachusetts Institute of Technology.
JP acknowledges support from the U.S. Department of Energy Office of Science, Office of Advanced Scientific Computing Research (DE-NA0003525, DE-SC0020290), the U.S. Department of Energy, Office of Science, National Quantum Information Science Research Centers, Quantum Systems Accelerator, and the National Science Foundation (PHY-1733907). 
MS was supported by AWS Quantum Postdoctoral Scholarship and funding from the National Science Foundation.
Institute for Quantum Information and Matter is an NSF Physics Frontiers Center. }

\newpage
\addtocontents{toc}{\protect\setcounter{tocdepth}{-5}}
\settocdepth{part}
\newpage

\appendix

\secondtitle{\textbf{Appendices}\vspace{-0.6em}}
\addappheadtotoc

\settocdepth{subsection}

\setcounter{section}{0}
\renewcommand*{\thesection}{\Alph{section}}

\section{Related work}

Many past works have studied the problem of certifying the fidelity $\bra{\psi}\rho\ket{\psi}$ between a quantum state $\ket{\psi}$ and a general state $\rho$. 
The full-blown tomography of a $2^n$-dimensional quantum state $\rho$ is known to require $\Theta(4^n/\epsilon^2)$ many copies to achieve an $\epsilon$ error in trace distance---or $\Theta(4^n/\epsilon)$ copies for an $\epsilon$ error measured by infidelity \cite{odonnell2016tomography, odonnell2017tomography2, haah2017sample}.
While this is, in principle, sufficient for estimating the fidelity $\bra{\psi}\rho\ket{\psi}$, it is also known that the same certification task can be achieved with a dramatically lower copy complexity of $\Theta(1/\epsilon^2)$, independent of the system dimensions.  
Achieving this scaling, however, entails performing certain quantum operations that are often contrary to the objective of the certification.
This includes starting from the description of a (potentially highly-entangled) $n$-qubit state $\ket{\psi}$ and performing the two-outcome measurement $\{ \ketbra{\psi}{\psi}, \iden - \ketbra{\psi}{\psi}\}$) or preparing independent copies of the state $\ket{\psi}$ (e.g., to perform the swap test) \cite{montanaro2013survey}.
The recent framework of classical shadows \cite{huang2020predicting} improves this by performing Clifford measurements on independent copies of the state $\rho$.
Implementing randomized Clifford measurements, however, requires deep circuits that may be practically infeasible. 
The efficiency of this protocol further depends on computing the overlap between the state $\ket{\psi}$ and random stabilizer states, which for instance, may not be achievable for states with high stabilizer~rank.

Provided that single-qubit Pauli measurements can be performed on the copies of the state $\rho$, a method known as the \emph{direct fidelity estimation} introduced and analyzed in \cite{Flammia2011DirectFidelity} (also see \cite{Poulin2011Characterization, aolita2015reliable}) can estimate the overlap $\bra{\psi}\rho\ket{\psi}$ up to an additive error $\epsilon$ using $\mathcal{O}(2^n/\epsilon^4)$ number of copies.
This scaling is an $\mathcal{O}(2^n)$ improvement on a naive application of the full tomography scheme but still grows exponentially $\Omega(2^n)$ with the number of qubits. 
A variant of classical shadows can also be applied by employing randomized Pauli measurements, where the number of samples required scales exponentially with the weight of the measured observable. This approach enables us to effectively investigate only small subsystems \cite{huang2020predicting} but cannot estimate fidelity with a highly-entangled state efficiently.

When the target state comes from certain special classes of states, such as stabilizer states or states generated by shallow quantum circuits, direct fidelity estimation \cite{Flammia2011DirectFidelity} and classical shadow based on randomized Pauli measurements \cite{huang2020predicting, huang2024learning} can efficiently certify the target states.
Other examples include hypergraph states, output states of IQP circuits \cite{takeuchi2018verification}, bosonic Gaussian states \cite{aolita2015reliable}, and fermionic Gaussian states \cite{gluza2018fidelity}.
For generic random target states, \emph{cross entropy benchmarking (XEB)} provides a good estimate for fidelity under certain noise models as studied in \cite{arute2019quantumsupremacy, choi2023preparing, cotler2023emergent, dalzell2024random}.

XEB and shadow overlap both use only single-qubit measurements and both require having access to $\braket{x}{\psi}$ for $x \in \{0, 1\}^n$, which can be time consuming during classical postprocessing.
However, XEB does not solve the quantum state certification task rigorously because XEB only uses Z-basis measurement.
As a result, XEB can outputs a fidelity score of one even when the state $\rho$ is a classical probability distribution with no quantum entanglement and is far from the target state $\ket{\psi}$.
In contrast, shadow overlap provably solves the certification task for almost all target states $\ket{\psi}$.
On a high level, shadow overlap can be seen as an enhancement of XEB using classical shadow based on randomized Pauli measurements on a randomly chosen qubit instead of all $Z$-basis measurements.

\section{Review of randomized Pauli measurements}\label{sec:randomizedPauliMeasurements}
As part of our certification protocol, we use randomized Pauli measurements on pure $m$-qubit states. 
These measurements provide sufficient statistical information for obtaining an unbiased estimator of the state. 
At the same time, they are practically appealing and require minimal experimental capabilities to be performed in practice.

Such measurements are part of a broader framework for the statistical study of quantum systems, called classical shadows. 
In this framework, measurements give us classical access to snapshots of the quantum state. 
With sufficiently many randomized snapshots, one can efficiently and accurately estimate various properties of subsystems of a quantum state.

Suppose we perform randomized $X$, $Y$, or $Z$ Pauli measurements on all $m$ qubits of a state $\ket{\varphi}$. 
As a result of these measurements, we obtain $m$ single-qubit states denoted by
\begin{align}
    \ket{\bm{s_1}} \otimes \cdots \otimes \ket{\bm{s_m}} \quad \text{where} \quad \ket{\bm{s_1}}, \dots, \ket{\bm{s_m}} \in \{\ket{0}, \ket{1}, \ket{+}, \ket{-}, \ket{i+}, \ket{i-}\}.
\end{align}
Here, $\{\ket{+}, \ket{-}\}$, $\{\ket{i+}, \ket{i-}\}$, and $\{\ket{0}, \ket{1}\}$ are eignestates of Pauli $X$, $Y$, and $Z$, respectively.
We use the collected data to compute the operator 
\begin{align}
\bm{\sigma} := \left(3 \ket{\bm{s_1}}\bra{\bm{s_1}} - \iden\right) \otimes \cdots \otimes \left(3 \ket{\bm{s_m}}\bra{\bm{s_m}} - \iden\right)\label{eq:measuredShadows}
\end{align}
that when averaged sufficiently many times, gives an accurate approximation of the original state $\ketbra{\varphi}{\varphi}$.
This can be better seen by expressing the operator \eqref{eq:measuredShadows} in terms of Pauli operators. 
Let $\bm{W_1},\dots, \bm{W_m} \in \{X, Y, Z\}$ denote the randomly chosen Pauli operators that we measured on the state $\ket{\varphi}$,
and let $\bm{o_1},\dots, \bm{o_m} \in \{+1, -1\}$ be the observed outcomes. 
The measurements return 
\begin{align}
    \frac{1}{2}\Tr\left( \left( \iden + 3 \bm{o_1} \bm{W_1} \right) \otimes \cdots \otimes \left( \iden + 3 \bm{o_n} \bm{W_}n \right) \rho \right)
\end{align}
with a probability that can be computed as 
\begin{align}\frac{1}{3^n}\Tr\left( \left( \iden + \bm{o_1} \bm{W_1} \right) \otimes \cdots \otimes \left( \iden + \bm{o_m} \bm{W_m} \right) \rho \right)
\end{align}
A direct calculation on this random ensemble (see \cite{huang2020predicting}) shows that in expectation, we have $\E[\bm{\sigma}] = \ketbra{\varphi}{\varphi}$.
Moreover, we have $\norm{\bm{\sigma}}_{\infty} = 2^m$. 
We use these facts to establish concentration bounds for the certification protocol in \appref{Performanceguarantees}. 

\section{Performance guarantees of the certification protocol}\label{sec:PerformanceguaranteesGeneral}

In this section, we provide a detailed analysis of the certification protocol based on shadow overlaps. 
We begin with a broad overview before discussing the technical details.

\addtocontents{toc}{\protect\setcounter{tocdepth}{1}}
\subsection{Technical overview}

Our certification protocol is based on the notion of property testing.
Similar quantum state certification tasks were studied in a series of previous works with the a focus on certifying mixed states \cite{Buadescu2019Certification, Chen2022Certification, chen2022toward}. 
In our framework, we are given states $\rho$ and $\ket{\psi}$ and promised that either $\bra{\psi}\rho\ket{\psi} \geq 1 - \frac{\epsilon}{2 \tau}$ or $\bra{\psi}\rho\ket{\psi} < 1 - \epsilon$.
The certification procedure outputs \textsc{Certified} in the first case and \textsc{Failed} in the second case.
The goal of our analysis, as stated in \thmref{verificationFastMixing-formal} in \appref{Performanceguarantees}, is to show that by using \protoref{certification} and setting the number of samples to $T = \mathcal{O}\left(\frac{\tau^2}{\epsilon^2}\cdot \log(\frac{1}{\delta})\right)$, we can correctly decide between the  \textsc{Certified} and \textsc{Failed} instances with probability at least $1 - \delta$.

The soundness of our protocol crucially relies on the ability to measure the state in both the Pauli-$X$ and Pauli-$Z$ bases.
This quantum mechanical feature is in sharp contrast to the classical testing of probability distributions, where, in effect, only Pauli-$Z$ measurements are possible.
For instance, given a distribution $p(x)$, consider measuring the state $\ket{\psi} = \sum_{x \in \{0, 1\}^n}\sqrt{p(x)}\ket{x}$ only in the $Z$ basis. 
This gives us samples $\bm{x}_1, \dots, \bm{x_T} \sim p(x)$.
It is well-known \cite{Paninski2591136Uniformity} that testing whether the distribution $p(x)$ equals the uniform distribution or is far from the uniform distribution requires a number of samples that scales exponentially, $T = \Omega(2^{n/2})$. 
We get around such grim scalings encountered classically by  also allowing some complementary Pauli measurements (i.e. in $X$ or $Y$-basis). 
Indeed in the simple case of $\ket{\psi} = \sum_{x \in \{0, 1\}^n}\frac{1}{\sqrt{2^n}}\ket{x}$, measuring all $n$ qubits in the $X$ basis enables directly measuring $\{\ket{\psi}\bra{\psi}, \iden - \ket{\psi}\bra{\psi} \}$. 
In contrast, our protocol shows that measuring just one of the $n$ qubits in a randomized Pauli basis is sufficient for certifying generic quantum states $\ket{\psi}$.

We note that the notion of relaxation time used in our framework is basis-dependent.
In fact, a quantum state can mix slowly in one basis and mix rapidly in a locally rotated basis.
The GHZ state provides an example of such a state.
In the standard $Z$-basis, we have $\ket{\mathrm{GHZ}} = \frac{1}{\sqrt2}(\ket{0}^{\otimes n}  + \ket{1}^{\otimes n})$.
Hence the distribution of this state is supported on two vertices $0^n$ and $1^n$ and does not mix under the walk defined by the transition matrix $P(x,y)$.
In the Hadamard basis, this state can be expanded as $H^{\otimes n} \ket{\mathrm{GHZ}} = \frac{1}{\sqrt{2^{n-1}}
}\sum_{x \in \mathrm{even} } \ket{x}$ and enjoys fast mixing, though for this purpose we need to change the protocol so that the observable $L$ corresponds to a walk that jumps from a vertex to its next-to-nearest neighbor.
This is discussed in more depth in \appref{GHZ}. 

A slightly modified version of the \protoref{certification} offers an improved sample complexity~$T = \mathcal{O}\left(\frac{\tau}{\epsilon}\cdot \log(\frac{1}{\delta})\right)$ with a quadratically better dependency on the error $\epsilon$ and mixing time $\tau$.     
This can be achieved if we replace the randomized Pauli measurement on qubit $\bm{k}$ with a direct measurement in the orthogonal basis $\{ \ketbra{\Psi_{\bm{k,z}}}{\Psi_{\bm{k,z}}}, \iden -\ketbra{\Psi_{\bm{k,z}}}{\Psi_{\bm{k,z}}}\}$ and output $\bm{\omega} = 0$ or $\bm \omega = 1$ depending on the measurement outcome. 
The sample complexity of this procedure is analyzed in \thmref{sampleComplexityImprovedInformal} proved in \appref{Performanceguarantees}. However, the modified protocol has the disadvantage that the query model $\Psi$ must be consulted in each measurement round to determine the measurement basis used. In contrast, using the original protocol we can measure all the copies of $\rho$ first, and then query $\Psi$ later on when we wish to estimate the fidelity. In fact, in the original protocol we can use the same data set repeatedly to estimate the fidelity with a variety of target states, each with its own query model. 
This feature is further explored in \secref{Learning by hypothesis selection}.

One enhancement to the certification \protoref{certification} is obtained by allowing more than one qubit to be measured in a random Pauli bases. 
This gives us a ``hierarchy'' of protocols (introduced more formally as \protoref{verificationGeneral} in \appref{Performanceguarantees}) where level $1$ corresponds to \protoref{certification} and level $m$, for a constant $m$, is constructed as follows:
(1) Choose $n-m$ qubits uniformly at random and measure them in the Pauli-$Z$ basis. 
(2) The remaining $m$ qubits are each measured in a random Pauli $X$, $Y$, or $Z$ basis. 
(3) As before, this allows us to construct the classical shadow $\bm{\sigma}$ of the post-measurement state on $m$ qubits.
(4) Based on the measurement outcomes on the $n-m$ qubits, we query the model $\Psi$ to compute an observable $L_{\bm{z_k}}$ that we specify soon.
(5) 
Instead of an overlap of the form in Equation \eqref{eq:overlapinprotocol-inline}, we compute $\bm{\omega} = \Tr(L_{\bm{z_k}} \bm{\sigma})$. (6) Following $T$ repetitions of the previous steps, we report the empirically estimated shadow overlap $\frac{1}{T}\sum_{t=1}^T \bm{\omega_t}$.

In this procedure, the observable $L_{\bm{z_k}}$ is the transition matrix of a weighted random walk among $n$-bit strings that differ in at most $m$ bits. 
When $m = 1$, this observable is simply the projector $\ketbra{\Psi_{\bm{k,z}}}{\Psi_{\bm{k,z}}}$.
More generally, for $m > 1$, the observable $L_{\bm{z_k}}$ is a sum of projectors onto states defined similarly to $\ket{\Psi_{\bm{k,z}}}$ in Equation \eqref{eq:querystate-inline} and can be easily computed by querying the model $\Psi$.

In later sections, we show examples of states that can be verified with this generalized certification procedure. 
This includes entangled states such as the GHZ state, as well as the ground states of gapped sign-problem-free Hamiltonians.

\begin{table}[!ht]
\begin{protocol}{Level-$m$ of certification procedure for states $\rho$ and $\ketbra{\psi}{\psi}$ using shadow overlaps}{proto:verificationGeneral}
 \small{
\textbf{Input:} $T$ samples of an unknown state $\rho$, a model $\Psi$ that gives query access to the amplitudes of $\ket{\psi}$, a level $m\in [n]$, an error $0\leq \epsilon < 1$, and the relaxation time $\tau$ of the Markov chain sampling from $|\braket{x}{\psi}|^2$.
}
\sbline

\small{ \textbf{Goal:} Certify that the overlap $\bra{\psi}\rho\ket{\psi} \geq 1-\epsilon$. }\\
\small{\quad\quad\quad  If the fidelity is low $\bra{\psi}\rho\ket{\psi} < 1 - \epsilon$, output \textsc{Failed} with high probability.} \\
\small{\quad\quad\quad  If the fidelity is high $\bra{\psi}\rho\ket{\psi} \geq 1 - \frac{\epsilon}{2 \tau}$, output \textsc{Certified} with high probability.}

\sbline

\textbf{Procedure:}
 \begin{enumerate}
  \small{ \item Among the total $n$ qubits of $\rho$, choose a uniformly random subset of size at most $m$ qubits. Denote these qubits by $$\bm{k} = \{\bm{k_1},\dots, \bm{k_r}\}$$ where $\bm{r}\leq m$ is the size of the subset.}
   \small{ \item Perform single-qubit $Z$-basis measurements on all but qubits $\bm{k_1}, \dots, \bm{k_r}$ of $\rho$. Denote the measurement outcomes collectively by $\bm{z_k}\in\{0,1\}^{n-\bm{r}}$. }
   \small{ \item For each qubit $\bm{k_1},\dots,\bm{k_r}$, choose an $X$, $Y$, or $Z$-basis measurement uniformly at random and measure that qubit of $\rho$. Denote the post-measurement state of the qubits $\bm{k_1},\dots,\bm{k_r}$ by $\ket{\bm{s_1}},\dots,\ket{\bm{s_r}}$ respectively.
    Compute the classical shadow 
    \begin{align}
        \bm{\sigma} = \left(3\ketbra{\bm{s_1}}{\bm{s_1}}-\iden \right)\otimes \left(3\ketbra{\bm{s_2}}{\bm{s_2}}-\iden\right)\otimes\cdots \otimes \left(3\ketbra{\bm{s_r}}{\bm{s_r}}-\iden\right).
    \end{align}}
    \small{\item Query the model $\Psi$ for all choices of $\bm{r}$-bit strings $\ell_1$ and $\ell_2$ that differ exactly in $\bm{r}$ bits (i.e. $\ell_1, \ell_2 \in \{0,1\}^{\bm{r}}$ and $\dist(\ell_1, \ell_2) = \bm{r}$) to obtain the normalized states
    \begin{align}
        \ket{\Psi^{\ell_1, \ell_2}_{\bm{z_k}}}:=\frac{\Psi(\bm{z_k}^{(\ell_1)})\cdot \ket{\ell_1}+\Psi(\bm{z_k}^{(\ell_2)})\cdot \ket{\ell_2}}{\sqrt{|\Psi(\bm{z_k}^{(\ell_1)})|^2+|\Psi(\bm{z_k}^{(\ell_2)})|^2}}.\label{eq:querystateGeneral}
    \end{align}
    Here the $n$-bit string $\bm{z_k}^{(\ell)}$ matches $\ell\in\{0,1\}^{\bm{r}}$ on bits $\bm{k_1},\dots,\bm{k_r}$ and equals $\bm{z_k}\in \{0,1\}^{n-\bm{r}}$ on the remaining $n-\bm{r}$ bits.
     } 
    \small{\item Compute the overlap 
    \begin{align}
     \bm{\omega}:= \Tr(L_{\bm{z_k}} \bm{\sigma}) \text{\quad with \quad} L_{\bm{z_k}} := \sum_{\substack{\ell_1, \ell_2 \in \{0, 1\}^{\bm{r}}\\ \dist(\ell_1, \ell_2) = \bm{r}}}\ketbra{\Psi^{\ell_1, \ell_2}_{\bm{z_k}}}{\Psi^{\ell_1, \ell_2}_{\bm{z_k}}}.  \label{eq:querystateProtocol}
    \end{align}}
   \small{ \item Repeat steps 1. to 5. for $T$ times to obtain overlaps $\bm{\omega_1},\dots,\bm{\omega_T}$. Report the estimated shadow overlap $$\hat{\bm{\omega}} := \frac{1}{T}\sum_{t=1}^T \bm{\omega_t}.$$ }
    \small{ \item If the estimated shadow overlap $\hat{\bm{\omega}} \geq 1 - \frac{3\epsilon}{4\tau}$, output $\textsc{Certified}$. Otherwise, output $\textsc{Failed}$.}
    \end{enumerate}
\end{protocol}
\end{table}

\subsection{Detailed analysis}\label{sec:Performanceguarantees}

Suppose we are provided with a representation of a many-body quantum state  
\begin{align}
    \ket{\psi}=\sum_{x\in\{0,1\}^n} \sqrt{\pi(x)} e^{i\phi(x)} \ket{x}
\end{align}
via query access to a model $\Psi: \{0,1\}^n\mapsto \mathbb{C}$. This model upon querying any $x\in\{0,1\}^n$ and $y\in\{0,1\}^n$ returns the possibly un-normalized amplitudes $\Psi(x)$ and $\Psi(y)$ such that~$\frac{\Psi(x)}{\Psi(y)}=\frac{\braket{x}{\psi}}{\braket{y}{\psi}}$.

We are also given multiple identical copies of a state $\rho$. Our goal is to use queries to $\Psi$ along with performing unentangled simple local measurements on copies of $\rho$ to certify that the state $\ketbra{\psi}{\psi}$ is close or far from the state $\rho$.

In this section, we analyze the performance of a generalized version of \protoref{certification} introduced in \secref{intro}.
In this version, which is stated in detail in \protoref{verificationGeneral}, we choose $m$ qubits uniformly at random and measure each on a randomized basis. 
This section includes the proof of \thmref{verificationFastMixing} as well as the equivalent statement for the level-$m$ protocol.

Fix a level $m$ for the certification protocol. The measurement distribution $\pi(x) = |\braket{x}{\psi}|^2$ is a distribution on a graph $G=(V,E)$ where the vertices $V=\{0,1\}^n$ are $n$-bit strings, 
and an edge $e = (x, y) $ exists between vertices $x$ and $y$ when they differ in $k \in \{1, \dots, m\}$ bits. 
Let~$\mathcal{S}:=\{x: \pi(x)>0\}$ denote the support of $\pi(x)$.
Let $N = \sum_{k=1}^m \binom{n}{k}$ be number of neighbors of each vertex. 
Consider a weighted version of this graph where every edge $(x,y)$ that connects vertices $x,y \in \mathcal{S}$ is assigned a weight $W(x,y)$ according to 
\begin{align}
W(x,y)=\begin{cases} 
        \frac{1}{N}\cdot \frac{\pi(x)\pi(y)}{\pi(x)+\pi(y)} & (x,y)\in E, \\
       \frac{1}{N}\cdot \sum_{x':(x', x)\in E}\frac{\pi(x)^2}{\pi(x)+\pi(x')} & x=y,\\
       0 & \text{otherwise}.
      \end{cases}\label{eq:TransitionMatrix-W}
\end{align}

Given this weight matrix $W = \sum_{x,y} W(x,y) \ketbra{x}{y}$, there is a canonical way to define a random walk or a Markov chain on this graph.
The transition matrix $P = \sum_{x\in\{0,1\}^n}P(x,y)\ketbra{x}{y}$ of this walk is defined by $P = S^{-1} W$ where the scaling matrix is set to be $$S = \sum_{x\in\{0,1\}^n} \pi(x) \ketbra{x}{x}.$$
Therefore, the transition probability $P(x,y)$ from a vertex $x\in \mathcal{S}$ to $y\in \{0,1\}^n$ is given by
 \begin{align}
P(x,y)=\begin{cases} 
       \frac{1}{N}\cdot \frac{\pi(y)}{\pi(x)+\pi(y)} & (x,y)\in E, \\
       \frac{1}{N}\cdot \sum_{x':(x', x)\in E}\frac{\pi(x)}{\pi(x)+\pi(x')} & x=y,\\
       0 & \text{otherwise}.
      \end{cases}\label{eq:TransitionMatrix}
\end{align}

In our application, it is more convenient to consider a normalized version of the transition matrix $P$ given by $S^{\frac{1}{2}} P S^{-\frac{1}{2}} = S^{-\frac{1}{2}}W S^{-\frac{1}{2}}$ where $S^{-\frac{1}{2}} = \sum_{x\in \mathcal{S}} \frac{1}{\sqrt{\pi(x)}} \ketbra{x}{x}$. 
We now claim that the observable whose expectation is measured in \protoref{verificationGeneral} directly relates to this normalized transition matrix. 

 \begin{proposition}\label{prop:performanceGuranteeLevelM}
     Suppose the certification \protoref{verificationGeneral} is performed on copies of the state $\rho$ and a model $\Psi$ of the quantum state $\ket{\psi} = \sum_{x\in\{0,1\}^n} \sqrt{\pi(x)} e^{i\phi(x)} \ket{x}$.
     Define the `phase matrix' by $F = \sum_{x \in \{0,1\}^n} e^{i\phi(x)}\ketbra{x}{x}$ and let $L$ be the Hermitian operator given by
    \begin{align}
        L = F\cdot S^{\frac{1}{2}} P S^{-\frac{1}{2}}\cdot F^{\dagger}. \label{eq:operatorL}
    \end{align}
    We have $L\ket{\psi} = \ket{\psi}$ and $\Tr(L \rho) = \E [\mathbf{\omega}]$, where $\E[\mathbf{\omega}]$ denotes the expected output of the certification protocol. 
\end{proposition}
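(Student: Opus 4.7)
The statement naturally splits into two claims. The eigenvector relation $L\ket{\psi}=\ket{\psi}$ is a structural consequence of $P$ being a reversible stochastic matrix; the bulk of the work goes into verifying $\Tr(L\rho)=\E[\bm{\omega}]$, which I would carry out by computing $\E[\bm{\omega}]$ directly from \protoref{verificationGeneral} and then matching matrix elements with $FS^{\frac{1}{2}}PS^{-\frac{1}{2}}F^\dagger$ entry by entry.

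For the eigenvector claim, the weight matrix $W$ in \eqref{eq:TransitionMatrix-W} is manifestly symmetric in $x,y$, so $S^{\frac{1}{2}}PS^{-\frac{1}{2}}=S^{-\frac{1}{2}}WS^{-\frac{1}{2}}$ is Hermitian; moreover $P$ is row-stochastic, giving $P\sum_{x\in\mathcal{S}}\ket{x}=\sum_{x\in\mathcal{S}}\ket{x}$. Since $S^{-\frac{1}{2}}\sum_{x\in\mathcal{S}}\sqrt{\pi(x)}\ket{x}=\sum_{x\in\mathcal{S}}\ket{x}$ and $S^{\frac{1}{2}}\sum_{x\in\mathcal{S}}\ket{x}=\sum_{x\in\mathcal{S}}\sqrt{\pi(x)}\ket{x}$, the vector $\sum_{x\in\mathcal{S}}\sqrt{\pi(x)}\ket{x}$ is a $+1$ eigenvector of $S^{\frac{1}{2}}PS^{-\frac{1}{2}}$. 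Writing $\ket{\psi}=F\sum_{x\in\mathcal{S}}\sqrt{\pi(x)}\ket{x}$ and using $F^\dagger F=\iden$, a direct substitution yields $L\ket{\psi}=\ket{\psi}$.

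For the trace identity, I would condition $\E[\bm{\omega}]$ on the random subset $\bm{k}$ and on the $Z$-basis outcomes $\bm{z_k}$ on the qubits outside $\bm{k}$. The classical shadow identity recalled in \appref{randomizedPauliMeasurements} tells us that the expectation of $\bm{\sigma}$ over the randomized $X,Y,Z$ measurements on the qubits in $\bm{k}$ equals the conditional reduced state $\rho_{\bm{k}|\bm{z_k}}$ of those qubits; combined with $p(\bm{z_k})\,\rho_{\bm{k}|\bm{z_k}}=\bra{\bm{z_k}}\rho\ket{\bm{z_k}}$ and linearity, this gives
\[
\E[\bm{\omega}]\,=\,\E_{\bm{k}}\sum_{\bm{z_k}}\Tr\!\left[\bigl(L_{\bm{z_k}}\otimes\ketbra{\bm{z_k}}{\bm{z_k}}\bigr)\rho\right]\,=\,\Tr(L_{\mathrm{eff}}\,\rho),
\]
with $L_{\mathrm{eff}}:=\E_{\bm{k}}\sum_{\bm{z_k}}L_{\bm{z_k}}\otimes\ketbra{\bm{z_k}}{\bm{z_k}}$. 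It then remains to check $L_{\mathrm{eff}}=FS^{\frac{1}{2}}PS^{-\frac{1}{2}}F^\dagger$, which I would do by expanding each projector in \eqref{eq:querystateProtocol} using $\Psi(x)\propto\sqrt{\pi(x)}e^{i\phi(x)}$, setting $u=\bm{z_k}^{(\ell_1)}$ and $v=\bm{z_k}^{(\ell_2)}$, and noting that $u,v$ differ exactly on the bits of $\bm{k}$. Thus as $(\bm{k},\bm{z_k},\ell_1,\ell_2)$ sweeps all valid configurations the pair $(u,v)$ sweeps all edges of $G$. Each projector contributes diagonal pieces $\tfrac{\pi(u)}{\pi(u)+\pi(v)}\ketbra{u}{u}$ and $\tfrac{\pi(v)}{\pi(u)+\pi(v)}\ketbra{v}{v}$ together with an off-diagonal piece $\tfrac{\sqrt{\pi(u)\pi(v)}\,e^{i(\phi(u)-\phi(v))}}{\pi(u)+\pi(v)}\ketbra{u}{v}$ and its conjugate, which after the $1/N$ weighting from the uniform average over $\bm{k}$ precisely reproduce the entries of $FS^{\frac{1}{2}}PS^{-\frac{1}{2}}F^\dagger$ read off from \eqref{eq:TransitionMatrix}.

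\textbf{Main obstacle.} The chief subtlety is the combinatorial bookkeeping in the matching step: one must carefully track how the inner sum over distance-$\bm{r}$ pairs $(\ell_1,\ell_2)$ inside $L_{\bm{z_k}}$ interacts with the outer average over subsets $\bm{k}$ of varying size at most $m$, and confirm that each edge of $G$ is accounted for with the correct multiplicity and normalization $1/N$. The diagonal entries, which collect contributions from all edges incident to a vertex $u$, require summing the per-neighbor weights $\pi(u)/(\pi(u)+\pi(v))$ over $v\sim u$; ensuring that the self-loop term $P(u,u)$ emerges correctly from this sum is the most delicate point, after which Hermiticity of $L$ and the eigenvector relation follow as automatic consequences of reversibility.
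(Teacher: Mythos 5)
Your proposal is correct, and your treatment of the eigenvector relation takes a genuinely different route from the proof the paper gives for this proposition. The paper establishes $L\ket{\psi}=\ket{\psi}$ by writing out the matrix entries $\bra{x}L\ket{y}$ explicitly and computing $\bra{x}L\ket{\psi}=\sqrt{\pi(x)}\,e^{i\phi(x)}$ term by term; you instead observe that $W$ is symmetric (so $S^{1/2}PS^{-1/2}=S^{-1/2}WS^{-1/2}$ is Hermitian), that $P$ is row-stochastic (so the all-ones vector restricted to $\mathcal{S}$ is a $+1$ right eigenvector), and that conjugating by $S^{1/2}$ and $F$ transports this eigenvector to $\ket{\psi}$. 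This is more structural: it isolates reversibility and stochasticity as the only properties needed, and it delivers Hermiticity of $L$ for free rather than as a separate check. The paper does gesture at this alternative argument later, in the proof of \thmref{expectationEqualsOperatoL}, but only briefly (and with a typographical slip, writing $P\ket{\pi}=\ket{\pi}$ rather than the correct statement about $S^{1/2}PS^{-1/2}$), so your version is the cleaner self-contained account. For the trace identity $\Tr(L\rho)=\E[\bm{\omega}]$ you follow the same path as the paper — condition on $(\bm{k},\bm{z_k})$, invoke $\E_{\mathrm{shadows}}[\bm{\sigma}]=\bra{\bm{z_k}}\rho\ket{\bm{z_k}}/\Tr(\bra{\bm{z_k}}\rho\ket{\bm{z_k}})$, and reduce to checking that $\frac{1}{N}\sum_{r,k,z_k}\ketbra{z_k}{z_k}\otimes L_{z_k}$ equals $FS^{1/2}PS^{-1/2}F^\dagger$ — except that you spell out the entry-level matching that the paper dismisses as ``direct inspection''. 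One small caution in that bookkeeping: the sum in \eqref{eq:querystateProtocol} over pairs $(\ell_1,\ell_2)$ with $\dist(\ell_1,\ell_2)=r$ must be read as running over unordered pairs (equivalently, over $\ell_1$ with $\ell_2=\bar{\ell_1}$ and each projector counted once); if taken over ordered pairs, each projector appears twice and the diagonal entries come out a factor of two too large. Reading the paper's remark that $L_{\bm{z_k}}$ reduces to the single projector $\ketbra{\Psi_{\bm{k,z}}}{\Psi_{\bm{k,z}}}$ when $m=1$ confirms the unordered interpretation; with that fixed, your edge-by-edge accounting lands exactly on the entries of \eqref{eq:TransitionMatrix-L}.
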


\begin{proof}
The entries of the observable $L$ for any $x \in \mathcal{S}$ are given by
\begin{align}
\bra{x}L\ket{y}=\begin{cases} 
       \frac{1}{N}\cdot \frac{\sqrt{\pi(x)\pi(y)}}{\pi(x)+\pi(y)}\cdot e^{i(\phi(x)-\phi(y))} & (x,y)\in E, \\
       \frac{1}{N}\cdot \sum_{x':(x', x)\in E}\frac{\pi(x)}{\pi(x)+\pi(x')} & x=y,\\
       0 & \text{otherwise}.
      \end{cases}\label{eq:TransitionMatrix-L}
\end{align}
For $x\in\mathcal{S}$, we have
\begin{align}
    \bra{x}L\ket{\psi} &= \bra{x}L\ket{x}\cdot \braket{x}{\psi} + \sum_{y\neq x} \bra{x}L\ket{y}\cdot \braket{y}{\psi}\nn\\
    & =  \frac{1}{N} \sum_{y:(y, x)\in E}\frac{\pi(x) }{\pi(x)+\pi(y)}\cdot \sqrt{\p(x)}e^{i\phi(x)}
    + \frac{1}{N} \sum_{y:(y, x)\in E} \frac{\sqrt{\pi(x)\pi(y)}}{\pi(x)+\pi(y)}\cdot e^{i(\phi(x)-\phi(y))} \cdot \sqrt{\pi(y)}e^{i\phi(y)}\nn\\
    & = \frac{1}{N} \sum_{y:(y, x)\in E}\left(\frac{\pi(x) }{\pi(x)+\pi(y)} + \frac{\pi(y) }{\pi(x)+\pi(y)} \right)\sqrt{\p(x)}e^{i\phi(x)}\nn\\
    & = \sqrt{\p(x)}e^{i\phi(x)}.\label{eq:entriesOfL}
\end{align}
This shows that $L\ket{\psi} = \ket{\psi}$.
Next we prove that $\Tr(L \rho) = \E[\bm{\omega}]$. 
Consider subsets of qubits with size $r\leq m$.
There are $N = \sum_{ k=1}^m \binom{n}{k}$ choices for the location of these qubits. 
For any $r$, we enumerate the chosen qubits by $k_1,\dots, k_r$ and collectively denote them by $k = \{k_1, \dots, k_r\}$.
For a fixed $k$, the set $\{z_k \in \{0, 1\}^{n-r}$\} denotes all the possible bit strings on the remaining $n - r$ bits. 
Direct inspection reveals that the observable $L$ corresponding to the model $\Psi$ can be expressed as
\begin{align}
    L = \frac{1}{N} \sum_{r\in [m]} \sum_{k = \{k_1, \dots, k_r\}}\sum_{z_k\in\{0,1\}^{n-r}} \ketbra{z_k}{z_k} \otimes  L_{z_k},
\end{align}
where $ L_{z_k}$ is an operator acting on $r$ qubits $\{k_1,\dots, k_r\}$, and is given by
\begin{align}
      L_{z_k} := \sum_{\substack{\ell_1, \ell_2 \in \{0, 1\}^{r}\\ \dist(\ell_1, \ell_2) = r}}\ketbra{\Psi^{\ell_1, \ell_2}_{z_k}}{\Psi^{\ell_1, \ell_2}_{z_k}} \text{\quad with \quad} \ket{\Psi^{\ell_1, \ell_2}_{z_k}}:=\frac{\Psi(z_k^{(\ell_1)})\cdot \ket{\ell_1}+\Psi(z_k^{(\ell_2)})\cdot \ket{\ell_2}}{\sqrt{|\Psi(z_k^{(\ell_1)})|^2+|\Psi(z_k^{(\ell_2)})|^2}} .\label{eq:querystate-inline2}
\end{align}
In this expression, the binary string $z_k^{(\ell)}$ equals $\ell\in\{0,1\}^r$ on bits $k_1,\dots,k_r$ and equals~$z_k\in \{0,1\}^{n - r}$ on the remaining $n - r$ bits. 

Let $\bm{\sigma}$ denote the classical shadow obtained after performing randomized Pauli measurements on the post-measurement state of qubits $r$. 
That is, if the $r$ single Pauli measurements return states $\ket{\bm{s_1}},\dots, \ket{\bm{s_r}}$, 
we set $\bm{\sigma} = \left(3\ketbra{\bm{s_1}}{\bm{s_1}}-\iden \right)\otimes \left(3\ketbra{\bm{s_2}}{\bm{s_2}}-\iden\right)\otimes\cdots \otimes \left(3\ketbra{\bm{s_r}}{\bm{s_r}}-\iden\right)$. 
It follows from the discussion in \appref{randomizedPauliMeasurements} that $\E_{\mathrm{shadows}}[\bm{\sigma}] = \frac{\bra{\bm{z_k}}\rho\ket{\bm{z_k}}}{\Tr(\bra{\bm{z_k}}\rho\ket{\bm{z_k}})}$, 
where the expectation is over Pauli measurements on qubits $\bm{k}$. 

Using this, we can expand $\Tr[L \rho]$ as follows:
\begin{align}
    \Tr[L \rho] & = \frac{1}{N} \sum_{\substack{r\in [m] \\ k = \{k_1, \dots, k_r\}}}\sum_{z_k\in\{0,1\}^{n-r}}\Tr(\bra{z_k}\rho\ket{z_k})\cdot \Tr\left(L_{z_k} \cdot \frac{\bra{z_k}\rho\ket{z_k}}{\Tr(\bra{z_k}\rho\ket{z_k})}\right)\nn\\
    & = \E_{\bm{k}, \bm{z_k}} \Tr\left(L_{\bm{z_k}} \cdot \frac{\bra{\bm{z_k}}\rho\ket{\bm{z_k}}}{\Tr(\bra{\bm{z_k}}\rho\ket{\bm{z_k}})}\right)\nn\\
    & =\E_{\bm{k}, \bm{z_k}} \Tr\left(L_{\bm{z_k}} \cdot \E_{\mathrm{shadows}}[\bm{\sigma}]\right) \nn\\
    & =\E_{\bm{k}, \bm{z_k}} \E_{\mathrm{shadows}}[\Tr\left(L_{\bm{z_k}} \bm{\sigma}\right)] \nn\\
    & = \E[\bm{\omega}].\nn
\end{align}
In the last expression, the expectation is with respect to the location of the Pauli $Z$ measurements, their outcomes, as well as the randomized measurements on the remaining qubits. 
\end{proof}

When we first average over the classical shadows, the shadow overlap $\E[\omega]$ is equal to the average overlap between the postselected state on $\rho$ and the postselected state on the target state $\ket{\psi}$. Hence $0 \leq \E[\omega] = \Tr(L \rho) \leq 1$ for any state $\rho$. This implies that $0 \preceq L \preceq I$.

\begin{theorem}\label{thm:expectationEqualsOperatoL}
    Let $\lambda_1 = 1 - \frac{1}{\tau}$ be the second largest eigenvalue of the transition matrix $P$ defined with respect to the measurement distribution $\pi(x)$ of the state $\ket{\psi}$.
    The shadow overlap satisfies 
    \begin{align} \label{eq:gapFidelity}
        &\text{if\quad} \E[\bm{\omega}] \geq 1- \epsilon \text{\quad then we have \quad} \bra{\psi}\rho \ket{\psi}\geq 1- \tau \epsilon;\\
        &\text{if\quad} \bra{\psi}\rho \ket{\psi}\geq 1- \epsilon \text{\quad then we have \quad} \E[\bm{\omega}] \geq 1- \epsilon. \label{eq:nogapFidelity}
    \end{align}
\end{theorem}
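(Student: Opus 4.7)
The plan is to translate everything into spectral facts about $L$ using Proposition~\ref{prop:performanceGuranteeLevelM}, which gives $L\ket{\psi}=\ket{\psi}$, $\Tr(L\rho)=\E[\bm{\omega}]$, and the factorization $L=F S^{1/2}P S^{-1/2}F^{\dagger}$. The map $FS^{1/2}$ is invertible on the support $\mathcal{S}$ (with inverse $S^{-1/2}F^{\dagger}$, since $F$ is unitary and $S^{1/2}$ is diagonal and nonzero on $\mathcal{S}$), so $L$ and $P$ are similar and thus share the same spectrum. Combined with the paper's observation that $0\preceq L \preceq I$, this tells us $L$ is a Hermitian operator whose spectrum equals that of $P$, and in particular lies in $[0,1]$ with top eigenvalue $1$ attained on $\ket{\psi}$ and all remaining eigenvalues bounded by $\lambda_{1}=1-\tfrac{1}{\tau}$.

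Next I would diagonalize $L$ spectrally as $L=\ketbra{\psi}{\psi}+P_{\perp}L P_{\perp}$, where $P_{\perp}=I-\ketbra{\psi}{\psi}$. To justify this cleanly, I note that since $L$ is Hermitian and $\ket{\psi}$ is an eigenvector, we have $\bra{\psi}L = \bra{\psi}$, which makes the cross terms $\ketbra{\psi}{\psi}L P_{\perp}$ and $P_{\perp}L\ketbra{\psi}{\psi}$ vanish. Setting $f := \bra{\psi}\rho\ket{\psi}$, taking the trace against $\rho$ yields
\begin{equation}
\E[\bm{\omega}] \;=\; \Tr(L\rho) \;=\; f \,+\, \Tr\!\left(L\, P_{\perp}\rho P_{\perp}\right).
\end{equation}

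For the first implication, the spectral bound on $L$ restricted to $\ket{\psi}^{\perp}$ gives $P_{\perp}LP_{\perp}\preceq \lambda_{1}P_{\perp}$, so
\begin{equation}
\Tr\!\left(L\, P_{\perp}\rho P_{\perp}\right) \;\leq\; \lambda_{1}\Tr(P_{\perp}\rho P_{\perp}) \;=\; \lambda_{1}(1-f).
\end{equation}
Hence $\E[\bm{\omega}] \leq f + (1-\tfrac{1}{\tau})(1-f) = 1 - (1-f)/\tau$, and the hypothesis $\E[\bm{\omega}]\geq 1-\epsilon$ rearranges to $1-f \leq \tau\epsilon$. For the second implication, the same decomposition together with $L\succeq 0$ (equivalently, $P_{\perp}LP_{\perp}\succeq 0$) gives $\E[\bm{\omega}] \geq f \geq 1-\epsilon$ immediately.

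The only nontrivial point is step (1): transferring the spectral-gap hypothesis on $P$ (a statement about a Markov chain) to a statement about the self-adjoint spectrum of $L$. The similarity relation $L = (FS^{1/2})\,P\,(FS^{1/2})^{-1}$ disposes of this, but one has to verify that $\ket{\psi}$ really is the eigenvector corresponding to the Perron eigenvalue of $P$ (so that $\ket{\psi}^{\perp}$ corresponds to the $\lambda\leq\lambda_{1}$ part of the spectrum). This is a one-line check: $S^{-1/2}F^{\dagger}\ket{\psi}$ is, by direct computation, the constant (all-ones) vector, which is indeed the stationary right-eigenvector of $P$. Everything else is routine.
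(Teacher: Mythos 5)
Your proof is correct and follows essentially the same route as the paper: use the similarity $L = FS^{1/2}PS^{-1/2}F^{\dagger}$ and $0\preceq L\preceq I$ to locate the spectrum of $L$, identify $\ket{\psi}$ as the top eigenvector, and then bound $\Tr(L\rho)$ above and below by separating the $\ket{\psi}$-component from the orthogonal complement. The only cosmetic difference is that you package the orthogonal part via the projector decomposition $L=\ketbra{\psi}{\psi}+P_{\perp}LP_{\perp}$, whereas the paper expands $\Tr(L\rho)$ directly in the full eigenbasis $\{\ket{\lambda_i}\}$ of $L$; the inequalities are identical.
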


\begin{proof}
    We first study the spectrum of the observable $L$.
    From the previous theorem and the fact that $0 \preceq L \preceq I$, the eigenvalues of $L$ are given by $1 = \lambda_0 \geq \lambda_1 \geq \lambda_2 \geq \cdots \geq 0$. 
    The two operators $P$ and $L$ are related by a similarity transformation. Hence, they have the same set of eigenvalues. 
    Let $\ket{\lambda_i}$ denote the eigenstate of observable $L$ corresponding to the eigenvalue $\lambda_i$.
    We claim that the top eigenstate $\ket{\lambda_0}$ of the operator $L$ is the quantum state $\ket{\psi}$.
    This can be seen by the direct calculation in \propref{performanceGuranteeLevelM} or by noting that the measurement distribution $\pi(x)$ is the unique stationary distribution of $P$. Hence, we have $P \ket{\pi} = \ket{\pi}$, 
    where $\ket{\pi} = \sum_{x\in\{0,1\}^n} \sqrt{\pi(x)} \ket{x}$.
    From this and the fact that $\ket{\pi} = F^{\dagger} \ket{\psi}$, we have $L\ket{\psi} = \ket{\psi}$, as claimed.
    
    Now we prove the implication stated in \eqref{eq:gapFidelity}. 
    From \propref{performanceGuranteeLevelM}, we know that $\E[\mathbf{\omega}] = \Tr[L\rho]$. Assuming $\E[\bm{\omega}] \ge  1-\epsilon$, we have
    \begin{align}
        1-\epsilon\leq \E[\mathbf{\bm{\omega}}] &= \Tr[L\rho]\nn\\
        & = \bra{\psi}\rho\ket{\psi} + \sum_{i\geq 1}\lambda_i\cdot \bra{\lambda_i}\rho\ket{\lambda_i} && \text{using\ } \ket{\lambda_0} = \ket{\psi}\nn\\
        &  \leq \bra{\psi}\rho\ket{\psi} +  \lambda_1\cdot \sum_{i\geq 1} \bra{\lambda_i}\rho \ket{\lambda_i} && \text{definition of\ } \lambda_1 \nn\\    
        &  \leq \bra{\psi}\rho\ket{\psi} +  \lambda_1\cdot(1- \bra{\psi}\rho\ket{\psi}) && \text{since\ } \Tr(\rho) = 1. \nn
    \end{align}
    By rearranging the two sides of the inequality, we arrive at the bound $\bra{\psi}\rho\ket{\psi} \geq 1- \frac{\epsilon}{1-\lambda_1} = 1 - \tau\epsilon$.
    We next prove the implication stated in \eqref{eq:nogapFidelity}.
    \begin{align}
        \E[\mathbf{\bm{\omega}}] &= \Tr[L\rho]\nn\\
        & = \bra{\psi}\rho\ket{\psi} + \sum_{i\geq 1}\lambda_i\cdot \bra{\lambda_i}\rho\ket{\lambda_i} && \text{using\ } \ket{\lambda_0} = \ket{\psi}\nn\\
        &\geq \bra{\psi}\rho\ket{\psi} \geq 1 - \epsilon.
    \end{align}
    This concludes the proof of this theorem.
\end{proof}

It is worth mentioning that the fast mixing assumption used in our results in this section, in particular \thmref{expectationEqualsOperatoL}, assumes a non-zero gap $\lambda_0 - \lambda_1\geq \frac{1}{\tau}$ between the first and the second eigenvalues of the transition matrix $P$, and therefore a \emph{unique} eigenvector with eigenvalue $\lambda_0 = 1$.
There are, however, quantum states $\ket{\psi}$ whose transition matrix possesses a degenerate eigenspace $\Pi_0$ with $\lambda_0 = 1$. 
Such states can occur when the support of the measurement distribution $\pi(x)$ is a union of disjoint subsets of the hypercube $\{0,1\}^n$. 
As shown in \appref{HaarSpectralGap}, these states are non-generic since Haar random states exhibit a non-zero gap $\tau \geq \Omega\left(\frac{1}{n^2}\right)$ with high probability. 
A simple modification of the proof of \thmref{expectationEqualsOperatoL} shows that in these cases, our protocol certifies the overlap $\Tr[\Pi_0 \rho]$ between the lab state $\rho$ and the degenerate subspace $\Pi_{0}$ that includes the target state $\ket{\psi}$.
That is if the shadow overlap $\E[\bm{\omega}] \geq 1- \epsilon$ then we have $\Tr[\rho \Pi_0]\geq 1- \tau \epsilon$. 
We discuss this in more depth in \appref{GHZ} in the context of the GHZ state.

\begin{theorem}[Sample Complexity of level-$m$ of \protoref{verificationGeneral}]\label{thm:sampleComplexity}
Using $T = 2^{2m}\cdot \frac{1}{\epsilon^2}\cdot \log\left(\frac{2}{\delta}\right)$ samples of the state $\rho$, the empirical average $\frac{1}{T}\sum_{t=1}^T \bm{\omega_t}$ computed by the protocol is within $\epsilon$ additive distance from the shadow overlap $\E[\bm{\omega}]$ with probability at least $1 - \delta$.
\end{theorem}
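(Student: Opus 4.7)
My plan is to apply a standard concentration inequality after controlling the second moment of a single-shot overlap. Since the samples $\bm{\omega_1},\dots,\bm{\omega_T}$ are collected from independent copies of $\rho$, they are i.i.d. with common mean $\E[\bm{\omega}] = \Tr(L\rho)$ by \propref{performanceGuranteeLevelM}. What remains is to quantify the fluctuation of a single $\bm{\omega} = \Tr(L_{\bm{z_k}}\bm{\sigma})$.

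Two structural observations drive the bound. First, the observable $L_{\bm{z_k}}$ is an orthogonal projector of rank $2^{\bm{r}-1}$: the states $\ket{\Psi^{\ell_1,\ell_2}_{\bm{z_k}}}$ appearing in its definition lie in the two-dimensional subspaces $\mathrm{span}(\ket{\ell_1},\ket{\ell_2})$ spanned by complementary bit strings, and these 2D subspaces are mutually orthogonal for distinct unordered pairs $\{\ell_1,\ell_2\}$, so $\|L_{\bm{z_k}}\|_\infty \leq 1$. Second, the classical shadow $\bm{\sigma}$ is a tensor product of $\bm{r}\leq m$ single-qubit operators $3\ket{\bm{s_j}}\bra{\bm{s_j}}-\iden$, exactly the setting reviewed in \appref{randomizedPauliMeasurements}. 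Combining these with the standard locality-$\bm{r}$ shadow-norm bound for randomized Pauli measurements \cite{huang2020predicting} gives
\begin{align*}
\E[\bm{\omega}^2]
\;=\;\E\bigl[\Tr(L_{\bm{z_k}}\bm{\sigma})^2\bigr]
\;\leq\; 4^{\bm{r}}\,\|L_{\bm{z_k}}\|_\infty^2
\;\leq\; 4^{m},
\end{align*}
and averaging over the random choice of the subset $\bm{k}$ of size at most $m$ preserves this bound, yielding $\mathrm{Var}(\bm{\omega})\leq 4^m$.

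Given this second-moment control, I would close the argument with a median-of-means estimator: partition the $T = 2^{2m}\epsilon^{-2}\log(2/\delta)$ samples into $K \asymp \log(2/\delta)$ equal blocks, form the mean within each block, and report the median across blocks. Chebyshev's inequality within each block of size $T/K = \Theta(4^m/\epsilon^2)$ gives a constant probability that the block mean is within $\epsilon$ of $\E[\bm{\omega}]$, and a Chernoff bound on the fraction of successful blocks amplifies this to confidence $1-\delta$. Alternatively, since $\bm{\omega}$ is uniformly bounded by $\|L_{\bm{z_k}}\|_1\cdot\|\bm{\sigma}\|_\infty \leq 2^{\bm{r}-1}\cdot 2^{\bm{r}}=2^{2m-1}$, Bernstein's inequality applied to the empirical mean with variance proxy $4^m$ delivers the same asymptotic rate in the regime $\epsilon < 1$.

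The main obstacle I anticipate is extracting the sharp $4^m$ scaling rather than a much weaker $4^{2m}$ one; this is precisely why a range-only Hoeffding bound is insufficient, since the crude estimate $|\bm{\omega}|\leq 2^{2m-1}$ would yield $T = \widetilde{O}(4^{2m}/\epsilon^2)$. Using the classical-shadow variance calculation circumvents this loss, because the large swings of $\bm{\sigma}$ across its eigenvalues $\pm 2^{\bm{r}},\pm 2^{\bm{r}-1},\dots$ are averaged out in $\E[\bm{\omega}^2]$ by the randomization over the Pauli basis and outcome.
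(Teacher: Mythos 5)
Your route is genuinely different from the paper's, and in fact it is the one that actually delivers the claimed $4^m$ rate. The paper's proof is a pure range argument: it bounds $\norm{L_{\bm{z_k}}}_1 \leq 2^{m-1}$ and $\norm{\bm{\sigma}}_\infty = 2^m$ to get $|\bm{\omega}| \leq 2^{2m-1}$, and then asserts the Hoeffding tail $2e^{-T\epsilon^2/2^{2m-1}}$. However, that tail does not follow from Hoeffding's inequality applied to a random variable taking values in $[-2^{2m-1},2^{2m-1}]$: the range has width $2^{2m}$, so the standard Hoeffding bound reads $2\exp\!\bigl(-2T\epsilon^2/2^{4m}\bigr) = 2\exp\!\bigl(-T\epsilon^2/2^{4m-1}\bigr)$, which would require $T = \Omega\bigl(2^{4m}\epsilon^{-2}\log(1/\delta)\bigr)$ --- exactly the $4^{2m}$ loss you anticipated in your final paragraph. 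So the concern you raised is real, and a range-only concentration bound is insufficient for the stated scaling.

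Your fix is the right one and buys the missing factor of $2^{2m-1}$: you exploit that $L_{\bm{z_k}}$ is an orthogonal projector with $\norm{L_{\bm{z_k}}}_\infty = 1$ (the $\ket{\Psi^{\ell_1,\ell_2}_{\bm{z_k}}}$ live in mutually orthogonal two-dimensional blocks) and invoke the Pauli-shadow variance bound $\E[\Tr(L_{\bm{z_k}}\bm{\sigma})^2 \mid \bm{k},\bm{z_k}] \leq 4^{\bm{r}}\norm{L_{\bm{z_k}}}_\infty^2 \leq 4^m$, then close with Bernstein or median-of-means. Two small cleanups are worth doing. First, the theorem as stated is about the plain empirical average $\tfrac{1}{T}\sum_t\bm{\omega_t}$, so Bernstein is the correct closer; median-of-means proves a slightly different statement (for a different estimator). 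Second, when you say ``averaging over the random choice of the subset $\bm{k}$ preserves this bound,'' you are implicitly applying the law of total variance: $\Var[\bm{\omega}] = \E\bigl[\Var[\bm{\omega}\mid\bm{k},\bm{z_k}]\bigr] + \Var\bigl[\E[\bm{\omega}\mid\bm{k},\bm{z_k}]\bigr]$, where the first term is $\leq 4^m$ by the shadow norm and the second is $\leq 1$ because $\E[\bm{\omega}\mid\bm{k},\bm{z_k}] = \Tr(L_{\bm{z_k}}\rho_{\bm{z_k}}) \in [0,1]$. With these in place Bernstein gives $T = O(4^m\epsilon^{-2}\log(1/\delta))$, matching the theorem up to a constant factor, whereas the paper's Hoeffding argument as written does not support the stated constant $2^{2m}$.
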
 
\begin{proof}
  
    This claim is a consequence of conventional tail bounds.
    Note $\E\left(\frac{1}{T}\sum_{t=1}^T \bm{\omega_t}\right)=\E(\bm{\omega})$. 
    Since subset sizes $r$ satisfy $r\leq m$, it also holds that 
    \begin{align}
        \norm{L_{\bm{z_k}}}_1\leq  \max_{r\in[m]} \sum_{\substack{\ell_1, \ell_2 \in \{0, 1\}^{r}\\ \dist(\ell_1, \ell_2)=r}}\norm{\ketbra{\Psi^{\ell_1, \ell_2}_{\bm{z_k}}}{\Psi^{\ell_1, \ell_2}_{\bm{z_k}}}}_1  \leq 2^{m-1} 
    \end{align}
    We have $\norm{\bm{\sigma}}_{\infty} = 2^m$ from \appref{randomizedPauliMeasurements}. 
      From these bounds, we get $|\bm{\omega}| = |\Tr(\bm{L_{z_k}} \bm{\sigma})| \leq 2^{2m-1}$.
     Hence, an application of Hoeffding's inequality shows that
    \begin{align}
        \mathbf{Pr}\left[ \left| \frac{1}{T}\sum_{t=1}^T \bm{\omega_t} -\E[\bm{\omega}] \right| > \epsilon\right]\leq 2 e^{- \frac{T\epsilon^2}{2^{2m-1}}}.\label{eq:concentrationEomega}
    \end{align}
    Hence, when the number of samples $T$ satisfies
    $$T \geq 2^{2m}\cdot \frac{1}{\epsilon^2}\cdot \log\left(\frac{2}{\delta}\right),$$
    the empirical average $\frac{1}{T}\sum_{t=1}^T \bm{\omega_t}$ is within $\epsilon$ additive distance from the expected value $\E[\bm{\omega}]$ with probability $\geq 1-\delta$.
\end{proof}

\begin{theorem}[Efficient certification using level-$m$ of \protoref{verificationGeneral}]\label{thm:verificationFastMixing-formal}
    Given any $n$-qubit target state $\ket{\psi}$ with a relaxation time $\tau \geq 1$, error $\epsilon > 0$, failure probability $\delta > 0$, and
    \begin{equation}
        T = 2^{2m + 4}\cdot \frac{\tau^2}{\epsilon^2}\cdot \log\left(\frac{2}{\delta}\right)
    \end{equation}
    samples of an unknown $n$-qubit state $\rho$.
    With probability at least $1 - \delta$, \protoref{verificationGeneral} will correctly output \textsc{Failed} if the fidelity is low $\bra{\psi}\rho\ket{\psi} < 1 - \epsilon$ and will correctly output \textsc{Certified} if the fidelity is high $\bra{\psi}\rho\ket{\psi} \geq 1 - \frac{\epsilon}{2 \tau}$.
\end{theorem}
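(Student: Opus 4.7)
The plan is to combine the two previously established results, \thmref{expectationEqualsOperatoL} and \thmref{sampleComplexity}, by calibrating the Hoeffding concentration error to a tight enough value so that the empirical shadow overlap $\hat{\bm{\omega}}$ lies on the correct side of the decision threshold $1-\tfrac{3\epsilon}{4\tau}$ in both the low-fidelity and high-fidelity cases. The accepted sample count $T = 2^{2m+4}\tau^2 \epsilon^{-2}\log(2/\delta)$ is exactly what \thmref{sampleComplexity} demands if we set the tolerance parameter there to $\epsilon' = \epsilon/(4\tau)$, since then $2^{2m}/(\epsilon')^2\log(2/\delta) = 2^{2m+4}\tau^2\epsilon^{-2}\log(2/\delta)$.

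First I would apply \thmref{sampleComplexity} with this choice of $\epsilon'$ to conclude that, with probability at least $1-\delta$,
\begin{equation}
\bigl|\hat{\bm{\omega}} - \E[\bm{\omega}]\bigr| \;\le\; \frac{\epsilon}{4\tau}.
\end{equation}
Condition on this event for the remainder of the argument.

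Next I would handle the two cases separately using \thmref{expectationEqualsOperatoL}. In the low-fidelity case, where $\bra{\psi}\rho\ket{\psi} < 1-\epsilon$, the contrapositive of the implication \eqref{eq:gapFidelity} gives $\E[\bm{\omega}] < 1 - \epsilon/\tau$ (if instead $\E[\bm{\omega}] \geq 1-\epsilon/\tau$, then the implication would force $\bra{\psi}\rho\ket{\psi} \geq 1-\epsilon$, contradicting the hypothesis). Combining with the concentration bound yields
\begin{equation}
\hat{\bm{\omega}} \;<\; 1 - \frac{\epsilon}{\tau} + \frac{\epsilon}{4\tau} \;=\; 1 - \frac{3\epsilon}{4\tau},
\end{equation}
so step~7 of the protocol outputs \textsc{Failed}. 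In the high-fidelity case, where $\bra{\psi}\rho\ket{\psi} \ge 1 - \epsilon/(2\tau)$, the implication \eqref{eq:nogapFidelity} directly gives $\E[\bm{\omega}] \ge 1 - \epsilon/(2\tau)$, and then the concentration bound yields
\begin{equation}
\hat{\bm{\omega}} \;\ge\; 1 - \frac{\epsilon}{2\tau} - \frac{\epsilon}{4\tau} \;=\; 1 - \frac{3\epsilon}{4\tau},
\end{equation}
so the protocol outputs \textsc{Certified}.

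There is no serious obstacle here; the entire argument is a calibration of the concentration tolerance against the ``gap'' between $1-\epsilon/\tau$ and $1-\epsilon/(2\tau)$ provided by \thmref{expectationEqualsOperatoL}. The one thing to double-check is that the choice of $\epsilon'$ indeed fits within both halves of this gap symmetrically, which motivates the factor $\tfrac{1}{4\tau}$ rather than $\tfrac{1}{2\tau}$: we need $\epsilon'$ to be at most $\tfrac{1}{2}(\tfrac{\epsilon}{\tau} - \tfrac{\epsilon}{2\tau}) = \tfrac{\epsilon}{4\tau}$ so that neither case is misclassified. Substituting this back into \thmref{sampleComplexity} recovers the stated sample complexity.
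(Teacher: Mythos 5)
Your proposal is correct and follows essentially the same route as the paper: apply \thmref{sampleComplexity} with tolerance $\epsilon/(4\tau)$ to get the concentration event, then invoke \thmref{expectationEqualsOperatoL} in each of the two cases to place $\E[\bm{\omega}]$ on the right side of the threshold $1-\tfrac{3\epsilon}{4\tau}$ with margin $\epsilon/(4\tau)$. The only cosmetic difference is that you spell out the contrapositive step in the low-fidelity case and add a remark on why $\epsilon/(4\tau)$ is the symmetric calibration, both of which the paper leaves implicit.
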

\begin{proof}
    From \thmref{sampleComplexity}, with probability at least $1 - \delta$, we have
    \begin{equation}
        \left| \frac{1}{T}\sum_{t=1}^T \bm{\omega_t} -\E[\bm{\omega}] \right| \leq \frac{\epsilon}{4 \tau}.
    \end{equation}
    We condition on the above event.
    In the case of $\bra{\psi}\rho\ket{\psi} < 1 - \epsilon$, we have $\E[\bm{\omega}] < 1 - \frac{\epsilon}{\tau}$ from \thmref{expectationEqualsOperatoL}.
    Hence, $\frac{1}{T}\sum_{t=1}^T \bm{\omega_t} < 1 - \frac{3\epsilon}{4 \tau}$ and the protocol will output \textsc{Failed}.
    In the case of $\bra{\psi}\rho\ket{\psi} \geq 1 - \frac{\epsilon}{2 \tau}$, we have $\E[\bm{\omega}] \geq 1 - \frac{\epsilon}{2 \tau}$ from \thmref{expectationEqualsOperatoL}.
    Hence, $\frac{1}{T}\sum_{t=1}^T \bm{\omega_t} \geq 1 - \frac{3\epsilon}{4 \tau}$ and the protocol will output \textsc{Certified}.
    This concludes the proof.
\end{proof}

We can improve the dependency of the sample complexity on $\tau / \epsilon$ for the level $m = 1$ test given in \protoref{certification} by replacing randomized Pauli measurements with measurement in the basis $\{ \ketbra{\Psi_{\bm{k,z}}}{\Psi_{\bm{k,z}}}, \iden -\ketbra{\Psi_{\bm{k,z}}}{\Psi_{\bm{k,z}}}\}$.
In this setting, we can show the following theorem.

\begin{theorem}[Efficient certification using a modified version of \protoref{certification}]\label{thm:sampleComplexityImprovedInformal}
    Let the number of samples $T$ in the modified version of \protoref{certification} with single-qubit measurements performed in the basis $\{ \ketbra{\Psi_{\bm{k,z}}}{\Psi_{\bm{k,z}}}, \iden -\ketbra{\Psi_{\bm{k,z}}}{\Psi_{\bm{k,z}}}\}$ be
    \begin{equation}
        T = 32 \cdot \frac{\tau}{\epsilon}\cdot \log\left(\frac{1}{\delta}\right).
    \end{equation}
    With probability at least $1 - \delta$, the protocol correctly outputs \textsc{Failed} if the fidelity is low $\bra{\psi}\rho\ket{\psi} < 1 - \epsilon$ and correctly outputs \textsc{Certified} if the fidelity is high $\bra{\psi}\rho\ket{\psi} \geq 1 - \frac{\epsilon}{2 \tau}$.
\end{theorem}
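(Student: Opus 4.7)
The plan is to reuse the spectral-gap analysis from \thmref{expectationEqualsOperatoL} essentially verbatim, and to obtain the improved $\tau/\epsilon$ dependence by exploiting the fact that each round of the modified protocol produces a Bernoulli random variable rather than the shadow estimator of variance $\Theta(1)$ used in \thmref{sampleComplexity}. The first step is to verify that $\E[\bm{\omega}] = \Tr(L\rho)$ continues to hold. Conditional on the outcomes $\bm{k}, \bm{z}$ of steps 1--2 of \protoref{certification}, the post-measurement state of qubit $\bm{k}$ is some $\rho_{\bm{k}|\bm{z}}$, and measuring in the basis $\{\ketbra{\Psi_{\bm{k,z}}}{\Psi_{\bm{k,z}}}, \iden - \ketbra{\Psi_{\bm{k,z}}}{\Psi_{\bm{k,z}}}\}$ produces $\bm{\omega} = 1$ with probability $\bra{\Psi_{\bm{k,z}}} \rho_{\bm{k}|\bm{z}} \ket{\Psi_{\bm{k,z}}}$. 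The conditional expectation therefore matches that of the shadow estimator, and the derivation in the proof of \propref{performanceGuranteeLevelM} goes through with $\bm{\sigma}$ replaced by $\ketbra{\Psi_{\bm{k,z}}}{\Psi_{\bm{k,z}}}$; so \thmref{expectationEqualsOperatoL} applies without change.

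Next, set $\bm{\omega}_t' := 1 - \bm{\omega}_t \in \{0, 1\}$ and $Y := \sum_{t=1}^T \bm{\omega}_t'$, so that $Y$ is a sum of i.i.d.\ Bernoulli variables with mean $p := 1 - \E[\bm{\omega}]$, and the protocol outputs \textsc{Certified} iff $Y \leq 3T\epsilon/(4\tau)$. By \thmref{expectationEqualsOperatoL}, the high-fidelity hypothesis implies $p \leq \epsilon/(2\tau)$, while the low-fidelity hypothesis implies $p > \epsilon/\tau$; multiplicative Chernoff bounds can then be applied separately. In the high-fidelity case, a stochastic-domination coupling to Bernoullis with parameter $\epsilon/(2\tau)$ reduces to the extremal sub-case in which the threshold $3T\epsilon/(4\tau)$ sits at $(1 + 1/2)\E[Y]$, and the upper-tail bound $\Pr[Y \geq (1+\eta)\E[Y]] \leq \exp(-\eta^2 \E[Y]/3)$ gives failure probability at most $\exp(-T\epsilon/(24\tau))$. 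In the low-fidelity case, writing $\eta = 1 - 3\epsilon/(4\tau p)$ and minimizing $\eta^2 \E[Y] = T(p - 3\epsilon/(4\tau))^2/p$ over $p > \epsilon/\tau$ gives the extremum $T\epsilon/(16\tau)$ at $p \downarrow \epsilon/\tau$, so the lower-tail bound $\Pr[Y \leq (1-\eta)\E[Y]] \leq \exp(-\eta^2 \E[Y]/2)$ yields failure probability at most $\exp(-T\epsilon/(32\tau))$.

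Choosing $T = 32 \tau \epsilon^{-1} \log(1/\delta)$ drives both tail probabilities below $\delta$, which is exactly the bound claimed. The main obstacle is only the bookkeeping of which tail applies in which regime and the verification that the worst sub-case does not inflate the constant beyond $32$; once the Bernoulli structure is identified the quadratic improvement from $\tau^2/\epsilon^2$ to $\tau/\epsilon$ is immediate, since the signal to detect is a constant-factor multiplicative deviation around a mean of order $\epsilon/\tau$ rather than an absolute deviation of order $\epsilon/\tau$ against a variance of order one (as in the Hoeffding argument of \thmref{sampleComplexity}).
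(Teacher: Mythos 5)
Your proof is correct and follows essentially the same route as the paper's: establish $\E[\bm\omega]=\Tr(L\rho)$ for the modified measurement, observe that the $\bm\omega_t$ are i.i.d.\ Bernoulli, and then apply multiplicative Chernoff tail bounds on each side of the threshold $1-3\epsilon/(4\tau)$. The paper works directly with $\sum_t\bm\omega_t$ via the two-sided Chernoff forms in Eq.~\eqref{eq:chernoff-multi} rather than with the complement $Y=\sum_t(1-\bm\omega_t)$, but after translating notation the exponents coincide ($T\epsilon/(24\tau)$ on the high-fidelity side and $T\epsilon/(32\tau)$ on the low-fidelity side), giving the identical constant $32$ in $T$.
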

\begin{proof}[Proof of \thmref{sampleComplexityImprovedInformal}]
    In the modified test, the measured overlaps $\bm{\omega_t}$ satisfy $\bm{\omega _t} \in \{0, 1\}$ and $\E[\bm{\omega}] = \Tr[L \rho]$ with $L$ being the level-$1$ observable in Equation \eqref{eq:entriesOfL}. 
    It follows from the multiplicative form of the Chernoff bound that for any error $\eta$, we have
    \begin{align} \label{eq:chernoff-multi}
        \Pr\left[ \frac{1}{T}\sum_{t=1}^T \bm{\omega_t} \leq \E[\bm{\omega}] - \eta \right]\leq e^{-\frac{T \eta^2}{2(1 - \E[\bm{\omega}] + \eta)}} \text{\quad and \quad } \Pr\left[ \frac{1}{T}\sum_{t=1}^T \bm{\omega_t} \geq \E[\bm{\omega}] + \eta \right]\leq e^{-\frac{T \eta^2}{2(1 - \E[\bm{\omega}])}}. 
    \end{align}
    We now consider the two cases.

    In the case of $\bra{\psi}\rho\ket{\psi} < 1 - \epsilon$, we have $\E[\bm{\omega}] < 1 - \frac{\epsilon}{\tau}$ from \thmref{expectationEqualsOperatoL}.
    Let $\eta = 1 - \frac{3 \epsilon}{4\tau} - \E[\bm{\omega}] > \frac{1}{4}(1 - \E[\bm \omega])$.
    From the inequality in the right of Eq.~\eqref{eq:chernoff-multi}, we have
    \begin{equation}
        \Pr\left[ \frac{1}{T}\sum_{t=1}^T \bm{\omega_t} \geq 1- \frac{3\epsilon}{4\tau} \right]\leq e^{-\frac{T \eta^2}{2(1 - \E[\bm{\omega}])}} < e^{-\frac{T (1 - \E[\bm{\omega}])^2}{32 (1 - \E[\bm{\omega}])}} < e^{-\frac{T (\epsilon / \tau)}{32}} \leq \delta
    \end{equation}
    for $T = 32 \cdot \frac{\tau}{\epsilon}\cdot \log(\frac{1}{\delta})$.
    Hence, with probability $\geq 1 - \delta$, the protocol will output \textsc{Failed}.

In the case of $\bra{\psi}\rho\ket{\psi} \geq 1 - \frac{\epsilon}{2 \tau}$, we have $\E[\bm{\omega}] \geq 1 - \frac{\epsilon}{2 \tau}$ from \thmref{expectationEqualsOperatoL}.
    Let $\eta = \frac{3\epsilon}{4\tau} - (1 - \E[\bm \omega]) \geq \frac{\epsilon}{4 \tau}$.
    From the inequality in the left of Eq.~\eqref{eq:chernoff-multi}, we have
    \begin{equation}
        \Pr\left[ \frac{1}{T}\sum_{t=1}^T \bm{\omega_t} < 1- \frac{3\epsilon}{4\tau} \right]\leq e^{-\frac{T \eta^2}{2(1 - \E[\bm{\omega}] + \eta)}} \leq e^{-\frac{T (\epsilon / (4 \tau))^2}{2((3 \epsilon) / (4 \tau))}} \leq \delta
    \end{equation}
    for $T = 32 \cdot \frac{\tau}{\epsilon}\cdot \log(\frac{1}{\delta})$.
    So, with probability $\geq 1 - \delta$, the protocol will output \textsc{Certified}.
\end{proof}

Despite the quadratically larger $\tau^2 / \epsilon^2$ scaling, the original \protoref{certification} with randomized Pauli measurements offers a more versatile framework for the following two reasons.
\begin{enumerate}
    \item The experiment can be done without needing to interact with a particular query model of the state $\ket{\psi}$. This means an experimentalist can collect data from the state $\rho$. After the completion of the data acquisition phase, they can conduct the certification procedure to benchmark their device or certify the performance of their machine learning algorithms. 
    \item The classical shadow framework allows us to generalize the original protocol \protoref{certification} with $m=1$ to level-$m$ protocols in \protoref{verificationGeneral} with $m > 1$, which can be used to certify a broader family of quantum states using single-qubit measurements.
\end{enumerate}

\addtocontents{toc}{\protect\setcounter{tocdepth}{2}}

\section{Haar random states}\label{sec:HaarSpectralGap}
Consider a probability distribution $\pi(x)$ defined on the $n$-dimensional hypercube $G=(V, E)$ where $V=\{0,1\}^n$ and two vertices $x$ and $y$ are connected when they differ only in one bit. This distribution is the unique stationary distribution of a Markov chain whose transition matrix includes additional self-loops and has the following entries:
 \begin{align}
P(x,y)=\begin{cases} 
       \frac{1}{n}\cdot \sum_{x':(x', x)\in E}\frac{\pi(x)}{\pi(x)+\pi(x')} & x=y,\\
       \frac{1}{n}\cdot \frac{\pi(y)}{\pi(x)+\pi(y)} & (x,y)\in E, \\
       0 & \text{otherwise}.
      \end{cases}\label{eq:TransitionMatrix-Haar}
\end{align}
Our goal is to analyze the spectral gap of the transition matrix $P$ when $\pi(x)$ is randomly chosen according to the Porter-Thomas distribution. That is, for any $x\in \{0,1\}^n$, we independently draw a sample $\bm{z}(x)$ from the exponential distribution $\Pr(\bm{z})=e^{-\bm{z}}\cdot \iden[\bm{z}\geq 0]$. We then set $\pi(x)=\frac{\bm{z}(x)}{2^n}$. As we will soon show, this distribution is with high probability normalized.

\subsection{Preliminaries}

Before giving the proof of this theorem, we need to gather some preliminary facts. 
\begin{definition}
A random variable $\bm{z}$ with mean $\mu$ is `sub-exponential' if there are non-negative parameters $(a,b)$ such that
\begin{align}
    \E\left(e^{\l(\bm{z}-\m)}\right)\leq e^{\frac{a^2 \lambda^2}{2}} & \quad \text{for all\ } |\l|\leq \frac{1}{b}
\end{align}
\end{definition}
 \begin{lemma}
     A variable $\bm{z}$ sampled from the exponential distribution $\Pr(\bm{z})=e^{-\bm{z}}\cdot \iden[\bm{z}\geq 0]$ is sub-exponential with parameters $(a,b)=(2,2)$.
 \end{lemma}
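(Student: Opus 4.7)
The plan is to reduce the sub-exponential bound to an elementary inequality about $-\lambda - \log(1-\lambda)$. First I would compute the moment generating function of $\bm{z}$ directly: since the density is $e^{-z}$ on $[0,\infty)$ and the mean is $\mu = 1$, a standard integration gives
\begin{equation}
\E\bigl(e^{\lambda \bm{z}}\bigr) = \int_0^\infty e^{(\lambda-1)z}\, dz = \frac{1}{1-\lambda}, \qquad \lambda < 1,
\end{equation}
and hence the centered MGF equals $\E(e^{\lambda(\bm{z}-1)}) = e^{-\lambda}/(1-\lambda)$ on the same range. The whole task is then to verify
\begin{equation}
g(\lambda) := -\lambda - \log(1-\lambda) \;\leq\; 2\lambda^2 \qquad \text{for } |\lambda| \leq \tfrac{1}{2},
\end{equation}
which after exponentiating is exactly the required bound with $a^2/2 = 2$ (i.e.\ $a=2$) and $b=2$.

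Next I would split into the two sign regimes of $\lambda$. For $0 \leq \lambda \leq 1/2$ the Taylor expansion gives
\begin{equation}
g(\lambda) = \sum_{k=2}^{\infty} \frac{\lambda^k}{k} = \lambda^2 \sum_{k=0}^{\infty} \frac{\lambda^{k}}{k+2} \leq \frac{\lambda^2}{2} \sum_{k=0}^{\infty} \left(\tfrac{1}{2}\right)^k = \lambda^2,
\end{equation}
which is comfortably below $2\lambda^2$. For $-1/2 \leq \lambda \leq 0$, write $\lambda = -\mu$ with $0 \leq \mu \leq 1/2$, so $g(\lambda) = \mu - \log(1+\mu)$. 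The elementary inequality $\log(1+\mu) \geq \mu - \mu^2/2$ (which follows from integrating $1/(1+t) \geq 1-t$ from $0$ to $\mu$) gives $g(\lambda) \leq \mu^2/2 = \lambda^2/2 \leq 2\lambda^2$.

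The proof really has no obstacle of substance: the only subtle point is that the MGF has a pole at $\lambda = 1$, so the sub-exponential (rather than sub-Gaussian) character is forced, and the choice $b=2$ gives us the restriction $|\lambda| \leq 1/2$ that keeps us a safe distance from the singularity while making the geometric series summation clean. The two-sided analysis is needed because the exponential distribution is not symmetric, but once the Taylor estimates above are in hand, combining the two regimes yields the claim $\E(e^{\lambda(\bm{z}-\mu)}) \leq e^{2\lambda^2}$ for all $|\lambda| \leq 1/2$.
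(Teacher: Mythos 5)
Your proof is correct. The paper does not actually supply a proof of this lemma---it is stated without argument as a standard fact about the exponential distribution---so there is no alternative derivation to compare against. Your computation of the centered MGF as $e^{-\lambda}/(1-\lambda)$, the reduction to the scalar inequality $-\lambda - \log(1-\lambda) \leq 2\lambda^2$ on $|\lambda| \leq 1/2$, and the two sign-split Taylor/integration estimates are all accurate, and the resulting bound matches the paper's definition of sub-exponential with $a^2/2 = 2$ and $1/b = 1/2$, i.e.\ $(a,b)=(2,2)$.
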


This lemma, along with known tail bounds for sub-exponential random variables, result in the following proposition:

 \begin{proposition}[Tail bounds]\label{prop:concentration}
The following two tail bounds hold:
 \begin{enumerate}
     \item Let $\bm{x_1}, \dots, \bm{x_m}$ be $m$ i.i.d. Bernoulli random variables with $\Pr[\bm{x_i} = 1] = p$ for $i\in [m]$. There exist constants $\alpha, \beta > 0$ such that 
     \begin{align}
         \Pr\left(\sum_{i=1}^m \bm{x_i} \leq \alpha \cdot n\right)\leq 2^{-(1+\beta)n}.\label{eq:concentrationBin}
     \end{align}
     \item Suppose $\bm{z}_1,\dots,\bm{z}_m$ are drawn independently from the exponential distribution $\Pr(\bm{z})=e^{-\bm{z}}\cdot \iden[\bm{z}\geq 0]$. Then, the following two-sided concentration bound holds:
     \begin{align}
         \Pr\left(\left|\frac{1}{m}\sum_{k=1}^m \bm{z}_k - 1\right|\geq t \right)\leq
         \begin{cases}
             2e^{-\frac{m t^2}{8}} & 0\leq t\leq 2,\\
             2e^{-\frac{m t}{4}} & t\geq 2.
         \end{cases}\label{eq:concentrationSubExp}
     \end{align}
Moreover, for any $k$, we have
\begin{align}
    \Pr\left(\bm{z}_k\leq t\right)=1-e^{-t}\leq t \label{eq:CDFexp}
\end{align}
 \end{enumerate}
     
 \end{proposition}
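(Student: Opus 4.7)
All three parts are textbook concentration facts, so the plan is essentially to identify the right standard tool for each and verify that the parameters line up with the claimed bounds.

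\emph{Part 1.} I would apply the multiplicative Chernoff bound to the i.i.d.\ Bernoulli$(p)$ sequence $\bm{x_1},\dots,\bm{x_m}$. Specifically, for any $\delta\in(0,1)$,
\begin{equation*}
\Pr\!\left(\sum_{i=1}^m \bm{x_i}\leq (1-\delta)\,p m\right)\leq \exp\!\left(-\tfrac{\delta^2 p m}{2}\right),
\end{equation*}
which follows in the usual way by optimizing $\E[e^{-\lambda\bm{x_i}}]=1-p+pe^{-\lambda}$ and applying Markov to the product MGF. For the right-hand side to become $2^{-(1+\beta)n}$ the sample size $m$ must be at least linear in $n$, which is the regime in which this lemma is later invoked (e.g.\ sums over the $n$ neighbours of a hypercube vertex). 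Choosing $\alpha$ strictly smaller than $p\cdot(m/n)$ and $\delta$ so that $(1-\delta)\,p = \alpha n/m$ yields a Chernoff exponent proportional to $n$, which I would then rewrite as $(1+\beta) n \ln 2$ for a suitable constant $\beta>0$ depending only on $p$ and on the linear-growth constant relating $m$ to $n$.

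\emph{Part 2.} Given the preceding lemma that each $\bm{z}_k$ is sub-exponential with parameters $(a,b)=(2,2)$, the centred sample mean $\frac{1}{m}\sum_k(\bm{z}_k - 1)$ is sub-exponential with parameters $(2/\sqrt{m},\,2/m)$ by the standard MGF-multiplication argument. Bernstein's inequality for sub-exponential averages then produces
\begin{equation*}
\Pr\!\left(\Bigl|\tfrac{1}{m}\sum_{k=1}^m \bm{z}_k - 1\Bigr|\geq t\right)\leq 2\exp\!\left(-\tfrac{m}{2}\,\min\!\bigl(t^2/4,\,t/2\bigr)\right).
\end{equation*}
The minimum transitions from the quadratic to the linear branch precisely at $t=2$, yielding the two cases in the statement with the claimed constants $1/8$ and $1/4$ in the exponent.

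\emph{Part 3.} The density is $e^{-t}\,\mathbf{1}[t\geq 0]$, so $\Pr(\bm{z}_k\leq t)=1-e^{-t}$ by a direct integration, and the inequality $1-e^{-t}\leq t$ is just the tangent-line bound $e^{-t}\geq 1-t$, valid for all real $t$ by convexity of $e^{-t}$.

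\emph{Where the real work lies.} There is no conceptual obstacle: each item reduces to a well-known tail inequality. The only subtlety is in Part 1, where the statement only makes sense once the relation between $m$ and $n$ is pinned down. I would check at every later invocation that $m=\Omega(n)$, so that the Chernoff exponent genuinely dominates $(1+\beta)n\ln 2$; the constants $\alpha$ and $\beta$ then depend only on $p$ and on the implicit constant in $m=\Omega(n)$. Everything else is bookkeeping.
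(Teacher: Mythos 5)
Parts 2 and 3 are fine and follow the standard route: the sub-exponential parameters $(2/\sqrt{m},\,2/m)$ for the sample mean, Bernstein's inequality with the quadratic/linear transition at $t=2$, and the tangent-line bound $1-e^{-t}\leq t$ all check out.

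Part 1, however, has a real gap. The form of the multiplicative Chernoff bound you invoke,
\[
\Pr\!\Bigl(\textstyle\sum_{i}\bm{x_i}\leq(1-\delta)pm\Bigr)\leq\exp\!\Bigl(-\tfrac{\delta^2 pm}{2}\Bigr),
\]
is not strong enough to produce $2^{-(1+\beta)n}$ in the regime in which the paper actually uses it. In Lemma \ref{lem:localEscapeProperty} one has $m=n$ (the $n$ neighbours of a vertex, or the $n$ candidate disjoint paths from \propref{nVertexConnectionHypercube}), and the success probability is $p\geq 9/10$ or $p^4>1/2$. With $m=n$, the exponent $\delta^2 pm/2 = \delta^2 pn/2$ is at most $pn/2\leq n/2 < (\ln 2)n$ for every $\delta\in(0,1)$ and every $p\in(0,1)$. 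The upper bound $\exp(-\delta^2 pn/2)$ therefore never drops below $2^{-n}$, and the rewrite "$(1+\beta)n\ln 2$ for a suitable $\beta>0$" that you propose is impossible: sending $\alpha\to 0$ (i.e.\ $\delta\to 1$) caps the rate at $p/2 < \ln 2$ per sample.

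What is actually needed is the sharp binomial tail with the relative-entropy rate: $\Pr(X\leq\alpha n)\leq\exp\!\bigl(-n\,D(\alpha\|p)\bigr)$ where $D(\alpha\|p)=\alpha\ln(\alpha/p)+(1-\alpha)\ln\bigl((1-\alpha)/(1-p)\bigr)$. As $\alpha\to 0^+$ this rate tends to $\ln\bigl(1/(1-p)\bigr)$, which exceeds $\ln 2$ precisely when $p>1/2$; this is the hypothesis the paper implicitly relies on (it is even called out in the proof of Lemma \ref{lem:localEscapeProperty} via "since $p^4>1/2$"). Equivalently, note that the lower tail is dominated by $\Pr(X=0)=(1-p)^n=2^{-n\log_2(1/(1-p))}$, and $p>1/2$ gives $\log_2(1/(1-p))>1$; then pick $\beta$ strictly between $0$ and $\log_2(1/(1-p))-1$ and $\alpha$ small enough, by continuity of $D(\cdot\|p)$, to keep the rate above $(1+\beta)\ln 2$. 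You correctly noticed that the proposition as stated does not pin down the relation between $m$, $n$, and $p$, but the resolution is the condition $p>1/2$ together with the tight KL rate, not a large ratio $m/n$ — the invocations all have $m=n$, so your plan to "check $m=\Omega(n)$ at every invocation" would not rescue the $\delta^2 pm/2$ bound.
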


\begin{proposition}\label{prop:nVertexConnectionHypercube}
For any two vertices $x,y \in \{0,1\}^n$ in the $n$-dimensional hypercube, there are $n$ simple pairwise internally disjoint paths connecting them such that no two paths share a vertex except for the starting and end vertices $x$ and $y$. Moreover, the length $|\g_{xy}|$ of all such paths is bounded by $n+1$. 
\end{proposition}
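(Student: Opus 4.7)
The plan is to give an explicit construction of the $n$ paths, exploiting the vertex-transitivity and coordinate-symmetry of the hypercube. By applying an XOR with $x$ and permuting coordinates, I may assume without loss of generality that $x = 0^n$ and $y = 1^d 0^{n-d}$, where $d = \dist(x,y)$ is the Hamming distance. The case $d = 0$ is trivial, so I assume $d \geq 1$.

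I would build two families of paths. First, for each $i \in \{1,\dots,d\}$, define $P_i$ as the path from $x$ to $y$ obtained by flipping the bits of the \emph{differing} coordinates in the cyclic order $i, i+1, \dots, d, 1, 2, \dots, i-1$. Each such $P_i$ has length exactly $d$, and its internal vertices are in bijection with cyclic arcs $\{i, i+1, \dots, i+k-1\} \pmod d$ of length $k \in \{1,\dots,d-1\}$ inside $\{1,\dots,d\}$. Second, for each $i \in \{d+1,\dots,n\}$, define $P_i$ by first flipping the \emph{agreeing} bit $i$ (moving away from $y$), then flipping bits $1,2,\dots,d$ in order, and finally flipping bit $i$ back. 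Each such $P_i$ has length $d + 2$.

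Next I would verify internal disjointness by a short case analysis. Within the first family, every internal vertex of $P_i$ is a cyclic arc of length $k<d$ in $\mathbb{Z}/d\mathbb{Z}$; such an arc determines both its length and its starting point uniquely, so distinct $i \neq i'$ yield disjoint internal vertex sets. Within the second family, every internal vertex of $P_i$ has bit $i > d$ set to $1$, and bit $i$ uniquely identifies the path, so distinct $i \neq i'$ yield disjoint internals. Across families, any internal vertex of a second-family path contains some bit index $>d$, while internal vertices of first-family paths are supported entirely in $\{1,\dots,d\}$, so the two families cannot overlap internally either.

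Finally, I would bound the path lengths: the first family has length $d$ and the second has length $d+2$. If $d < n$, then $d+2 \leq n+1$, and if $d = n$, the second family is empty so the maximum length is $d = n \leq n+1$. In both cases every path has length at most $n+1$, completing the proof. The only step requiring any thought is the disjointness check, but it reduces to the elementary observation that a proper cyclic arc of $\mathbb{Z}/d\mathbb{Z}$ is determined by its starting point together with its length; there is no genuine obstacle.
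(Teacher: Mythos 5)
Your proposal is correct and follows essentially the same construction as the paper: the paper describes the very same $n$ paths by saying, for each $i\in[n]$, "flip bit $i$, then move cyclically from $i+1$ through $n$ and back to $i$, flipping any bit where the current vertex differs from $y$," which for $i$ with $x_i\neq y_i$ gives a length-$d$ path and for $i$ with $x_i=y_i$ gives a length-$(d+2)$ path. Your version is the same construction after the WLOG normalization $x=0^n$, $y=1^d0^{n-d}$; the paper states the construction in one sentence and leaves the disjointness and length verification implicit, whereas you spell both out (the proper-cyclic-arc argument and the case split on whether a distinguished coordinate $>d$ is set), which is a welcome addition but not a different proof.
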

\begin{proof}
    We can explicitly design such paths by repeating the following procedure for any $i\in [n]$: Choose the $i$'th bit of the starting vertex $x \in \{0,1\}^n$ and flip it. Then moving rightward from bit $i+1$ to $n$ and back to $i$, flip any bit in which the current vertex differs from the end vertex $y \in \{0,1\}^n$. 
\end{proof}

\subsection{Warm-up: a loose analysis}\label{sec:WarmupLooseAnalysis}

A set of canonical paths $\G$ in a graph $G$ is a collection of simple paths $\gamma_{xy}$, each connecting a unique pair of distinct vertices $(x,y)\in E$. Given a set of canonical paths, the path congestion parameter is defined by
\begin{align}
    \rho(\G):= \max_{e \in E} \frac{1}{Q(e)} \sum_{\g_{xy}\ni e} \pi(x)\pi(y) |\gamma_{xy}|
\end{align}
where for an edge $e=(e^+,e^-)$, its weight given by $Q(e)=\pi(e^{+})P(e^{+},e^{-})$ is the probability of the transition $(e^+,e^-)$ occurring in the random walk. 
\begin{proposition}[Path congestion vs. spectral gap, cf. \cite{sinclair1992improved}]
Let $\lambda_1$ be the second largest eigenvalue of the transition matrix $P$ for a reversible Markov chain. For any choice of canonical paths $\G$, it holds that 
\begin{align}
        1-\lambda_1\geq \frac{1}{\rho(\G)}.
\end{align}
\end{proposition}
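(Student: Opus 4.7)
I would prove the bound using the classical canonical-paths / Poincar\'e-inequality technique (Diaconis--Stroock, Sinclair). The starting point is the variational characterization of the spectral gap for a reversible chain,
\begin{equation}
1-\lambda_1 \;=\; \min_{f:\,\mathrm{Var}_\pi(f)>0}\; \frac{\mathcal{E}(f,f)}{\mathrm{Var}_\pi(f)},
\end{equation}
where $\mathcal{E}(f,f)=\sum_{e} Q(e)\bigl(f(e^+)-f(e^-)\bigr)^2$ is the Dirichlet form (the sum running over unordered edges, with $Q(e)$ well defined by reversibility) and $\mathrm{Var}_\pi(f)=\tfrac{1}{2}\sum_{x,y}\pi(x)\pi(y)(f(x)-f(y))^2$ is the equilibrium variance. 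It therefore suffices to establish the Poincar\'e-type inequality $\mathrm{Var}_\pi(f)\leq \rho(\G)\,\mathcal{E}(f,f)$ for every $f$, since plugging it into the variational formula immediately yields $1-\lambda_1 \geq 1/\rho(\G)$.

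The key step is a Cauchy--Schwarz estimate along the canonical paths. For each ordered pair $(x,y)$, orient $\gamma_{xy}$ from $x$ to $y$ and telescope
\begin{equation}
f(x)-f(y) \;=\; \sum_{e\in\gamma_{xy}}\bigl(f(e^+)-f(e^-)\bigr),
\end{equation}
so that Cauchy--Schwarz gives $(f(x)-f(y))^2 \leq |\gamma_{xy}|\sum_{e\in\gamma_{xy}}(f(e^+)-f(e^-))^2$. I would then substitute this into $\mathrm{Var}_\pi(f)$, multiply and divide each edge contribution by $Q(e)$, and interchange the order of summation to obtain
\begin{equation}
\mathrm{Var}_\pi(f)\;\leq\;\sum_{e} Q(e)\bigl(f(e^+)-f(e^-)\bigr)^2 \cdot \Bigl(\tfrac{1}{Q(e)}\sum_{\gamma_{xy}\ni e}\pi(x)\pi(y)|\gamma_{xy}|\Bigr).
\end{equation}
The parenthesised factor is at most $\rho(\G)$ by its very definition, uniformly in $e$, so it pulls outside and what remains is exactly $\mathcal{E}(f,f)$, delivering $\mathrm{Var}_\pi(f)\leq \rho(\G)\,\mathcal{E}(f,f)$.

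The only nontrivial bookkeeping---and the most likely place for a stray factor of $2$---is reconciling edge orientations with the definitions of $Q(e)$ and $\mathcal{E}(f,f)$ and pairing each ordered $(x,y)$ with the reversed path $\gamma_{yx}$. Reversibility, $\pi(e^+)P(e^+,e^-)=\pi(e^-)P(e^-,e^+)$, makes $Q(e)$ an unambiguous weight on the undirected edge $e$, and the factor $\tfrac{1}{2}$ in the variance absorbs the double-counting of ordered pairs. There is no substantive obstacle beyond this; the entire argument is essentially a one-line Cauchy--Schwarz once the conventions are fixed, and the proposition follows.
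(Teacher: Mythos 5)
The paper states this proposition as a cited fact (referring to Sinclair 1992) and does not include a proof of its own; your argument is precisely the standard Diaconis--Stroock/Sinclair canonical-paths proof that the citation points to: variational characterization of the gap, telescoping along paths, Cauchy--Schwarz with the $|\gamma_{xy}|$ prefactor, swap of summations, and uniform bound by $\rho(\G)$. The reasoning is correct, and your closing caveat about orientation and the factor of $\tfrac{1}{2}$ is exactly the right bookkeeping to watch (it cancels because variance sums over ordered pairs while $\rho(\G)$ and $\mathcal{E}$ use unordered edge weights $Q(e)$, well defined by reversibility).
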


Hence, to obtain a lower bound of $\frac{1}{\poly(n)}$ on the spectral gap $1-\l_1$, it suffices to prove an upper bound $\rho(\G)\leq \poly(n)$ for the congestion parameter for some set of canonical paths. In this section, we show:
\begin{theorem}\label{thm:upperBoundCongestion}
Consider the random walk in \eqref{eq:TransitionMatrix-Haar} with $\pi(x)$ drawn independently from the Porter-Thomas distribution. Then, with probability $\geq 1-\frac{1}{2^n}$, there exists a set of canonical paths $\G$ for which the congestion parameter $\r(\G)$ is upper bounded by 
\begin{align}
    \r(\G)\leq \mathcal{O}(n^4 \cdot \log n).
\end{align}
\end{theorem}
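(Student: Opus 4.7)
The plan is to apply Sinclair's bound relating canonical-path congestion to spectral gap, choosing a path collection that (i) routes around vertices where $z(x) := 2^n \pi(x)$ is atypically small, so that every edge carrying flow has $Q(e)$ not too tiny, and (ii) distributes traffic roughly as in the standard bit-fixing routing so that the number of pairs crossing any edge is controlled. The whole argument is conditioned on a high-probability regularity event for the Porter--Thomas distribution.

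First I would establish regularity using Proposition \ref{prop:concentration}. With $m = 2^n$ and a small constant $t$, \eqref{eq:concentrationSubExp} gives $\sum_x z(x) = 2^n(1 + o(1))$, so $\pi$ is essentially normalized; the tail $\Pr[z(x) > t] = e^{-t}$ combined with a union bound over $2^n$ vertices shows $z_{\max} = \mathcal{O}(n)$. For a threshold $\theta = \Theta(1/(n \log n))$ I define the bad set $B = \{x : z(x) \leq \theta\}$; by \eqref{eq:CDFexp}, $\Pr[x \in B] \leq \theta$, so a Chernoff bound on the independent indicators yields $|B| \leq 2\theta \cdot 2^n = \mathcal{O}(2^n/(n \log n))$ with probability $1 - e^{-\Omega(\theta \cdot 2^n)}$. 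All these events jointly fail with probability at most $2^{-n}$.

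Next, for each pair $(x,y)$ of good vertices (both outside $B$) I would define the canonical path by a random permutation of the $d$ bits where $x$ and $y$ differ, flipping them in that order. For each intermediate vertex $v$ of the bit-fixing structure, the probability that a random order visits $v$ is only $1/\binom{d}{|v \oplus x|}$, and combining this with the $\mathcal{O}(2^n/(n \log n))$ bound on $|B|$, a routine calculation shows that the expected number of bad vertices on a random path is $o(1)$, so most permutations produce a path disjoint from $B$. A probabilistic-method argument, applied simultaneously across all pairs with a union bound over the $n \cdot 2^n$ edges, then yields an assignment $\{\gamma_{xy}\}$ that is both bad-vertex-free (except possibly at endpoints) and has edge load close to the bit-fixing average of $2^{n-1}$ pairs per edge. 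For the few pairs involving $B$, I would use any bit-fixing path; their contribution to the congestion sum is at most $\pi(x)\pi(y) \leq \theta \cdot z_{\max}/4^n$, which is negligible.

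With these paths in hand, the congestion estimate is direct. Every edge $e$ carrying flow satisfies $\min(z(e^+), z(e^-)) \geq \theta$, so $1/Q(e) = n \cdot 2^n (z(e^+) + z(e^-))/(z(e^+) z(e^-)) \leq \mathcal{O}(n \cdot 2^n/\theta)$; the sum $\sum_{\gamma_{xy} \ni e} \pi(x)\pi(y)$ is bounded by decomposing the $2^{n-1}$ routed pairs into two subcubes (over which $x$ and $y$ range independently) and applying \eqref{eq:concentrationSubExp} to each partial sum of $z$-values, giving $\mathcal{O}(n/2^{n+1})$; and the path lengths contribute at most a factor $n+1$. Multiplying yields $\rho(\Gamma) \leq \mathcal{O}(n^3/\theta) = \mathcal{O}(n^4 \log n)$, as claimed. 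The main obstacle I expect is the simultaneous requirement of bad-vertex avoidance and near-uniform edge usage: since $|B|$ can vastly exceed the number $n$ of vertex-disjoint paths granted by Proposition \ref{prop:nVertexConnectionHypercube}, a deterministic greedy scheme does not suffice, and one must instead exploit the full $n!$ symmetry of bit-orderings together with Chernoff-type concentration to pick one joint assignment that works for every edge.
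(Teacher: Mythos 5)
There are two genuine gaps in your argument, and the first stems from a misdiagnosis of where the difficulty lies.

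\textbf{The per-pair failure probability is too weak for a union bound.} Your random-permutation scheme gives that for any fixed pair $(x,y)$ at Hamming distance $d$, a uniformly random bit-ordering produces a path whose interior avoids $B$ with probability $\geq 1 - (d-1)\theta = 1 - o(1)$. That $o(1)$ failure rate cannot be union-bounded over the $\sim 4^n$ pairs, so the existence of a bad-vertex-free path system is not established. You assert that the $n$-internally-disjoint-path scheme of \propref{nVertexConnectionHypercube} ``does not suffice'' because $|B|$ vastly exceeds $n$; this is mistaken, and in fact that scheme is precisely what the paper uses (\lemref{propertiesofCanonicalPaths}). The point you overlooked is that $B$ is random, not adversarial, and the $n$ candidate paths are internally vertex-disjoint, so the events ``path $i$ contains a bad internal vertex'' are mutually independent. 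With the threshold $\theta \sim 1/(64n)$ each candidate path is entirely good with probability $\geq 63/64$, so the probability that all $n$ fail simultaneously is at most $(1/64)^n = 2^{-6n}$, which comfortably survives a union bound over $2^{2n}$ pairs. Replacing this construction with a per-pair random permutation makes the failure mode strictly worse.

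\textbf{The subcube decomposition of the congestion sum is not valid for per-pair random orderings.} For a single fixed bit-ordering the set of pairs routed through an edge $e$ is a product set $L \times R$ with $|L||R| = 2^{n-1}$, which is what lets you write $\sum_{\gamma_{xy} \ni e} z(x)z(y) \leq (\sum_{x\in L} z(x))(\sum_{y\in R} z(y))$ and apply \eqref{eq:concentrationSubExp} to each factor. Once each pair independently chooses its own ordering, the set of routed pairs is no longer a product set and the factorization is lost; you would need a separate concentration argument for $\sum_{\gamma_{xy}\ni e} z(x)z(y)$ as a random variable in the path choices, uniformly over all edges, which you do not sketch. The paper avoids this issue entirely with a charitable over-count: it bounds the sum by summing over all $n$ cyclic bit-orderings regardless of which one each pair actually uses, preserving the exact product structure $L_i \times R_i$ with $|L_i||R_i|\leq 2^n$ for each starting index $i$. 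In that accounting the $\log n$ factor does not come from a smaller threshold $\theta$; it comes from the $\mathcal{O}(\log n)$ values of $i$ near $0$ or $n$ where $|L_i|$ or $|R_i|$ is small and the concentration of $\frac{1}{|L_i|}\sum_{L_i} z(x)$ degrades to an $\mathcal{O}(n)$ factor. Your final arithmetic lands on the same $\mathcal{O}(n^4\log n)$, but only the paper's bookkeeping is backed by a correct path construction.
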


Prior to proving this theorem, we gather some useful tools.  
Define the normalization factor $\bm{S}=\sum_{x\in \{0,1\}^n} \bm{z}(x)$. Assuming the set of canonical paths does not contain a self-loop, we can rewrite the path congestion as:
\begin{align}
    \bm{\rho}(\G):= \frac{n}{\bm{S}}\max_{e=(e^+,e^-)} \frac{\bm{z}_{e^+}+\bm{z}_{e^-}}{\bm{z}_{e^+}\bm{z}_{e^-}} \sum_{\g_{xy}\ni e} \bm{z}(x)\bm{z}(y) |\gamma_{xy}|.\label{eq:congestionRewrite}
\end{align}

In a random realization of the vertices $x\in \{0,1\}^n$, the value $\bm{z}(x)$ of some vertices may be $o(1/\poly(n))$. This means the `capacity' $Q(e)$ of the edges connected to such vertices could also be $o(1/\poly(n))$. To obtain the desired upper bound $\rho(\G)\leq \poly(n)$, we need to carefully design a set of canonical paths $\G=\{\g_{xy}\}$ that when possible, avoid traversing through such vertices $\bm{z}(x)$ that have negligible weights. 
To this end, we partition the vertices into two sets of `bad' and `good' vertices $\bm{V_{\mathrm{bad}}}:=\{x\in \{0,1\}^n: \bm{z}(x)\leq 1/(64n)\}$ and $\bm{V_{\mathrm{good}}}:=V\setminus \bm{V_{\mathrm{bad}}}$.

\begin{lemma}\label{lem:propertiesofCanonicalPaths}
    Except for probability $\leq 2^{-3n}$, there exists a set of canonical paths $\G=\{\g_{xy}\}$ with the following properties:
    \begin{enumerate}
    \item All paths are simple and without self-loops.
    \item The length of each path is bounded by $|\g_{xy}|\leq n+3$. 
    \item A bad vertex in $\bm{V_{\mathrm{bad}}}$ that is part of a canonical path can only be the starting or the end point of the path. 
\end{enumerate}
    \end{lemma}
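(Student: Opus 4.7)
The plan is to construct $\Gamma$ by picking, for each ordered pair $(x,y)$ of distinct vertices in $\{0,1\}^n$, one of the $n$ pairwise internally disjoint paths produced by \propref{nVertexConnectionHypercube}, specifically one whose intermediate vertices all lie in $\bm{V_{\mathrm{good}}}$. All three listed properties then follow at once: simplicity and absence of self-loops are inherited from \propref{nVertexConnectionHypercube}; the length bound $|\gamma_{xy}|\leq n+1\leq n+3$ likewise; and the restriction of bad vertices to endpoints holds by the very choice of path. The only real work is to show that such a good path exists for every pair simultaneously, except on an event of probability at most $2^{-3n}$.

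First I would use the CDF estimate $\Pr[\bm{z}(v)\leq t]\leq t$ from \eqref{eq:CDFexp} with $t = 1/(64n)$ to conclude that each individual vertex lies in $\bm{V_{\mathrm{bad}}}$ with probability at most $1/(64n)$. Next I would fix a distinct pair $(x,y)$ and invoke \propref{nVertexConnectionHypercube} to obtain paths $P_1^{(x,y)},\ldots,P_n^{(x,y)}$ with pairwise disjoint intermediate vertex sets $I_1^{(x,y)},\ldots,I_n^{(x,y)}$, each of size at most $n$ (since every path has at most $n+1$ edges). A union bound over $I_i^{(x,y)}$ then gives $\Pr[I_i^{(x,y)}\cap\bm{V_{\mathrm{bad}}}\neq\emptyset]\leq n\cdot 1/(64n) = 1/64$. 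Because the $I_i^{(x,y)}$ are disjoint and the Porter--Thomas weights $\bm{z}(\cdot)$ are i.i.d.\ across vertices, these events are mutually independent in $i\in[n]$, so the probability that every candidate path for this pair hits $\bm{V_{\mathrm{bad}}}$ internally is at most $(1/64)^n = 2^{-6n}$.

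Taking a union bound over the $\leq 2^{2n}$ ordered pairs yields an overall failure probability at most $2^{2n}\cdot 2^{-6n} = 2^{-4n}\leq 2^{-3n}$. On the complementary event, for every pair $(x,y)$ I can set $\gamma_{xy}$ to be any $P_i^{(x,y)}$ whose interior avoids $\bm{V_{\mathrm{bad}}}$, producing the promised family $\Gamma$ and immediately verifying properties (1), (2), (3): simplicity and acyclicity come from \propref{nVertexConnectionHypercube}, the length is at most $n+1$, and any bad vertex on $\gamma_{xy}$ must be an endpoint since intermediate vertices are good by construction.

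The one delicate point is the independence assertion in the middle step: it depends essentially on the ``internally pairwise disjoint'' clause of \propref{nVertexConnectionHypercube} combined with the independence of the Porter--Thomas draws across distinct vertices, so that the membership of distinct $I_i^{(x,y)}$ in $\bm{V_{\mathrm{bad}}}$ is determined by disjoint coordinates of the underlying product measure; without this, one would only get $1/64$ rather than $(1/64)^n$ per pair, which is far too weak to survive the union bound over $2^{2n}$ pairs. Everything else is routine bookkeeping, and the constant $64$ in the definition of $\bm{V_{\mathrm{bad}}}$ is tuned precisely so that the per-path failure probability $1/64$ yields a comfortable exponential margin in the final bound.
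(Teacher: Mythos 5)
Your proof is correct as a proof of the three stated properties, and it is built on the same two ingredients the paper uses: the $n$ pairwise internally disjoint paths from \propref{nVertexConnectionHypercube}, the per-vertex bad probability $\leq 1/(64n)$, independence of the disjoint interiors, and a $2^{2n}$ union bound. You even get a slightly sharper length bound, $|\gamma_{xy}| \leq n+1$.

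Where your construction genuinely departs from the paper's is in how it treats pairs whose endpoints are bad. You apply the ``pick any disjoint path whose interior avoids $\bm{V_{\mathrm{bad}}}$'' rule uniformly to all ordered pairs, whereas the paper first builds good-to-good canonical paths by this rule, and then routes a bad endpoint $x$ through a \emph{designated good neighbor} $x'$ (and a bad $y$ through $y'$), paying two extra hops and an extra high-probability event (every vertex has at least one good neighbor). Both constructions satisfy properties (1)--(3), so your proof of this lemma stands. But the paper's fussier routing buys an additional, unstated property that is invoked almost immediately in the proof of \thmref{upperBoundCongestion}: ``if both $e^+$ and $e^-$ are bad, then the edge $(e^+,e^-)$ is not part of any canonical path.'' That statement does not follow from Property 3 alone --- a canonical path $\gamma_{xy}$ between two adjacent bad vertices could consist of the single edge $(x,y)$, which has both endpoints bad and trivially satisfies Property 3. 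Under your construction this actually can happen, because when $d(x,y)=1$ one of the $n$ disjoint paths of \propref{nVertexConnectionHypercube} is the direct edge $(x,y)$ itself, whose interior is empty and hence vacuously good, so your ``pick any good-interior path'' rule is free to select it. The paper's designated-good-neighbor routing excludes exactly this case (for adjacent bad $x,y$ the path is $x \to x' \to \cdots \to y' \to y$ with $x',y'$ good, so every edge touches a good vertex), and this is the real reason for the $+2$ slack in the length bound and for the extra tail estimate on bad vertices having good neighbors. If you want your canonical paths to plug into the downstream congestion bound without modification, you should add the rule: for an adjacent bad--bad pair, never select the direct edge; pick one of the remaining $n-1$ length-$3$ paths with good interior instead (your independence calculation already gives that one exists with probability $\geq 1 - (1/64)^{n-1}$).
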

\begin{proof}
We start by designing the paths $\g_{xy}$ whose starting and end points $x$ and $y$ are both good vertices in $\bm{V_{\mathrm{good}}}$. According to \propref{nVertexConnectionHypercube}, there are $n$ pairwise internally disjoints paths of length $\leq n+1$ connecting $x$ and $y$. For each such path, the probability that at least one of its vertices is bad is $\leq n\cdot \frac{1}{64n}$. Hence, the probability that at least on of the $n$ disjoints paths contains only good vertices is $\geq 1-2^{-6n}$. We choose one such good path as the canonical path $\g_{xy}$ between $x$ and $y$. Via a union bound, we see that except with probability $\leq 2^{2n}\cdot 2^{-6n}=2^{-4n}$, this procedure yields a set of canonical paths $\{\g_{xy}\}$ between any pair of good vertices such that $|\g_{xy}|\leq n+1$.

Next, we consider paths connecting a bad vertex $x$ in $\bm{V_{\mathrm{bad}}}$ to another vertex $y$. For any  vertex $x \in \{0,1\}^n$, the probability that all of its neighbors are bad vertices in $V_0$ is $\leq (\frac{1}{64n})^n$. This means with probability $\geq 1-2^n\cdot 2^{-n\log(64n)}$, any bad vertex in $\bm{V_{\mathrm{bad}}}$ has a good neighbor in $\bm{V_{\mathrm{good}}}$. The canonical path $\g_{xy}$ between a bad vertex $x$ and any other vertex $y$ is defined to be the path that first connects $x$ to a designated good vertex $x'$ among its neighbors, and then connects $x'$ to $y$ (or if $y\in\bm{V_{\mathrm{bad}}}$, its designated good vertex $y'$) via the canonical path as constructed in the previous case. It is not hard to see that these canonical paths satisfy the properties 1. to 3. mentioned earlier. 
\end{proof}

\begin{proof}[Proof of \thmref{upperBoundCongestion}]
We upper bound the path congestion parameter $\bm{\r}(\G)$ for any edge $e=(e^+,e^-)$ that participates in a path in separate cases. From property 3. of \lemref{propertiesofCanonicalPaths}, we have that if both $e^+$ and $e^-$ are bad vertices in $\bm{V_{\mathrm{bad}}}$, then the edge $(e^+,e^-)$ is not part of any canonical path. 
Next, suppose $e^+ \in \bm{V_{\mathrm{bad}}}$ while $e^-\in \bm{V_{\mathrm{good}}}$ (the case of $e^-$ being bad instead is similar). This implies that $\bm{z}_{e^+}\leq \bm{z}_{e^-}$. Following Equation \eqref{eq:congestionRewrite}, we bound the quantity
\begin{align}
  &\frac{n}{\bm{S}}\cdot \frac{\bm{z}_{e^+}+\bm{z}_{e^-}}{\bm{z}_{e^+}\bm{z}_{e^-}} \sum_{\g_{xy}\ni e} \bm{z}(x)\bm{z}(y) |\gamma_{xy}|\\
  &=  \frac{n}{\bm{S}}\cdot\frac{\bm{z}_{e^+}+\bm{z}_{e^-}}{\bm{z}_{e^+}\bm{z}_{e^-}} \sum_{y \in \{0,1\}^n: y\neq e^+} \bm{z}(e^+)\bm{z}(y) |\gamma_{e^+y}|\nn\\
  &\leq \frac{2n}{\bm{S}}  \sum_{y \in \{0,1\}^n: y\neq e^+} \bm{z}(y) |\gamma_{e^+y}| && \text{$e^-\in \bm{V_{\mathrm{good}}}$ and $\bm{z}_{e^+}\leq \bm{z}_{e^-}$}\nn\\
  &\leq \frac{2n(n+3)}{\bm{S}}\cdot  \sum_{y \in \{0,1\}^n: y\neq e^+} \bm{z}(y)  && \text{Property 2. \lemref{propertiesofCanonicalPaths}}\nn\\
  &\leq \frac{2n(n+3)}{\bm{S}}\cdot  \sum_{y \in \{0,1\}^n} \bm{z}(y) \nn\\
   &\leq 2n(n+3).\label{eq:congestionboundcase1}
\end{align}
Now we consider the last case, when $e^+$ and $e^-$ are both good vertices in $\bm{V_{\mathrm{good}}}$.
We have
\begin{align}
  \frac{n}{\bm{S}}\cdot \frac{\bm{z}_{e^+}+\bm{z}_{e^-}}{\bm{z}_{e^+}\bm{z}_{e^-}} \sum_{\g_{xy}\ni e} \bm{z}(x)\bm{z}(y) |\gamma_{xy}|& \leq \frac{128n^2}{\bm{S}}\cdot\sum_{\g_{xy}\ni e} \bm{z}(x)\bm{z}(y) |\gamma_{xy}| && \bm{z}_{e^+} ,\bm{z}_{e^-} \in \bm{V_{\mathrm{good}}}\nn\\
  & \leq \frac{128n^2(n+3)}{\bm{S}}\cdot\sum_{\g_{xy}\ni e} \bm{z}(x)\bm{z}(y) && \text{Property 2. \lemref{propertiesofCanonicalPaths}}\label{eq:congestionboundcase2}
\end{align}
To upper bound the term $\sum_{\g_{xy}\ni e} \bm{z}(x)\bm{z}(y)$, recall that by the proof of \propref{nVertexConnectionHypercube}, for any canonical path $\g_{xy}$ between two good vertices $x,y \in \bm{V_{\mathrm{good}}}$, there exists an $i\in[n]$ such that $x$ and $y$ are connected first by flipping the $i$'th bit in $x$. 
Then, the bits of $x$ and $y$ are matched starting from the $i+1$'th bit and moving rightward and back to the $i$'th bit. 
For a fixed $i \in [n]$ and edge $e=(e^+, e^-)$, this procedure produces a set of paths (connecting different vertices)  that contain the edge $e=(e+,e-)$. 
We collectively denote such paths by $\G_i$.
We define $L_i$ (and $R_i \subseteq \{0,1\}^n$) to be the set of all starting (respectively end) vertices of the paths in $\G_i$. We have $|L_i|\cdot |R_i|\leq 2^n$ for all $i\in[n]$. 

Although for each pair of vertices, at most one of the $n$ paths generated by different choices of $i$ is used, we charitably upper bound $\sum_{\g_{xy}\ni e} \bm{z}(x)\bm{z}(y)$ by including all $n$ paths for any $x,y \in \{0,1\}^n$ including the bad vertices. Moreover, the bad vertices use a designated good neighbor to connect to other vertices. Again, we loosely upper bound this by multiplying the contribution of all vertices by a factor of $2$. Overall, we have
\begin{align}
  \frac{1}{\bm{S}} \cdot \sum_{\g_{xy}\ni e} \bm{z}(x)\bm{z}(y)\leq \frac{2^{n+2}}{\bm{S}} \cdot \sum_{i\in[n]} \left( \frac{1}{|L_i|}\sum_{x\in L_i} \bm{z}(x)\cdot \frac{1}{|R_i|}\sum_{y\in R_i}\bm{z}(y) \right)
\end{align}
Next, we apply the tail bound \eqref{eq:concentrationSubExp} in \propref{concentration} to $\bm{S}=\sum_{x\in \{0,1\}^n} \bm{z}(x)$ and also $\frac{1}{|L_i|}\sum_{x \in L_i} \bm{z}(x)$ and $\frac{1}{|R_i|} \sum_{x\in R_i} \bm{z}(x)$ for all $i\in [n]$.
We see that for a constant $c>1$, except with $\leq 2^{-3n}$ (or in fact up to a tighter bound $\leq 2^{-\Omega(n^c)}$), we have 
\begin{align}
     \text{when $i \in [c\cdot \log n, n- c\cdot\log n]$,\quad} \frac{1}{|L_i|}\sum_{x \in L_i} \bm{z}(x)\cdot \frac{1}{|R_i|} \sum_{x\in R_i} \bm{z}(x) \leq \mathcal{O}(1).
\end{align}
Similarly, with probability except with probability $2^{-3n}$, we have
\begin{align}
     \text{when $i \leq c\cdot \log n$ or $i \geq n - c\cdot \log n$,\quad} \frac{1}{|L_i|}\sum_{x \in L_i} \bm{z}(x)\cdot \frac{1}{|R_i|} \sum_{x\in R_i} \bm{z}(x) \leq \mathcal{O}(n).
\end{align}
Together, these bounds along with the concentration of $\bm{S} = \sum_{x \in \{0,1\}^n} \bm{z}(x)$ around $2^n$ imply that with probability $\geq 1 - 2^{-3n}$:
\begin{align}
   \frac{2^{n+2}}{\bm{S}}\cdot \sum_{i\in[n]} \left( \frac{1}{|L_i|}\sum_{x\in L_i} \bm{z}(x)\cdot \frac{1}{|R_i|}\sum_{y\in R_i}\bm{z}(y) \right) \leq \mathcal{O}(n\cdot \log n).
\end{align}
When combined with \eqref{eq:congestionboundcase2}, the following bound holds with probability $\geq 1-2^{-3n}$ for a fixed $e=(e^+,e^-)$:
\begin{align}
   \frac{n}{\bm{S}}\cdot \frac{\bm{z}_{e^+}+\bm{z}_{e^-}}{\bm{z}_{e^+}\bm{z}_{e^-}} \sum_{\g_{xy}\ni e} \bm{z}(x)\bm{z}(y) |\gamma_{xy}|&\leq \mathcal{O}(n^4 \cdot \log n). \nn
\end{align}
This bound along with \eqref{eq:congestionboundcase1} shows that for all edges $e=(e^+,e^-)$ (whose number we upper bound by $2^{2n}$) the congestion parameter is upper bound by $\r(\G)\leq \mathcal{O}(n^4 \cdot \log n)$ with probability $\geq 1-2^{-n}$.
\end{proof}

\subsection{A tighter analysis}\label{sec:tighterAnalysis}
To obtain a tight analysis of the spectral gap, we consider a notion known as `resistance' that improves and generalizes the path congestion method discussed in the warm-up analysis before. 

The idea of this technique is based on the multi-commodity flow in which we route a unit flow from any vertex $x$ to $y$ by splitting it among several paths that connect $x$ and $y$ such that no edge is congested. This is in contrast to the path congestion method where the flow only follows one such path $\g_{xy}$. 
More formally, let $\mathcal{P}_{xy}$ be set of simple directed paths connecting $x$ to $y$. Then a (multi-commodity) flow is a function $f: \cup_{x\neq y} \mathcal{P}_{xy} \mapsto \mathbb{R}$ such that $\sum_{p \in \mathcal{P}_{xy}}f(p)=1$ for all two distinct vertices $x\neq y$.
Given a flow $f$, we define resistance $R(f)$ by
\begin{align}
    R(f):= \max_{e \in E} \frac{1}{Q(e)} \sum_{x,y} \sum_{p\in \mathcal{P}_{xy}: p \ni e} \pi(x)\pi(y) f(p) |p|
\end{align}
where as before the weight $Q(e)=\pi(e^{+})P(e^{+},e^{-})$ of an edge $e=(e^+,e^-)$ is the probability of the transition $(e^+,e^-)$ occurring in the random walk. 
\begin{proposition}[Multi-commodity flows vs spectral gap, cf. \cite{sinclair1992improved}]
Let $\lambda_1$ be the second largest eigenvalue of the transition matrix $P$ for a reversible Markov chain. For any flow $f$, it holds that 
\begin{align}
        1-\lambda_1\geq \frac{1}{R(f)}.
\end{align}
\end{proposition}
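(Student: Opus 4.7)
The plan is to reduce the claim to the Poincaré (variational) characterization of the spectral gap. For a reversible Markov chain with transition matrix $P$ and stationary distribution $\pi$, the second largest eigenvalue satisfies
\[ 1 - \lambda_1 = \min_g \frac{\mathcal{E}(g,g)}{\mathrm{Var}_\pi(g)}, \]
where the minimum is over non-constant real functions, $\mathcal{E}(g,g) := \tfrac12 \sum_{x,y}\pi(x)P(x,y)(g(x)-g(y))^2$ is the Dirichlet form, and $\mathrm{Var}_\pi(g) = \tfrac12\sum_{x,y}\pi(x)\pi(y)(g(x)-g(y))^2$. So it suffices to prove the Poincaré inequality $\mathrm{Var}_\pi(g) \leq R(f)\,\mathcal{E}(g,g)$ for every such $g$.

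The core step is to transport the pairwise differences appearing in the variance along the multicommodity flow $f$ into the single-edge differences appearing in the Dirichlet form. Fix $x \neq y$ and any path $p \in \mathcal{P}_{xy}$: the telescoping identity gives $g(y)-g(x) = \sum_{e \in p} \nabla g(e)$ with $\nabla g(e) := g(e^+) - g(e^-)$, so Cauchy--Schwarz along the path yields $(g(y)-g(x))^2 \leq |p| \sum_{e \in p}(\nabla g(e))^2$. Averaging this over $p \in \mathcal{P}_{xy}$ with the unit weights $f(p)$ (using $\sum_p f(p) = 1$) preserves the bound; I would then multiply by $\pi(x)\pi(y)$, sum over all ordered pairs $(x,y)$, and interchange the order of summation to aggregate contributions by edge. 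This yields
\[ 2\,\mathrm{Var}_\pi(g) \leq \sum_e (\nabla g(e))^2 \Big(\sum_{x,y}\sum_{p \ni e}\pi(x)\pi(y)f(p)|p|\Big) \leq R(f) \sum_e Q(e)(\nabla g(e))^2 = 2R(f)\,\mathcal{E}(g,g), \]
where the middle inequality is exactly the definition of $R(f)$ in \eqref{eq:flowDef} applied edgewise, and the final equality uses reversibility $Q(e) = \pi(e^-)P(e^-,e^+) = \pi(e^+)P(e^+,e^-)$. Dividing by $\mathrm{Var}_\pi(g)$ and invoking the variational formula closes the argument.

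The main obstacle is purely bookkeeping. Cauchy--Schwarz must be applied pathwise with uniform weights so that the correct combinatorial factor $|p|$ appears---matching the $|p|$ in the definition of $R(f)$---rather than edge weights such as $Q(e)$ or $f(p)$, which would yield a different bound. Relatedly, because the paths in $\mathcal{P}_{xy}$ are oriented, the final edge sum must range over oriented edges so that the two orientations of each underlying edge together contribute the factor of $2$ that converts $\sum_e Q(e)(\nabla g(e))^2$ into $2\,\mathcal{E}(g,g)$. These are the standard points of care in the Sinclair-type argument, and no further ingredient beyond the Poincaré principle and a single pathwise Cauchy--Schwarz is required.
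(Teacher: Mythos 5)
Your proof is correct, and it is exactly the canonical Sinclair--Jerrum argument from the reference the paper cites; the paper itself does not reprove this proposition, it only cites \cite{sinclair1992improved}, so there is nothing in the paper to diverge from. Your chain of steps --- the Poincar\'e variational characterization of $1-\lambda_1$, telescoping plus pathwise Cauchy--Schwarz to obtain $(g(x)-g(y))^2 \le |p|\sum_{e\in p}(\nabla g(e))^2$, averaging with weights $f(p)$, exchanging the order of summation to collect by edge, bounding the edge load by $R(f)\,Q(e)$, and then recognizing $\sum_e Q(e)(\nabla g(e))^2 = 2\mathcal{E}(g,g)$ over oriented edges via reversibility --- is precisely how that resistance bound is established, and the two places you flag (uniform weights in Cauchy--Schwarz so that $|p|$ appears, and summing over oriented edges to get the factor of $2$) are indeed the only points where a careless write-up would slip.
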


We apply this proposition to the random walk given in \eqref{eq:TransitionMatrix-Haar}. Following \eqref{eq:congestionRewrite}, we can rewrite the resistance as

\begin{align}
    \bm{R}(f):= \frac{n}{\bm{S}}\max_{e=(e^+,e^-)} \frac{\bm{z}_{e^+}+\bm{z}_{e^-}}{\bm{z}_{e^+}\bm{z}_{e^-}}   \sum_{x,y} \sum_{p\in \mathcal{P}_{xy}: p \ni e} \bm{z}(x)\bm{z}(y) f(p) |p|.\label{eq:congestionRewriteResistance}
\end{align}

We are going to divide the vertices into two sets of good $\bm{V_{\mathrm{good}}}$ and bad $\bm{V}_{B}$ vertices. The bad vertices $\bm{V_{\mathrm{bad}}}$ are those whose weight $\bm{z}(x)$ is either too small or too large. More precisely, for two constant $c_{\ell}$ and $c_{u}$ that are fixed soon, we define 
\begin{align}
    \bm{V_{\mathrm{good}}}:=\{x\in V: c_{\ell}\leq \bm{z}(x)\leq c_{u}\},\quad \quad \bm{V_{\mathrm{bad}}}:=V\setminus \bm{V_{\mathrm{bad}}} \label{eq:goodVertices}
\end{align}
The value $\bm{z}(x)$ of vertex $x$ has an exponential distribution $\Pr(\bm{z})=e^{-\bm{z}}\cdot \iden[\bm{z}\geq 0]$. It follows form \eqref{eq:CDFexp} that for $c_\ell = 1/11, c_u = 5$, 
\begin{align}
    p:= \pr[x\in \bm{V_{\mathrm{good}}}]=e^{-c_\ell}-e^{-c_u}\geq \frac{9}{10}.\label{eq:probIntervalExp} 
\end{align}
Hence, for any edge that only involves two good vertices  $e^+,e^-\in \bm{V_{\mathrm{good}}}$, we have 
\begin{align}
    \bm{R}(f):= \frac{n}{\bm{S}}\frac{2c_u^2}{c_{\ell}}  \max_{e=(e^+,e^-)} \sum_{x,y} \sum_{p\in \mathcal{P}_{xy}: p \ni e}f(p) |p|.\label{eq:ResistanceSimplified}
\end{align}

In the next few sections, we prove the existence of a multi-commodity flow with $R(f) \leq \mathcal{O}(n^2)$, as stated in the following theorem. 

 \begin{theorem}[multi-commodity flow construction]\label{thm:multiCommodityFlowBound}
     There exists a multi-commodity flow $f$ defined for the random walk \eqref{eq:TransitionMatrix-Haar} such that the resistance $\bm{R}(f)$ is with probability $1-2^{-c n}$ bounded by $\bm{R}(f) \leq c' n^2$ for some constants $c, c' \geq 0$.
\end{theorem}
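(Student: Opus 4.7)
The plan is to refine the single-canonical-path analysis of \thmref{upperBoundCongestion} in two ways simultaneously: (i) redefine the good/bad partition using constants $c_\ell, c_u$ rather than $\Theta(1/n)$-scale thresholds, which immediately replaces the $\mathcal{O}(n)$-scale prefactor $\tfrac{\bm{z}_{e^+}+\bm{z}_{e^-}}{\bm{z}_{e^+}\bm{z}_{e^-}}$ by $\mathcal{O}(1)$ for any edge between good vertices, and (ii) spread the unit flow from $x$ to $y$ over many paths via a \emph{random-permutation flow}---the standard construction that achieves congestion $\mathcal{O}(n)$ for the symmetric hypercube walk. Together these two ingredients target the $\mathcal{O}(n^2)$ bound.

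I would construct $f$ in three regimes. If both endpoints $x, y \in \bm{V_{\mathrm{good}}}$ lie at Hamming distance $k$, distribute one unit of flow equally among the $k!$ monotone length-$k$ paths from $x$ to $y$; any such path entering a bad vertex $v$ is rerouted locally, replacing a two-step segment $u \to v \to w$ by a constant-length detour through good vertices only. Such detours exist with probability $1 - 2^{-\Omega(n)}$ because every Hamming ball of constant radius around $v$ contains $\Theta(\mathrm{poly}(n))$ good vertices, by applying the Bernoulli tail bound \eqref{eq:concentrationBin} to the i.i.d.\ indicators of the event $\bm{z}(\cdot) \in [c_\ell, c_u]$ (whose per-vertex probability is $p \geq 9/10$ by \eqref{eq:probIntervalExp}). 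If one or both endpoints lie in $\bm{V_{\mathrm{bad}}}$, route the flow from each bad endpoint to a designated good neighbor and then use the good-to-good flow. Every path in $f$ then has length $n + \mathcal{O}(1)$, and every edge incident to a bad endpoint carries total flow $\mathcal{O}(1/n)$.

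To bound $\bm{R}(f)$, I split on the type of edge. For good-good edges the prefactor in \eqref{eq:ResistanceSimplified} is $\mathcal{O}(1)$, and a direct computation mirroring the classical congestion analysis of the random-permutation flow on the symmetric hypercube---with the random weights $\bm{z}(\cdot)$ reinserted and controlled via the sub-exponential concentration bound \eqref{eq:concentrationSubExp} applied to $\bm{S}$ and to the relevant block sums---yields an overall contribution of $\mathcal{O}(n^2)$ to $\bm{R}(f)$. Edges incident to bad vertices are handled via the same calculation as in \eqref{eq:congestionboundcase1}, which already gives an $\mathcal{O}(n^2)$ bound because only $\mathcal{O}(1/n)$ flow passes through each such edge in this construction.

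The main obstacle is controlling the flow carried by the local detour edges. Many good-good paths may need to bypass the same bad vertex $v$, and a naive redistribution could pile up super-constant flow on the few edges near $v$. I would spread the detours uniformly over all valid good neighborhoods of $v$ within constant Hamming radius---this is precisely the \emph{local escape property} of \lemref{localEscapeProperty} and \defref{localEscape} in \appref{enforcingMixing}---and verify via a Chernoff argument over the random $\bm{z}(\cdot)$ that, with probability $1 - 2^{-\Omega(n)}$, every edge within constant distance of a bad vertex carries at most $\mathcal{O}(1)$ additional units of flow. Once this is in place the detours contribute only a constant factor to the resistance, and the final bound $\bm{R}(f) \leq c' n^2$ holds with probability $1 - 2^{-cn}$.
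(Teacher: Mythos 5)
Your proposal shares the two main building blocks with the paper's actual proof: the good/bad partition at constant thresholds $c_\ell, c_u$ (rather than the $1/(64n)$ threshold used in the warm-up), and the local escape property of \lemref{localEscapeProperty} for routing around bad vertices. The flow you describe—spreading a unit flow over all monotone paths—is also closely related to the paper's balanced layer-by-layer construction in \propref{antipodalflow}. However, there is a genuine gap in how you propose to control the detour congestion, which is where all the real work lives.

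You assert that, after spreading detours over the good neighborhood of each bad vertex $v$, a ``Chernoff argument over the random $\bm{z}(\cdot)$'' verifies that every edge near $v$ carries at most $\mathcal{O}(1)$ additional units of flow. This is not how the difficulty resolves. First, the randomness in $\bm{z}(\cdot)$ is already consumed in establishing the local escape property with probability $1-2^{-\Omega(n)}$; once that holds, the bound on flow per edge must be a deterministic combinatorial statement, not a further concentration bound. Second, ``$\mathcal{O}(1)$ additional units of flow'' is not dimensionally the right target: the total flow passing through a typical edge is $\Theta(2^n/n)$ in this construction (each of the $\sim 4^n/2$ ordered pairs contributes $\Theta(2^{-n})$ weight, spread over $\Theta(n\,2^n)$ edges), so an $\mathcal{O}(1)$ surplus says nothing. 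What one actually needs is that the detour flow through any edge $e$ at level $L(e)$ remains $\mathcal{O}(1/|E_{L(e)+1}|)$ per antipodal-pair contribution, matching the direct flow. The subtlety is that a detour around a bad vertex at level $k$ can route flow through edges at levels $k-1$ through $k+3$, where $|E_{L(e)}|$ differs from $|E_k|$ by a factor of order $\bigl(\tfrac{n-k}{k}\bigr)^{\Delta}$; this factor must be canceled against the number of source–target pairs whose detours hit that particular edge. The paper does this with the level-tracking machinery of \lemref{cycleCount} and \lemref{BoundingLevel}, and the four-way case analysis in \propref{antipodalflow} on whether $s,t$ are good or bad and whether $\Delta>0$. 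No simple Chernoff bound reproduces that cancellation.

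A second, smaller omission: you route directly between arbitrary pairs $(x,y)$, but the paper first proves the $c/|E_k|$ per-edge bound for antipodal pairs in \propref{antipodalflow}, then lifts it to all pairs by a scaling argument in \propref{antipodalCap} (route volume $\binom{n}{k}$ to layer $k$, giving one unit per target vertex), and finally reinserts bad endpoints in \secref{proofFlow}. Your direct construction may well be made to work, but you would then have to redo the level-tracking analysis for non-antipodal pairs from scratch rather than inheriting it from the antipodal case, which is strictly harder than what the paper does, not easier.
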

The proof of this theorem is given in \appref{proofFlow}. Before proceeding to the proof, we need to lay out some facts about the Boolean hypercube. We then start our construction of multi-commodity flows between vertices. 

\subsubsection{Geometric facts about Boolean hypercube}\label{sec:GeometricFacts}

To bound the resistance $R(f)$ in \eqref{eq:congestionRewriteResistance}, we consider an $n$-bit Boolean hypercube, where each vertex is deleted with probability $p$ ($p$ is a small value between $0$ and $1$ given by \eqref{eq:probIntervalExp}).
The removed vertices form the set $\bm{V_{\mathrm{bad}}}$ while the remaining ones form $\bm{V_{\mathrm{good}}}$.
For now, we ignore the non-uniform weights $\bm{z}$ in our initial setup and and assume all edges in the hypercube have weight one.
We would like to create a multi-commodity flow between every pair of vertices that have not been deleted.
The goal is to find a multi-commodity flow that does not congest at any particular edge.

We define a \emph{local escape property} that exists with high probability and allows us to spread the flow from any vertex $x'$ to its neighbors by using edges within some constant Hamming distance of $x'$. This is done in a similar way to \cite{McDiarmidHypercube} where instead of vertices, some of the edges are removed. In order to route a unit flow from $x$ to $y$, we apply this local escape property repeatedly to carry the flow from the set of vertices with Hamming distance $k$ from the starting vertex $x$ to those with Hamming distance $k+1$ until we reach $y$.   
Even though the bad vertices are removed from the set of available vertices, we find it helpful to consider a \emph{fictitious} vertex in place of each bad vertex. A flow to (or from) each fictitious vertex means splitting (or collecting) the flow equally between all of its \emph{good} neighbors which belong to $\bm{V_{\mathrm{good}}}$. 

\begin{lemma}[Local escape property]\label{lem:localEscapeProperty}
    There are constants $\alpha, \beta > 0$ such that with probability at least $1 - 2^{-\beta n}$: (1) each vertex $v \in \{0, 1\}^n$ has at least $\alpha n$ good neighbors, and (2) there are at least $\alpha n$ simple pairwise internally disjoint paths of length $\leq 5$ connecting any two good vertices within Hamming distance $3$ of each other. 
\end{lemma}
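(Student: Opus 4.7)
The plan is to prove both parts via Chernoff tail bounds and a union bound, leveraging that by \eqref{eq:probIntervalExp} each vertex is independently good with probability $p \ge 9/10$. For part (1), fix a vertex $v$: its $n$ neighbors are independent Bernoulli$(p)$ variables in goodness, so by the first tail bound in \propref{concentration} there exist $\alpha_1, \beta_1 > 0$ with $\Pr[\text{fewer than } \alpha_1 n \text{ good neighbors of } v] \le 2^{-(1+\beta_1)n}$, and a union bound over the $2^n$ vertices yields the claim with failure probability $\le 2^{-\beta_1 n}$.

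For part (2), the plan is to construct, for every pair of vertices at Hamming distance $d \in \{1,2,3\}$, a family of $n-d$ candidate paths of length $\le 5$ that are pairwise internally disjoint by construction, and then apply Chernoff to show many of them avoid bad vertices. Fix such a pair $u, v$, let $I \subseteq [n]$ with $|I| = d$ be the bits in which they differ, and fix an arbitrary ordering $(i_1, \ldots, i_d)$ of $I$. For each ``detour bit'' $k \in [n] \setminus I$, define the length-$(d+2)$ candidate path
\begin{equation*}
\gamma^{(k)}_{uv}: \ u \to u \oplus e_k \to u \oplus e_k \oplus e_{i_1} \to \cdots \to v \oplus e_k \to v,
\end{equation*}
which has $d+1 \le 4$ internal vertices. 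Every internal vertex of $\gamma^{(k)}_{uv}$ has bit $k$ flipped relative to $u$, so for distinct $k_1 \ne k_2$ the internal vertex sets of $\gamma^{(k_1)}_{uv}$ and $\gamma^{(k_2)}_{uv}$ are disjoint, and none coincides with $u$ or $v$. Hence the $\gamma^{(k)}_{uv}$ are simple and pairwise internally disjoint.

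This construction also makes the indicators $\bm{G_k} := $ ``all internal vertices of $\gamma^{(k)}_{uv}$ are good'' mutually independent Bernoulli random variables, each with success probability at least $p^{d+1} \ge (9/10)^4$. Applying \propref{concentration} to these $n - d \ge n/2$ independent trials yields $\alpha_2, \beta_2 > 0$ such that for each fixed pair at least $\alpha_2 n$ of the paths are all-good except with probability $2^{-(1+\beta_2)n}$. A union bound over the at most $n^3 \cdot 2^n$ ordered pairs within Hamming distance $3$ then gives failure probability $\le 2^{-\beta n}$ for any $\beta < \beta_2$ and sufficiently large $n$. The main obstacle, and the only real subtlety, is designing the candidate paths so they are simultaneously (i) internally disjoint as paths (the statement the lemma asks for) and (ii) built from disjoint sets of ``random'' internal vertices across different $k$ (so that the $\bm{G_k}$ are independent and Chernoff applies tightly); toggling a single unused coordinate $k$ satisfies both requirements at once. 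Setting $\alpha := \min(\alpha_1,\alpha_2)$ and $\beta$ to the smaller of the two exponents completes the proof.
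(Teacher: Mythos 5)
Your proof is correct and follows essentially the same approach as the paper: both construct a family of internally disjoint paths between nearby vertices using ``detour bits,'' observe that the path-goodness indicators are independent because the internal vertex sets are disjoint, and conclude via a Chernoff bound plus a union bound over the $\mathcal{O}(n^3\cdot 2^n)$ pairs. Your variant restricts to the $n-d$ paths with a detour bit $k\notin I$ (whereas the paper reuses all $n$ internally disjoint paths from \propref{nVertexConnectionHypercube}) and makes the independence argument explicit rather than implicit, but these are cosmetic differences rather than a different route.
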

\begin{proof}
    The proof of statement (1) follows immediately from the tail bound \eqref{eq:concentrationBin} in \propref{concentration}. 
    The proof of statement (2) uses the same construction of internally disjoint paths as in the proof of \propref{nVertexConnectionHypercube}. That is, we repeat the following procedure for any $i\in [n]$: 
    Choose the $i$'th bit of the starting vertex $x \in \{0,1\}^n$ and flip it.
    Then moving rightward from bit $i+1$ to $n$ and back to $i$, flip any bits in which the current vertex differs from the end vertex $y \in \{0,1\}^n$.
    It is not hard to see that because the starting and end vertices are within Hamming distance $3$ of each other, the length of these paths is bounded by $5$. 
   
    Fix two good vertices $s$ and $t$ within Hamming distance $3$ of each other. 
    The bound \eqref{eq:probIntervalExp} implies that (conditioned on the end points being good vertices) the probability of any such paths consists only of good vertices is $p^4$ for $p \geq  9/10$. 
    From the tail bound \eqref{eq:concentrationBin} in \propref{concentration}, we have that since $p^4 > 1/2$, there exist constants $\alpha, \beta > 0$ so that except with probability at most $2^{-(1 + 2\beta)n}$, there are $\geq \alpha \cdot n$ such paths between $s$ and $t$. 
    There are at most $\mathcal{O}(n^3\cdot 2^n)$ pair of good vertices with Hamming distance $3$. We arrive at statement (2) by a union bound over  all these choices.
\end{proof}

Given an even $n$, consider $k = 0,\dots, n / 2 + 1$.
In the following, when we refer to distance $d(s, t)$, we are referring to the Hamming distance between two vertices $s$ and $t$.
We will often view the hypercube $\{0,1\}^n$ from the perspective of an \emph{origin} vertex $u$. We let a vertex layer $V_k(u)$ denote the set of vertices with Hamming distance $k$ to $u$. The set of edges between layers $V_k(u)$ and $V_{k+1}(u)$ is denoted by $E_{k+1}(u)$. It follows that 
$$|E_{k+1}| = (n-k)|V_k| = (k + 1) |V_{k+1}|.$$
We define the level $L(s)$ of a vertex $s$ as the distance $d(s,u)$ of $s$ to the origin~$u$.

For each edge $e$ on the hypercube, we distinguish between an up arrow $e^\uparrow$ or a down arrow $e^\downarrow$ based on a fixed origin $u$ and the hierarchy $\{V_k(u)\}_{k=0}^{n/2+1}$ defined by the Hamming distance to the origin $u$.
If an up arrow $e^{\uparrow}$ (or down arrow $e^{\downarrow}$) connects vertices $s$ and $t$, then we write $e^{\uparrow} = s \uparrow t$ (or $e^{\downarrow} = s \downarrow t$).
We define the level $L(e)$ of an edge $e = (s, t)$ (or an arrow $e^{\uparrow}, e^{\downarrow}$) to be the minimum of the levels of the left node $s$ and the right node $t$ of $e$.

Given an up arrow $s \uparrow t$, we consider $s'$ to be $s$ if $s \in \bm{V_{\mathrm{good}}}$; otherwise, $s'$ is distance one away from $s$.
We also define the length $\ell_{s}$ as follows
\begin{equation}
\ell_s = \begin{cases}
    0, & s \in \bm{V_{\mathrm{good}}} \\
    1, & s \in \bm{V_{\mathrm{bad}}}
\end{cases}.\nn
\end{equation}
We let $t'$ to be $t$ if $t \in \bm{V_{\mathrm{good}}}$; otherwise, $t'$ is distance one away from $t$.
Similarly, we define the length $\ell_{t}$ as follows
\begin{equation}
\ell_t = \begin{cases}
    0, & t \in \bm{V_{\mathrm{good}}} \\
    1, & t \in \bm{V_{\mathrm{bad}}}
\end{cases}.\nn
\end{equation}
We also consider $s^*$ (resp. $t^*$) to be distance one away from $s'$ (resp. $t'$). The relevance of vertices $s'$ or $t'$ is that when $s$ or $t$ are bad, we instead route the flow from $s'$ to $t'$. There are, of course, many choices of vertices $s'$ and $t'$, so we divide the flow equally between them. 

The local escape property in \lemref{localEscapeProperty} yields a set of paths connecting the two vertices $s'$ and $t'$. 
Some examples of such paths are shown in \fig{cycles} in red, green, and yellow.
Each of these paths together with the path $(t', t, s, s')$ (shown as dotted lines in \fig{cycles}) forms a cycle.
We use $C^*$ when referring to one such cycle. The number of edges in $C^*$ is denoted by $|C^*|$.
To avoid inconsistency, in our later analysis we separately treat the case when the path connecting $s$ and $t$ is simply the edge $(s,t)$. 
This happens only when $s$ and $t$ are both $\in \bm{V_{\mathrm{good}}}$. 

Our future analysis of the total flow in an edge $e$ depends on a delicate balance between the number of cycles $C^*$ in which edge $e$ participates and the level of this edge $L(e)$.
In the remainder of this section, we first look more closely at the number of cycles involving an edge $e$ in \lemref{cycleCount}. Then, we prove an upper bound on the level $L(e)$ of edge $e$ in a cycle in \lemref{BoundingLevel}.

\begin{definition}[Cycle type]\label{def:CycleType}
    We can enumerate the vertices of a cycle $C^*$ starting from $s$ and ending at $t$ by assigning an index $i = 1, \dots, |C^*|$ to each vertex. The type of a cycle is determined by identifying (1) the level of each of the cycle's vertices $L(v_i)$ relative to the level $L(s)$, (2) the index $i$ assigned to vertices $t'$, $t$, $s$, and $s'$. 
\end{definition}
There are in fact various \emph{types} of cycles $C^*$ involving any given pair $(s,t)$. 
We have included some instances of these cycles in \fig{cycles}. 
There are, however, only a constant number of types of cycles $C^*$ since $|C^*|=\mathcal{O}(1)$ as corroborated more in \lemref{cycleCount}.

\begin{figure}[h!]
     \centering
     \includegraphics[width=.85\textwidth]{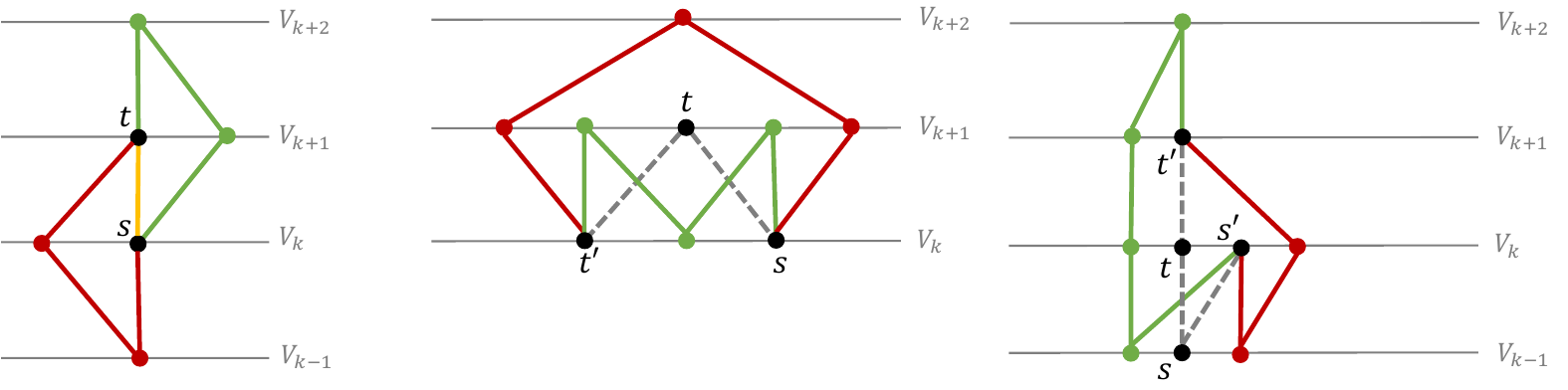}
     \caption{Various types of cycles $C^*$ for a given $(t',t,s,s')$. From left to right: when $s$ and $t$ are both good, $s$ is good but $t$ is bad, and $s$ and $t$ are both bad. The cycle formed by the green path and $(s,t)$ on the left has the following type: (1) the relative levels $L(v_i)-L(s)$ are $(0,1,2,1)$ in order and $|C^*|=4$, (2) $v_1=s, v_4=t$.}
    \label{fig:cycles}
 \end{figure}

\begin{lemma}\label{lem:cycleCount}
Fix an edge $e$. Let $C_e^*$ denote the set of all cycles $C^*$ involving this edge $e$ which can be constructed by applying the local escape property to some arbitrary choice of vertices $(s,t)$. It holds that:
\begin{enumerate}
    \item The length $|C^*|$ of any cycle in $C^*_e$ is either $4$, $6$, or $8$.
    \item Consider $\ell \in \{4,6,8\}$. It is possible to choose a path of length $\ell/2$ with the following properties: This path includes edge $e$. The number of length-$\ell$ cycles in $C_{e}^*$ with a fixed type that include this path is a constant $\leq 24$. 
\end{enumerate}
\end{lemma}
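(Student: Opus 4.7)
The plan is to handle the two statements separately, each by direct combinatorial analysis on the Boolean hypercube.

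For Part~1, I would case-split on whether each of $s$ and $t$ lies in $\bm{V_{\mathrm{good}}}$ or $\bm{V_{\mathrm{bad}}}$. In every case, the cycle $C^*$ decomposes as the concatenation of two arcs: the escape path from $s'$ to $t'$, whose length is $d(s',t')$ or $d(s',t')+2$ by the bit-flipping construction in \lemref{localEscapeProperty}, and the short path $(t',t,s,s')$ of length $\ell_t + 1 + \ell_s \in \{1,2,3\}$. Writing $s' = s \oplus \delta_s$ and $t' = t \oplus \delta_t$ with $\delta_s, \delta_t$ of Hamming weight $\ell_s, \ell_t \in \{0,1\}$, the identity $s' \oplus t' = \delta_s \oplus (s\oplus t) \oplus \delta_t$ yields $d(s',t') \in \{0,1,2,3\}$. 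After discarding the degenerate $s' = t'$ configurations (which do not produce a genuine cycle) and the length-$1$ escape path that can arise when $s,t \in \bm{V_{\mathrm{good}}}$ (which would collapse the ``cycle'' into two traversals of $e$), the admissible sums of escape-path length and $(t',t,s,s')$-length are exactly $\{4,6,8\}$.

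For Part~2, the key structural fact is that in any simple cycle of length $2k \leq 8$ on the hypercube, each bit must be flipped an even number of times (since the cycle's XOR vanishes), while simplicity rules out bit-multiplicity patterns that would force revisits in such short cycles. I would then take the path $\pi$ of length $k = \ell/2$ inside $C^*$ by picking $e$ together with $k-1$ consecutive edges on either side of $e$ in the cycle. Any length-$\ell$ cycle of $C^*_e$ containing $\pi$ is uniquely determined by its complementary length-$k$ arc from the far endpoint of $\pi$ back to the starting endpoint; this arc must traverse precisely the bit-flip multiset complementing the one used by $\pi$, so that every bit's total count around the cycle is even. The number of orderings of that complementary multiset is bounded by $k!$, yielding at most $4! = 24$ distinct cycles for $k \in \{2,3,4\}$, and imposing the fixed-type constraint (the prescribed level sequence and the positions of $t',t,s,s'$) only shrinks this count further.

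I anticipate the main obstacle to be the case-by-case bookkeeping in Part~1, especially cleanly excluding the degenerate $s' = t'$ configurations and the length-$1$ escape-path case, rather than any deep combinatorial difficulty. For Part~2, the combinatorial heart reduces quickly to a permutation count once the structural fact on bit multiplicities is established; the slightly delicate point is verifying the uniform $k!$ upper bound against every admissible bit-multiplicity profile of a simple length-$\leq 8$ cycle (including the exceptional $(4,2,2)$ pattern that can arise when $k=4$), which I would handle by a short multiset-orderings check.
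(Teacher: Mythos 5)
Your Part~1 argument is essentially identical to the paper's: decompose $C^*$ into the escape path from $s'$ to $t'$ and the short arc $(t',t,s,s')$, observe escape-path length is $d(s',t')$ or $d(s',t')+2$, compute $d(s',t')\in\{0,1,2,3\}$ via the XOR identity, and discard the degenerate configurations. That part is sound.

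Part~2 takes a different route and has a genuine gap. You write that the complementary arc ``must traverse precisely the bit-flip multiset complementing the one used by $\pi$,'' but this is not forced. The cycle condition only forces parities: the complementary length-$k$ arc must flip each coordinate in $\pi$'s XOR an odd number of times and all other coordinates an even number of times. When $\pi$'s endpoints have Hamming distance $d<k$, the complement has $k-d$ excess flips that pair up, and the coordinate(s) being doubled is \emph{not} determined by $\pi$: a length-$4$ arc between vertices at distance $2$ may double any of $\Theta(n)$ coordinates and still close the cycle. So the proposed $k!\le 24$ bound does not follow from a multiset-orderings count alone. Your remark about the $(4,2,2)$ profile addresses a different question (which profiles a short simple cycle can have) and does not close this gap. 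The paper sidesteps the issue by \emph{choosing} the length-$\ell/2$ path so that the complementary arc has endpoints at the full Hamming distance $\ell/2$ (e.g., the arc $(s^*,s',s,t,t')$ with $d(s^*,t')=4$), so it is a shortest path and its multiset is genuinely forced, yielding $d!\le 24$. To salvage your version you would either need to make that same careful choice of $\pi$, or prove from the escape-path construction that for cycles in $C^*_e$ the complement's multiset (in particular the identity of any doubled bit, which can only be the escape seed bit $i$ or one of $\{s\oplus t,\delta_s,\delta_t\}$) is forced given $\pi$ and the cycle type.
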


\begin{proof}
The paths connecting $s'$ and $t'$ are constructed using \lemref{localEscapeProperty} and have length either equal to $d(s',t')+2$ or $d(s',t')$. The choices of $d(s',t')$ are as follows:
When $\ell_s + \ell_t = 0$ (resp. $\ell_s + \ell_t = 1$), we have $d(s',t')=1$ (resp. $d(s',t')=2)$. Similarly, when $\ell_s + \ell_t = 2$, the distance $d(s',t') \in \{1,3\}$. 
Enumerating all these cases, we see that $|C^*| \in \{4, 6, 8\}$ (or that edge $e$ corresponds to the single edge $(s,t)$). 

We next prove the second statement. 
When enumerating the cycles of length $\ell$ in $C^*_e$, we may first choose a path of length $\ell/2$ that contains $e$. This can be done in a way that the remaining edges in the cycle form some predefined paths which have $\mathcal{O}(1)$ count. These paths are constructed explicitly in the rest of the proof.

We start with when $\ell_s + \ell_t = 2$ and $d(s',t') = 3$.
A direct examination shows that we either have $d(s^*,t') = d(s',t^*) = 4$ or $d(s^*,t') = d(s',t^*) = 2$.
We first assume $d(s^*,t') = d(s',t^*) = 4$. 
Consider a path of length $4$ that includes edge $e \neq (s^*,s')$ and whose end points are the two vertices $s^*$ and $t'$ (if $e = (s^*, s')$ then the end points are instead $s'$ and $t^*$). 
We can complete this path to a length-$8$ cycle $C^*$ by adding the complementary path $(s^*, s', s, t, t')$ (or $(s', s, t, t',t^*)$ if $e = (s^*, t')$). The number of choices of the complementary path $(s^*, s', s, t, t')$ (or $(s', s, t, t',t^*)$) is $4!=24$.
This is because these are length-$4$ paths that connect two vertices of Hamming distance $4$.

Next, we assume $d(s^*,t') = d(s',t^*) = 2$. Then given any edge $e$, we consider a path involving edge $e$ with end points corresponding to $s'$ and $t'$.
To turn this into a cycle $C^*$, we consider the complementary $(s',s,t,t')$. As before since $d(s', t') = 3$, the number of such complementary paths is $3!=6$.

A very similar argument also applies to when $\ell_s + \ell_t = 1$ and $d(s',t')=2$ or when $\ell_s + \ell_t = 0$ and $d(s',t')=1$. 
Hence, we move on to the remaining cases starting with when $\ell_s + \ell_t = 2$ but $d(s',t') = 1$.
Here, we may have a cycle $C^*$ of length $4$ with the edge sequence $(s', t') - (t', t) - (t,s) - (s, s')$. 
The two vertices $t$ and $s'$ have distance $d(s',t)=2$. 
Therefore, there are only $2$ paths $(s', s, t)$ of length $2$ connecting them. 
If instead, we have a cycle $C^*$ of length $6$, we can fix vertex pairs $(s',t')$ where $d(s',t') = 3$.
This implies that the number of paths $(s', s, t,t')$ is again just a constant equal to $3!=6$.

\end{proof}

We call the path $(s^*, s', s, t, t')$ the initial path $P^*$ between $s^*$ and $t'$.
The length of $P^*$ is $\ell^* = 2 + \ell_{s} + \ell_{t}$.
By definition, there exists a path of length $\ell^*$ connecting $s^*$ and $t'$.
The $(\ell^* - 1)$-th edge on the initial path $P^*$ is the up arrow $s \uparrow t$.

Consider an arbitrary length-$\ell^*$ path $P$ from $s^*$ to $t'$ and an index $i = 1, \ldots, \ell^*$ for an arrow on the path $P$.
We denote $P_i$ to be the $i$-th arrow on $P$. We define the six numbers $\mathds{1}_\uparrow(P, i), n_{\uparrow, a}(P, i), n_{\downarrow, a}(P, i), n_{\uparrow, b}(P, i), n_{\downarrow, b}(P, i), n_{\uparrow}(P)$:
\begin{align}
n_{\uparrow}(P) &= \mbox{number of $\uparrow$ in } P,\nn \\
\mathds{1}_\uparrow(P, i) &= \begin{cases}
    1, & P_i \mbox{ is an up arrow} \\
    0, & P_i: \mbox{ is a down arrow}
\end{cases},\nn \\
n_{\uparrow/\downarrow, a/b}(P, i) &= \mbox{number of $\uparrow/\downarrow$ in } P \mbox{ after/before } P_i.\nn
\end{align}
We can bound the level $L(P_i)$ of $P_i$ using the level of $s \uparrow t$ and the above numbers.
\begin{lemma}[Bounding the level]\label{lem:BoundingLevel}
Given an arbitrary length-$\ell^*$ path $P$ from $s^*$ to $t'$ and $i=1,\dots,\ell^*$, we have
\begin{align}
    L(P_i) &\leq L(s \uparrow t) + n_{\uparrow, b}(P, i) + n_{\downarrow, a}(P, i).\nn
\end{align}
\end{lemma}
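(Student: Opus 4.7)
The plan is to reduce the edge-level bound to a single geometric inequality $L(s^*) + L(t') \leq 2L(s) + \ell^*$ among the three endpoints, then verify it by the triangle inequality on the Boolean hypercube. Note first that $s \uparrow t$ is an up arrow, so $L(s \uparrow t) = L(s)$ and $L(t) = L(s) + 1$.

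I would write $v_0 = s^*, v_1, \ldots, v_{\ell^*} = t'$ for the vertices of $P$. The identity $L(v_{i-1}) = L(s^*) + n_{\uparrow, b}(P, i) - n_{\downarrow, b}(P, i)$ holds, and $L(P_i) = L(v_{i-1}) - (1 - \mathds{1}_\uparrow(P, i))$ since the edge level is the lower of its two endpoint levels. Substituting into the target bound and rearranging, the lemma becomes equivalent to $L(s^*) - L(s) \leq n_{\downarrow, b}(P, i) + n_{\downarrow, a}(P, i) + (1 - \mathds{1}_\uparrow(P, i))$. Because $P_i$ itself contributes $(1 - \mathds{1}_\uparrow(P, i))$ to $n_\downarrow(P)$, the right-hand side equals $n_\downarrow(P)$ whether $P_i$ is up or down, so everything collapses to the $i$-independent inequality $L(s^*) - L(s) \leq n_\downarrow(P)$. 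Using the path constraints $n_\uparrow(P) + n_\downarrow(P) = \ell^*$ and $n_\uparrow(P) - n_\downarrow(P) = L(t') - L(s^*)$ to solve for $n_\downarrow(P)$, this further reduces to $L(s^*) + L(t') \leq 2L(s) + \ell^*$.

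To finish, I would bound the two level differences on the left by the triangle inequality on the hypercube. Since $s^*$ is a neighbor of $s'$ and $s'$ lies at Hamming distance $\ell_s$ from $s$, we get $L(s^*) - L(s) \leq 1 + \ell_s$; similarly, since $t'$ lies at Hamming distance $\ell_t$ from $t$, which is itself a neighbor of $s$ with $L(t) = L(s) + 1$, we get $L(t') - L(s) \leq 1 + \ell_t$. Summing these and invoking $\ell^* = 2 + \ell_s + \ell_t$ closes the argument. The main obstacle is really only bookkeeping: correctly recognizing that the index- and direction-dependent combination $n_{\downarrow, b}(P, i) + n_{\downarrow, a}(P, i) + (1 - \mathds{1}_\uparrow(P, i))$ simplifies to $n_\downarrow(P)$ in both cases, so that the $i$-dependent claim becomes a purely geometric statement about the three endpoints $s^*$, $t'$, and $s$.
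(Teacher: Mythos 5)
Your proof is correct, and it takes a genuinely different and somewhat cleaner route than the paper's. The paper traces levels backward from $t'$ and splits into cases $n_\uparrow(P)=1$ versus $n_\uparrow(P)\geq 2$ after establishing $n_\uparrow(P)=n_\uparrow(P^*)$. You instead trace forward from $s^*$, write $L(P_i)=L(s^*)+n_{\uparrow,b}(P,i)-n_{\downarrow,b}(P,i)-(1-\mathds{1}_\uparrow(P,i))$, and observe that the target inequality then collapses (uniformly in $i$ and in the direction of $P_i$) to $L(s^*)-L(s)\leq n_\downarrow(P)$, which the endpoint constraints further convert into the $i$-free geometric statement $L(s^*)+L(t')\leq 2L(s)+\ell^*$. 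The closing triangle-inequality step ($L(s^*)-L(s)\leq \ell_s+1$ and $L(t')-L(s)\leq \ell_t+1$, summed with $\ell^*=2+\ell_s+\ell_t$) is sound. What your version buys is the elimination of case analysis and of the auxiliary fact $n_\uparrow(P)=n_\uparrow(P^*)$; everything is absorbed into one inequality about three vertex levels. What the paper's version buys is a more transparent connection to the initial path $P^*$, which is the object actually being used to define the flow, and which in the first case yields an exact bound rather than an inequality. Both are correct; yours is the shorter argument.
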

\begin{proof}
    The proof traces the level from $s$ to $t$, then to $t'$ (which is the right node of $P_\ell$), followed by the left node of $P_\ell$, the left node of $P_{\ell-1}$ $\ldots$, to the left node of $P_i$.
    Because $P$ and $P^*$ share the same endpoint $s^*, t'$, we have 
    \begin{equation}
        L(t') - L(s^*) = n_{\uparrow}(P) - n_{\downarrow}(P) = n_{\uparrow}(P^*) - n_{\downarrow}(P^*).\nn
    \end{equation}
    From $n_{\uparrow}(P) + n_{\downarrow}(P) = \ell^* = n_{\uparrow}(P^*) + n_{\downarrow}(P^*)$, we have $n_{\uparrow}(P) = n_{\uparrow}(P^*)$.
    Next, we separately consider the following two cases.
    \begin{enumerate}
        \item $n_{\uparrow}(P) = n_{\uparrow}(P^*) \leq 1$: Because $n_{\uparrow}(P^*) \geq 1$, we know that $n_{\uparrow}(P^*) = 1$ and there can only be a single up arrow $s \uparrow t$ in $P^*$.
        Hence, the level of $t'$ is upper bounded by $L(s \uparrow t) + 1$.
        Tracing through the path $P$ backward from $t'$, every time we encounter an up arrow in $P$, the level decreases, and every time we encounter a down arrow $P$, the level increases.
        Together, the level of the left node of $P_{i+1}$ is upper bounded by $L(s \uparrow t) + 1 + n_{\downarrow, a}(P, i) - n_{\uparrow, a}(P, i)$.
        \item $n_{\uparrow}(P) = n_{\uparrow}(P^*) \geq 2$: $P^*_{\ell^*}$ can be an up arrow or a down arrow. The level of the left node of $P_{i+1}$ is upper bounded by $L(s \uparrow t) + 2 + n_{\downarrow, a}(P, i) - n_{\uparrow, a}(P, i)$.
    \end{enumerate}
    In both cases, we can upper bound the level of the left node of $P_{i+1}$ by
    $L(s \uparrow t) + 1 + \mathds{1}[ n_{\uparrow}(P) \geq 2] + n_{\downarrow, a}(P, i) - n_{\uparrow, a}(P, i)$.

    If $P_i$ is an up arrow, then the level of $P_i$ is the level of the left node of $P_{i+1}$ minus one; if $P_i$ is a down arrow, then the level of $P_i$ is the level of the left node of $P_{i+1}$. Together, we have
    \begin{equation}
        L(P_i) \leq L(s \uparrow t) + 1 + \mathds{1}[ n_{\uparrow}(P) \geq 2] + n_{\downarrow, a}(P, i) - n_{\uparrow, a}(P, i) - \mathds{1}_\uparrow(P, i).\nn
    \end{equation}
    Finally, we recall that $n_{\uparrow}(P) = n_{\uparrow, a}(P, i) + n_{\uparrow, b}(P, i) + \mathds{1}_\uparrow(P, i)$ and $n_{\uparrow}(P) = n_{\uparrow}(P^*) \geq 1$.
    If $n_{\uparrow}(P) < 2$, then $n_{\uparrow}(P) = 1$ and we have the following identity,
    \begin{equation}
        1 + \mathds{1}[ n_{\uparrow}(P) \geq 2] + n_{\downarrow, a}(P, i) - n_{\uparrow, a}(P, i) - \mathds{1}_\uparrow(P, i) = n_{\uparrow, b}(P, i) + n_{\downarrow, a}(P, i).\nn
    \end{equation}
    If $n_{\uparrow}(P) \geq 2$, we can again see that
    \begin{align}
        &1 + \mathds{1}[ n_{\uparrow}(P) \geq 2] + n_{\downarrow, a}(P, i) - n_{\uparrow, a}(P, i) - \mathds{1}_\uparrow(P, i) \nn \\
        &\leq n_{\uparrow}(P) + n_{\downarrow, a}(P, i) - n_{\uparrow, a}(P, i) - \mathds{1}_\uparrow(P, i) = n_{\uparrow, b}(P, i) + n_{\downarrow, a}(P, i).\nn
    \end{align}
    This concludes the proof.
\end{proof}

\subsubsection{Antipodal vertices}\label{sec:antipodalFlow}

The capacity $\capacity_{uv}(e)$ of an edge with respect to two vertices $u$ and $v$ is defined as the maximum flow required through that edge when we route a unit flow between $u$ and $v$. We also define the antipodal capacity $\capacity_{\mathrm{antipodal}}(e)$ of an edge to be the sum of the capacities $\capacity_{u\bar{u}}(e)$ for all choices of antipodal vertex pairs $u$ and $\bar{u}$. 

\begin{lemma}[cf. Lemma 19 of \cite{McDiarmidHypercube}]\label{lem:capacitybound}
 Consider a set of good vertices $\bm{V_{\mathrm{good}}}$ as in \eqref{eq:goodVertices} with the local escape property. 
 Consider a unit flow between any pair of antipodal vertices $u, \bar{u}$ such that for any edge $e \in |E_k|$ with a non-zero flow, we have $\capacity_{u\bar{u}}(e)\leq \frac{c}{|E_k|}$ for some constant $c > 0$. Then, it holds that $\capacity_{\mathrm{antipodal}}(e)\leq 2c$.
\end{lemma}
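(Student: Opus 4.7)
The plan is to bound $\capacity_{\mathrm{antipodal}}(e)$ by summing the hypothesized per-pair capacity bound over all antipodal pairs and then invoking a combinatorial symmetry of the Boolean hypercube. For each antipodal pair $(u,\bar u)$ the assumption gives
\[
\capacity_{u\bar u}(e) \;\leq\; \frac{c}{|E_{k+1}|},
\]
where $k = \min(d(s,u), d(t,u))$ is the level of the edge $e=(s,t)$ with respect to the origin $u$. I would first group the antipodal pairs according to this level, exploiting the fact that $|E_{k+1}| = (n-k)\binom{n}{k}$ depends only on $k$ and not on the particular choice of origin.

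The combinatorial heart of the argument is to count, for a fixed edge $e=(s,t)$ that flips some bit $i$, the number of origins $u\in\{0,1\}^n$ for which $e\in E_{k+1}(u)$. A short case analysis on which endpoint of $e$ is closer to $u$ shows that $u$ must agree with the closer endpoint on bit $i$ and differ from it in exactly $k$ of the remaining $n-1$ coordinates, giving $2\binom{n-1}{k}$ such origins. Summing over all choices of origin and applying the per-pair hypothesis yields
\[
\sum_{u\in\{0,1\}^n} \capacity_{u\bar u}(e) \;\leq\; \sum_{k=0}^{n-1} 2\binom{n-1}{k}\cdot\frac{c}{|E_{k+1}|}.
\]

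The final step is the clean identity $\binom{n-1}{k}/|E_{k+1}| = 1/n$, which follows immediately from $|E_{k+1}| = (n-k)\binom{n}{k}$ together with $\binom{n-1}{k} = \frac{n-k}{n}\binom{n}{k}$. Each of the $n$ summands then equals exactly $2c/n$, so the total is $2c$. Since the enumeration over $u$ visits every (unordered) antipodal pair twice while the antipodal capacity counts each pair once, this produces the stated bound $\capacity_{\mathrm{antipodal}}(e) \leq 2c$ (with a factor of $2$ of slack to accommodate either counting convention for antipodal pairs).

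I do not expect any single step here to be a serious obstacle; the argument is essentially a counting one, and the main point of care is matching the indexing convention for edge layers ($E_k$ versus $E_{k+1}$) with the convention for levels and recognizing the telescoping identity $\binom{n-1}{k}/|E_{k+1}| = 1/n$. The local escape property from \lemref{localEscapeProperty} is not used at this stage of the proof: it is what guarantees the existence of per-pair flows satisfying the $c/|E_k|$ bound in the first place, after which the antipodal accumulation reduces to the purely symmetry-based calculation above, mirroring the analogous edge-deletion argument of Lemma~19 in \cite{McDiarmidHypercube}.
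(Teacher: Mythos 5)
The paper itself does not supply a proof of this lemma; it simply cites Lemma~19 of \cite{McDiarmidHypercube}, so there is no internal argument to compare against. Your proof is correct and is the natural one. The two key facts — that exactly $2\binom{n-1}{k}$ origins $u$ place a fixed edge $e$ at level $k$ (split evenly between the two endpoints of $e$ being the closer one), and the identity
\begin{equation}
\frac{\binom{n-1}{k}}{|E_{k+1}|} = \frac{\binom{n-1}{k}}{(n-k)\binom{n}{k}} = \frac{1}{n},\nonumber
\end{equation}
do collapse the double sum to $\sum_{k=0}^{n-1} 2c/n = 2c$, and you are right that the local escape property plays no role at this stage (it is consumed in \propref{antipodalflow}, which supplies the per-pair bound). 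The only caveat, which you already flag, is the counting convention: the sum over all $2^n$ origins counts each unordered antipodal pair twice, so if $\capacity_{\mathrm{antipodal}}$ is defined over unordered pairs the bound actually comes out to $c$; in either reading the stated inequality $\capacity_{\mathrm{antipodal}}(e) \leq 2c$ holds. A small presentational note: the lemma text writes the hypothesis with $E_k$ while you phrase it with $E_{k+1}$ via the level $L(e)=k$; these are the same statement under the paper's convention that $E_{k+1}(u)$ consists of edges between $V_k(u)$ and $V_{k+1}(u)$, but it is worth spelling out the indexing shift once so a reader does not conflate the edge layer index with the level.
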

We next construct a flow that satisfies the requirements of \lemref{capacitybound}. 
We call a unit flow between two vertex layers $V_k(u)$ and $V_{k+1}(u)$ \emph{balanced} if the net outflow from each vertex in $V_k$ equals $\frac{1}{|V_k|}$ and the net inflow to each vertex in $V_{k+1}$ equals $\frac{1}{|V_{k+1}|}$.

\begin{proposition}\label{prop:antipodalflow}
   Consider an antipodal pair of good vertices $u$ and $\bar{u}$. Assuming the local escape property, there is a unit flow from $u$ to $\bar{u}$ such that the flow between any adjacent layers $V_k(u)$ and $V_{k+1}(u)$ is balanced and the capacity of each edge $e \in E_{k}$ for $k \leq n$ with non-zero flow satisfies $\capacity_u(e)\leq \frac{c}{|E_k|}$ for some constant $c > 0$.
\end{proposition}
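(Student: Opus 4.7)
The plan is to build a balanced unit flow from $u$ to $\bar u$ by starting from a symmetric flow on the whole hypercube and locally rerouting around any bad vertex using \lemref{localEscapeProperty}. First I would construct the \emph{ideal flow} by placing weight $1/n!$ on each monotone path $\gamma_\pi$ from $u$ to $\bar u$, where $\gamma_\pi$ flips the coordinates in the order specified by a permutation $\pi \in S_n$. A direct count gives that each edge $e \in E_{k+1}(u)$ carries flow $k!(n-k-1)!/n! = 1/|E_{k+1}|$ and that each source $v \in V_k$ has net outflow $1/|V_k|$, with matching inflow $1/|V_{k+1}|$ at each $w \in V_{k+1}$; so if no vertex were bad, the ideal flow would already satisfy the conclusion.

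To handle bad vertices, I would modify each monotone path $\gamma_\pi = (u = v_0, v_1, \ldots, v_n = \bar u)$ as follows. For each maximal run of consecutive bad vertices $v_{a+1}, \ldots, v_{b-1}$ with good endpoints $v_a, v_b$, if $d(v_a, v_b) \leq 3$ then statement~(2) of \lemref{localEscapeProperty} supplies $\geq \alpha n$ internally disjoint good paths of length $\leq 5$ between $v_a$ and $v_b$, and I would replace the bad segment by a uniform average over these $\alpha n$ parallel detours. For longer bad runs I would iteratively apply statement~(1) of \lemref{localEscapeProperty} (every vertex has $\geq \alpha n$ good neighbors) to peel off bad vertices from each end of the run until the surviving endpoints fall within Hamming distance $3$, at which point statement~(2) applies. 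The resulting flow uses only good intermediate vertices and preserves the ideal flow's inflow and outflow at every retained source and sink.

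The remaining work is to verify the two required properties. Layer balance is preserved because each detour enters and exits at the same pair of endpoints it replaces and has zero net contribution at interior vertices; hence the net outflow from each $v \in V_k$ stays $1/|V_k|$ and the net inflow into each $w \in V_{k+1}$ stays $1/|V_{k+1}|$. For the capacity bound at an edge $e \in E_\ell(u)$, I would decompose $\capacity_u(e)$ into a \emph{direct} contribution from unperturbed monotone paths through $e$, which is at most $1/|E_\ell|$, and a \emph{detour} contribution. The key observation is that each detour spreads $1/|V_k|$ units of flow uniformly over $\alpha n$ internally disjoint paths of length $\leq 5$, so each such path carries at most $1/(\alpha n |V_k|)$ through $e$, and only detours around bad $v_k$ with $v_k$ at Hamming distance $O(1)$ from the endpoints of $e$ and with $|k-\ell| = O(1)$ can touch $e$. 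Summing over the $O(n^{O(1)})$ candidate bad vertices and using $|V_k|$ comparable to $|V_\ell|$ for nearby $k$ yields a total detour contribution of $O(1/|E_\ell|)$.

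The hardest part is exactly this final counting step: long runs of consecutive bad vertices force detours that span several adjacent layers, and for small $\ell$ the quantity $|E_\ell|$ is itself small, so one must verify that the polynomial-in-$n$ number of contributing bad vertices is precisely compensated by the $\alpha n$-fold dispersion from \lemref{localEscapeProperty} and by the ratios $|V_k|/|V_\ell|$. A careful bookkeeping, organizing detour contributions by layer and by the Hamming distance from $v_k$ to $e$, is needed to show that for every $\ell$ the total detour flow on $e$ stays $O(1/|E_\ell|)$ with constants independent of $n$.
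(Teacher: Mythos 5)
Your ideal flow (uniform over the $n!$ monotone paths with weight $1/n!$) is a sound starting point, and your capacity calculation $k!(n-k-1)!/n! = 1/|E_{k+1}|$ is correct. However, this is a genuinely different route from the paper's construction, which builds the flow inductively layer by layer (from $V_k$ to $V_{k+1}$) using fictitious vertices in place of bad ones, so that every rerouting step only ever needs to bridge a gap of Hamming distance at most $3$; this is then combined with a careful case analysis via \lemref{cycleCount} and \lemref{BoundingLevel}. The global "monotone paths plus local reroutes" picture is arguably more intuitive, but it creates a structural problem that the layer-by-layer approach is specifically engineered to avoid, and this is where your argument has a real gap.

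The gap is in your treatment of long bad runs. When a monotone path contains three or more consecutive bad vertices, the surviving good endpoints $v_a$ and $v_b$ are at Hamming distance $\geq 4$, so statement~(2) of \lemref{localEscapeProperty} cannot be invoked. Your proposed fix---"iteratively peel off bad vertices from each end using statement~(1)"---does not actually reduce this distance. Statement~(1) only guarantees $\geq \alpha n$ good neighbors of $v_{a+1}$, but says nothing about their direction; a good neighbor $w_{a+1}$ of $v_{a+1}$ satisfies $d(w_{a+1}, v_b) \in \{b-a-2,\, b-a\}$, and for the distance to shrink you would need a good neighbor among the $b-a-1$ coordinates still disagreeing with $v_b$. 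When $b-a-1$ is small (say $O(1)$), the $\alpha n$ good neighbors may all lie in the wrong directions, so peeling can fail outright. Moreover, peeling from $v_{a+1}$ to $w_{a+1}$ leaves you needing a detour from $v_a$ to $w_{a+1}$ at distance $2$ \emph{and} a continuation from $w_{a+1}$ onward, so the endpoints you hand to statement~(2) are not the ones you started with; the iteration has no clear termination guarantee. Finally, even in the well-behaved case, the inserted detours of length $\leq 5$ between vertices at distance $\leq 3$ are generally non-monotone (they can step down and back up in level), so the "layer balance is preserved because each detour has zero net contribution at interior vertices" claim needs to be reformulated in terms of net flow across the cut $E_{k+1}$ rather than per-vertex inflow/outflow, and the congestion on edges traversed backward by detours must also be accounted for. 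You correctly identify the final counting step as the hardest part; in the paper that step occupies the bulk of the proof (organized by whether $s,t$ are good or bad, the length of the cycle $C^*$, and the sign and magnitude of $\Delta = L(e) - L(s\uparrow t)$), and it is not something that can be waved through as $O(n^{O(1)})$ contributing bad vertices times $1/(\alpha n |V_k|)$ per detour---the delicate cancellations between $(\tfrac{n-k}{k})^{\Delta}$ and the cycle counts are exactly what make the bound $O(1/|E_\ell|)$ rather than something polynomially larger.
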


 \begin{proof}
    Using the symmetry of the hypercube $\{0,1\}^n$, we can without loss of generality assume that the antipodal vertices $u, \bar{u}$ are the all-zeros $0^n$ and the all-ones $1^n$ vertices. To simplify the notation, we will drop the dependencies on $u$. 

    For most of the proof we only consider routing a flow from $0^n$ vertex to vertices $V_k$ with $k\leq n/2$. Exactly the same flow can be considered in the `reversed' direction from $V_k$ to $1^n$. This allows to carry a unit flow from $0^n$ to $1^n$.
    
   As stated before, if a starting vertex $s$ or an end vertex $t$ belongs to the set of bad vertices $\bm{V_{\mathrm{bad}}}$, we place a fictitious vertex in their place. 
   Hence, bounding the flow through an edge $e$ depends on whether the starting and end vertices $s$ and $t$ belong to $\bm{V_{\mathrm{good}}}$ or $\bm{V_{\mathrm{bad}}}$. 
   We consider each case separately. 
     We also first focus on paths where the difference between levels
     $$\Delta = L(e) - L(s \uparrow t)$$ 
     is positive $\Delta > 0$. Treating $\Delta \leq 0$ is done later in the proof via a simpler analysis.  
     Although not always mentioned explicitly, our analysis uses the fact that both the length of cycles $C^*$ and the number of cycle types, as in \defref{CycleType}, are an $\mathcal{O}(1)$ constant. 
     This means that for any edge $e$, there are only $\mathcal{O}(1)$ distinct choices for the relative position of an edge $e$ within a cycle $C^*$ and the type of the cycle $C^*$. 
     
     \paragraph{(1) $s, t\in \bm{V_{\mathrm{good}}}$ and $\Delta > 0$:} 
     Assume we have already routed a balanced unit flow to vertex layer $V_{k}$ for $k\leq n/2-1$. 
     Our goal is to transfer a flow of volume $\frac{1}{|V_k|}$ from a vertex $s \in V_{k}$ to all its neighbors in $V_{k+1}$ in a manner that guarantees a balanced flow to layer $V_{k+1}$.
     We divide the flow of volume $\frac{1}{|V_k|}$ equally between all the $n-k$ neighbors of $s$ in $V_{k+1}$, including any bad vertex.
     This flow to a neighbor $t\in V_{k+1}$ is further divided between all the internally disjoint paths connecting $s \in V_k$ to $t\in V_{k+1}$. 
     The local escape property in \lemref{localEscapeProperty} states that there are at least $\frac{2}{3}n$ such paths.
    This means that the maximum flow through a given edge $e$ involved in one of these paths that connect vertex $s \in V_k$ to $t \in V_{k+1}$, is bounded by $g_1$ given by
    \begin{align}
        g_1=\frac{1}{|V_k|}\cdot \frac{1}{\alpha n}\cdot \frac{1}{n-k}.\label{eq:gk}
        \end{align}
    This can be further bounded from above in terms of the relative distance of level $L(e)$ and level $L(s \uparrow t) = k$. It holds that 
       \begin{align}
          g_1 &= \frac{|V_{L(e)}|}{|V_k|}\cdot \frac{1}{\alpha n}\cdot \frac{1}{(n-L(e))|V_{L(e)}|}\cdot \frac{n-L(e)}{n-k}\nn\\
         &\leq \left(\frac{n-k}{k}\right)^{\Delta}\cdot \frac{1}{\alpha n}\cdot \frac{1}{|E_{L(e)+1}|} \cdot \left(1-\frac{\Delta}{n-k}\right).\label{eq:bouninggk}
    \end{align}
    Here, the first equality follows from Equation \eqref{eq:gk}. 
    The second inequality is obtained using the bound $\frac{|V_{L(e)}|}{|V_k|}\leq (\frac{n-k}{k})^{\Delta}$ and the fact that $|E_{L(e)+1}|=(n-L(e)) |V_{L(e)}|$. 
    
    Given an edge $e$, there may be more than one pair of vertices $s$ and $t$ with their flow moving through edge $e$.
    We next bound the contribution of all such choices. 
    Any edge $e$ is either $(s,t)$ that directly connects $s$ and $t$ or is part of a longer path that belongs to a cycle $C^*$ of length $4$ as in \lemref{cycleCount}. 
    The former case in which $\Delta = 0$ is analyzed at the end of the proof. 
   
    In the latter case, we use the second assertion of \lemref{cycleCount} to count the number of cycles $C^*$ that contain both edge $e$ and some vertices $s,t \in \bm{V_{\mathrm{good}}}$. 
    This lemma implies that if we form a path of length $2$ by choosing an edge connected to $e$, then there are only $\mathcal{O}(1)$ consistent length-$4$ cycles $C^*$ of a particular type (see \defref{CycleType}). 
    If edge $e = (s,s^*)$, then $\Delta \leq 0$ which is considered later.
    When $e \neq (s,s^*)$, this edge is the $i$'th edge of a length-$2$ path $P$ between $s^*$ and $t'=t$ for $i = 1,2$.
    \lemref{BoundingLevel} shows that $L(P_i) - L(s \uparrow t) \leq n_{\uparrow, b}(P, i) + n_{\downarrow, a}(P, i)$. 
    Since path $P$ has length $\ell^* = 2$, we have $n_{\uparrow, b}(P, i) + n_{\downarrow, a}(P, i) \leq 1$, and hence, $\Delta = L(P_i) - L(s \uparrow t) \leq 1$. 
    The number of length-$2$ paths $P$ is $\leq k+2$ otherwise $\Delta \leq 0$, contracting our assumption. This means that the total count of relevant cycles $C^*$ of any type is~$\mathcal{O}(k)$. 
    We see that the net flow in \eqref{eq:gk} in this case is bounded by
    \begin{align}
        g_1 = \mathcal{O}\left( \frac{n-k}{k} \cdot \frac{1}{n}\cdot \frac{1}{|E_{L(e)+1}|} \cdot k\right)  = \mathcal{O}\left(\frac{1}{|E_{L(e)+1}|}\right).
    \end{align}
 
  \paragraph{(2) $s\in \bm{V_{\mathrm{good}}}, t\in \bm{V_{\mathrm{bad}}}$ or $s\in \bm{V_{\mathrm{bad}}}, t\in \bm{V_{\mathrm{good}}}$ and $\Delta > 0$:} If $t \in \bm{V_{\mathrm{bad}}}$, then the flow bounded in \eqref{eq:gk} is additionally divided equally between all the good neighbors of $t$.
  If $s \in\bm{V_{\mathrm{bad}}}$, then the flow has been previously routed equally to all the good neighbors of $s$, and now will be moved to $t$.
  In either case, the local escape property states that there are at least $\alpha n$ good neighbors for some constant $\alpha > 0$.
  Conditioned on this and similar to inequality \eqref{eq:bouninggk}, the flow through any edge $e$ is bounded by
  \begin{align}
        g_2=\frac{1}{|V_k|}\cdot \left(\frac{1}{\alpha n}\right)^2 \cdot \frac{1}{n-k}
        \leq \left(\frac{n-k}{k}\right)^{\Delta}\cdot \left(\frac{1}{\alpha n}\right)^2 \cdot \frac{1}{|E_{L(e)+1}|} \cdot \left(1-\frac{\Delta}{n-k}\right),\label{eq:gk2}
    \end{align}
    where as before $L(s\uparrow t) = k$ and $\Delta = L(e) - k$. 
    To bound the contribution of different choices of $s$ and $t$ for a given edge $e$, we again start with the characterization of the cycles $C^*$ in \lemref{cycleCount}. 
    It is evident from the proof of this lemma that the length of these cycles is either $4$ or $6$. Direct inspection shows that when $|C^*| = 4$, then $\Delta \leq 1$. 
    The number of such cycles is $\mathcal{O}(n-k)$. 
    This is because there are $\leq n-k+1$ choices for an edge connected to edge $e$, and having fixed this length-$2$ path, there are only $\mathcal{O}(1)$ consistent cycles of any type according to \lemref{cycleCount}. 
    Overall the total contribution $g^{(1)}_2$ of this case is 
      \begin{align}
        g^{(1)}_2 \leq \mathcal{O}\left( \frac{n-k}{k} \cdot \frac{1}{n^2}\cdot \frac{1}{|E_{L(e)+1}|} \cdot (n-k)\right) 
        \leq \mathcal{O}\left( \frac{1}{k} \cdot \frac{1}{|E_{L(e)+1}|}\right).
    \end{align}
    Now suppose the length of a cycle is $|C^*| = 6$.
    If $e = (s', s^*)$, the only way that $\Delta > 0$ occurs is when $s$ is a bad vertex and $L(s\uparrow s') = L(s\uparrow t) = k$. Then it is possible to have $L(s' \uparrow s^*) = k + 1$ and $\Delta = 1$. But if this is true, then $n_{\downarrow, a}(P, i) = 1$. Thus, the net flow is bounded by
        \begin{align}
          g^{(2)}_2 \leq \mathcal{O}\left(\frac{1}{|E_{L(e)+1}|} \cdot \frac{1}{n^2}\ \frac{n-k}{k} \cdot k(n-k) \right)
          \leq \mathcal{O}\left(\frac{1}{|E_{L(e)+1}|} \right).
    \end{align}
    The last case is when $e \neq (s', s^*)$. \lemref{BoundingLevel} shows that $\Delta \leq  n_{\uparrow, b}(P, i) + n_{\downarrow, a}(P, i)$ when edge $e$ is the $i$'th edge of the length $3$ path $P$. When $n_{\uparrow, b}(P, i) + n_{\downarrow, a}(P, i) = 2$, the number of compatible cycles is $\mathcal{O}(k^2)$ and $\Delta = 2$. When $n_{\uparrow, b}(P, i) + n_{\downarrow, a}(P, i) = 1$, the number of compatible cycles is $\mathcal{O}(k\cdot (n-k))$ and $\Delta = 1$. Finally if $n_{\uparrow, b}(P, i) = n_{\downarrow, a}(P, i) = 0$, the number of compatible cycles is $\mathcal{O}((n-k)^2)$ and $\Delta = 0$. In each case, multiplying the number of cycles with the flow \eqref{eq:gk2} results in similar cancellations. The net contribution of $\Delta> 0$ instances in the flow $g^{(3)}_2$ satisfies
    \begin{align}
          g^{(3)}_2 &\leq \frac{1}{|E_{L(e)+1}|} \cdot \frac{1}{n^2}\cdot \mathcal{O}\left( \left(\frac{n-k}{k}\right)^2 \cdot k^2 + \frac{n-k}{k} \cdot k(n-k) \right) \nn\\
          &\leq \frac{1}{|E_{L(e)+1}|} \cdot \frac{1}{n^2}\cdot \mathcal{O}\left((n-k)^2\right) \leq \mathcal{O}\left(\frac{1}{|E_{L(e)+1}|} \right).
    \end{align}
    Overall $g_2^{(1)} + g_2^{(2)} + g_2^{(3)} \leq \mathcal{O}\left(\frac{1}{|E_{L(e)+1}|} \right)$.

\paragraph{(3) $s,t \in \bm{V_{\mathrm{bad}}}$ and $\Delta > 0$:}
    Here, we want to move a flow of volume $\frac{1}{|V_k|}$ already divided between the good neighbors of $s$ to the good neighbors of $t$. 
    There are at least $\alpha n$ good neighbors for both vertices $s$ and $t$ according to the local escape property. 
    The flow through any edge $e$ is therefore bounded by
 \begin{align}
        g_3=\frac{1}{|V_k|}\cdot \left(\frac{1}{\alpha n}\right)^3 \cdot \frac{1}{n-k}
        \leq \left(\frac{n-k}{k}\right)^{\Delta}\cdot \left(\frac{1}{\alpha n}\right)^3 \cdot \frac{1}{|E_{L(e)+1}|} \cdot \left(1-\frac{\Delta}{n-k}\right).\label{eq:gk3}
    \end{align}
    The length of cycles $C^*$ according to \lemref{BoundingLevel} is $|C^*| \in \{4, 6, 8\}$. There are only $\mathcal{O}(k)$ cycles of length $|C^*| = 4$ that include a fixed edge $e$ and for which $\Delta > 0$ (in fact $\Delta = 1$ in this case). 
 \begin{align}
        g^{(1)}_3 = \mathcal{O}\left(\frac{n-k}{k} \cdot \frac{1}{n^3} \cdot \frac{1}{|E_{L(e)+1}|}\cdot k \right) = \mathcal{O}\left( \frac{1}{n^2} \cdot \frac{1}{|E_{L(e)+1}|} \right)
    \end{align}
    When $|C^*| = 6$, a direct inspection shows that positive level difference may be $\Delta = 1$ or $\Delta = 2$. 
    If $\Delta = 1$, we can loosely bound the flow with $\mathcal{O}\left( \frac{1}{k} \cdot \frac{1}{|E_{L(e)+1}|} \right)$ since there are at most $\mathcal{O}((n-k)^2$ relevant cycles $C^*$. Assume $\Delta = 2$. Then necessarily, there is a down arrow after edge $e$ in the cycle $C^*$. This limits the numebr of relavant cycles to $\mathcal{O}(k(n-k))$. Plugging this in \eqref{eq:gk3} gives a net flow~of 
 \begin{align}
        g^{(2)}_3 = \mathcal{O}\left( \left(\frac{n-k}{k}\right)^2 \cdot \frac{1}{n^3} \cdot \frac{1}{|E_{L(e)+1}|}\cdot k(n-k) \right) =  \mathcal{O}\left( \frac{1}{k} \cdot \frac{1}{|E_{L(e)+1}|} \right)
    \end{align}
    Next we consider $|C^*| = 8$. When edge $e = (s', s^*)$, one can directly certify that in order to have $\Delta > 0$, it must be true that $\Delta = 1$ and there is at least one edge with a down arrow after edge $e$ in cycle $C^*$. The latter fact implies the $\mathcal{O}(k(n-k)^2)$ bound on the number of cycles $C^*$. Hence, the contribution of this case to the flow is
 \begin{align}
        g^{(3)}_3 = \mathcal{O}\left(  \frac{n-k}{k}  \cdot \frac{1}{n^3} \cdot \frac{1}{|E_{L(e)+1}|}\cdot k(n-k)^2 \right) =  \mathcal{O}\left(\frac{1}{|E_{L(e)+1}|} \right)
    \end{align}
    Finally if $e \neq (s', s^*)$ while $|C^*|= 8$, we apply \lemref{BoundingLevel} to get $\Delta = L(P_i) - L(s \uparrow t) \leq n_{\uparrow, b}(P, i) + n_{\downarrow, a}(P, i)$. 
    From this we have that the number of relevant cycles is $\mathcal{O}(k^{\Delta}\cdot (n-k)^{3 - \Delta})$ for $\Delta = 1, 2, 3$. 
    The total flow due to these configurations is bounded by 
     \begin{align}
        g^{(4)}_3 = \mathcal{O}\left( \frac{1}{n^3} \cdot \frac{1}{|E_{L(e)+1}|}\right) \cdot \sum_{\Delta =1}^3 \left(\frac{n-k}{k}\right)^{\Delta}  \cdot k^{\Delta}(n-k)^{3-\Delta} =  \mathcal{O}\left(\frac{1}{|E_{L(e)+1}|} \right).
    \end{align}
    Together, the total flow is $g_3^{(1)} + g_3^{(2)} + g_3^{(3)} + g_3^{(4)} \leq \mathcal{O}\left(\frac{1}{|E_{L(e)+1}|} \right)$.
\paragraph{(4) $\Delta \leq 0$:} Since $\Delta \leq 0$, we can simply bound the term $\left(\frac{n-k}{k}\right)^{\Delta}$ by $1$. 
Moreover, the number of cycle types, the length of cycles  $C^*$, and $\Delta$ are all $\mathcal{O}(1)$ constants. Hence, we can bound the total flow in an edge $e$ due to the cases with $\Delta \leq 0$ by
 \begin{align}
        g_4 \leq \mathcal{O}(1)\cdot \sum_{i=1}^3 \left(\frac{1}{n}\right)^i \cdot (n-k)^i \frac{1}{|E_{L(e)+1}|}\cdot\leq \mathcal{O}\left( \frac{1}{|E_{L(e)+1}|}\right),
    \end{align}
    where the $\mathcal{O}((n-k)^i)$ term is a loose bound on the number of relevant cycles $C^*$ with $i=1$ for $s,t \in \bm{V_{\mathrm{good}}}$, $i=2$ for $s \in \bm{V_{\mathrm{good}}},t \in \bm{V_{\mathrm{bad}}}$ or $s \in \bm{V_{\mathrm{bad}}},t \in \bm{V_{\mathrm{good}}}$, and $i=3$ for $s,t \in \bm{V_{\mathrm{bad}}}$.

    When the contribution of the previous cases (1) to (4) are combined, the capacity of $\capacity_u(v)$ of an edge $e \in E_k$ satisfies $\capacity_u(v) \leq \mathcal{O}\left(\frac{1}{|E_k|}\right)$ which completes the proof. 
\end{proof}

\subsubsection{Non-antipodal vertices}\label{sec:NonantipodalFlow}

\begin{proposition}\label{prop:antipodalCap}
   Fix a vertex $u \in \bm{V_{\mathrm{good}}}$ and assume that the local escape property holds. One could route a unit flow between $u$ and any other vertex $v \in \bm{V_{\mathrm{good}}}$ such that the total flow $\sum_{v \in \bm{V_{\mathrm{good}}}}\capacity_{uv}(e)$ through an edge $e \in V_{k}(u)$ is bounded by $\frac{c}{|E_k|}\cdot 2^n$ for some constant $c > 0$ and any $k \leq n-1$.
\end{proposition}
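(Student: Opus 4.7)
The plan is to build on the balanced antipodal flow construction of Proposition C.5 and extend it to arbitrary pairs via a two-stage ``deliver-and-collect'' scheme. For each $v \in \bm{V_{\mathrm{good}}}$ with Hamming distance $d = d(u, v)$, I would construct a unit flow $f_{uv}$ from $u$ to $v$ as follows: the first stage uses the $u$-origin balanced flow from Proposition C.5, truncated at layer $V_d(u)$, which delivers $1/|V_d(u)|$ units of flow uniformly to every vertex in $V_d(u)$; the second stage uses the reversed $v$-origin balanced flow to collect this distributed mass into $v$. Since $v \in V_d(u)$, the two layers where the flows meet overlap, and additional short routing inside the hypercube (using the local escape property when bad vertices appear) glues the two pieces into a genuine unit flow.

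For the capacity analysis, I would bound $\capacity_{uv}(e)$ on an edge $e$ at $u$-level $k$ by combining the capacity bounds from the two balanced flows. The delivery stage places capacity $\mathcal{O}(1/|E_k(u)|)$ on $e$ by Proposition C.5. The collection stage, after re-expressing $u$-levels in terms of $v$-levels (using that $|E_k(v)| = |E_k(u)|$ by symmetry of the hypercube), contributes the same order. Summing over $v \in \bm{V_{\mathrm{good}}}$ and grouping terms by $d = d(u,v)$, each distance $d$ contributes at most $\binom{n}{d}$ vertices, each with capacity $\mathcal{O}(1/|E_k|)$, giving
\begin{equation}
    \sum_{v \in \bm{V_{\mathrm{good}}}} \capacity_{uv}(e) \;\leq\; \mathcal{O}\!\left(\frac{1}{|E_k|}\right) \sum_{d=0}^{n} \binom{n}{d} \;=\; \mathcal{O}\!\left(\frac{2^n}{|E_k|}\right),
\end{equation}
which is the desired bound.

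The main obstacle will be controlling the collection stage. We need to concentrate a unit of flow into the single vertex $v$ without overloading the edges adjacent to $v$, which is delicate because the delivered mass sits on all of $V_d(u)$ rather than on a $v$-centered ball. The remedy is to match $V_d(u)$ to the layer $V_{n-d}(v)$ of the $v$-origin construction in the ideal antipodal geometry, and otherwise to use the fact that $V_d(u)$ and $V_{n-d}(v)$ overlap heavily along the monotone shortest paths between $u$ and $v$; the local escape property from Lemma C.2 ensures enough disjoint detours exist to reroute the remaining portion of the flow within a constant number of extra steps. A careful case split on the regime of $d$ (whether $d \ll n/2$, $d \approx n/2$, or $d \approx n$) will likely be needed to verify that the capacity contributions aggregate to the claimed $\mathcal{O}(2^n/|E_k|)$ rather than accumulating by a factor of the layer size.
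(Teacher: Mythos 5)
Your two-stage ``deliver-and-collect'' scheme departs from the paper's proof, and the collection stage contains a genuine gap rather than a detail to be filled in. In your scheme, after the delivery stage the unit mass is spread as $1/\binom{n}{d}$ over the $\binom{n}{d}$ vertices of $V_d(u)$, and all of it must then converge on $v$. But $v$ has only $n$ incident edges, so on average those edges each carry $\Omega(1/n)$ flow from the single pair $(u,v)$. The per-pair budget needed to make the proposition's bound go through is $\mathcal{O}(1/|E_d(u)|) = \mathcal{O}(1/(d\binom{n}{d}))$, which for $d$ near $n/2$ is roughly $2^{-n}$ --- exponentially smaller than $1/n$. You acknowledge this difficulty (``delicate because the delivered mass sits on all of $V_d(u)$ rather than on a $v$-centered ball'') and propose matching $V_d(u)$ to $V_{n-d}(v)$ plus local-escape detours, but for non-antipodal $u,v$ these sets are different (the vertices of $V_d(u)$ spread over many $v$-levels), and a constant number of extra steps via \lemref{localEscapeProperty} cannot close a gap of exponential magnitude. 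The ``careful case split'' you defer to is exactly where the argument would break.

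The paper sidesteps the collection stage entirely by a scaling-and-truncation trick. Rather than building a separate unit flow $f_{uv}$ for each $v$, it scales the balanced flow of \propref{antipodalflow} up by a factor of $\binom{n}{k}$ for each $k = d(u,v)$ and then \emph{truncates} it at layer $V_k(u)$: because the original flow is balanced, each vertex of $V_k(u)$ receives exactly $1/|V_k(u)|$ of the unit mass, so after scaling each such vertex --- including $v$ --- is the terminus of a unit flow. No concentration step is required; $v$ is simply one of the endpoints where flow naturally stops. An edge $e \in E_j(u)$ with $j \le k$ then carries at most $c \binom{n}{k} / |E_j(u)|$ scaled flow, and summing this over $k$ (and routing through the antipodal vertex $\bar{u}$, possibly fictitious, when $d(u,v) > n/2$) yields $\sum_k c \binom{n}{k} / |E_j(u)| \le c\, 2^n / |E_j(u)|$, which is the claim. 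The insight you are missing is precisely that scaling the \emph{single} antipodal construction simultaneously produces unit flows to all $2^n$ destinations with no congestion penalty beyond the factor $2^n$, because the balanced structure means each layer already splits the mass evenly among its vertices.
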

\begin{proof}
    Suppose $d(u, v) \leq n/2$ and $n$ is even. Let $k = d(u, v)$.
    We know from the construction in \propref{antipodalflow} that one could route a balanced flow of volume $1$ to vertices in layer $V_k(u)$ (including $v$) such that each vertex receives a flow of volume $\frac{1}{|V_k(u)|}$.
    By scaling up the input flow and the edge capacities, we could use the same construction to route a flow of volume $|V_k(u)| = \binom{n}{k}$ to vertices in $V_k(u)$. This allows us to allocate a \emph{unit} flow to each of the vertices in $V_k(u)$ including $v$.

    If $d(u, v) > n/2$, then instead consider the antipodal vertex $\bar{u}$ and let $k = d(\bar{u} ,v)$. 
    In this case, we first route a flow of volume $\binom{n}{k}$ from $u$ to its antipodal vertex $\bar{u}$. 
    We then move that flow from $\bar{u}$ to $v$ using the same argument as in the previous case. 
    If $\bar{u} \in \bm{V_{\mathrm{bad}}}$,  then consider a fictitious vertex in its place and moving a flow to or from this fictitious vertex is equivalent to moving a flow to or from  all the neighbors of $\bar{u}$. 
    The total (scaled-up) capacity of an edge $e$ is bounded by
    \begin{align}
        \sum_{v \in \bm{V_{\mathrm{good}}}} c_{uv}(e) \leq \sum_{k=1}^{n-1}\frac{3c}{|E_{k}(u)|}\cdot \binom{n}{k}\leq \frac{3c}{|E_{k}(u)|}\cdot 2^n,
    \end{align}
    where the first bound including constant $c > 0$ follows from \propref{antipodalflow}.
\end{proof}

\subsection{Inserting bad vertices and bounding resistance}\label{sec:proofFlow}
Our final step in constructing a flow between vertices of the hypercube $\{0, 1\}^n$ is to insert the deleted vertices in $\bm{V}_{\mathrm{bad}}$ back in place and route a flow between them and other vertices. 

We achieve this simply by splitting a flow to or from a bad vertex between its good neighbors. 
We then use the multi-commodity flow construction in the last section to move this flow to any other (good or bad) vertex. 

We are now ready to put the previous steps together and prove \thmref{multiCommodityFlowBound} which gives a multi-commodity flow $f$ with $\bm{R}(f)\leq c' n$ that holds with probability $1-2^{-c n}$ for some constant $c, c' > 0$.

Consider an edge $e=(e^+,e^-)$. As in  \eqref{eq:congestionRewriteResistance}, we define the resistance of this edge by 
\begin{align}
    \bm{R}_e= \frac{n}{\bm{S}} \frac{\bm{z}_{e^+}+\bm{z}_{e^-}}{\bm{z}_{e^+}\bm{z}_{e^-}}   \sum_{x,y} \sum_{p\in \mathcal{P}_{xy}: p \ni e} \bm{z}(x)\bm{z}(y) f(p) |p|.\label{eq:congestionfore}
\end{align}
The flow that we have developed has length $|p| = \mathcal{O}(n)$. 
Suppose one of the endpoint of edge $e$ (e.g., $e^+$) is a bad vertex.
Since the bad vertices are removed from the flow constructed in \appref{antipodalFlow} and \appref{NonantipodalFlow}, edge $e$ only contributes to moving the flow from $e^+$ to other good vertices. 
Assuming the local escape property, this flow is equally divided between $\alpha n$ good neighbors of $e^+$.
Hence, we can bound the resistance $\bm{R}_e$ by 
\begin{align}
    \bm{R}_e &\leq \mathcal{O}(n^2)\cdot \frac{1}{\bm{S}} \cdot  \left(1 + \frac{\bm{z}(e^+)}{\bm{z}(e^-)}\right)\cdot \sum_{y \in V} \bm{z}(y)\cdot \frac{1}{\alpha n}.\nn
\end{align}
The weight $z(e^+)$ of the bad vertex is either $\leq c_{\ell}$ or $\geq c_u$ where as in Equation \eqref{eq:probIntervalExp}, we assume $c_\ell = 1/11$ and $c_u = 5$.
If $z(e^+) \leq c_\ell$, then we can further bound the resistance of edge $e$~by $$\bm{R}_e \leq \mathcal{O}(n)\cdot \frac{1}{\bm{S}} \cdot \sum_{y \in V} \bm{z}(y).$$
From bound \eqref{eq:concentrationSubExp} in \propref{concentration}, we have that $\frac{1}{\bm{S}}\sum_{y \in V} \bm{z}(y) \leq \mathcal{O}(1)$ with probability $\geq 1-2^{-\Omega(n)}$. 
Conditioned on this, the resistance of edge $e$ is bounded by $\bm{R}_e \leq \mathcal{O}(n)$ in this case. 

If an end point $e^+$ of edge $e$ is instead a bad vertex with $z(e^+)\geq c_u$, then with probability $1- 2^{-\Omega(n)}$, the weight $\bm{z}(e^+)\leq \mathcal{O}(n)$. 
This means the resistance is bounded by $$\bm{R}_e \leq \mathcal{O}(n^2)\cdot \frac{1}{\bm{S}} \cdot \sum_{y \in V} \bm{z}(y) \leq \mathcal{O}(n^2)$$ 
with probability $1- 2^{-\Omega(n)}$.

We next consider edge $e$ whose end points are good vertices. 
Fix vertex $x\in \bm{V}_{\mathrm{good}}$ and assume $e \in E_k(x)$. 
For any $y \in  \bm{V}_{\mathrm{good}}$, there is a flow of volume $\bm{z}(x)\bm{z}(y)\leq \mathcal{O}(1)$ between $x$ and $y$ that may traverse through edge $e$. 
In addition to this flow, vertices $x$ and $y$ are also responsible for carrying a $\mathcal{O}(1/n^2)$ fraction of the flow between any bad neighbors of $x$ and $y$.
This is because the flow to or from any bad vertex is divided between its good neighbors and carried away from there.

To bound the total input flow, we note that according to bound \eqref{eq:concentrationSubExp} in \propref{concentration}, it holds for a good vertex $u$ that $\sum_{v \in \bm{V}_{\mathrm{bad}}} z(v) \cdot \iden[d(u,v) = 1] \leq \mathcal{O}(n)$ with probability at least $1 - 2^{-c^{\prime \prime}n}$ for some constant $c^{\prime \prime} > 2$. By union bound, this holds for all good vertices with probability $\geq 1 - 2^{-\Omega(n)}$.
Conditioned on this, we get 
\begin{align}
    \bm{z}(x)\bm{z}(y)  + \mathcal{O}\left(\frac{1}{n^2}\right)\cdot \sum_{u,v \in \bm{V}_{\mathrm{bad}}} \bm{z}(u)\bm{z}(v) \cdot \iden[d(u, x) = 1, d(v, y) = 1] \leq \mathcal{O}(1).
\end{align}

We know from \propref{antipodalCap} that if the local escape property holds, then the total flow through the edge $e$ when moving a flow of $\mathcal{O}(1)$-volume from $x$ to any good vertex is 
$$\sum_{v \in \bm{V_{\mathrm{good}}}}\capacity_{uv}(e) \leq \mathcal{O}\left(\frac{1}{|E_k|}\cdot 2^n\right).$$
The resistance of edge $e$ is, therefore, bounded by 
\begin{align}
    \bm{R}_e= \mathcal{O}(n^2)\cdot \frac{1}{\bm{S}}\cdot   \sum_{k=1}^{n-1} \sum_{x \in \bm{V}_{\mathrm{good}}}\frac{1}{|E_k|}\cdot 2^n \cdot \iden[e \in E_k(x)] \leq \mathcal{O}(n^2)\label{eq:resistancebound2}
\end{align}
where the last inequality follows from \lemref{capacitybound}.
This concludes the proof of \thmref{multiCommodityFlowBound}.

\section{Enforcing fast mixing}\label{sec:enforcingMixing}

As discussed in the \secref{whyItWorks}, the relaxation time $\tau$ of the Markov chain associated with the target state $\ket{\psi}$ provides an upper bound on  the sample efficiency of our certification procedure. 
If this relaxation time is superpolynomial in $n$, then we have no rigorous guarantee that our protocol can certify the lab state $\rho$ efficiently.
What we can do, though, is modify the Markov chain (and hence the target state), replacing it by a new Markov chain that does mix rapidly. 
Thus we can efficiently certify whether the lab state has high fidelity with this adjusted target state. 
This new target state, however, could potentially be very different from the original state.

Our starting point is the notion of \emph{local escape property} introduced formally in \defref{localEscape}.
Examining this property offers a way to enforce fast mixing for a given query model $\Psi$ of the target state $\ket{\psi}$.
To achieve this, each time the model $\Psi$ is queried, we test the value $\pi(x)$ of the queried vertex $x \in \{0,1\}^n$ as well as its local neighborhood.
If these values satisfy local constraints mentioned in \defref{localEscape}, the original value $\Psi(x)$ is returned.
Otherwise, a value $\nu > 0$ is reported instead of $\Psi(x)$.
This effectively defines a modified query model $\Psi'$ (also shown in \fig{verificationprotocol}). 

When the query model $\Psi$ is known to return normalized values (i.e. $\sum_{x} |\Psi(x)|^2 = 1$), this parameter can be set as $\nu = 2^{-n}$.
More generally, the normalization factor $\sum_{x} |\Psi(x)|^2$ serves as a hyperparameter for this scheme and needs to be set in advance.
We show in \thmref{enforcingLocalEscape} that this strategy indeed ensures that the local escape property is satisfied for all the vertices. 
Hence, the modified model $\Psi'$ induces a random walk with a mixing time $\tau \leq \mathcal{O}(n^2)$.

More formally, we start by restating the notion of local escape property, which was considered earlier in \appref{GeometricFacts} in a slightly less general form.

\begin{definition}[Local escape property, generalized]\label{def:localEscape}
Consider the random walk over $\{0,1\}^n$ with stationary distribution $\pi(x)$ and transition probabilities defined in Equation \eqref{eq:TransitionMatrixIntro}.
We say a vertex $x\in \{0,1\}^n$ is $(c_{\ell}, c_u)$-good when $c_{\ell}\cdot 2^{-n} \leq \pi(x)\leq c_u \cdot 2^{-n}$ for some $c_{\ell} < 1$ and $c_u > 1$. 
This random walk is defined to satisfy the local escape property with parameters $(\alpha, c'_u, c_u, c_{\ell})$ when
 \begin{itemize}
      \item[(0)] Each vertex $x \in \{0,1\}^n$ has a probability bounded from above by $\pi(x) \leq c'_u\cdot 2^{-n}$,
      \item[(1)] Each vertex $x \in \{0, 1\}^n$ has at least $\alpha n$ neighbors that are $(c_{\ell}, c_u)$-good, and 
      \item[(2)] There are at least $\alpha n$  pairwise internally-disjoint paths of length $\leq 5$ connecting any two $(c_{\ell}, c_u)$-good vertices within Hamming distance $3$ of each other.     
\end{itemize}
\end{definition} 
We now introduce a scheme based on this property that ensures the fast mixing of the Markov chain associated with a query model $\Psi$.
In this scheme, every time the model $\Psi$ is queried with some vertex $x \in \{0,1\}$, we perform the following checks:

\begin{itemize}
    \item[-] First, we check if $\pi(x) \leq c'_u \cdot 2^{-n}$.
    \item[-] Next, we look at all the neighbors of vertex $x$. 
    For each neighbor, we check if it has at least $\alpha n$ neighbors that are considered $(c_{\ell}, c_u)$-good.
    \item[-] Finally, starting from vertex $x$, we find all the vertex pairs within a Hamming distance $3$ of each other. These vertices must be connected by internally-disjoint paths of length $\leq 5$ such that one of these paths includes vertex $x$. For each such vertex pair, we check if there are at least $\alpha n$ paths consisting entirely of $(c_{\ell}, c_u)$-good vertices.
\end{itemize}

If any of the above checks are not passed, we return the value $\nu = 2^{-n}$ for the vertex $x$. Otherwise, we output the original value $\Psi(x)$.
This value is chosen assuming $\sum_{x} |\Psi(x)|^2 = 1$.
We suppose the value of $\sum_{x} |\Psi(x)|^2$ is given as input to this scheme. 
For other normalization factors, the parameter $\nu$, $c'_u$, $c_u$, $c_{\ell}$ can be adjusted accordingly.

Note that the last condition can be efficiently checked by starting from the vertex $x$ and creating all the internally-disjoint paths of length $\leq 5$ that include $x$.
These paths are constructed before in the proof of \lemref{localEscapeProperty} via the following procedure repeated for any $i\in [n]$: 
Choose the $i$'th bit of the starting vertex and flip it.
Then moving rightward from bit $i+1$ to $n$ and back to $i$, flip any bits in which the current vertex differs from the end vertex.
 
\begin{theorem}[Enforcing local escape property implies fast mixing]\label{thm:enforcingLocalEscape}
    Suppose the scheme discussed above is performed on a query model $\Psi$ and the new model $\Psi'$ is obtained. 
    The following statements hold:
    \begin{itemize}
        \item[a)] When $c_{\ell}, c_u, \alpha$ are some constants and $c'_u \leq \mathcal{O}(n)$, the new model $\Psi'$ has a mixing time $\tau \leq \mathcal{O}(n^2)$.
        \item[b)] If the original query model $\Psi$ satisfies the local escape property with parameters $(\alpha, c'_u, c_u, c_{\ell})$, then it remains unchanged, i.e. $\Psi' = \Psi$. 
        \item[c)] For some constant values of $c_{\ell}, c_u, \alpha$ and $c'_u \leq \mathcal{O}(n)$, all but an exponentially small $2^{-\Omega(n)}$ fraction of $n$-qubit states $\ket{\psi}$ satisfy the local escape property with these parameters.
    \end{itemize} 
\end{theorem}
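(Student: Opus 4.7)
The plan is to prove the three claims in the order (b), (c), (a), since the later claims build on structure established by the earlier ones. Part (b) is essentially definitional: if $\Psi$ already satisfies the local escape property with parameters $(\alpha, c'_u, c_u, c_\ell)$, then for every queried vertex $x$, condition (0) makes the first check pass, condition (1) applied to each neighbor of $x$ makes the second check pass, and condition (2) makes the third check pass. Since no check fails, the enforcement procedure returns $\Psi(x)$ for every $x$, so $\Psi' = \Psi$.

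For part (c), I would extend \lemref{localEscapeProperty}, which already establishes conditions (1) and (2) for Haar-random states with $(c_\ell, c_u)$ as in Equation \eqref{eq:goodVertices}. It remains to add condition (0). For a Haar-random $n$-qubit state, the weights $\bm{z}(x) := 2^n |\braket{x}{\psi}|^2$ are (to leading order) independent exponential variables, so $\Pr[\bm{z}(x) > c'_u] = e^{-c'_u}$. Choosing $c'_u = c \cdot n$ for a sufficiently large constant $c > 2\ln 2$, a union bound over the $2^n$ vertices yields $\Pr[\max_x \bm{z}(x) > c'_u] \leq 2^n e^{-cn} = 2^{-\Omega(n)}$, so condition (0) holds simultaneously with (1) and (2) outside an exponentially small fraction of states.

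For part (a), the key observation is that the modified model $\Psi'$ satisfies the local escape property \emph{deterministically} by construction. Any vertex that fails a check is assigned $\pi'(x) = \nu = 2^{-n}$, which lies in the $(c_\ell, c_u)$-good range provided $c_\ell \leq 1 \leq c_u$; any vertex that passes the checks inherits $\pi(x) \leq c'_u \cdot 2^{-n}$ and sits in a neighborhood that already satisfies (1) and (2). Consequently every vertex of $\Psi'$ is good and \defref{localEscape} holds with adjusted constants of the form $(\alpha, c'_u, \max(c_u, 1), \min(c_\ell, 1))$. Given this, I would invoke the multi-commodity flow construction of \appref{HaarSpectralGap}, and in particular \thmref{multiCommodityFlowBound}. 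That argument rests on two ingredients: the local escape property, used in \propref{antipodalflow} and \propref{antipodalCap} to route a balanced flow, and bounds of the form $\pi'(y) = \Theta(2^{-n})$ with $\sum_y \pi'(y) = \Theta(1)$, used to control the resistance in Equation \eqref{eq:resistancebound2}. Both ingredients are now deterministic consequences of enforcement, and the final estimate yields $\tau \leq \mathcal{O}(n^2)$.

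The main obstacle will be making this last step fully rigorous: the flow analysis of \appref{HaarSpectralGap} was written with Porter--Thomas weights in mind, and some of its constants implicitly assume $c_u = \mathcal{O}(1)$. To complete the argument, the ratios $(\bm{z}_{e^+} + \bm{z}_{e^-})/(\bm{z}_{e^+}\bm{z}_{e^-})$ appearing in \eqref{eq:congestionRewriteResistance} must be tracked through the capacity bounds of \propref{antipodalflow} and \propref{antipodalCap}, verifying that the relaxed upper bound $c'_u = \mathcal{O}(n)$ on individual weights contributes at most an additional polynomial factor. Granting this, the chain of inequalities leading to \eqref{eq:resistancebound2} still yields $\bm{R}(f) \leq \mathcal{O}(n^2)$, hence $\tau \leq \mathcal{O}(n^2)$, completing part (a).
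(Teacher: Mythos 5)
Your overall approach matches the paper's: part (b) is definitional, part (c) defers to the Haar-random analysis (and you usefully make condition (0) explicit with a union bound, which the paper's proof leaves implicit), and part (a) argues that $\Psi'$ satisfies the local escape property deterministically and then invokes \thmref{multiCommodityFlowBound}.

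The gap is in your part (a) argument. You assert that ``every vertex of $\Psi'$ is good,'' but this is false: a vertex $x$ that passes all three checks retains $\pi'(x) = \pi(x)$, and nothing in the checks rules out $\pi(x) < c_\ell \cdot 2^{-n}$ --- the first check bounds $\pi(x)$ only from above, and the other two checks concern $x$'s \emph{neighborhood} rather than $x$ itself. Relatedly, your phrase ``sits in a neighborhood that already satisfies (1) and (2)'' does not establish condition (1) for $x$, because the second check executed at $x$ verifies condition (1) for the \emph{neighbors} of $x$, not for $x$. The correct argument is the paper's two-case analysis. If $x$ has fewer than $\alpha n$ good neighbors under $\Psi$, then every neighbor $y$ of $x$ fails its own second check (which inspects all of $y$'s neighbors, including $x$), so $\Psi'(y) = \nu$; thus all of $x$'s neighbors become good under $\Psi'$. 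If instead $x$ already has $\alpha n$ good neighbors under $\Psi$, then those remain good under $\Psi'$, since a good vertex's value is either left unchanged or replaced by $\nu$, both of which stay in the good window. An analogous case analysis is needed for condition (2). Your concern about tracking $c'_u = \mathcal{O}(n)$ through the flow analysis is legitimate, but it is equally implicit in the paper's own proof, which cites \thmref{multiCommodityFlowBound} without re-deriving it in the deterministic setting.
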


\begin{proof}
We show that the new model $\Psi'$ satisfies the local escape property. 
The first statement a) then follows from the arguement in the proof of \thmref{multiCommodityFlowBound} in \appref{tighterAnalysis}.
By construction, the enforcing scheme makes each vertex satisfy condition (0) by setting the value of the violating vertices equal to $\nu = 2^{-n}$.
The condition (1) is also satisfied because when a vertex $x$ violates it, the value of all of its neighbors are replaced by $\nu = 2^{-n}$.
Since $c_{\ell} <1$ and $c_u >1$, we have $c_{\ell}\cdot 2^{-n}\leq \nu \leq c_u\cdot 2^{-n}$.
This means that the vertex $x$ now has $n$ neighbors which are $(c_{\ell}, c_u)$-good.
Finally, the condition (2) holds because if a vertex pair violate it, all of the vertices on their connecting paths, take value $\nu = 1$.
This makes them all be $(c_{\ell}, c_u)$-good as well.

Moving on to the second claim b), it follows from the definition of the enforcing scheme defined above that the queries do not change if the model satisfies the local escape property.

The proof of the last statement c) is covered in \appref{GeometricFacts}, where we showed that for some constant values of $c_{\ell}, c_u, \alpha$ and $c'_u \leq \mathcal{O}(n)$, the Markov chain induced by almost all $n$-qubit quantum states satisfy the local escape property.
\end{proof}

\section{Benchmarking quantum devices}\label{sec:benchmarkingComplexity}

As mentioned in \secref{BenchmarkingQuantumDevices}, our certification protocol for Haar random states offers a benchmarking scheme where the fidelity of an $n$-qubit device in preparing an increasing level of computationally complex quantum states can be certified. 
Here we restate and prove this result.
\begin{theorem}[Benchmarking quantum states of tunable complexity, restatement]
    Suppose we have access to a quantum device capable of preparing $n$-qubit states.
    For any level $t \in \{2^k: k \in \{1, \dots, n\}\}$, we can benchmark the fidelity of the device in preparing quantum states with circuit complexity $\mathcal{C}_{\delta}(\ket{\psi}) \geq \widetilde{\Omega} (t)$ using $T = \mathcal{O}\left((\log^2 t/\epsilon)^2\right)$ single-qubit Pauli measurements via \protoref{verificationGeneral}.
\end{theorem}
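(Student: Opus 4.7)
The plan is to instantiate \protoref{verificationGeneral} at level $m=1$ on the $n' := \log_2 t$ qubits that carry the essential complexity of the target state, and then combine the Haar-random relaxation-time bound from \appref{HaarSpectralGap} with the sample-complexity guarantee of \thmref{verificationFastMixing-formal}. For each $t = 2^k$, I would have the device prepare a Haar-random target $\ket{\psi}$ on $n' = k$ of its $n$ qubits. A standard volume/$\epsilon$-net argument on the unitary group shows that with overwhelming probability such a state satisfies $\mathcal{C}_\delta(\ket{\psi}) \geq \widetilde\Omega(2^{n'}/n') = \widetilde\Omega(t)$, so the family $\{\ket{\psi}\}_t$ covers every complexity level appearing in the statement.

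Next I would bound the relaxation time of the Markov chain \eqref{eq:TransitionMatrixIntro} associated with $\pi(x) = |\braket{x}{\psi}|^2$ on $\{0,1\}^{n'}$. This is exactly what \thmref{multiCommodityFlowBound} delivers, via the multi-commodity flow construction carried out in \appref{HaarSpectralGap}: with probability at least $1 - 2^{-\Omega(n')}$ one has $\tau \leq c_1 (n')^2 = c_1 \log_2^2 t$ for a universal constant $c_1 > 0$. I would invoke this bound as a black box rather than reprove it.

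Plugging this relaxation-time bound into \thmref{verificationFastMixing-formal} at $m=1$ yields a certification procedure that outputs \textsc{Certified}/\textsc{Failed} correctly with high probability using
\begin{equation}
    T \;=\; \mathcal{O}\!\left(\frac{\tau^2}{\epsilon^2}\right) \;=\; \mathcal{O}\!\left(\frac{\log^4 t}{\epsilon^2}\right) \;=\; \mathcal{O}\!\left(\Bigl(\frac{\log^2 t}{\epsilon}\Bigr)^2\right)
\end{equation}
single-qubit Pauli measurements, which matches the claim. The main obstacle I anticipate is the circuit-complexity lower bound $\mathcal{C}_\delta(\ket{\psi}) = \widetilde\Omega(2^{n'})$: one has to count circuits of a given size against the Haar volume of a small ball around $\ket{\psi}$, using that the unitary group on $n'$ qubits has dimension $4^{n'}$, and match the resulting estimate with the paper's chosen notion of $\delta$-approximate complexity. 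The Markov-chain side is already packaged in \thmref{multiCommodityFlowBound} and the protocol-side sample bound in \thmref{verificationFastMixing-formal}, so once the complexity lower bound is in hand, the whole argument reduces to stitching together these two appendix-level results instantiated on $n' = \log_2 t$ qubits.
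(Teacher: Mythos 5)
Your proposal follows essentially the same route as the paper's proof: sample a Haar-random state on $k = \log_2 t$ qubits, invoke the counting argument for the circuit-complexity lower bound $\widetilde\Omega(t)$, apply the $\tau \leq \mathcal{O}(k^2)$ relaxation-time bound from \appref{HaarSpectralGap} (Theorem~\ref{thm:multiCommodityFlowBound}), and plug into Theorem~\ref{thm:verificationFastMixing-formal} to get $T = \mathcal{O}((\log^2 t/\epsilon)^2)$. The only small point the paper adds that you omit is how to handle the small-$k$ regime, where the high-probability guarantee $1 - 2^{-\Omega(k)}$ from the Haar analysis is weak: when $k = \mathcal{O}(\log n)$ one can instead directly verify the local escape property of Lemma~\ref{lem:localEscapeProperty} for the sampled state and thereby certify the relaxation-time bound deterministically rather than probabilistically.
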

\begin{proof}
    For any level of complexity $t \in \{2^k: k \in \{1,\dots, n\}\}$, we sample a $k$-qubit Haar random state.
    We then prepare this state on an arbitrary subset of size $k$ of the $n$ qubits that the device can prepare. 
    Any $k$-qubit quantum state can be prepared, using single-qubit and CNOT gates, in depth $\mathcal{O}(\frac{2^k}{k})$ and size $\mathcal{O}(2^k)$ \cite{sun2023asymptotically}.
    Using standard counting arguments, one can also show that $\Omega \left(\frac{2^k}{k + \log(k/\delta)}\right)$ many gates are needed for generating almost all $k$-qubit Haar random states up to an error $\delta$.
    As established in \appref{HaarSpectralGap}, the relaxation time of $k$-qubit Haar random states can be bounded by $\tau \leq \mathcal{O}(k^2)$ except for an exponentially small $e^{-\mathcal{O}(k)}$ fraction of the state.
    When $k = \mathcal{O}(\log(n))$, the local escape property in \lemref{localEscapeProperty} and therefore the relaxation time of the state can be directly checked. 
\end{proof}

As mentioned in \secref{BenchmarkingQuantumDevices} and \fig{plotsBenchmarking}, the shadow overlap is normalized such that in the ideal (noiseless) case, it evaluates to $1$, and in the case of  maximally mixed state, it evaluates to $1/2^n$.
For level-$m$ \protoref{verificationGeneral}, the normalized shadow overlap can be expressed as
\begin{align}
   \mathrm{normalized \ shadow \ overlap} = \frac{2^m}{2^m-1}\cdot \frac{d-1}{d}\cdot \left(\E[\bm{\omega}]- \frac{1}{2^m}\right) + \frac{1}{d}, \label{eq:NormalizedShadowOverlap}
\end{align}
with $d=2^n$ the dimension of the Hilbert space. 
This formula is a linear map from $[1/2^m, 1]$ to $[1/2^n, 1]$, which are the fidelity ranges for $m$-qubit systems and $n$-qubit systems.
It is easy to show that the original shadow overlap $\E[\bm{\omega}]$ upper bounds the normalized version. 
Hence, the relation \eqref{eq:localToGlobalFidelityMixing} to fidelity is preserved after the normalization.
As established in the numerical experiment in \fig{plotsBenchmarking}, the normalized shadow overlap very accurately matches the fidelity. 

This close tracking of the fidelity by the shadow overlap can be justified in the weak-noise regime.
In this case, the overall effect of the local stochastic noise may be modeled as a global white-noise given by the depolarizing channel \cite{boixo2018characterizing, dalzell2021random, mark2022benchmarking}.
In other words, the noisy state $\rho$ prepared by the device relates to its ideal version by 
\begin{align}
    \rho = F \cdot \ketbra{\psi}{\psi} + (1-F) \cdot \frac{\iden}{2^n}\nn
    \end{align}
where $F$ (up to an exponentially small correction) matches the fidelity $\bra{\psi}\rho\ket{\psi}$.
Assuming such a noise model, we see that the expected overlap estimated in \protoref{certification} satisfies 
\begin{align}
    \E[\bm{\omega}] = \Tr(L \rho) = F\cdot  \bra{\psi}L\ket{\psi} + (1-F) \cdot \frac{\Tr(L)}{2^n} = \frac{1 + F}{2}
\end{align}
Applying this to level-$1$ shadow overlap in \eqref{eq:NormalizedShadowOverlap}, shows that 
\begin{align}
    \mathrm{normalized\ shadow\ overlap} = \frac{d - 1}{d}\cdot  F + \frac{1}{d}.\nn
\end{align}
In the limit of large number of qubits $n$, this shows that $\mathrm{normalized\ shadow\ overlap}$ and the fidelity $F$ are exponentially close to each other. 

 \section{Quantum phase states}\label{sec:Quantum phase states}
 Quantum phase states are defined as $\ket{\psi} = \sum_{x \in \{0,1\}^n} \frac{1}{\sqrt{2^n}} e^{i \phi(x)} \ket{x}$ for some function~$\phi(x): \{0, 1\}^n \mapsto \mathbb{R}$. 
These states include interesting families of quantum states such as graph states, which are resource states for measurement-based quantum computing, and a subset of Instantaneous Quantum Polynomial-time (IQP) circuits \cite{bremner2011classical}, which are the basis of some proposed quantum supremacy experiments. 
For suitable choices of the function $\phi(x)$, phase states can form highly-entangled states that are indistinguishable from Haar random states. 
This indistinguishability can be information theoretical or computational, as shown in \cite{ji2018pseudorandom, brakerski2019pseudo}, leading to a notion of quantum pseudorandom states with applications in quantum cryptography.
Due to their particular structure, these states may also be amenable to efficient tomography.

A recent work of \cite{arunachalam2022phase} proves among other things that when the complex phases are given by $(-1)^{ f_d(x)}$ with $f_d(x)$ being a degree-$d$ Boolean polynomial over $n$ variables, then this state can be learned using $\Theta(n^{d})$ copies and using only separable measurements.  
Here, we analyze our certification protocol for phase states with an \emph{arbitrary} choice of the function $\phi(x)$ by bounding the sample and query complexity as in \thmref{sampleComplexity}.

The measurement distribution of a phase state $|\braket{x}{\psi}|^2 = \frac{1}{2^n}$ is the uniform distribution over the $n$-dimensional Boolean hypercube $\{0, 1\}^n$.
Therefore, the transition matrix in Equation \eqref{eq:TransitionMatrix} is equivalent to the conventional (lazy) random walk on a degree-$n$ graph. 
In this walk, any vertex $x$ with probability $1/2$ does not change, or otherwise, it transitions to another vertex $y \neq x$ with Hamming distance $1$ from $x$.
The relaxation time of this walk is given by:
\begin{lemma}
 The relaxation time of a lazy random walk on the $n$-dimensional hypercube with uniform edge weights is given by $\tau = n$.
\end{lemma}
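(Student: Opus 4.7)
The plan is to diagonalize the transition matrix $P$ directly using Fourier analysis on the Boolean hypercube. When $\pi(x) = 2^{-n}$ is uniform, the formula in \eqref{eq:TransitionMatrix-Haar} simplifies: the self-loop probability is $P(x,x) = \tfrac{1}{n}\cdot n\cdot\tfrac{1}{2} = \tfrac{1}{2}$, and the transition probability to any neighbor is $P(x,y) = \tfrac{1}{n}\cdot\tfrac{1}{2} = \tfrac{1}{2n}$. Hence $P$ can be written as
\begin{align}
P = \tfrac{1}{2}\,\iden + \tfrac{1}{2n}\,A,
\end{align}
where $A$ is the adjacency matrix of the $n$-dimensional hypercube (with $A(x,y)=1$ iff $x\sim y$ and $0$ otherwise).

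Next, I will diagonalize $A$ using the Walsh--Fourier characters $\chi_S(x) = (-1)^{\langle S, x\rangle}$ indexed by subsets $S \subseteq [n]$ (equivalently $S\in\{0,1\}^n$). A routine verification shows
\begin{align}
(A\chi_S)(x) = \sum_{y\sim x}(-1)^{\langle S,y\rangle} = (-1)^{\langle S,x\rangle}\sum_{i=1}^n (-1)^{S_i} = (n-2|S|)\,\chi_S(x),
\end{align}
so $\chi_S$ is an eigenvector of $A$ with eigenvalue $n-2|S|$. These $2^n$ characters are mutually orthogonal and therefore form a complete eigenbasis. Consequently the eigenvalues of $P$ are
\begin{align}
\lambda_S = \tfrac{1}{2} + \tfrac{1}{2n}(n-2|S|) = 1 - \tfrac{|S|}{n}, \qquad S\subseteq [n].
\end{align}

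Finally, I read off the spectral gap: the top eigenvalue is $\lambda_\emptyset = 1$ (attained uniquely by the all-ones eigenvector, which corresponds to the uniform stationary distribution), and the second largest is $\lambda_1 = 1 - \tfrac{1}{n}$ (attained by the $n$ singleton characters $\chi_{\{i\}}$). Therefore the relaxation time is $\tau = (1-\lambda_1)^{-1} = n$, as claimed. There is no real obstacle here — the only thing to be careful about is confirming that the laziness term $\tfrac{1}{2}\iden$ guarantees all eigenvalues of $P$ are non-negative, so that the second largest eigenvalue (rather than the one of largest absolute value) is indeed $1 - 1/n$; this is immediate since $|S|\le n$ forces $\lambda_S \ge 0$.
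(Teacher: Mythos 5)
Your proof is correct and is essentially identical to the paper's: writing $P = \tfrac12\iden + \tfrac{1}{2n}A$ and diagonalizing by the Walsh characters $\chi_S$ is the same computation the paper performs with $A = \sum_i X_i$ and the $X$-eigenbasis, since the characters $\chi_S$ are exactly the joint Pauli-$X$ eigenvectors. The extra remark about laziness ensuring all eigenvalues of $P$ are nonnegative is a nice touch that the paper leaves implicit.
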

\begin{proof}
    The proof is standard and, for example, can be found in \cite[Spectral Methods]{roch2015modern}.
    Alternatively, we can show this from the following expression for the transition matrix $P$ of this walk:
    \begin{align}
        P = \frac{1}{2}\iden + \frac{1}{2n} \sum_{i=1}^n X_i
    \end{align}
    where $X_i$ is the Pauli-$X$ matrix acting on qubit $i \in [n]$.
    We see that the top two eigenvalues of $P$ are $\lambda_0 = \frac{1}{2} + \frac{1}{2n}\cdot n = 1$ and $\lambda_1 = \frac{1}{2} + \frac{1}{2n} (n-2) = 1-\frac{1}{n}$.
    This implies that $\tau = \frac{1}{\lambda_0-\lambda_1}=n$.
\end{proof}
This relaxation time leads to the proof of the result in \secref{mainResult}, which claims an $\mathcal{O}(n^2/\epsilon^2)$ sample complexity (or $\mathcal{O}(n/\epsilon)$ when more general single-qubit measurements are allowed) for certifying arbitrary quantum phase states.

\section{Gapped ground states}\label{sec:gappedGroundStates}
In this section, we analyze the performance of \protoref{verificationGeneral} for a family of gapped ground states, known as stoquastic or sign-free ground states, as well as their ``phase-shifted'' versions.

\begin{definition}[Stoquastic Hamiltonians]
A Hamiltonian $H$ is sign-free (a.k.a stoquastic) in the standard basis $\{ \ket{x}:x \in \{0, 1\}^n\}$ if all the off-diagonal terms of the Hamiltonian are non-positive. That is 
\begin{align}
    \bra{x}H\ket{y}\leq 0\quad  \text{for}\quad  x\neq y.\label{eq:stoquasticEntries}
\end{align}
\end{definition}

\begin{fact}\label{lem:signfreeAmplitudes}
Let $\ket{\psi}$ be the unique ground state of a stoquastic Hamiltonian.
This state is a sign-free quantum state of the form $\ket{\psi} = \sum_{x\in\{0, 1\}^n} \sqrt{p(x)}\ket{x}$ where the support of $p(x)$ is a connected set. 
\end{fact}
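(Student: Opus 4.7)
The plan is to apply a Perron--Frobenius argument after a spectral shift. First, I would define the non-negative matrix $M = cI - H$ where $c$ is chosen large enough (e.g.\ $c > \max_x \bra{x}H\ket{x}$) that every entry of $M$ is non-negative; the off-diagonal non-negativity is precisely the stoquastic condition \eqref{eq:stoquasticEntries}, and the shift handles the diagonal. The unique ground state $\ket{\psi}$ of $H$ at energy $E_0$ then coincides with the unique top eigenvector of $M$ at eigenvalue $\lambda = c - E_0$.

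Next, I would show that the amplitudes of $\ket{\psi}$ are, up to a single global phase, non-negative reals. Writing $\ket{\psi} = \sum_x c_x \ket{x}$ with $c_x \in \mathbb{C}$, set $\ket{\psi'} := \sum_x |c_x|\ket{x}$. Using $M_{xy} \geq 0$ together with $\mathrm{Re}(\bar{c}_x c_y) \leq |c_x||c_y|$ termwise,
\begin{align}
\bra{\psi}M\ket{\psi} = \sum_{x,y} \mathrm{Re}(\bar{c}_x c_y)\, M_{xy} \;\leq\; \sum_{x,y} |c_x||c_y|\, M_{xy} = \bra{\psi'}M\ket{\psi'}. \nonumber
\end{align}
Since $\ket{\psi}$ already maximizes the Rayleigh quotient of $M$, this inequality must be saturated and $\ket{\psi'}$ is itself a top eigenvector. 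Uniqueness of the top eigenspace then forces $\ket{\psi'} = e^{i\theta}\ket{\psi}$ for some global $\theta$, so all nonzero $c_x$ share a common phase. Absorbing this phase gives the sign-free form $\ket{\psi} = \sum_x \sqrt{p(x)}\ket{x}$ with $p(x) = |c_x|^2$.

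Finally, I would show that the support $S = \{x : p(x) > 0\}$ is connected in the graph $G_H$ whose edges are the pairs $\{x,y\}$ with $\bra{x}H\ket{y} \neq 0$. Suppose for contradiction that $S = S_1 \sqcup S_2$ with no $G_H$-edge between $S_1$ and $S_2$, and let $\ket{\psi_i} := \sum_{x \in S_i}\sqrt{p(x)}\ket{x}$. Applying the eigenvalue equation $\lambda c_x = \sum_y M_{xy} c_y$ at vertices $x \notin S$ and using $M_{xy}, c_y \geq 0$ gives $M_{xy} c_y = 0$ for every $y$, so no vertex outside $S$ is $M$-coupled to any vertex of $S$. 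Combined with the hypothesized absence of $G_H$-edges between $S_1$ and $S_2$, one then verifies that $M\ket{\psi_i} = \lambda\ket{\psi_i}$ for each $i$, producing two linearly independent top eigenvectors of $M$ and contradicting uniqueness of the ground state.

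The main obstacle will be the connectedness step, which requires ruling out not only direct $G_H$-edges between $S_1$ and $S_2$ but also indirect couplings through vertices outside $S$; the key point is that non-negativity of both $M$ and the amplitudes forces any such external coupling to carry zero weight. The first two steps amount to the classical Perron--Frobenius argument, repackaged so that the shift $M = cI - H$ converts the spectral bottom of $H$ into the spectral top of a non-negative operator.
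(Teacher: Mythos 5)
The paper states this as a \emph{Fact} without supplying a proof, so there is no in-paper argument to compare against; the result is the classical Perron--Frobenius statement for sign-free Hamiltonians. Your proof is correct and is exactly the standard route: the shift $M = cI - H$ produces an entrywise non-negative symmetric matrix whose top eigenvector is the ground state; the Rayleigh-quotient comparison $\bra{\psi}M\ket{\psi} \le \bra{\psi'}M\ket{\psi'}$ together with uniqueness forces all nonzero amplitudes to share a common phase; and if the support split into two components with no $H$-coupling between them, the eigenvalue equation (using non-negativity of both $M$ and the amplitudes) would yield two linearly independent top eigenvectors, contradicting uniqueness. One point worth making explicit, since the paper never pins it down: the connectedness you obtain is in the graph $G_H$ whose edges are pairs $x\neq y$ with $\bra{x}H\ket{y}\neq 0$. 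For a $\kappa$-local Hamiltonian this graph is a subgraph of the Hamming graph on $\{0,1\}^n$ with edges of distance at most $\kappa$, so your conclusion implies connectedness in the sense needed for irreducibility of the level-$\kappa$ random walk in \eqref{eq:TransitionMatrix} that \appref{gappedGroundStates} relies on. The only cosmetic gap is that you do not verify $\bra{\psi'}M\ket{\psi'}\le \lambda_{\max}$ before invoking saturation, but since $\ket{\psi'}$ is a unit vector and $M$ is Hermitian this is immediate from the variational principle, so nothing is missing in substance.
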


Sign-free Hamiltonians are ubiquitous in quantum many-body physics.
They are also particularly significant due to their compatibility with classical simulations using Monte Carlo techniques \cite{BravyiMarginalsSimulate}. 
The condition \eqref{eq:stoquasticEntries} on the Hamiltonian entries helps circumvent the well-known sign problem that emerges when attempting to simulate general Hamiltonians.

\begin{definition}[phase-shifted version of a sign-free state]\label{def:phase-connected states}
    We say that a state $\ket{\psi}$ is the phase-shifted version a given sign-free ground state $\sum_{x\in\{0, 1\}^n} \sqrt{p(x)}\ket{x}$ if $\ket{\psi} = \sum_{x\in\{0, 1\}^n} \sqrt{p(x)} e^{i\phi(x)}\ket{x}$ for some function $\phi(x): \{0, 1\}^n \mapsto \mathbb{R}$.
\end{definition}

Quantum phase states, which we considered in \appref{Quantum phase states}, are a particular example of states in \defref{phase-connected states}.
These states are phase shifted with respect to the state $\ket{+}^{\otimes n}$, which is the ground state of the sign-free Hamiltonian $H = \sum_{i=1}^n \ketbra{-}{-}_i \otimes \iden_{\setminus i}$. 

In what follows, we are interested in $n$-qubit stoquastic Hamiltonians that are gapped or mildly gapless with their energy gap  $\gamma$ lower bounded by  $\g \geq \frac{1}{\poly(n)}$. 
A recent result of \cite{BravyiMarginalsSimulate} shows that the unique ground state of such Hamiltonians admits a fast-mixing random walk based on the Metropolis-Hastings algorithm. 
We extend this result to show that the random walk \eqref{eq:TransitionMatrix} in our framework is also fast-mixing for such states.

Before formally stating the result, we set up some notations following \cite{BravyiMarginalsSimulate}.
Suppose the sign-free Hamiltonian $H$ is $\k$-local (which may or may not be geometrically-local) with spectral (energy) gap $\gamma$. 
We consider the level $m = \k$ of the random walk in \eqref{eq:TransitionMatrix}. 
This means that the transitions are performed between vertices of Hamming distance $\leq \k$ from each other. 
Let $$N = \sum_{i = 1}^{\k} \binom{n}{i} = \mathcal{O}(n^{\k}).$$
Also define the sensitivity parameter 
\begin{align}
    s = \max_{x\neq y}|\bra{x}H\ket{y}|\frac{|\psi(x)|}{|\psi(y)|}.\label{eq:sensitivity}
\end{align}
It is shown in \cite{BravyiMarginalsSimulate} that for sign-free Hamiltonian, the sensitivity parameter is bounded by $s\leq \max_y \bra{y}H\ket{y} - E_0$ where $E_0$ is the ground state energy of the system.

Assuming the ground state is unique, the random walk defined in \eqref{eq:TransitionMatrix} is irreducible. By design, it is also aperiodic and satisfies the detailed-balanced condition. 
The following theorem establishes a $\poly(n)$ bound on the parameter $\tau$ in \protoref{verificationGeneral}. 
\begin{theorem}[cf. \cite{BravyiMarginalsSimulate}]\label{thm:fastMixingStoquastoic}
    The relaxation time $\tau$ of the random walk in \eqref{eq:TransitionMatrix} for the unique ground state of a sign-free Hamiltonian with gap $\gamma$ and sensitivity $s$ is bounded by
    \begin{align}
        \tau \leq \frac{2Ns}{\gamma}.
    \end{align}
\end{theorem}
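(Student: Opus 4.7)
The plan is to use a standard Dirichlet form comparison between the Markov chain $P$ and the shifted Hamiltonian $H-E_0$. Recall that for any reversible chain, the relaxation time admits the variational characterization
\begin{align}
\tau^{-1} \;=\; \min_{f:\, \mathrm{Var}_\pi(f)>0}\, \frac{\mathcal{E}_P(f,f)}{\mathrm{Var}_\pi(f)}, \qquad \mathcal{E}_P(f,f) := \frac{1}{2}\sum_{x\neq y}\pi(x)\,P(x,y)\,(f(x)-f(y))^2,
\end{align}
taken over real $f:\{0,1\}^n\to\mathbb{R}$. It therefore suffices to show that for every such $f$ one has $\mathcal{E}_P(f,f) \geq \frac{\gamma}{2Ns}\,\mathrm{Var}_\pi(f)$.

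To this end, associate with each $f$ the vector $\ket{\phi} := \sum_x \sqrt{\pi(x)}\,f(x)\,\ket{x}$, and observe that $\braket{\phi}{\phi} = \E_\pi[f^2]$ and $\braket{\psi}{\phi} = \E_\pi[f]$, so the spectral gap $\gamma$ of $H$ immediately gives $\bra{\phi}(H-E_0)\ket{\phi} \geq \gamma\,\mathrm{Var}_\pi(f)$. The key step is then the identity
\begin{align}
\bra{\phi}(H-E_0)\ket{\phi} \;=\; \frac{1}{2}\sum_{x\neq y}|\bra{x}H\ket{y}|\,\sqrt{\pi(x)\pi(y)}\,(f(x)-f(y))^2.
\end{align}
I would derive this by expanding the left-hand side, using sign-freeness $\bra{x}H\ket{y}=-|\bra{x}H\ket{y}|$ for $x\neq y$, and then rewriting the diagonal piece $\sum_x \pi(x)f(x)^2(\bra{x}H\ket{x}-E_0)$ via the coordinate-wise ground-state equation $(\bra{x}H\ket{x}-E_0)\sqrt{\pi(x)} = \sum_{y\neq x}|\bra{x}H\ket{y}|\sqrt{\pi(y)}$; the diagonal terms then reassemble with the off-diagonal ones into the symmetric $(f(x)-f(y))^2$ form. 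This is the identity underlying the Markov-chain analysis in \cite{BravyiMarginalsSimulate}.

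Finally, I would compare the two quadratic forms edge by edge. The definition of the sensitivity \eqref{eq:sensitivity}, applied both to $(x,y)$ and to $(y,x)$ using Hermiticity of $H$, yields $|\bra{x}H\ket{y}|\,\sqrt{\pi(x)\pi(y)} \leq s\min(\pi(x),\pi(y))$; combined with the elementary bound $\min(a,b)\leq \tfrac{2ab}{a+b}$, this gives the pointwise inequality
\begin{align}
|\bra{x}H\ket{y}|\,\sqrt{\pi(x)\pi(y)} \;\leq\; 2s\cdot \frac{\pi(x)\pi(y)}{\pi(x)+\pi(y)}.
\end{align}
Since $H$ is $\kappa$-local, the sum is supported on pairs of Hamming distance $\leq\kappa$, i.e., on edges of the graph, so substituting this pointwise bound and recognizing the definition \eqref{eq:TransitionMatrix} of $P$ (which carries a factor $1/N$) produces $\bra{\phi}(H-E_0)\ket{\phi}\leq 2Ns\cdot \mathcal{E}_P(f,f)$. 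Chaining with the spectral-gap lower bound gives $\gamma\,\mathrm{Var}_\pi(f)\leq 2Ns\cdot \mathcal{E}_P(f,f)$, and hence $\tau\leq 2Ns/\gamma$. I expect the main subtlety to be the ``diagonal-reassembly'' step in the identity, since that is the one place where the ground-state property of $\ket{\psi}$ is genuinely used; once the identity is in hand, the remainder is essentially bookkeeping between the sensitivity bound and the harmonic-mean form of $P(x,y)$.
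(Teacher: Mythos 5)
Your proof is correct and follows the same Dirichlet-form comparison as the paper, which proves only the key edge-wise bound $|\psi(x)\psi(y)\bra{x}H\ket{y}| \leq 2Ns\,\pi(x)P(x,y)$ and defers the Markov-chain comparison machinery to \cite{BravyiMarginalsSimulate}. You derive the same edge-wise bound slightly differently (via $\min(\pi(x),\pi(y)) \leq 2\pi(x)\pi(y)/(\pi(x)+\pi(y))$ rather than the paper's direct computation of $N|\bra{x}H\ket{y}|(|\psi(x)|/|\psi(y)| + |\psi(y)|/|\psi(x)|)$), but these are algebraically equivalent, and your fleshed-out quadratic-form identity is exactly the content that the paper's citation supplies.
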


\begin{proof}
The statement can be obtained by minor changes to the proof of \cite{BravyiMarginalsSimulate}. 
Their proof relies on the fact that $|\psi(x)\psi(y) \bra{x}H\ket{y}| \leq 2Ns \pi(x) P(x,y)$.
This follows from $P(x, y) = \frac{1}{N} \cdot \frac{\pi(y)}{\pi(x) + \pi(y)}$, the definition of parameter $s$ in Equation \eqref{eq:sensitivity}, and the following inequality:
\begin{align}
    \frac{|\psi(x)\psi(y) \bra{x}H\ket{y}|}{\pi(x) P(x, y)} = N \cdot |\bra{x}H\ket{y}| \cdot \left(\frac{|\psi(x)|}{|\psi(y)|} + \frac{|\psi(y)|}{|\psi(x)|} \right) \leq 2Ns.
\end{align}
This concludes the proof.
\end{proof}

\section{GHZ state}\label{sec:GHZ}
The Greenberger-Horne-Zeilinger state $\ket{\mathrm{GHZ}} = \frac{1}{\sqrt2}(\ket{0}^{\otimes n}  + \ket{1}^{\otimes n})$ is a primary example of a multi-partite entangled state with application in quantum communication schemes to quantum error correction. 
Their unique structure also makes them an insightful instance for exploring the certification \protoref{verificationGeneral}.  

Following our previous framework of testing quantum states, 
suppose we are given identical copies of the state $\rho$ and intend to correctly determine with high probability whether~$\rho = \ketbra{\mathrm{GHZ}}{\mathrm{GHZ}}$ or if $\bra{\mathrm{GHZ}}\rho\ket{\mathrm{GHZ}} \leq 1-\epsilon$.

The GHZ state serves as an example of a state that a naive application of the certification \protoref{verificationGeneral} for any constant level $m$ is not capable of certifying it. 
Indeed, a simple inspection reveals that the overlaps $\bm{\omega_1}, \dots, \bm{\omega_T}$ reported by \protoref{verificationGeneral} remain unchanged if the GHZ state is replaced by a state with any relative phase $\ket{\mathrm{GHZ}} = \frac{1}{\sqrt2}(\ket{0}^{\otimes n}  + e^{i\phi} \ket{1}^{\otimes n})$. 

On a technical level, the failure of \protoref{verificationGeneral} is due to the fact that the $Z$-basis measurement distribution of the GHZ state has a disconnected support on vertices $0^n$ and $1^n$. 
This prevents the distribution of the random walk to mix properly to the measurement distribution, a prerequisite for our certification protocol to succeed.
More generally, given the level $ m \geq 1$ of \protoref{verificationGeneral}, consider a state of the form
\begin{align}
    \ket{\psi} = \sum_{k=1}^{\ell} \alpha_k \ket{\varphi_k} \text{\quad with\quad }\ket{\varphi_k} = \sum_{x \in s_k} \sqrt{p_k(x)}e^{i\phi(x)}\ket{x} \label{eq:disconnectedSectors}
\end{align}
for normalized coefficients $\alpha_k \in \mathbb{C}$ and distributions $p_k(x)$ supported on subsets $s_k \subset \{0, 1\}^n$ which have mutual Hamming distance at least $m+1$.
One can see that for any choice of complex coefficients $\alpha_k$, the observable $L$ in Equation \eqref{eq:TransitionMatrix} satisfies $L \ket{\psi} = \ket{\psi}$. 
This means level-$m$ of \protoref{verificationGeneral} measures an overlap $\bm{\omega} = 1$ for potentially orthogonal states $\ket{\psi}$ with varying choice of~$\alpha_k$.
As in the case of GHZ state, this is due to the fact that the measurement distribution $|\braket{x}{\psi}|^2$ has a disconnected support with $\ell$ sectors, breaking the \emph{irreducibility} of the random walk.
While the random walk may mix rapidly to a distribution $p_k(x)$ when starting from $x \in s_k$, this walk does not mix to the global measurement distribution $|\braket{x}{\psi}|^2$.

Assuming a bounded relaxation time $\tau$ for the random walk \eqref{eq:TransitionMatrix}, the certification \protoref{verificationGeneral} can still check each sector $\sum_{x \in s_k} \sqrt{p(x)}e^{i \varphi(x)} \ket{x}$ locally. 
This implies that we can certify that a state $\rho$ has high overlap with one of the states in the set $\{\sum_{k=1}^{\ell} \alpha_k \ket{\varphi_k}: \alpha_k \in \mathbb{C}\}$ in \eqref{eq:disconnectedSectors},  although we cannot specify which of the states. 

Going back to the example of the GHZ state, we now show that a simple change of basis allows us to certify this state with \protoref{verificationGeneral}.
Consider the rotated state $H^{\otimes n} \ket{\mathrm{GHZ}}$ where $H$ is the Hadamard operator. 
This state can be expressed in the standard basis by the uniform superposition over the even-Hamming weight $n$-bit strings:
\begin{align}
    H^{\otimes n} \ket{\mathrm{GHZ}} = \frac{1}{\sqrt{2^{n-1}}
}\sum_{x \in \mathrm{even} } \ket{x}. \label{eq:GHZinXBasis}
\end{align}
Consider the certification \protoref{verificationGeneral} with level $m = 2$. 
Here for each copy of $\rho$ either one or two qubits are chosen randomly and measured in the Pauli $X$, $Y$, or $Z$-basis at random. 
For the ease of analysis, we slightly modify this procedure and skip the case when only one qubit is chosen randomly.
This means at each round, \emph{exactly} two qubits are randomly selected and measured in a randomized Pauli basis. 

\begin{theorem}
    Consider the GHZ state expressed in the $X$-basis as in Equation \eqref{eq:GHZinXBasis} or more generally, its phase-shifted versions of the form $\frac{1}{\sqrt{2^{n-1}}}\sum_{x \in \mathrm{even} } e^{i \phi(x)}\ket{x}$ for arbitrary phases $\phi(x)\in \mathbb{R}$.
    The simplified \protoref{verificationGeneral} at level $m = 2$ (as described above) successfully verifies this state and has relaxation time $\tau = n/2$.
\end{theorem}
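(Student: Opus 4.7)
The plan is to reduce the certification guarantee to a spectral gap computation on a Cayley graph. First, I would observe that for the phase-shifted state $\ket{\psi} = \tfrac{1}{\sqrt{2^{n-1}}} \sum_{x \in E_n} e^{i\phi(x)}\ket{x}$, where $E_n \subset \{0,1\}^n$ denotes the set of even-Hamming-weight strings, the measurement distribution is $\pi(x) = 1/2^{n-1}$ uniformly on $E_n$ and $0$ elsewhere. Since every 2-bit flip preserves parity, all $\binom{n}{2}$ two-bit-flip neighbors of any $x \in E_n$ lie in $E_n$. In the simplified level-$2$ protocol exactly two qubits are chosen per round, so in the associated random walk $N = \binom{n}{2}$ and edges appear only at Hamming distance $2$. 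Because $\pi$ is uniform on its support, Eq.~\eqref{eq:TransitionMatrix} simplifies to a lazy walk on $E_n$ with $P(x,x) = 1/2$ and $P(x,y) = 1/(n(n-1))$ for $d(x,y) = 2$.

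Next, I would invoke \propref{performanceGuranteeLevelM} and \thmref{expectationEqualsOperatoL}: even in the presence of the phase $\phi(x)$, the expected local overlap satisfies $\E[\bm{\omega}] = \Tr(L\rho)$ where $L = F\cdot S^{1/2} P S^{-1/2}\cdot F^{\dagger}$ with $F = \sum_x e^{i\phi(x)}\ketbra{x}{x}$, so $L$ is cospectral with $P$ and obeys $L\ket{\psi} = \ket{\psi}$. The walk is irreducible on $E_n$ (any two even-weight strings differ by an even-weight vector, which is a sum of weight-$2$ vectors), so the eigenvalue $1$ is simple and $\ket{\psi}$ is the unique top eigenvector of $L$. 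Consequently, the relaxation time controlling the certification is $\tau = 1/(1 - \lambda_2(P))$.

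To compute $\lambda_2(P)$, I would Fourier-diagonalize on the abelian group $E_n$. Its dual group is identified with cosets $\{T,\,T + \mathbf{1}\} \in \mathbb{F}_2^n/\langle \mathbf{1} \rangle$, with characters $\chi_T(x) = (-1)^{T\cdot x}$ well-defined on $E_n$ because $\mathbf{1}\cdot x = 0$ for $x \in E_n$. These diagonalize $P$ and yield
\begin{equation}
    \mu_T = \frac{1}{2} + \frac{1}{n(n-1)}\sum_{\substack{s \in \mathbb{F}_2^n \\ |s|=2}}(-1)^{T\cdot s}.
\end{equation}
A short Krawtchouk-style computation (splitting the pairs $(i,j)$ by whether $T_i,T_j$ agree or disagree) gives $\sum_{|s|=2}(-1)^{T\cdot s} = \tfrac{1}{2}\bigl((n-2|T|)^2 - n\bigr)$, so
\begin{equation}
    \mu_T = \frac{n(n-2) + (n-2|T|)^2}{2n(n-1)}.
\end{equation}
The trivial character $T \in \{\mathbf{0},\mathbf{1}\}$ gives $\mu_T = 1$; among the nontrivial cosets with $1 \leq |T| \leq n-1$, the quantity $(n-2|T|)^2$ is maximized at $|T| \in \{1, n-1\}$, yielding $\lambda_2 = (n-2)/n = 1 - 2/n$ and hence $\tau = n/2$.

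The main technical hurdle is the Krawtchouk identity together with the careful identification of the dual of $E_n$ as an index-$2$ subgroup, which ensures that the top eigenvalue is simple despite the $Z$-basis support being only half of $\{0,1\}^n$. Once this is in hand, the result follows directly from \thmref{expectationEqualsOperatoL} applied at level $m = 2$.
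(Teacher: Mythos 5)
Your proof is correct and takes essentially the same route as the paper: both Fourier-diagonalize the lazy weight-$2$ walk and evaluate the Krawtchouk-type sum $\sum_{|s|=2}(-1)^{T\cdot s} = \tfrac{1}{2}\bigl((n-2|T|)^2 - n\bigr)$, arriving at the same eigenvalue formula (the paper phrases the sum as $\binom{n}{2} - 2|z|(n-|z|)$, which is algebraically identical) and the same gap $1 - \lambda_2 = 2/n$. The one place you are slightly more careful is in restricting to the index-two subgroup $E_n$ from the outset and identifying its dual group with $\mathbb{F}_2^n/\langle \mathbf{1}\rangle$, which directly gives irreducibility and simplicity of the top eigenvalue; the paper instead diagonalizes $P = \tfrac{1}{2}\iden + \tfrac{1}{2\binom{n}{2}}\sum_{i<j}X_iX_j$ on the full $\{0,1\}^n$ and observes afterward that the $z \leftrightarrow \bar z$ degeneracy reflects the two disjoint walks on the even and odd parity sectors.
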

\begin{proof}
   The measurement distribution of this state corresponds to the uniform distribution over even-Hamming weight vertices $x$ of the Boolean hypercube $\{0, 1\}^n$.  
   The simplified certification procedure measures the observable $L$ corresponding to the transition matrix of a lazy random walk on these vertices. 
   That is, starting with vertex $x$,  with probability $1/2$ the walk moves to a vertex $y$ that differs in exactly $2$ bits or otherwise with probability $1/2$, it remains at the same vertex $x$.
   
    Although our protocol is only concerned with the walk on the strings with even Hamming weight, these transition rules describe two disjoint walks on the set of even/odd weight strings.
   The relaxation time of this walk can be bounded and is given by $\tau =n/2$ using standard techniques such as those in  \cite{karlin1993markov}.
   Here, we generate a self-contained proof for completeness. 

    We claim that the transition matrix $P$ of the lazy walk on the vertices with distance $2$ can be written as
   \begin{align}
       P = \frac{1}{2} \iden + \frac{1}{2\binom{n}{2}}\sum_{i<j} X_i \otimes X_j \label{eq:Pexpansion}
   \end{align}
   for Pauli-$X$ matrices $X_i$ and $X_j$ acting on distinct qubits $i$ and $j$.
    This can be verified by noticing that $\bra{x} X_i \otimes X_j \ket{y}$ equals  $1$ when $x$ and $y$ differ only in the $i$ and $j$'th bits, and equals $0$ otherwise.
   
  Using \eqref{eq:Pexpansion}, we can find the spectrum of $P$.
  Each bit string $z\in\{0,1\}^n$ characterizes one of the eigenvalues, which is denoted by $\lambda_z$ and given by
   \begin{align}
       \lambda_{z} &=  \frac{1}{2} + \frac{1}{2\binom{n}{2}} \sum_{z':|z'|=2}(-1)^{z\cdot z'}\nn\\
       &=\frac{1}{2} + \frac{1}{2\binom{n}{2}} \left(\binom{n}{2} - 2|z|(n-|z|)\right).\label{eq:Kravchuk}
    \end{align}
   Using this expression, we see that the top two eigenvalues of $P$ are obtained by setting $z = 0^n$ to get $\lambda_z = 1$ and choosing any $z$ with $|z| = 1$ to get $\lambda_{z} = 1 - \frac{2}{n}$.
   An inspection of \eqref{eq:Kravchuk} shows that all the other eigenvalues for the remaining choices of strings $z$ with $|z| \leq \lfloor n/2 \rfloor$ are less than $1 - \frac{2}{n}$.
   The same set of eigenvalues are found for $|z| \geq \lceil n/2 \rceil$, reflecting the degeneracy of the spectrum due to the two disjoint walks on the sets of even and odd Hamming weight vertices.
   This establishes a relaxation time $= n/2$ for this walk. 
\end{proof}

The idea of changing to the $X$-basis can be more broadly applied to certify a state of the form
\begin{align}
   \ket{\psi} = \alpha_0 \ket{0}^{\otimes n} + \alpha_1 \ket{1}^{\otimes n}\nn
\end{align}
for complex coefficients $\alpha_0$ and $\alpha_1$ such that $|\alpha_0|^2 + |\alpha_1|^2 = 1$. 

In the rotated basis, we have 
\begin{align}
    H^{\otimes n}\ket{\psi} = \frac{\alpha_0 + \alpha_1}{\sqrt{2^{n}}
}\sum_{x \in \mathrm{even}} \ket{x} + \frac{\alpha_0 - \alpha_1}{\sqrt{2^{n}}
}\sum_{x \in \mathrm{odd}} \ket{x}.\label{eq:similarGHZ}
\end{align}
Use the path congestion method introduced in \appref{HaarSpectralGap}, one can loosely bound the relaxation time of this distribution with respect to the Markov chain in level $m=1$ of \protoref{verificationGeneral}.
\begin{theorem}
    \protoref{verificationGeneral} at level $m = 1$ verifies the $n$-qubit state specified in \eqref{eq:similarGHZ} with relaxation time $\mathcal{O}\left(\frac{\alpha^2_{\max}}{\alpha_{\min}}\cdot n^2\right)$ where $\alpha_{\max} = \max\{|\alpha_0 - \alpha_1|^2, |\alpha_0 + \alpha_1|^2\}$ and $\alpha_{min} = \min\{|\alpha_0 - \alpha_1|^2, |\alpha_0 + \alpha_1|^2\}$.
\end{theorem}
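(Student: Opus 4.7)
The plan is to apply the path congestion method of \appref{WarmupLooseAnalysis}. After the Hadamard rotation, the measurement distribution $\pi(x)=|\bra{x}H^{\otimes n}\ket{\psi}|^2$ takes only two values,
\begin{align*}
\pi(x) = \frac{|\alpha_0+\alpha_1|^2}{2^n} \text{\ when $|x|$ is even},\qquad \pi(x) = \frac{|\alpha_0-\alpha_1|^2}{2^n} \text{\ when $|x|$ is odd},
\end{align*}
and since $|\alpha_0|^2+|\alpha_1|^2=1$ one has $|\alpha_0+\alpha_1|^2+|\alpha_0-\alpha_1|^2=2$. In particular every vertex satisfies $\alpha_{\min}/2^n \leq \pi(x) \leq \alpha_{\max}/2^n$, so unlike the Haar-random setting of \appref{HaarSpectralGap} there are no pathological small-probability vertices to deal with, and a single choice of canonical paths will suffice rather than a full multi-commodity flow.

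Since every edge of the hypercube connects a vertex of even weight to one of odd weight, the edge weight from \eqref{eq:TransitionMatrix-Haar} takes the same value on every edge $e=(e^+,e^-)$,
\begin{align*}
Q(e) = \pi(e^+)P(e^+,e^-) = \frac{\pi(e^+)\pi(e^-)}{n\bigl(\pi(e^+)+\pi(e^-)\bigr)} = \frac{\alpha_{\min}\alpha_{\max}}{2n\cdot 2^n}.
\end{align*}
For the canonical paths I would use the standard bit-fixing rule: to go from $x$ to $y$, flip the disagreeing bits in order of increasing index. These paths have length $|\gamma_{xy}|\leq n$, and a short combinatorial argument (enumerate which prefix has already been matched to $y$ when $e$ is traversed) shows that at most $2^{n-1}$ canonical paths use any given edge $e$. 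Combining this with the crude bound $\pi(x)\pi(y)\leq \alpha_{\max}^2/4^n$, I obtain
\begin{align*}
\rho(\Gamma) \leq \max_{e}\frac{1}{Q(e)}\sum_{\gamma_{xy}\ni e}\pi(x)\pi(y)|\gamma_{xy}| \leq \frac{2n\cdot 2^n}{\alpha_{\min}\alpha_{\max}}\cdot 2^{n-1}\cdot \frac{\alpha_{\max}^2}{4^n}\cdot n = \frac{n^2\alpha_{\max}}{\alpha_{\min}}.
\end{align*}
Since $\alpha_{\max}+\alpha_{\min}=2$ with $\alpha_{\max}\geq\alpha_{\min}$ forces $\alpha_{\max}\geq 1$, one has $\alpha_{\max}/\alpha_{\min}\leq \alpha_{\max}^2/\alpha_{\min}$, which together with $\tau\leq\rho(\Gamma)$ yields the claimed bound.

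There is no substantive technical obstacle here: the bipartite parity structure of the hypercube makes the two-valued stationary distribution interact cleanly with the path congestion method, and the uniform lower bound $\pi(x)\geq\alpha_{\min}/2^n$ removes any need for local-escape arguments or flow splitting. The only real care is in computing $Q(e)$ correctly under parity-dependent probabilities; once that is done, the factors $\alpha_{\min}$ and $\alpha_{\max}$ simply appear as multiplicative corrections to the standard $O(n^2)$ canonical-path bound for the lazy walk on the hypercube.
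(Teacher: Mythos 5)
Your proof is correct and uses the same path-congestion method (Sinclair's bound) as the paper, which applies the canonical-path framework of \appref{WarmupLooseAnalysis} to the two-valued stationary distribution. Your version is in fact slightly cleaner and tighter: you note that there are no low-probability ``bad'' vertices here (so ordinary bit-fixing paths suffice without the routing machinery of \lemref{propertiesofCanonicalPaths}), and by evaluating $Q(e)=\alpha_{\min}\alpha_{\max}/(2n\cdot 2^n)$ exactly rather than bounding $1/Q(e)$ crudely you obtain $\rho(\Gamma)\leq n^2\,\alpha_{\max}/\alpha_{\min}$, improving the paper's $\mathcal{O}(n^2\,\alpha_{\max}^2/\alpha_{\min})$ by a factor of $\alpha_{\max}$.
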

\begin{proof}
We use the set of paths defined over $\{0, 1\}^n$ in \lemref{propertiesofCanonicalPaths} with length $|\gamma_{xy}|\leq n + 3$. 
The path congestion is maximized over all edges $e = (e^+, e^-)$ with 
\begin{align}
    \rho(\G) &= \max_{e \in E} \left(\frac{n}{\pi(e^+)} + \frac{n}{\pi(e^-)}\right) \sum_{\g_{xy}\ni e} \pi(x)\pi(y) |\gamma_{xy}| \nn\\
    &\leq \frac{2n\alpha^2_{\max}}{\alpha_{\min}} \max_{e\in E} \sum_{\g_{xy}\ni e} \frac{1}{2^{n-1}}\cdot |\gamma_{xy}| \nn\\
    & \leq \mathcal{O}(n^2) \cdot \frac{\alpha^2_{\max}}{\alpha_{\min}}.
\end{align}
The utility of this bound is when $\alpha_{\max}$ and $\alpha_{\min}$ are some constants away from $0$ or $1$. 
Otherwise, where the state \eqref{eq:similarGHZ} reduces to cases previously discussed. 
\end{proof}
   
\section{Learning quantum states}
\label{Learning by hypothesis selection}\label{sec:learningHypothesisselection}
\subsection{Learning by hypothesis selection}\label{sec:Learning by hypothesis selection}
The certification \protoref{verificationGeneral} can also be applied to \emph{learn} a query model $\Psi$ of an $n$-qubit quantum~$\ket{\psi}$.
Suppose we have $T$ identical copies of a state $\rho$ and a set of models $\{\Psi_1, \dots, \Psi_M\}$ representing quantum states $\{\ket{\psi_1},\dots, \ket{\psi_M}\}$ respectively.
Consider the level $m$ of \protoref{verificationGeneral}.
We assume the following two promises that 
\begin{enumerate}
    \item[(1)] There exists at least one $i \in [M]$ such that $\bra{\psi_i}\rho \ket{\psi_i} \geq 1 - \theta$ for some $0\leq \theta < 1$, and
    \item[(2)] The relaxation time of the random walk defined in Equation \eqref{eq:TransitionMatrix} is bounded by $\tau$ for some $\tau \leq \poly(n)$ for the models $\{\Psi_1,\dots,\Psi_M\}$.
\end{enumerate}
Assuming that at least one of the models satisfying Promise (1) exhibits the local escape property defined in \defref{localEscape},we can apply the procedure in \appref{enforcingMixing} to the set of models $\{ \Psi_1, \ldots, \Psi_M \}$. 
This ensures that the new models are fast mixing while Promise (1) continues to be satisfied for them as well.
This relaxes the second condition to:
\begin{enumerate}
    \item[(2')] The relaxation time of the random walk defined in Equation \eqref{eq:TransitionMatrix} is bounded by $\tau$ for at least one of the models that satisfies Promise (1) and the local escape property in \defref{localEscape}. 
\end{enumerate}

The learning algorithm by hypothesis selection is particularly useful when the parameterized model describing the state $\ket{\psi}$ admits a small covering net. 
Given a set $U$, we say a subset $\mathrm{cover}(U)\subseteq U$ forms an $\varepsilon$-covering net when 
\begin{align}
    \sup_{a \in U}\min_{ b \in \mathrm{cover}(U)} \norm{a - b} \leq \varepsilon.\nn
\end{align}
The $\varepsilon$-covering number is defined by the size of the smallest $\varepsilon$-cover:
\begin{align}
    M( U, \varepsilon, \norm{\cdot}) := \min \{|\mathrm{cover}(U)|: \sup_{a \in U}\min_{ b \in \mathrm{cover}(U)} \norm{a - b} \leq \varepsilon \}.\label{eq:coveringNetDef}
\end{align}

In the upcoming sections, we give two examples of relevant models of quantum states with specified covering numbers. 
This allows us to apply the learning algorithm in this section to learn a (coarse-grained) model of the state and by using the certification \protoref{verificationGeneral}.

\begin{theorem}[Sample-efficient learning]\label{thm:sampleEfficientLearning}    
Assuming Promises (1) and (2)  (or alternatively (2')) stated above, there is a learning algorithm that uses $T = 2^{2m-1}\cdot \frac{1}{\epsilon^2}\cdot \log\left(\frac{2M}{\delta}\right)$ many copies of the state $\rho$ and outputs a query model $\Psi$ for an $n$-qubit state $\ket{\psi}$ such that $\bra{\psi}\rho\ket{\psi} \geq 1 - \tau (\theta + 2\epsilon)$ with probability $1 - \delta$.
Similar to \protoref{verificationGeneral}, the qubits in each copy of $\rho$ are measured in the $Z$-basis except for $m$ randomly chosen qubits, which are measured in a random $X$, $Y$, or $Z$-basis.
\end{theorem}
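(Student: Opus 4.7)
The plan is to exploit a structural feature of \protoref{verificationGeneral}: the measurement data collected on the $T$ copies of $\rho$ -- the random subset of qubits, the random Pauli bases, and the outcomes -- depend only on $\rho$ and on internal randomness, not on which query model is being tested. A single run of the data-collection phase therefore simultaneously yields empirical shadow overlaps $\hat{\bm{\omega}}_i$ for \emph{every} candidate $\Psi_i$, $i\in[M]$: in each round we simply query all $M$ models to evaluate the corresponding local overlaps via Equation \eqref{eq:querystateProtocol}. The learning algorithm then outputs the model $\Psi_{\hat{i}}$ attaining the largest empirical shadow overlap, i.e.\ $\hat{i} \in \arg\max_{i\in[M]} \hat{\bm{\omega}}_i$.

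For the sample complexity, I would take the single-model concentration bound derived in the proof of \thmref{sampleComplexity}, namely $\Pr[|\hat{\bm{\omega}}_i - \E[\bm{\omega}_i]| > \epsilon] \leq 2 e^{-T\epsilon^2/2^{2m-1}}$ (using $|\bm{\omega}|\leq 2^{2m-1}$ together with Hoeffding's inequality), and union-bound over the $M$ hypotheses. Choosing $T = 2^{2m-1}\epsilon^{-2}\log(2M/\delta)$ forces $2M \cdot e^{-T\epsilon^2/2^{2m-1}} \leq \delta$, so with probability at least $1-\delta$ we have $|\hat{\bm{\omega}}_i - \E[\bm{\omega}_i]| \leq \epsilon$ for every $i\in[M]$ simultaneously.

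Condition on this uniform-estimation event. By Promise (1) there exists an index $i^*$ with $\bra{\psi_{i^*}}\rho\ket{\psi_{i^*}} \geq 1 - \theta$, so Equation \eqref{eq:nogapFidelity} of \thmref{expectationEqualsOperatoL} gives $\E[\bm{\omega}_{i^*}] \geq 1 - \theta$ and hence $\hat{\bm{\omega}}_{i^*} \geq 1 - \theta - \epsilon$. Maximality of $\hat{i}$ yields $\hat{\bm{\omega}}_{\hat{i}} \geq \hat{\bm{\omega}}_{i^*} \geq 1 - \theta - \epsilon$, and one more application of the concentration bound converts this into $\E[\bm{\omega}_{\hat{i}}] \geq 1 - \theta - 2\epsilon$. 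Applying Equation \eqref{eq:gapFidelity} to the selected model -- whose relaxation time is bounded by $\tau$ -- then furnishes the advertised bound $\bra{\psi_{\hat{i}}}\rho\ket{\psi_{\hat{i}}} \geq 1 - \tau(\theta + 2\epsilon)$.

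The main subtlety, rather than an outright obstacle, is in invoking the fast-mixing implication \eqref{eq:gapFidelity} for the \emph{selected} model $\Psi_{\hat{i}}$. Under Promise (2) every $\Psi_i$ has relaxation time at most $\tau$, so this is immediate. Under the weaker Promise (2'), only some model satisfying (1) is guaranteed to mix rapidly; here I would first push every candidate through the enforcement scheme of \appref{enforcingMixing} to obtain a new family $\{\Psi'_1,\dots,\Psi'_M\}$ in which every model has relaxation time $\mathcal{O}(n^2)$. By \thmref{enforcingLocalEscape}(b), any model already satisfying the local escape property is left unaltered by the enforcement, so Promise (1) is inherited by some $\Psi'_{i^*}$. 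Running the hypothesis-selection argument on the enforced family then delivers the claim.
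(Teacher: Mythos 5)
Your proposal is correct and follows essentially the same approach as the paper: collect measurement data once (independently of the candidate models), estimate all $M$ shadow overlaps via the same data, union-bound the concentration bound to get the $\log(2M/\delta)$ factor, select the argmax, and use \thmref{expectationEqualsOperatoL} in both directions together with (enforced) fast mixing to convert the empirical maximum into a fidelity guarantee. The only cosmetic note is that deducing $\E[\bm{\omega}_{\hat{i}}]\geq 1-\theta-2\epsilon$ from $\hat{\bm{\omega}}_{\hat{i}}\geq 1-\theta-\epsilon$ is not ``one more application'' of the concentration bound but simply a second use of the same uniform-estimation event you already conditioned on.
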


The learning algorithm in \thmref{sampleEfficientLearning} consists of the following steps:
\begin{enumerate}
    \item We start by performing the Pauli measurements outlined in \protoref{verificationGeneral} on each of the $T$ copies of the state $\rho$.
    \item After this data acquisition phase, if promise (2') is used, we apply the procedure in \appref{enforcingMixing} to enforce fast mixing in models $\{\Psi_1,\dots,\Psi_M\}$. 
    \item Steps 4. to 6. of \protoref{verificationGeneral} are performed on models $\{\Psi_1,\dots,\Psi_M\}$.
    \item Having obtained $M$ empirical overlaps $\hat{\bm{\omega}}_1, \dots, \hat{\bm{\omega}}_M$ associated with the $M$ models, we output the model with the largest value among $\{\hat{\bm{\omega}}_i: i\in [M]\}$. Denote this model by $\Psi$. Let $\ket{\psi}$ be the state expressed by $\Psi$ and  $\hat{\bm{\omega}}$ its measured shadow overlap.
    \item  We report $\bra{\psi}\rho\ket{\psi} \geq 1 -  \tau\cdot(1-\hat{\bm{\omega}} + \epsilon)$.
\end{enumerate}
\begin{proof}[Proof of \thmref{sampleEfficientLearning}]
    
    As before, define $\hat{\bm{\omega}}_i$ to be the estimated overlap for each of the $M$ models. 
   Fix the number of copies to be $T = 2^{2m-1}\cdot \frac{1}{\epsilon^2}\cdot \log\left(\frac{2M}{\delta}\right)$.
    It follows from the concentration bound \eqref{eq:concentrationEomega} and a union bound on $M$ applications of \protoref{verificationGeneral} that we have 
    $|\hat{\bm{\omega}}_i -\E(\bm{\omega}_i)|\leq  \epsilon$ with probability at least $ 1 - \delta$ for each model $\Psi_i$.

    We consider three scenarios for the type of models that the learning algorithm outputs.
   \begin{itemize}
       \item[i.] From the first Promise (1) and the fact that $\E[\bm{\omega}_i] \geq \bra{\psi_i}\rho\ket{\psi_i}$ for any model $\ket{\psi_i}$, we have that there exists an $i \in [M]$ such that $\E[\bm{\omega}_i] \geq 1 - \theta$ for that model. 
       The second Promise (2) (or (2')) ensures that this model $\Psi_i$ has a relaxation time upper bounded by $\tau$.
       Overall, the learning algorithm with probability $\geq 1 - \delta$ identifies at least one model $\Psi_i$ with the empirical overlap $\hat{\bm{\omega}}_i \geq 1 - (\theta + \epsilon)$.

    \item[ii.] Now we move on to the other models examined by the learning algorithm where $\E[\omega] < 1- \theta$.
    Each of these models is either fast mixing  (by satisfying the local escape property) or not. 
    In the former case, the augmentation by the procedure given in \appref{enforcingMixing} does not alter the model.
    The empirical overlap for these models with probability $\geq 1-\delta$ satisfies $\hat{\bm{\omega}} \leq 1-(\theta-\epsilon)$. 

    \item[iii.] Assuming Promise (2') is used, the models that do not exhibit the local escape property are modified through the procedure in \appref{enforcingMixing}.
    These models potentially become very different from the original model when queried.
    In particular, it could be that the expected overlap satisfies $\E[\bm{\omega}] \geq 1 - \theta$ after enforcing the fast mixing condition.
    \end{itemize}
    The learning algorithms obtain the model $\Psi$ that has the largest estimated overlap $\hat{\bm{\omega}}$ among the three cases i., ii., and iii. presented above.
    We see by inspecting these cases that with probability $\geq 1- \delta$, the largest estimated overlap is lower bounded by $\hat{\bm{\omega}} \geq 1 - (\theta + \epsilon)$, which considering the statistical error, implies that $\E[\bm{\omega}]\geq 1 - (\theta + 2\epsilon)$. 
    
    The model that attains this maximum could be from any of the cases i., ii., or iii., but since in each case we have (enforced) fast mixing, following the step 5. specified above, we can claim that the learning algorithm outputs a model with a certified fidelity $\bra{\psi}\rho\ket{\psi}\geq 1 - \tau\cdot(\theta + 2\epsilon)$ with probability $\geq 1 - \delta$.
\end{proof}

\subsection{Learning neural quantum states}\label{sec:learningNeuralStates}
Neural quantum states are a family of parameterized quantum states 
$$\ket{\psi_w} = \sum_{x \in \{0,1\}^n} \psi_w(x) \ket{x},$$
where the complex amplitudes $\psi_w(x)$ are given by the output of 
a neural network whose parameters are denoted by $w$.

The choice of the neural network architecture and how complex numbers $\psi_w(x)$ are represented by the network result in different ansatz of quantum states.
For instance, past works have studied the expressiveness of neural quantum stated based on restricted Boltzmann machines \cite{carleo2017solving}, convolutions neural networks \cite{Carleo2019CNN}, and recurrent neural networks \cite{Hibat2020RNN}. 
In some cases, the neural network includes complex weights and directly outputs the amplitudes. 
In other cases, networks with real weights $w$ are used for the real $\mathrm{Re}(\psi(x))$ and imaginary $\mathrm{Im}(\psi(x))$ parts,
or the magnitude $|\psi(x)|$ and phase $\phi(x)$ parts of $\psi(x) = |\psi(x)| e^{i \phi(x)}$.
In this section, for a clearer exposition, we focus on (real) feedforward neural networks which represents the magnitude and phase components of the amplitudes. Arguments similar to the one used in this section apply to other cases as well.

More formally, a feedforward neural network, consists of various layers each including multiple ``neurons''. The $j$'th neuron in the $i$'th layer is defined to be a function that maps the input $x\in \mathbb{R}^{n_{i-1}}$~to
$$x\mapsto\s(\braket{w_{ij}}{x}+b_{ij}).$$
This map consists of an affine part $\braket{w_{ij}}{x}+b_{ij}$ with some $\ket{w_{ij}} \in \mathbb{R}^{n_{i-1}}$ and $b_{ij}\in \mathbb{R}$,
and a non-linear part involving the \emph{activation function} $\s$ which is often chosen to be rectified linear unit (ReLU) $z\mapsto \max\{0,z\}$ or sigmoid $z\mapsto \frac{1}{1+e^{-z}}$; see \cite{GoodBengCour16} for an introduction to deep learning.

The feedforward neural net, overall, corresponds to the mapping 
\begin{align}
    f(x;w):=\s_L\left(w_L\s_{L-1}\left(\cdots w_2\s_1(w_1 x+b_1)+b_2\cdots \right)+b_L\right),\label{feedforwardNN}
\end{align}
where $w_i:=\sum_{j\in [n_i]} \ketbra{j}{w_{ij}}\in\mathbb{R}^{n_i \times n_{i-1}}$ and $b_i:=\sum_{j\in [n_i]} b_{ij}\cdot\ket{j}\in \mathbb{R}^{n_i}$. Each function $\s_i$ is now a coordination-wise activation function. 
We collectively denote these parameters by $$w=(w_1,b_1,\dots,w_L,b_L).$$
The number of layers $L$ is called the \emph{depth} and the maximum number of neurons across layers $W = \max_{i\in [L]} n_i$ is known as the \emph{width} of the neural network.

Given two feedforward neural networks $f(x; w_{\text{abs}})$ and $g(x; w_{\text{phase}})$, we define a neural quantum state $\sum_{x \in \{0,1\}^n} \psi_w(x) \ket{x}$ with amplitudes $\psi_w(x) = |\psi(x)| e^{i \phi(x)}$ and weights $w = (w_{\text{abs}},w_{\text{phase}})$ by assigning 
\begin{align}
    \log |\psi_w(x)| := f(x;w_{\text{mag}}) \text{\quad and \quad } \phi(x)  := g(x;w_{\text{phase}}).\label{eq:nerualQuantumStates}
\end{align}

Going back to the learning framework in \appref{learningHypothesisselection}, we assume that a quantum state $\rho$ admits an accurate neural net representation $\psi_w(x)$ for some weight parameters $w$.
Our goal is find the parameter $w$ given independent copies of the state $\rho$. 
In practice, this is often achieved by performing stochastic gradient descent or similar optimization subroutines,
where the objective function is e.g., the fidelity, log-likelihood, or KL divergence. 
We now show that the sample complexity of the learning algorithm in \appref{learningHypothesisselection} can be rigorously analyzed.  

This is achieved by defining a covering net over the set of feedforward neural networks such that any neural quantum state $\psi_w(x)$ is close to a point in the covering net.
The following lemma translates the approximation error in coarse-graining the set of neural nets to the corresponding error in the neural quantum state.

\begin{lemma}[Approximation error for pure states, cf. \cite{NNrepresentationTNSharir}]\label{lem:approximationNeuralStates}
    Given two quantum states
    \begin{align}
        \ket{\psi(x)} = \sum_{x\in \{0,1\}^n}|\psi(x)|e^{i \phi(x)} \cdot \ket{x} \text{\quad and \quad} \ket{\psi'(x)} = \sum_{x\in \{0,1\}^n}|\psi'(x)| e^{i \phi'(x)} \cdot \ket{x},
    \end{align}
    suppose that for any $x \in \{0,1\}^n$, it holds that $|\log|\psi(x)| - \log|\psi'(x)||\leq \varepsilon/2$ and $|\phi(x) - \phi(x')|\leq \varepsilon/2$.
    Then, we have 
    \begin{align}
        \frac{|\braket{\psi'}{\psi}|^2}{|\braket{\psi}{\psi}| \cdot |\braket{\psi'}{\psi'}|} \geq 1 - \varepsilon^2.
    \end{align}
 \end{lemma}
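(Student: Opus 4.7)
The plan is to factor the squared normalized overlap into a \emph{phase} contribution controlled by $|\phi-\phi'|\le \varepsilon/2$ and a \emph{magnitude} contribution controlled by $|\log|\psi|-\log|\psi'||\le \varepsilon/2$, then combine them multiplicatively rather than additively (additive combination via, say, a Bures-distance triangle inequality would overshoot the target $\varepsilon^2$).

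First I would drop to the real part, using $|\langle\psi',\psi\rangle|^2 \ge (\mathrm{Re}\,\langle\psi',\psi\rangle)^2$, and expand
\begin{align*}
\mathrm{Re}\,\langle\psi',\psi\rangle \;=\; \sum_{x}|\psi(x)||\psi'(x)|\cos\bigl(\phi(x)-\phi'(x)\bigr) \;\ge\; \cos(\varepsilon/2)\sum_{x}|\psi(x)||\psi'(x)|,
\end{align*}
where the bound is termwise, using non-negativity of $|\psi(x)||\psi'(x)|$ and the worst-case phase. This collects every bit of phase error into a single factor $\cos^{2}(\varepsilon/2)\ge 1-\varepsilon^{2}/4$.

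Next I would handle the magnitude ratio. Writing $|\psi(x)|=e^{\delta(x)}|\psi'(x)|$ with $|\delta(x)|\le \varepsilon/2$ and introducing the probability weights $p(x):=|\psi'(x)|^{2}/\|\psi'\|^{2}$, a direct computation gives the identity
\begin{align*}
\frac{\bigl(\sum_{x}|\psi(x)||\psi'(x)|\bigr)^{2}}{\|\psi\|^{2}\,\|\psi'\|^{2}} \;=\; \frac{\bigl(\mathbb{E}_{p}[e^{\delta}]\bigr)^{2}}{\mathbb{E}_{p}[e^{2\delta}]} \;=\; \frac{1}{1+\mathrm{Var}_{p}(e^{\delta})/(\mathbb{E}_{p}[e^{\delta}])^{2}}.
\end{align*}
This reduces the magnitude question to bounding the variance of a bounded random variable: since $e^{\delta(x)}\in[e^{-\varepsilon/2},e^{\varepsilon/2}]$, Popoviciu's inequality yields $\mathrm{Var}_{p}(e^{\delta})\le \sinh^{2}(\varepsilon/2)$, while $\mathbb{E}_{p}[e^{\delta}]\ge e^{-\varepsilon/2}$ by monotonicity. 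Thus the magnitude factor is at least $1/(1+e^{\varepsilon}\sinh^{2}(\varepsilon/2))$.

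Multiplying the two bounds yields
\begin{align*}
\frac{|\langle\psi',\psi\rangle|^{2}}{\|\psi\|^{2}\,\|\psi'\|^{2}} \;\ge\; \frac{\cos^{2}(\varepsilon/2)}{1+e^{\varepsilon}\sinh^{2}(\varepsilon/2)},
\end{align*}
and a routine Taylor expansion confirms this is $\ge 1-\varepsilon^{2}$ (the statement being trivial once $\varepsilon\ge 1$). The only real obstacle is arithmetic rather than conceptual: both error sources contribute at order $\varepsilon^{2}/4$, so it is essential to combine them through the variance-over-mean-squared identity above; a naive pointwise bound of the form $|\psi(x)|/|\psi'(x)|\in[e^{-\varepsilon/2},e^{\varepsilon/2}]$ followed by a worst-case ratio estimate introduces extra $e^{\pm\varepsilon}$ factors and degrades the bound to $1-O(\varepsilon)$.
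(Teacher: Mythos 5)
The paper states this lemma without proof, deferring to the cited reference, so your argument has to stand on its own — and it does. The decomposition into a phase factor $\cos^2(\varepsilon/2)$ (via passing to the real part and bounding termwise) and a magnitude factor $(\mathbb{E}_p[e^\delta])^2/\mathbb{E}_p[e^{2\delta}]$ is correct, and rewriting the latter as $1/(1+\mathrm{Var}_p(e^\delta)/(\mathbb{E}_p[e^\delta])^2)$ is precisely what keeps the two $\varepsilon^2/4$-sized error sources from stacking additively past $\varepsilon^2$. The Popoviciu bound $\mathrm{Var}_p(e^\delta)\le\sinh^2(\varepsilon/2)$ together with $\mathbb{E}_p[e^\delta]\ge e^{-\varepsilon/2}$ is the right pair of estimates; you might note that $e^\varepsilon\sinh^2(\varepsilon/2)=(e^\varepsilon-1)^2/4$, which tidies the final bookkeeping.

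One caveat: the closing step ``a routine Taylor expansion confirms this is $\ge 1-\varepsilon^2$'' is true but cuts it close, so it deserves to be made explicit rather than waved at. Using $\cos^2(\varepsilon/2)\ge 1-\varepsilon^2/4$ and the convexity bound $e^\varepsilon-1\le(e-1)\varepsilon$ on $[0,1]$, the inequality reduces, after cross-multiplying and collecting terms, to the numerical fact $\tfrac14+\tfrac{(e-1)^2}{4}\approx 0.988<1$. That margin is roughly one percent, so a reader who substitutes even a slightly looser elementary bound for either factor will fail to close the argument; displaying this last step protects the proof.
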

 
The covering number of the set of feedforward neural networks has been found before in \cite{bartlett2017spectrally} and depends on the spectral and $(2,1)$-norm of the weight matrices in the network. 
The covering number in general scales exponentially with the depth of the network $L$. 
Given a matrix $A$ with entries $A_{ij}$, its $(2,1)$-norm is defined by the $\sum_{j}(\sum_i |A_{ij}|^2)^{\frac{1}{2}}$

\begin{proposition}[Covering number for neural quantum states, cf. \cite{bartlett2017spectrally}]\label{prop:coveringNumberNN}
     Consider neural quantum states represented, in the sense of Equation \eqref{eq:nerualQuantumStates} by the set of feedforward neural networks with weight matrices that have bounded spectral norm $\norm{\cdot}_2$ and $(2,1)$-norm $\norm{\cdot}_{2,1}$ given by
    \begin{align}
   \mathcal{F} = \{ \s_L\left(w_L\s_{L-1}\left(\cdots w_2\s_1(w_1 x)\cdots \right)\right): \norm{w^\intercal_i}_2\leq s_i, \norm{w_i^\intercal}_{2,1} \leq r_i\},\label{feedforwardNN-set}
\end{align}

We assume a Boolean input data $x \in \{0,1\}^n$ and networks with depth $L$, $n_i$ neurons in layer $i\in[L]$, and width $W =\max_{i \in [L]} n_i$.
The activation functions $(\sigma_i)_{i=1}^L$ have Lipschitz constants $\rho_1,\dots,\rho_L$ and satisfy $\sigma_i(0) = 0$.
The covering number (defined in \eqref{eq:coveringNetDef}) of this set is bounded by
\begin{align}
    \log \left(M(\mathcal{F}, \varepsilon, \norm{\cdot}_F)\right) \leq \frac{n \log(2 W^2)}{\varepsilon^2} \cdot \prod_{j=1}^L \rho^2_j s^2_j \cdot \left(\sum_{i=1}^L \left(\frac{r_i}{s_i}\right)^{2/3}\right)^3.
\end{align}
\end{proposition}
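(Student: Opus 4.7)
The plan is to adapt the spectrally-normalized covering-number bound of Bartlett--Foster--Telgarsky \cite{bartlett2017spectrally} to the present neural-quantum-state setting. A feedforward network factors as $f = \sigma_L \circ W_L \circ \sigma_{L-1} \circ \cdots \circ \sigma_1 \circ W_1$, so I will (i) build a per-layer cover of each weight matrix $W_j$ viewed as an operator on the previous activations, and (ii) compose these covers while tracking how each layer's error propagates through the remaining layers via their Lipschitz constants. Because the input lies in $\{0,1\}^n$ I will use only the crude bound $\|x\|_2 \le \sqrt{n}$, which is the single place the ambient dimension $n$ enters the final estimate.

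\textbf{Per-layer matrix covering.} Suppose inductively that the input $z$ to layer $j$ lies in a Frobenius ball of radius $R_{j-1}$. The key ingredient is a Maurey-style sparsification lemma: for any matrix $A$ with $\|A^\top\|_{2,1}\le r_j$ and any $z$ with $\|z\|_2\le R_{j-1}$, the set $\{Az\}$ admits a Frobenius cover at scale $\alpha_j$ of cardinality at most $\exp\bigl(r_j^2 R_{j-1}^2 \log(2W^2)/\alpha_j^2\bigr)$. The proof writes each column of $A$ as a convex combination of at most $2W$ rank-one scaled basis vectors (the $(2,1)$-norm bounds the total $\ell^1$-mass of this decomposition), samples an i.i.d.\ sparsified approximation, and uses an empirical-process tail bound; the $\log(2W^2)$ counts the discrete alphabet of sparsified entries across at most $W$ input and $W$ output coordinates.

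\textbf{Composition and error propagation.} Since $\sigma_j(0)=0$ and $\sigma_j$ is $\rho_j$-Lipschitz, $\|\sigma_j(Wz)\|_2\le \rho_j s_j \|z\|_2$, so $R_{j-1} \le \sqrt{n}\prod_{k<j}(\rho_k s_k)$ by induction. A Frobenius perturbation of size $\alpha_j$ in the pre-activation output of layer $j$ becomes at most $c_j\alpha_j$ at the network output, where $c_j := \rho_j\prod_{k>j}(\rho_k s_k)$. Composing the $L$ per-layer covers yields a cover of $\mathcal{F}$ of size $\prod_j N_j$ with total error at most $\sum_j c_j \alpha_j$. Reparameterizing by $c_j\alpha_j = \varepsilon\beta_j$ with $\sum_j \beta_j=1$, and using the identity
\[
R_{j-1}^2 c_j^2 \;=\; n \rho_j^2 \prod_{k\neq j}(\rho_k s_k)^2 \;=\; \frac{n\prod_{k=1}^L \rho_k^2 s_k^2}{s_j^2},
\]
the total log-cover reduces to
\[
\log\prod_j N_j \;\le\; \frac{n\log(2W^2)}{\varepsilon^2}\,\Bigl(\prod_{k=1}^L \rho_k^2 s_k^2\Bigr)\,\sum_{j=1}^L \frac{(r_j/s_j)^2}{\beta_j^2}.
\]
Minimizing the inner sum over $\beta_j\ge 0$ with $\sum_j \beta_j=1$ via Lagrange multipliers gives the optimum $\beta_j \propto (r_j/s_j)^{2/3}$ with value $\bigl(\sum_j (r_j/s_j)^{2/3}\bigr)^3$, which is exactly the stated bound.

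The main obstacle is the per-layer Maurey sparsification at the correct normalization: one must verify that the $(2,1)$-norm is the right quantity to push through the probabilistic convex-combination argument (as opposed to the Frobenius norm), since this choice is precisely what drives the $2/3$ exponent in the final sum after the Lagrange step. Everything else --- the Lipschitz composition and the optimization --- is routine once the matrix cover of the first step is in hand.
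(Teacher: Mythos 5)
Your proposal is correct and faithfully reconstructs the proof of Theorem~3.3 in Bartlett--Foster--Telgarsky: the Maurey-type matrix covering lemma with the $(2,1)$-norm budget, the inductive Lipschitz composition with $R_{j-1}^2 c_j^2 = n\prod_k \rho_k^2 s_k^2 / s_j^2$, and the Lagrange optimization that yields the $\bigl(\sum_i (r_i/s_i)^{2/3}\bigr)^3$ term. The paper states this proposition only as a ``cf.'' citation to that reference without supplying its own derivation, so your argument is precisely the intended proof and there is no distinct in-paper route to compare against.
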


Roughly speaking, if the spectral norm of all weight matrices is bounded by $s$, this bound on the covering number of $\mathcal{F}$ scales with $\log( M(\mathcal{F}, \varepsilon, \norm{\cdot}_F))\leq \widetilde{O}\left(\frac{n L^3 W^3 s^{2L}}{\varepsilon^2}\right)$.
The exponential scaling with respect to the depth $L$ of the network is in contrast to the case of quantum circuits where the size of the covering net scales linearly with the size of the circuit. 

Our learning algorithm in \thmref{sampleEfficientLearning} relies on Promise (1) stated in \appref{Learning by hypothesis selection}. 
That is, we assume that one of the models obtained via casting the covering net in \propref{coveringNumberNN} has high overlap with the lab state~$\rho$. 
If this condition is satisfied for the set of neural quantum states prior to applying the covering net, then we claim that it also holds for the states in the covering net. 
To see this, consider the neural quantum state $\ket{\psi_{w^*}}$ such that $\bra{\psi_{w^*}}\rho \ket{\psi_{w^*}} \geq 1-\theta$. 
Let $\ket{\psi_{w_0}}$ denote the closest state in the covering net to $\ket{\psi_{w^*}}$.
We know from the approximation guarantee in \lemref{approximationNeuralStates}
that $|\braket{\psi_{w^*}}{\psi_{w_0}}|^2 \geq 1-\varepsilon^2$.
The following known relation between the trace distance and fidelity $$1-\bra{\psi}{\rho}\ket{\psi} \leq D_{\mathrm{tr}}(\rho, \ketbra{\psi}{\psi}) \leq \sqrt{1- \bra{\psi}\rho\ket{\psi}}$$ 
implies that $\bra{\psi_{w_0}}\rho \ket{\psi_{w_0}}\geq 1 - \varepsilon -\sqrt{\theta}$.

The learning algorithm in \thmref{sampleEfficientLearning} also requires that Promise (2) or (2') hold for the neural quantum states in the covering net $\{\Psi_1,\dots,\Psi_M\}$. 
While this can be taken as an assumption regarding this coarse-grained set of models, we may also derive Promise (2') assuming that it holds before casting the covering net. 
To this end, suppose that the state $\ket{\psi_{w^*}}$, as considered above, satisfies the local escape property with parameters $(\alpha, c'_u, c_u, c_{\ell})$ and therefore has a bounded relaxation time. 
We know from the definition of the covering net for neural quantum states that the amplitudes $|\psi_{w_0}(x)|$ of the closest sate in the covering net are within $\mathcal{O}(\varepsilon)$ multiplicative error of $|\psi_{w^*}(x)|$.
Hence, the transition probabilities in \eqref{eq:TransitionMatrix} are also changed up to an $(\mathcal{O}(\varepsilon))$ multiplicative error.
One can also directly verify that for a sufficiently small $\varepsilon$, the local escape property in \defref{localEscape} is satisfied up an $\mathcal{O(\varepsilon)}$ multiplicative error of the original constants $(c'_u, c_u, c_{\ell})$.
This means that if the relaxation time of the state $\ket{\psi_{w^*}}$ is bounded by $\tau \leq \mathcal{O}(n^2)$, the same holds for the state $\ket{\psi_{w_0}}$ in the covering net.

The following lemma summerizes these properties of the set of models $\{\Psi_1,\dots,\Psi_M\}$ obtained using \propref{coveringNumberNN}:

\begin{lemma}[Properties of  coarse-grained neural nets]\label{lem:promises for covering net}
Consider a lab state $\rho$ and a set of neural quantum states that satisfy Promises (1) and (2') in \appref{Learning by hypothesis selection}.
These neural states are represented with feedforward neural networks of depth $L$, width $W$, and weight matrices with bounded spectral norm $s$. 
It holds that the models constructed via the covering net in \propref{coveringNumberNN} for a sufficiently small constant $\varepsilon$ also satisfy Promises (1) and (2') with at least one state $\ket{\psi_{w_0}}$ in the covering net exhibiting $\bra{\psi_{w_0}}\rho \ket{\psi_{w_0}} \geq 1- \varepsilon - \sqrt{\theta}$ with a relaxation time $\tau\leq\poly(n)$.
\end{lemma}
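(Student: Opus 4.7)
The plan is to transfer both Promise (1) and Promise (2') from the original (continuous) family of neural quantum states to the discrete covering net $\{\Psi_1,\dots,\Psi_M\}$ produced by \propref{coveringNumberNN}, by showing that taking a sufficiently fine cover perturbs every relevant quantity---the fidelity with $\rho$, the transition probabilities $P(x,y)$ in \eqref{eq:TransitionMatrix}, and the numerical constants in the local escape property---only by a multiplicative factor $1\pm \mathcal{O}(\varepsilon)$.

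First I would establish Promise (1) for the cover. By hypothesis there exists a neural quantum state $\ket{\psi_{w^*}}$ with $\bra{\psi_{w^*}}\rho\ket{\psi_{w^*}}\geq 1-\theta$. Applying \propref{coveringNumberNN} separately to the magnitude network $f(\cdot;w_{\mathrm{mag}})$ and the phase network $g(\cdot;w_{\mathrm{phase}})$ yields a cover point $\ket{\psi_{w_0}}$ whose log-magnitudes and phases are within $\varepsilon/2$ of those of $\psi_{w^*}$ at every $x$. \lemref{approximationNeuralStates} then gives $|\braket{\psi_{w^*}}{\psi_{w_0}}|^2\geq 1-\varepsilon^2$, i.e.\ $D_{\mathrm{tr}}(\ketbra{\psi_{w^*}}{\psi_{w^*}},\ketbra{\psi_{w_0}}{\psi_{w_0}})\leq \varepsilon$. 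Combined with $D_{\mathrm{tr}}(\rho,\ketbra{\psi_{w^*}}{\psi_{w^*}})\leq \sqrt{\theta}$ from Fuchs--van de Graaf, the triangle inequality gives $D_{\mathrm{tr}}(\rho,\ketbra{\psi_{w_0}}{\psi_{w_0}})\leq \varepsilon+\sqrt{\theta}$, and the other side of Fuchs--van de Graaf then yields $\bra{\psi_{w_0}}\rho\ket{\psi_{w_0}}\geq 1-\varepsilon-\sqrt{\theta}$, which is the promised fidelity bound and in particular ensures that Promise (1) holds for the cover with parameter $\theta'=\varepsilon+\sqrt{\theta}$.

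Next I would transfer Promise (2') to the cover. Because $|\log|\psi_{w^*}(x)|-\log|\psi_{w_0}(x)||\leq \varepsilon/2$ for every $x\in\{0,1\}^n$, the magnitudes agree up to a factor $e^{\pm \varepsilon/2}$ and the measurement distributions $\pi_{w^*}(x)$ and $\pi_{w_0}(x)$ agree up to a factor $1\pm \mathcal{O}(\varepsilon)$. Substituting into \eqref{eq:TransitionMatrix} shows that the corresponding transition kernels satisfy $P_{w_0}(x,y)=(1\pm \mathcal{O}(\varepsilon))P_{w^*}(x,y)$ entry-wise. Now, if $\ket{\psi_{w^*}}$ satisfies \defref{localEscape} with parameters $(\alpha,c_u',c_u,c_\ell)$, then for $\varepsilon$ small enough the perturbed state still satisfies it with slightly relaxed constants $\bigl(\alpha,(1+\mathcal{O}(\varepsilon))c_u',(1+\mathcal{O}(\varepsilon))c_u,(1-\mathcal{O}(\varepsilon))c_\ell\bigr)$: the same set of good vertices and the same internally disjoint length-$\leq 5$ paths continue to witness conditions (0), (1), (2). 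Invoking \thmref{enforcingLocalEscape}(a) for $\ket{\psi_{w_0}}$ with these still-constant parameters bounds its relaxation time by $\tau\leq \mathcal{O}(n^2)\leq \poly(n)$, establishing Promise (2') for the cover.

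The main obstacle will be the discrete, threshold-based nature of \defref{localEscape}: the hard cutoffs ``$c_\ell\cdot 2^{-n}\leq \pi(x)\leq c_u\cdot 2^{-n}$'' and the fixed minimum fraction $\alpha n$ of good neighbors and good-path counts must all remain admissible after the multiplicative $e^{\pm \varepsilon}$ perturbation of $\pi$. The genuine care is therefore in (i) choosing $\varepsilon$ smaller than a fixed constant gap determined by $(c_\ell,c_u,c_u')$ so that no good vertex becomes bad and no good-vertex path becomes bad under the perturbation, and (ii) verifying that the constants appearing in the $\mathcal{O}(n^2)$ relaxation-time bound of \thmref{enforcingLocalEscape}(a) (ultimately coming from \thmref{multiCommodityFlowBound}) depend only on $(c_\ell,c_u,c_u',\alpha)$ and hence remain bounded uniformly in the cover granularity. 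Given these controls, the fidelity portion of the argument is essentially immediate from \lemref{approximationNeuralStates} and Fuchs--van de Graaf.
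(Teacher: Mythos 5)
Your proof reproduces the paper's argument step for step: Promise (1) is transferred via \lemref{approximationNeuralStates} followed by the Fuchs--van de Graaf inequalities to obtain $\bra{\psi_{w_0}}\rho\ket{\psi_{w_0}}\geq 1-\varepsilon-\sqrt{\theta}$, and Promise (2') is transferred by noting that the cover perturbs $\pi(x)$ and hence the transition probabilities by a multiplicative $1\pm\mathcal{O}(\varepsilon)$, leaving the local-escape constants and the $\mathcal{O}(n^2)$ relaxation-time bound of \thmref{enforcingLocalEscape}(a) intact for sufficiently small constant $\varepsilon$. The triangle inequality on trace distance and the $e^{\pm\varepsilon/2}$ amplitude bound you make explicit are exactly the steps the paper leaves implicit, so there is no substantive divergence.
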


An immediate corollary of \lemref{promises for covering net} and \thmref{sampleEfficientLearning} is the following sample complexity for learning neural networks.

\begin{corollary}[Sample complexity of learning neural quantum states] \label{cor:SampleComplexityNN}
Assume Promises (1) and  (2')  in \thmref{sampleEfficientLearning} hold for a family of neural quantum states represented with the set of feedforward neural networks of depth $L$, width $W$, and weight matrices with bounded spectral norm $s$. 
Then, for any sufficiently small constant $\varepsilon$, there is a learning algorithm that uses 
$$T = 2^{2m}\cdot \frac{1}{\epsilon^2}\cdot \left(\widetilde{O}\left(\frac{n L^3 W^3 s^{2L}}{\varepsilon^2}\right) + \log\left(\frac{2}{\delta}\right)\right)$$
many identical copies of the state $\rho$ and outputs a feedforward neural net obtained from the covering net in \propref{coveringNumberNN} with weights $w$ and a relaxation time $\tau \leq \poly(n)$  such that $\bra{\psi_w}\rho\ket{\psi_w} \geq 1 - \tau ( \sqrt{\theta} + \varepsilon + 2\epsilon)$.
\end{corollary}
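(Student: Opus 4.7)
The proof plan is to assemble \corref{SampleComplexityNN} from three ingredients already developed in the paper: the covering-number bound for the feedforward class, the transfer-of-Promises statement to the covering net, and the hypothesis-selection guarantee. First, I would apply \propref{coveringNumberNN} to the ambient family of feedforward networks (of depth $L$, width $W$, and spectral norm bounded by $s$) to obtain, for a sufficiently small accuracy $\varepsilon > 0$, a finite covering net of size
\[
\log M \leq \widetilde{O}\!\left(\frac{n L^3 W^3 s^{2L}}{\varepsilon^2}\right).
\]
Each element of this net defines a neural quantum state $\ket{\psi_{w_i}}$ via the construction \eqref{eq:nerualQuantumStates}, yielding a discrete family of query models $\{\Psi_1, \dots, \Psi_M\}$.

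Next, I would invoke \lemref{promises for covering net} to certify that this discrete family inherits the two hypotheses required by the hypothesis-selection theorem. Specifically, by that lemma, Promise (1) is preserved in the sense that there exists at least one index $i_0 \in [M]$ with $\bra{\psi_{w_{i_0}}}\rho\ket{\psi_{w_{i_0}}} \geq 1 - (\varepsilon + \sqrt{\theta})$, obtained from the true high-fidelity network by snapping to the nearest cover element and using the $\ell^2$–fidelity conversion $1 - \bra{\psi}\rho\ket{\psi} \leq \sqrt{1 - \bra{\psi}\rho\ket{\psi}}$ together with \lemref{approximationNeuralStates}. Likewise, Promise (2') holds for this $\Psi_{i_0}$: since the multiplicative perturbation of the amplitudes induced by the covering is $\mathcal{O}(\varepsilon)$, the transition probabilities \eqref{eq:TransitionMatrix} and the parameters in \defref{localEscape} are only perturbed multiplicatively, so the local escape property and thus the relaxation time $\tau \leq \poly(n)$ are retained.

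With Promises (1) (now with degradation $\varepsilon + \sqrt{\theta}$ in place of $\theta$) and (2') verified for the discrete family, I would then directly apply \thmref{sampleEfficientLearning}. Setting its sample count
\[
T \;=\; 2^{2m-1}\cdot \frac{1}{\epsilon^2}\cdot \log\!\left(\frac{2M}{\delta}\right) \;=\; 2^{2m-1}\cdot \frac{1}{\epsilon^2}\cdot\left(\log M + \log\!\left(\frac{2}{\delta}\right)\right)
\]
and substituting the covering-number bound produces the advertised complexity (with the factor $2^{2m-1}$ absorbed into $2^{2m}$). The theorem then outputs a model $\Psi$ from the cover whose fidelity satisfies
\[
\bra{\psi_w}\rho\ket{\psi_w} \;\geq\; 1 - \tau\bigl((\varepsilon + \sqrt{\theta}) + 2\epsilon\bigr) \;=\; 1 - \tau(\sqrt{\theta} + \varepsilon + 2\epsilon),
\]
which is the claimed bound.

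The only genuinely nontrivial step is \lemref{promises for covering net}, which I am using as a black box. The delicate part there — and the natural place the argument can fail — is confirming that Promise (2') survives coarse-graining: one must check that the local escape property is \emph{quantitatively} stable under a multiplicative $\mathcal{O}(\varepsilon)$ perturbation of the stationary weights $\pi(x)$, so that the thresholds $(c_\ell, c_u, c'_u)$ can be absorbed into slightly looser constants. Everything else reduces to bookkeeping: plugging \propref{coveringNumberNN} into \thmref{sampleEfficientLearning} and keeping track of how the fidelity error $\theta$ compounds under the fidelity–infidelity conversion inequality.
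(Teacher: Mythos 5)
Your proposal is correct and follows essentially the same route as the paper: build the covering net via \propref{coveringNumberNN}, transfer Promises (1) and (2') to the cover using \lemref{promises for covering net}, and then invoke \thmref{sampleEfficientLearning} with $M$ equal to the covering number. The one small slip is your written ``$\ell^2$--fidelity conversion'' $1 - \bra{\psi}\rho\ket{\psi} \leq \sqrt{1 - \bra{\psi}\rho\ket{\psi}}$, which is trivially true but not the inequality actually used; the paper uses the Fuchs--van~de~Graaf relations $1-\bra{\psi}\rho\ket{\psi} \leq D_{\mathrm{tr}}(\rho,\ketbra{\psi}{\psi}) \leq \sqrt{1-\bra{\psi}\rho\ket{\psi}}$ together with the triangle inequality for trace distance to convert $|\braket{\psi_{w^*}}{\psi_{w_0}}|^2 \geq 1-\varepsilon^2$ and $\bra{\psi_{w^*}}\rho\ket{\psi_{w^*}} \geq 1-\theta$ into $\bra{\psi_{w_0}}\rho\ket{\psi_{w_0}} \geq 1-\varepsilon-\sqrt{\theta}$, but since you treat \lemref{promises for covering net} as a black box this does not create a gap.
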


\subsection{Learning gapped Hamiltonians}\label{sec:LearningNoisyGS}
An illustrative application of the setup of \appref{Learning by hypothesis selection} is to learn a gapped local Hamiltonian $H$ whose ground state has high overlap with some lab state $\rho$.
More concretely, suppose we have a family of $\k$-local Hamiltonians $$H(a) = \sum_{j=1}^{E} h_j(a)$$ consisting of $E$ local terms which are specified by a set of parameters $a \in [-1, 1]^L$ for an integer $L\geq 1$. 
Moreover, the local terms satisfy $\norm{h_j}_{\infty} \leq 1 $ and $\norm{\frac{\partial h_j}{\partial a_k}(a)}_{\infty}\leq c$ for $j\in [E]$, $k \in [L]$, and $a \in [-1, 1]^L$.
These Hamiltonians are gapped and have a unique ground state for any choice of parameters~$a \in [-1, 1]^L$. 
We define $\gamma$ to be a lower bound on the gap of $H(a)$ for ~$a \in [-1, 1]^L$. 

We assume that for one or more instantiations of parameters $a \in [-1,1]^L$, the corresponding ground states have high overlap with the lab state $\rho$.
Our goal is to approximately infer one of these systems by learning their corresponding parameters.

We start with \propref{numberModelsGappedHamiltonians} which shows how to obtain a set of models  $\{\Psi_1, \dots, \Psi_M\}$ for which Promise (1) in \thmref{sampleEfficientLearning} is satisfied.
This entails setting up a discretization of the set of parameters $a \in [-1,1]^L$ by casting a covering net.

Consider a covering net on parameters $a$ given by 
\begin{align}
    \operatorname{Cover}(a): =\{-1 + j \frac{2}{N_c}: j \in \{0,\dots,N_c\}\}^{\times L}.\label{eq:covering set ground states}
\end{align}
It holds that for any parameter $b \in [-1,1]^L$, we can find a point $b' \in \operatorname{Cover}(a)$ in the covering net such that $|b_k - b'_k| \leq \frac{1}{N_c}$ for~$k \in [L]$.

The covering net $\operatorname{Cover}(a)$ naturally induces a covering net on the set of ground state $\ket{\psi(a)}$ of Hamiltonian $H(a)$.
We choose the set of models $\{\Psi_1,\dots,\Psi_M\}$ to be the models representing these coarse-grained ground states $\{\ket{\psi(a)}: a\in \operatorname{Cover}(a)\}$.
The following proposition establishes a general bound on the number of models, $M = N_c^L$, needed for guaranteeing Promise (1) in \thmref{sampleEfficientLearning}. 
Here we assume that level $m = \k$ \protoref{verificationGeneral} is performed for the $\k$-local Hamiltonian~$H(a)$.
 
\begin{proposition}[Covering number of gapped ground states]\label{prop:numberModelsGappedHamiltonians}
Consider the family of gapped Hamiltonians~$H(a) = \sum_{j=1}^{E} h_j(a)$ specified above.
Let the size of the covering net \eqref{eq:covering set ground states} be $N_c = \mathcal{O}\left(\frac{L E}{\gamma \varepsilon}\right)$. 
This induces a covering net over the set of ground states $\ket{\psi(a)}$ of these Hamiltonians such that for any $b \in [-1,1]^L$, there is a parameter $b\in \operatorname{Cover}(a)$ for which $\mathrm{D_{tr}}\left(\ketbra{\psi(b)}{\psi(b)}, \ketbra{\psi(b')}{\psi(b')} \right) \leq \varepsilon$.
\end{proposition}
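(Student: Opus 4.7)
The plan is to reduce the problem to a standard perturbation-theoretic statement for gapped Hermitian operators. Given any $b \in [-1,1]^L$, I will first pick its nearest point $b' \in \operatorname{Cover}(a)$ in the $\ell_\infty$-sense, so that $|b_k - b'_k| \leq 1/N_c$ for every $k \in [L]$. The strategy is then to (i) upper bound the operator-norm difference $\|H(b) - H(b')\|_\infty$ using the Lipschitz hypothesis on each local term $h_j$, and (ii) convert this operator-norm bound into a bound on the trace distance between the unique ground states via the Davis--Kahan sine theorem.

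For step (i), the Lipschitz assumption $\|\partial h_j / \partial a_k (a)\|_\infty \leq c$ combined with the fundamental theorem of calculus along the segment between $b$ and $b'$ yields
\begin{equation}
    \|h_j(b) - h_j(b')\|_\infty \;\leq\; c \sum_{k=1}^{L} |b_k - b'_k| \;\leq\; \frac{cL}{N_c}.
\end{equation}
Summing over the $E$ local terms gives $\|H(b) - H(b')\|_\infty \leq c E L / N_c$. For step (ii), since $H(b)$ has a unique ground state separated from the rest of the spectrum by a gap at least $\gamma$, the Davis--Kahan $\sin\Theta$ theorem (applied to the one-dimensional ground space) gives
\begin{equation}
    \sqrt{1 - |\langle \psi(b) | \psi(b') \rangle|^2} \;\leq\; \frac{\|H(b) - H(b')\|_\infty}{\gamma}.
\end{equation}
Because the trace distance between pure states equals this sine, we obtain $\mathrm{D_{tr}}(|\psi(b)\rangle\langle\psi(b)|,|\psi(b')\rangle\langle\psi(b')|) \leq c E L / (\gamma N_c)$. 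Choosing $N_c = \lceil c E L / (\gamma \varepsilon) \rceil = \mathcal{O}(EL/(\gamma \varepsilon))$ (absorbing $c$ into the $\mathcal{O}$) ensures this is at most $\varepsilon$, as claimed.

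The only delicate point is making sure the Davis--Kahan bound applies to $H(b')$ as well as to $H(b)$: strictly, the theorem needs a gap around the eigenvalue of interest for \emph{one} of the two Hamiltonians, and once $\|H(b) - H(b')\|_\infty$ is small compared to $\gamma$ this is automatic by Weyl's inequality on eigenvalue perturbations. In the regime $N_c$ large enough that $cEL/N_c < \gamma/2$, the gap of $H(b')$ remains at least $\gamma/2$, so the one-dimensional ground projector is well-defined and the sine bound above is valid; this is the only nontrivial check, and it is automatically satisfied whenever $\varepsilon < 1/2$, which is the relevant regime for the statement.
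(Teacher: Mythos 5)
Your proof is correct, but it takes a genuinely different route from the paper. The paper's proof proceeds via quasi-adiabatic (Hastings) evolution: it constructs the path $b(s) = sb + (1-s)b'$, invokes the identity $\partial_s \psi(s) = i[\operatorname{D}_s(s), \psi(s)]$ with the filtered generator $\operatorname{D}_s$, and bounds $\|\operatorname{D}_s\|_\infty$ by $\frac{K}{\gamma}\|\partial_s H\|_\infty \leq \frac{cKLE}{\gamma N_c}$, yielding the trace-distance bound after integrating along the path. You instead bound $\|H(b)-H(b')\|_\infty$ directly and convert this into a ground-state bound via the Davis--Kahan $\sin\Theta$ theorem. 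Both routes give the same scaling $\mathrm{D_{tr}} = \mathcal{O}(cEL/(\gamma N_c))$. Your argument is more elementary: it uses only finite-dimensional spectral perturbation theory and avoids the machinery of quasi-adiabatic continuation (the filter function $W_\gamma$, the Lieb--Robinson technology that motivates it, etc.), which is overkill here since the bounds sought are global operator-norm bounds rather than local ones. The paper's route, by contrast, is the natural one if one later wants to extend the argument to settings where locality matters (e.g., thermodynamic limits or local approximations of $\operatorname{D}_s$), but for this proposition as stated your approach is cleaner.

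One small simplification you could make: your "delicate point" about using Weyl's inequality to guarantee a gap for $H(b')$ is not actually needed, because the paper's hypotheses already assume a uniform gap $\geq \gamma$ for \emph{every} $a \in [-1,1]^L$, and $b' \in \operatorname{Cover}(a) \subset [-1,1]^L$. So Davis--Kahan applies directly without the Weyl step; it is a harmless extra safety net, not an essential ingredient.
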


\begin{proof}
Denote the ground state projector by $\psi(a) := \ketbra{\psi(a)}{\psi(a)}$.
We show that for any $b \in [-1,1]^L$, there exists a parameter $b' \in \operatorname{Cover}(a)$ such that $\mathrm{D_{tr}}(\psi(b), \psi(b'))$ is sufficiently small.
The choice of parameter $b'$ is simply the closest point in $\operatorname{Cover}(a)$ to $b$.
The proof relies on the notion of quasi-adiabatic evolution which gives us the right framework for bounding the change in the ground state due to small variations in the Hamiltonian parameters. 

Define a path $b(s):= s b + (1-s) b'$ for $s\in[0,1]$ from parameter $b$ to $b'$.
It is shown in \cite{Hastings2005Quasiadiabatic1, bachmann2012automorphic} that for any point $s \in [0, 1]$, the ground state projector $\psi(s) = \ketbra{\psi(s)}{\psi(s)}$ satisfies $\frac{\partial  \psi}{\partial s}(s) = i[\operatorname{D}_s(s), \psi(s)]$, where the operators $\operatorname{D}_s(s)$ is defined by
    \begin{align}
        \operatorname{D}_s (s):= \int_{-\infty}^{\infty} W_\gamma(t) e^{i t H(s)} \frac{\partial H}{\partial s}(s) e^{-i t H(s)} dt.
    \end{align}
Here $|W_{\gamma}(t)|$ is continuous and monotone decreasing for $t\geq 0$ with $\sup_{t}|W_{\gamma}(t)| = W_{\gamma}(0) = 1/2$.
It further holds that $\int_{-\infty}^{\infty} |W_{\gamma}(t)| d t \leq K/\gamma$ for a known constant $K$ and the spectral gap $\gamma > 0$.
From this, we have that     
    \begin{align}
        \norm{\psi(1) - \psi(0)}_1 &= \NORM{\int_{0}^1 \frac{d \psi}{d s}(s) \ d s}_1\nn\\
        &= \NORM{\int_{0}^1 [\operatorname{D}_s(s), \psi(s)] \ d s}_1\nn\\
        &\leq 2\max_{s\in [0,1]} \norm{\operatorname{D}_s(s)}_{\infty}.
    \end{align}
For any $s \in [0, 1]$, it holds that 
    \begin{align}
        \norm{\operatorname{D}_s (s)}_{\infty} &\leq \int_{-\infty}^{\infty} |W_\gamma(t)| \cdot \NORM{e^{i t H(s)} \frac{\partial H}{\partial s}(s) \ e^{-i t H(s)}}_{\infty} dt \nn\\
        &\leq  \int_{-\infty}^{\infty} |W_\gamma(t)|\cdot \NORM{  \frac{\partial H}{\partial s}(s) }_{\infty} dt \nn\\
        &\leq  \frac{K}{\gamma}\cdot \max_{s\in[0,1]} \NORM{  \frac{\partial H}{\partial s}(s) }_{\infty} && \text{from\ } \int_{-\infty}^{\infty} |W_\gamma(t)| d t \leq K/\gamma\nn\\
        &\leq  \frac{K}{\gamma}\cdot \sum_{j=1}^{E}\max_{s\in[0,1]}\NORM{\frac{\partial h_j}{\partial s}(s) }_{\infty} \nn\\
        &\leq  \frac{K}{\gamma}\cdot \sum_{j=1}^{E} \sum_{k=1}^L |b_k - b'_k|\cdot \max_{s\in[0,1]}\NORM{\frac{\partial h_j}{\partial a_k}(b(s))}_{\infty} && \text{assumption \ } \NORM{\frac{\partial h_j}{\partial a_k}}\leq c\nn\\
        &\leq cK L\frac{E}{\gamma N_c}.\label{eq:boundCoveringGroundState}
    \end{align}
By defining a constant $c' = cK$ and choosing $N_c = \frac{2c' L E}{\gamma \varepsilon}$ in \eqref{eq:boundCoveringGroundState}, we see that there exists a parameter $b'$ in the covering net such that $\mathrm{D_{tr}}\left(\psi(b), \psi(b') \right) \leq \varepsilon$. 
\end{proof}

Suppose that there exists a set of parameters $b \in [-1, 1]^L$ such that the ground state of $H(b)$, denoted by $\ket{\psi(b)}$, satisfies $\bra{\psi(b)}\rho\ket{\psi(b)}\geq 1- \theta$.
Using \propref{numberModelsGappedHamiltonians} we have that when $N_c = \mathcal{O}\left(\frac{L E}{\gamma \varepsilon}\right)$, there exists $b' \in \operatorname{Cover}(a)$ such that $\bra{\psi(b')}\rho \ket{\psi(b')} \geq 1- \varepsilon-\sqrt{\theta}$.
This can be applied to fulfill Promise~(1) of \thmref{sampleEfficientLearning}.
Moreover, we assume that such Hamiltonians satisfy Promise (2) in \appref{Learning by hypothesis selection}.
Hence, the transition matrices \eqref{eq:TransitionMatrix} associated with the ground states of these Hamiltonians exhibit a relaxation time $\tau \leq \poly(n)$.
This, for instance, holds in the setting of \appref{gappedGroundStates} when Hamiltonians $H(a)$ are further assumed to be stoquastic.

Under these conditions, we can apply \propref{numberModelsGappedHamiltonians} to obtain a discretized set of Hamiltonians such that their ground states satisfy Promises (1) and (2) needed in the learning algorithm in \thmref{sampleEfficientLearning}.
In summary, we get the following corollary of \propref{numberModelsGappedHamiltonians} and \thmref{sampleEfficientLearning}.

\begin{corollary}[Sample complexity of Hamiltonian learning]\label{cor:sampleComplexityGS}  
Consider a lab state $\rho$ and a family of $\k$-local Hamiltonians $H(a) = \sum_{j=1}^{E} h_j(a)$ for $a\in[-1,1]^{L}$ with a spectral gap $\geq \gamma$.
Suppose the ground states of these Hamiltonians fulfill Promises (1) and (2) in \appref{Learning by hypothesis selection} with a relaxation time $\tau \leq \poly(n)$ and at least one ground state $\ket{\psi(b)}$ such that $\bra{\psi(b)}\rho\ket{\psi(b)} \geq 1- \theta$.
Then, there is a learning algorithm that uses 
$$T = 2^{2\k}\cdot \frac{1}{\epsilon^2}\cdot \left(L \log\left(\frac{c' L E }{\gamma \varepsilon}\right) + \log\left(\frac{2}{\delta}\right)\right)$$
copies of the lab state $\rho$ and outputs a Hamiltonian parameter $b'\in \operatorname{Cover}(a)$ as in \propref{numberModelsGappedHamiltonians}, such that its ground state $\ket{\psi(b)}$ satisfies $\bra{\psi(b)}\rho\ket{\psi(b)} \geq 1 - \tau (\sqrt{\theta} + \varepsilon + 2\epsilon)$.
\end{corollary}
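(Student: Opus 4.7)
The plan is to assemble this corollary directly from the covering-net construction in \propref{numberModelsGappedHamiltonians} together with the hypothesis-selection guarantee in \thmref{sampleEfficientLearning}, so the proof is essentially a bookkeeping step that tracks how errors compose. First I would invoke \propref{numberModelsGappedHamiltonians} with discretization parameter $N_c = \mathcal{O}(LE/(\gamma \varepsilon))$, producing the finite set $\operatorname{Cover}(a) \subset [-1,1]^L$ of cardinality $M = N_c^L$, and form the associated collection of models $\{\Psi_{b'} : b' \in \operatorname{Cover}(a)\}$, each model giving query access to the ground state of $H(b')$. By construction, for every parameter $b \in [-1,1]^L$ there exists a lattice point $b' \in \operatorname{Cover}(a)$ with $\mathrm{D_{tr}}(\ketbra{\psi(b)}{\psi(b)}, \ketbra{\psi(b')}{\psi(b')}) \leq \varepsilon$.

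Next I would verify that the hypotheses of \thmref{sampleEfficientLearning} hold for this coarse-grained family. For Promise~(1), take $b$ to be the parameter guaranteed in the corollary's hypothesis, so that $\bra{\psi(b)}\rho\ket{\psi(b)} \geq 1 - \theta$, and let $b'$ be the associated cover point. The standard inequality $1 - \bra{\varphi}\sigma\ket{\varphi} \leq \mathrm{D_{tr}}(\sigma, \ketbra{\varphi}{\varphi}) \leq \sqrt{1-\bra{\varphi}\sigma\ket{\varphi}}$ combined with the triangle inequality yields
\begin{equation}
\bra{\psi(b')}\rho\ket{\psi(b')} \;\geq\; 1 - \varepsilon - \sqrt{\theta}, \nn
\end{equation}
so $\ket{\psi(b')}$ satisfies Promise (1) with the parameter $\theta' = \sqrt{\theta} + \varepsilon$. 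Promise (2) holds by assumption: the relaxation times of the chain \eqref{eq:TransitionMatrix} for every ground state in the family (and hence for those in the cover) are bounded by $\tau \leq \poly(n)$.

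Finally I would apply \thmref{sampleEfficientLearning} at level $m = \k$ with this choice of $M$ and error tolerance $\epsilon$, which requires
\begin{equation}
T \;=\; 2^{2\k - 1} \cdot \frac{1}{\epsilon^2} \cdot \log\!\left(\tfrac{2M}{\delta}\right) \;=\; \mathcal{O}\!\left( 2^{2\k} \cdot \frac{1}{\epsilon^2} \cdot \left(L \log\!\tfrac{c'LE}{\gamma \varepsilon} + \log\!\tfrac{2}{\delta}\right)\right) \nn
\end{equation}
copies of $\rho$, absorbing the $\log N_c = \log \mathcal{O}(LE/(\gamma \varepsilon))$ factor and the exponent $L$ into the stated bound. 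The theorem then returns some $b' \in \operatorname{Cover}(a)$ whose ground state fidelity satisfies $\bra{\psi(b')}\rho\ket{\psi(b')} \geq 1 - \tau(\theta' + 2\epsilon) = 1 - \tau(\sqrt{\theta} + \varepsilon + 2\epsilon)$, which is exactly the claimed guarantee.

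There is no substantive obstacle here: the real work is already contained in the quasi-adiabatic continuation argument behind \propref{numberModelsGappedHamiltonians} and in the concentration/union-bound analysis behind \thmref{sampleEfficientLearning}. The only point requiring mild care is the conversion from trace-distance closeness of ground states (supplied by the covering net) to fidelity closeness against the lab state $\rho$ (required as Promise (1)); I would handle this with the Fuchs--van de Graaf inequality as above, ensuring that the $\sqrt{\theta}$ term appears in the final bound rather than $\theta$.
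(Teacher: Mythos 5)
Your proof is correct and matches the paper's approach exactly: the paper proves this corollary precisely by combining the covering-net bound of \propref{numberModelsGappedHamiltonians} (to get $M = N_c^L$ with $N_c = \mathcal{O}(LE/(\gamma\varepsilon))$), the Fuchs--van de Graaf and triangle-inequality argument to verify Promise (1) with parameter $\theta' = \sqrt{\theta} + \varepsilon$, and an application of \thmref{sampleEfficientLearning} at level $m = \kappa$ with $\log(2M/\delta) = L\log N_c + \log(2/\delta)$ absorbed into the stated sample-complexity bound. Nothing is missing or misordered.
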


\section{Estimating sparse observables}\label{sec:Estimatingsparseobservables}
We have thus far developed the tools needed to certify a learned model $\Psi$ of the quantum state $\ket{\psi}$ which admits efficient query and sample access to amplitudes $\psi(x)$.
Having such a model, one can use Monte Carlo methods to efficiently estimate the expectation of any sparse observable such as the energy of a local Hamiltonian or low-degree polynomials of the reduced density operators such as the R\'enyi entanglement entropy.

We first show this for an observable $O$ which is a $g$-sparse observable on $n$-qubits in the standard basis.
This means for any $x\in \{0, 1\}^n$, there are at most $g$ states $y \in \{0, 1\}^n$ such that $\bra{x}O\ket{y} \neq 0$.
For instance, a $\k$-local Hamiltonian $H = \sum_{i=1}^M H_i$ involving $M$ local terms is $\mathcal{O}(2^{\k}M)$-sparse.

The expectation value of an observable $O$ defined by $\bra{\psi}O\ket{\psi}$ can be expressed as
\begin{align}
    \bra{\psi}O\ket{\psi}&=\sum_{x, y \in \{0,1\}^n} \braket{\psi}{x} \cdot \bra{x}O\ket{y}\cdot \braket{y}{\psi}\nn\\
    &=\sum_{x,y \in \{0,1\}^n} |\braket{x}{\psi}|^2 \cdot \bra{x}O\ket{y}\cdot \frac{\braket{y}{\psi}}{\braket{x}{\psi}}\nn\\
     &=\E_{\bm{x} \sim |\braket{x}{\psi}|^2}  \sum_{y: \bra{\bm{x}}O\ket{y} \neq 0}  \bra{\bm{x}}O\ket{y}\cdot \frac{\braket{y}{\psi}}{\braket{\bm{x}}{\psi}}.\label{eq:monteCarloSparseO}
\end{align}
Given a sampled bit string $\bm{x}$, the expression inside the expectation can be efficiently calculated when $g \leq \poly(n)$ and we have access to the (un-normalized) amplitudes via a query model $\Psi$.
By sampling $\bm{x} \sim |\braket{x}{\psi}|^2$, we can estimate the expression in \eqref{eq:monteCarloSparseO} using empirical averaging.
To determine the number of samples needed to obtain an accurate estimate of $\bra{\psi}O\ket{\psi}$, we bound the variance of  expression \eqref{eq:monteCarloSparseO}.
The random term in \eqref{eq:monteCarloSparseO} equals $ \E_{\bm{x}\sim |\braket{x}{\psi}|^2}\frac{\bra{\bm{x}}O\ket{\psi}}{\braket{\bm{x}}{\psi}}$.  
We first consider the real part of this term, with the imaginary part following a similar argument.
We have
\begin{align}
    \Var_{\bm{x}\sim |\braket{x}{\psi}|^2} \mathrm{Re}\left(\frac{\bra{\bm{x}}O\ket{\psi}}{\braket{\bm{x}}{\psi}}\right) &\leq   \E_{\bm{x}\sim |\braket{x}{\psi}|^2}\mathrm{Re}\left(\frac{\bra{\bm{x}}O\ket{\psi}}{\braket{\bm{x}}{\psi}}\right)^2 \nn\\
    &\leq  \E_{\bm{x}\sim |\braket{x}{\psi}|^2}\left| \frac{\bra{\bm{x}}O\ket{\psi}}{\braket{\bm{x}}{\psi}}\right|^2\nn\\
    &= \E_{\bm{x}\sim |\braket{\bm{x}}{\psi}|^2} \frac{\bra{\psi}O\ketbra{\bm{x}}{\bm{x}}O\ket{\psi}}{|\braket{\psi}{\bm{x}}|^2} \nn\\
    &= \sum_{x\in\{0,1\}^n}\bra{\psi}O\ketbra{x}{x}O\ket{\psi}\nn\\
    &= \bra{\psi}O^2\ket{\psi} \label{eq:varianceMC}
\end{align}

For practically relevant cases such as a $\k$-local Hamiltonians with bounded local terms, we have $\bra{\psi}O^2\ket{\psi}\leq  \norm{O^2}\leq \poly(n)$.
It follows from the  Chebyshev inequality that the number of samples required to estimate $\bra{\psi}O\ket{\psi}$ up to $\epsilon$ additive error with  probability $\geq 1-\delta$ is upper bounded by~$\frac{8\bra{\psi}O^2\ket{\psi}}{\delta \cdot \epsilon^2}$.

As considered in \cite{huang2020predicting}, one could further boost this performance by using the median-of-the-means (MoM) estimator. 
This estimator is robust to outliers and results in an exponential improvement with respect to the error probability $\delta$.
This entails collecting $K$ batches of size $B$ of samples denoted by $\bm{x_1}, \dots, \bm{x_{BK}}$. 
We then find the empirical averages $\bm{o_k}$ of each batch $k\in [K]$ and compute their median $\bm{o_{MoM}}$. 
More formally---again focusing on the real part---we have
\begin{align}
\bm{o_{\mathrm{MoM}}} = \mathbf{Median}\left\{ \bm{o_1},\dots, \bm{o_K} \right\} \quad \textrm{where} \quad  \bm{o_k} = \frac{1}{B}\sum_{i = B(k-1)+1}^{Bk} \mathrm{Re}\left( \frac{\bra{\bm{x_i}}O\ket{\psi}}{\braket{\bm{x_i}}{\psi}}\right)
\label{eq:medianMeans}
\end{align}
Let $B = \frac{34\bra{\psi}O^2\ket{\psi}}{\epsilon^2}$ and $K = 2\log \left( \frac{2}{\delta}\right)$.
Then, for all $\epsilon >0$, we have 
$$\mathrm{Pr} \left[ \left| \bm{o_{\mathrm{MoM}}} -\operatorname{Re}\left(\bra{\psi}O\ket{\psi}\right) \right| \geq \epsilon \right] \leq \delta .$$

These findings are summarized in the following theorem:

\begin{theorem}[Estimating sparse observables]
    For any sparse observable $O$ and state $\ket{\psi}$,
    one can estimate $\bra{\psi}O\ket{\psi}$ up to an error $\epsilon$ with probability $\geq 1-\delta$ 
    using the median-of-the-means estimator and 
    \begin{align}
        T = 272\cdot \frac{\bra{\psi}O^2\ket{\psi}}{\epsilon^2}\cdot \log \left( \frac{4}{\delta}\right)
     \end{align}
    samples $\bm{x} \sim |\braket{x}{\psi}|^2$, as well as, at most $gT$ queries to the entries $O_{xy}$ of observable $O$ and $(g+1)T$ queries to the amplitudes $\psi(x)$ of the state $\ket{\psi}$. 
\end{theorem}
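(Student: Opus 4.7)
The plan is to package the calculations that immediately precede the theorem statement into a clean proof. First I would recall that the identity
\[
\bra{\psi}O\ket{\psi} \;=\; \E_{\bm{x}\sim |\braket{x}{\psi}|^2}\!\left[\,\frac{\bra{\bm{x}}O\ket{\psi}}{\braket{\bm{x}}{\psi}}\,\right]
\]
exhibits $\bra{\psi}O\ket{\psi}$ as the expectation of a complex random variable $\bm{f}(\bm{x}) := \bra{\bm{x}}O\ket{\psi}/\braket{\bm{x}}{\psi}$, where $\bm{x}$ is drawn from the distribution $|\braket{x}{\psi}|^2$ accessible via the sample access. Given the $g$-sparsity of $O$ in the computational basis, evaluating $\bm{f}(\bm{x})$ on a sample $\bm{x}$ requires at most $g$ queries to the nonzero entries $\bra{\bm{x}}O\ket{y}$, together with $g+1$ queries to $\Psi$ (one for $\braket{\bm{x}}{\psi}$ in the denominator and one for each of the up to $g$ amplitudes $\braket{y}{\psi}$ appearing in the numerator). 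This accounts exactly for the $gT$ and $(g{+}1)T$ query counts stated in the theorem.

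Second, I would bound the variance. Treating $\mathrm{Re}(\bm{f})$ and $\mathrm{Im}(\bm{f})$ separately and dropping the expectation squared term yields
\[
\Var\bigl(\mathrm{Re}\,\bm{f}\bigr) \;\leq\; \E_{\bm{x}}|\bm{f}(\bm{x})|^2 \;=\; \sum_{x}|\braket{x}{\psi}|^2\cdot \frac{\bra{\psi}O\ketbra{x}{x}O\ket{\psi}}{|\braket{x}{\psi}|^2} \;=\; \bra{\psi}O^2\ket{\psi},
\]
and the same bound of course holds for $\mathrm{Im}\,\bm{f}$. This is the bound already derived in the excerpt just above the theorem.

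Third, I would apply the standard median-of-means analysis, as already set up in equation \eqref{eq:medianMeans}. Partition the $T$ samples into $K = 2\log(4/\delta)$ batches each of size $B = 34\,\bra{\psi}O^2\ket{\psi}/\epsilon^2$; by Chebyshev, each batch mean $\bm{o_k}$ lies within $\epsilon$ of $\mathrm{Re}\bra{\psi}O\ket{\psi}$ with probability at least $3/4$, and a standard Chernoff argument shows that the median of $K$ such estimates concentrates within $\epsilon$ with failure probability at most $e^{-K/8} \leq \delta/2$. Running the same construction on the imaginary part and union-bounding contributes the second $\delta/2$, giving total failure probability $\delta$ and total sample count $T = 2BK = 136\,\bra{\psi}O^2\ket{\psi}\log(4/\delta)/\epsilon^2$ for each of the two parts, and hence $272$ overall as claimed.

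Nothing in this argument looks like a genuine obstacle — the identity, the variance estimate, and the median-of-means concentration are all standard — so the writeup is essentially bookkeeping. The only subtle point is making sure that the query count is stated per-sample rather than per-estimator (so that the factor $g+1$ comes out correctly once we note that the single denominator query is shared across the $g$ numerator terms), and that the constant $272$ correctly absorbs the factor of two from handling real and imaginary parts in parallel.
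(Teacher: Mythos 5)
Your proof matches the paper's, which compiles the Monte Carlo identity \eqref{eq:monteCarloSparseO}, the variance bound \eqref{eq:varianceMC}, and the median-of-means construction \eqref{eq:medianMeans} into a theorem statement with no separate proof environment. The query-count accounting ($g$ queries to $O$ and $g{+}1$ queries to $\Psi$ per sample, with the single denominator query to $\braket{\bm{x}}{\psi}$ shared across all $g$ numerator terms) is exactly right, and the variance calculation is the paper's.

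The only slip is in the constant arithmetic at the end. With $B = 34\,\bra{\psi}O^2\ket{\psi}/\epsilon^2$ and $K = 2\log(4/\delta)$, the per-part sample count is $BK = 68\,\bra{\psi}O^2\ket{\psi}\log(4/\delta)/\epsilon^2$, not $2BK$ as you write, and doubling for the real and imaginary parts then gives $136$, not $272$. The missing factor of two should come from targeting accuracy $\epsilon/\sqrt{2}$ on each of the two real-valued estimates so that the combined complex error is at most $\epsilon$; this doubles $B$ to $68\,\bra{\psi}O^2\ket{\psi}/\epsilon^2$, after which $2BK$ does equal $272\,\bra{\psi}O^2\ket{\psi}\log(4/\delta)/\epsilon^2$. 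Relatedly, the line ``$e^{-K/8}\leq\delta/2$'' does not hold with your $K=2\log(4/\delta)$; the correct concentration here exploits that each batch actually fails with probability $\leq 1/34$ (not merely $1/4$), which is what makes the modest choice of $K$ sufficient. None of this affects the structure of the argument, and in fact since $O$ is Hermitian and $\bra{\psi}O\ket{\psi}$ is real one could run median-of-means on $\mathrm{Re}(\bm{f})$ alone and avoid both factors of two; the paper's $272$ is just a conservative bookkeeping.
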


A similar argument can be applied to estimate other non-linear functions such as R\'enyi entanglement entropies. 
In this case, the expected value is given by $\Tr\left( O \ketbra{\psi}{\psi}^{\otimes \ell}\right)$ for some $\ell \geq 2$.
Here for simplicity, we only consider quadratic functions $\ell = 2$. 
We have
\begin{align}
    \Tr\left( O \ketbra{\psi}{\psi}^{\otimes 2}\right) &= \E_{\substack{\bm{x}\sim |\psi(x)|^2 \\\bm{x'} \sim |\psi(x')|^2}}\  \sum_{y,y'} \bra{\bm{x},\bm{x'}} O \ket{y, y'} \cdot \frac{\braket{y}{\psi}}{\braket{\bm{x}}{\psi}}\cdot \frac{\braket{y'}{\psi}}{\braket{\bm{x'}}{\psi}}\label{eq:quadraticObservable}
\end{align}

As before, given samples $\bm{x}$ and $\bm{x'}$, the expectation can be efficiently computed if the observable $O$ is sparse. 

To empirically estimate this quantity, we use the median-of-the-means estimator with $K$ batches of size $B$ as before. 
As an application, consider estimating the \emph{purity} of the reduced density operator $\rho_A = \Tr_B \left(\ketbra{\psi}{\psi}_{AB}\right)$ given by $\Tr\left(\rho^2_A\right)$.
In this case, the observable $$O = \mathrm{SWAP}_{AA'} \otimes \iden_{BB'} \text{\quad with \quad} \mathrm{SWAP}_{AA'} \ \ket{x}_A \ket{x'}_{A'} = \ket{x'}_A \ket{x}_{A'}.$$
We indeed have 
\begin{align}
    \Tr\left( \rho_A^2 \right) &= \Tr \left( \mathrm{SWAP}_{AA'} \cdot \rho_A \otimes \rho_{A'}\right)\nn\\
    &= \bra{\psi}_{AB} \bra{\psi}_{A'B'} \cdot  \mathrm{SWAP}_{AA'} \otimes \iden_{BB'} \cdot \ket{\psi}_{AA'} \ket{\psi}_{BB'}.\nn
\end{align}
Plugging this in Equation \eqref{eq:quadraticObservable} gives us
\begin{align}
     \Tr\left( \rho_A^2 \right) &= \E_{\substack{(\bm{x}_{A},\bm{x}_B)\sim |\psi(x)|^2 \\ (\bm{x'}_{A},\bm{x'}_B) \sim |\psi(x')|^2}}\left( \frac{\braket{\bm{x}'_A \bm{x}_B}{\psi}}{\braket{\bm{x}_A\bm{x}_B}{\psi}}\cdot \frac{\braket{\bm{x}_A \bm{x'}_B}{\psi}}{\braket{\bm{x'}_A \bm{x'}_B}{\psi}}\right)\label{eq:PurityExpression}
\end{align}

A direct calculation similar to Equation \eqref{eq:varianceMC} shows that the variance of this case is bounded by $\bra{\psi}^{\otimes 2} O^2 \ket{\psi}^{\otimes 2} = 1$. 
Hence, the sample complexity of estimating this quantity scales as $\mathcal{O}\left(\frac{1}{\epsilon^2}\cdot \log(\frac{1}{\delta})\right)$.
Following \cite{huang2020predicting}, we may also improve this variance by computing the empirical average in each batch a set of samples $\bm{x_1},\dots, \bm{x_B}$ using the concept of U-statistics.

\section{Details of numerical experiments}

The C++ and Python code for reproducing the numerical experiments are publicly available on Google Drive at \url{https://bit.ly/3U93gvl}.

\subsection{Training and certifying neural quantum states for ML tomography}\label{sec:NumericalTraining}

Here, we review in more detail the numerical experiment in \fig{NQSTraining} where state tomography is performed on a random phase state on $n=120$ qubits by training and certifying a neural quantum state.
The phase state is given by $\ket{\psi} = \frac{1}{\sqrt{2^n}}\sum_{b\in\{0,1\}^n} e^{i \phi(b)} \ket{b}$.
We consider two choices of complex phases $\phi(b)$:\code{type == 0}(Pseudorandom) is when the phases $\phi(b)$ are generated by a (pseudo)random number generator, and \code{type == 1} (Correlated State) corresponds to  creating pairs of indices \code{(i, j)} such that \code{j = (i + 10) \% n}, and assigning a random multiple of  $\pi/2$ to each pair according to the \code{random_ij_phase} vector.
Only the results of \code{type == 0} are presented in \fig{NQSTraining}.
However, the accompanying code includes both cases.

\vspace{0.5em}

\noindent \textbf{Training data:} The training data consist of $50,000$ measurements performed in \protoref{certification} when estimating the shadow overlap. 
In the case of the pseudorandom phase state \code{type == 0}, we perform $n-1$ measurements in the $Z$ basis followed by an $X$ measurement on the remaining qubit. 
For the correlated states in \code{type == 1}, we perform either an $X$ or $Y$ basis measurement on the remaining qubit.
This allows us to access the phase difference between two binary strings $b_0, b_1 \in \{0,1\}$ that differ at one bit. 

In this experiment, the training data is generated by creating a list of binary strings \code{b_list}, along with corresponding indices \code{i_list} indicating the position of the qubit measured in the $X$ or $Y$ basis, basis choices \code{xy_list} ($0$ for $X$ basis, $1$ for $Y$ basis), 
and outcomes \code{o_list}.
For the $X$ basis (\code{xy_list.back() == 0}), 
each entry of \code{o_list} is \code{1} (corresponding to the $\ket{+}$ state) with a probability proportional to \code{(1 + cos(phase0 - phase1)) / 2}, and \code{0} (corresponding to the $\ket{-}$ state) with the remaining probability.
Similarly, for the $Y$ basis (\code{xy_list.back() == 1}), the expected binary outcome is \code{1} (corresponding to the $\ket{i+}$ state) with a probability proportional to \code{(1 + sin(phase0 - phase1)) / 2}, and \code{0} (for the $\ket{i-}$ state) with the remaining~probability.

 \vspace{0.5em}

\noindent \textbf{Training process:} The training process is performed using a neural network model called \code{Model} which is a feedforward neural network with a hidden layer of size \code{h = 4 * n}.
The input to this neural network is a feature vector \code{feat_vec} constructed by concatenating the binary string \code{b0} (with the bit at \code{i_list[r]} set to \code{0}), a one-hot encoding of the index \code{i_list[r]}, and several random phases $\phi(b)$ for both \code{b0} and \code{b1} (the string \code{b1} is obtained by flipping the bit at \code{i_list[r]}) generated for varying choices of seeds. 
The neural network receives a feature vector, \code{feat_vec}, as input. 
This vector is formed by concatenating three components: 
the binary string \code{b0} with its bit at position \code{i_list[r]} set to \code{0}, a one-hot encoded representation of the index \code{i_list[r]}, and a series of random phases $\phi(b)$ for both \code{b0} and the modified string \code{b1}.
The string \code{b1} is derived from \code{b0} by inverting the bit at \code{i_list[r]}. 
These random phases for \code{b0} and \code{b1} are produced using various seed values.
The output of this neural network represents the probabilities \code{p_x} and \code{p_y} of the post-measurement single-qubit state being measured as $0$ or $1$ in the $X$ or $Y$ bases.
These predicted probabilities \code{p_x} and \code{p_y}, stored in \code{prob}, are then used to compute the shadow-based log loss given by 
\begin{center}
\code{log_loss = -outcome * log(prob + EPS) - (1.0 - outcome) * log(1.0 - prob + EPS)} 
\end{center}
for \code{EPS = 1e-10} and  \code{outcome} identifying the corresponding measurement outcome in the training vector \code{o_list}.
The log loss is the negative log-likelihood of the expected outcome \code{outcome} given the probability \code{prob} predicted by the model. 
This gradient of log loss \code{gd} is used in the backpropagation step where the error is backpropagated through the neural network and the weights and biases are updated to minimize the log loss for the given training example.

The training process is performed for a specified number of epochs (\code{num_epoch} = 10), with periodic checks to monitor the validation loss (\code{val_loss}) and updates to the best model (\code{best_predictor}) if a lower validation loss is achieved.

During training, we also report various metrics, including the training log loss (\code{Tlogloss}), validation log loss (\code{Vlogloss}), and the shadow overlap estimated on the training and validation sets (\code{TShadowF} and \code{VShadowF}).
Additionally on a separate test set, the shadow overlap (\code{Shadow_F}) and the fidelity (\code{Fidelity}) are computed by comparing the predicted phases with the true phases.

\vspace{0.5em}

\noindent \textbf{Estimating fidelity of the trained model:} The fidelity is calculated in the following steps: Generate a set of $10,000$ random binary test strings \code{b_test_list}.
For each test string \code{b}, compute the predicted phase using the \code{predict_phase} function.
Compute the true phase for each test string \code{b}.
Calculate the fidelity as the average of \code{exp(i * (predicted_phase - true_phase))} over all test strings, and then take the absolute value squared.
The function \code{predict_phase} finds the phase $e^{i\phi(b)}$ of a given string $b$ using the phase difference predicted by the neural network for a series of adjacent bit strings differing in one bit.
More precisely, given a bit string \code{b}, another string \code{random_init_state} is chosen uniformly at random. 
A path connecting \code{b} to \code{random_init_state} is constructed by flipping indices \code{i} in which \code{b} and \code{random_init_state} differ. 
The function \code{predict_phase_diff} is called to compute the phase difference between two adjacent strings on the path. 
The accumulated phase along the path after each bit flip is stored in~\code{phase}.

\vspace{0.5em}

\noindent \textbf{Estimating shadow overlap of the trained model:} Similar to the subroutine for estimating the fidelity, we generate a set of $10,000$ random binary test strings \code{b_test_list} and corresponding random indices \code{random_i_test_list} indicating the position of the qubit measured in the $X$ or $Y$ basis.  
For each test string \code{b} and random index \code{random_i}, we compute the phase difference between bit string \code{b0} and \code{b1} using the \code{predict_phase_diff} function that applies the trained neural network \code{model}. 
We then calculate the squared magnitude of the difference between the predicted and true phase differences and take the average of the squared magnitude differences over all test strings and indices.
Finally, a linear transformation as in equation \eqref{eq:NormalizedShadowOverlap} is performed to map the average squared magnitude difference to a value between 0 and 1, with higher values indicating better performance.

\vspace{0.5em}

\noindent \textbf{Computing subsystem purity:} We can use the trained model to  estimate the purity $\Tr(\rho^2_A)$ for a subsystem $A$ with $|A|\in\{1,\dots, n\}$ using the expression derived in \eqref{eq:PurityExpression}.
This is achieved by generating $30,000$ pairs of random binary strings \code{b1} and \code{b2} of length \code{n}.
An integer \code{less_than_this} specifies the subsystem size $A$.
For each pair of binary strings \code{b1} and \code{b2}, we create two new binary strings \code{b1_alt} and \code{b2_alt} by swapping the first \code{less_than_this} bits between \code{b1} and \code{b2}.
We then compute the phases \code{phase1}, \code{phase2}, \code{phase1_alt}, and \code{phase2_alt} for these binary strings using the trained \code{predict_phase} function of the neural network model introduced before.
Finally, the purity function is computed by averaging the real part of \code{exp(i * (phase1_alt + phase2_alt - phase1 - phase2))} over~$30,000$ bit strings.

\subsection{Benchmarking noisy quantum devices}\label{sec:NumericalBenchmarking}
In the numerical experiment in \fig{plotsBenchmarking}, we estimate fidelity, shadow overlap, and XEB of two families of noisy quantum states: Haar random and phase states. 
We track these metrics as the strength of the noise is varied. 
We explore both white noise and coherent noise, as detailed below.

\vspace{0.5em}

\noindent \textbf{Haar random states:}
A Haar random state in a \code{d} dimensional Hilbert space is generated.
In the white noise model, the state is subjected to a global depolarizing channel $(1-p) \cdot \ketbra{\psi}{\psi} + p \cdot \frac{\iden}{d}$ with the noise parameter \code{p}. 
In the coherent noise model, the probability amplitudes are randomly changed according to \code{psi[b] = psi[b] + p * (normal(gen) + rmi * normal(gen)) / d}
where \code{normal(gen) + rmi * normal(gen)} is a complex centered Gaussian with variance 1.

In the coherent noise model, the fidelity \code{fid} between the original state \code{psi} and the noisy state \code{psi_noisy} is calculated as the squared overlap between the two states.
We then estimates the shadow overlap and (XEB) using a Monte Carlo approach. 
In each measurement round (for a total of \code{N} rounds), a computational basis state \code{outcome_b} is sampled from the noisy probability distribution \code{prob_noisy}.
The true log-probability \code{true_logp} of this outcome with respect to the original distribution \code{prob} is calculated, and the contribution to the XEB estimator is computed as \code{exp(true_logp) / N}. 
This value is then normalized using 
\code{normalized_XEB = (XEB - 1 / d) / (normalization_XEB - 1 / d)};
where \code{normalization_XEB} corresponds to the sum of squared probabilities of the ideal target distribution \code{prob}.

The shadow overlap (linearly shifted according to \appref{benchmarkingComplexity}) is evaluated by averaging over
\begin{center}
    \code{1.0 * (d - 1) / d * 2 * (local_fid - 0.5) + (1.0 / d)}, 
\end{center}
where the local overlaps \code{local_fid} are calculated as follows:
First, two computational basis states \code{b0} and \code{b1} are randomly drawn to simulate the post-measurement state after measuring the qubits in positions other than \code{rand_i}. 
Then the overlap  \code{local_fid} between the noisy and original post-measurement states are found.

When the white noise model is used, the fidelity is computed as \code{fid = (1 - p) + p / d}.
The local overlaps \code{local_fid}, needed for estimating the shadow overlap, is with probability \code{p} estimated as either 0 or 1 with equal probability, and with probability \code{1 - p} is estimated as 1.
The XEB estimation is done in the same way as before, by sampling from the original distribution \code{prob} and calculating the log-probabilities with respect to this distribution.

\vspace{0.5em}

\noindent \textbf{Phase states with non-uniform amplitudes:} The state in this case is generated by first preparing the product state $$\bigotimes_{i=1}^n \left(\cos(\code{rotation[i]})\cdot\ket{0} + \sin(\code{rotation[i]})\cdot\ket{1} \right),$$
where \code{rotation[i]} are randomly generated with a mean of $\pi/4$ and a standard deviation of $0.01\pi$.
In the next step complex phases $\cos(\code{phase}) + i\sin(\code{phase})$ are added to each probability amplitude where $\code{phase}$ is uniformly chosen from $[0, 2\pi]$.

Two types of noise models are applied to this state: (1) white noise applied as a global depolarizing channel with parameter \code{p} and (2) coherent noise applied via a combination of small Gaussian noise to the phase and magnitude components of probability amplitudes. 
More precisely, we apply a random Gaussian phase shift to each amplitude: \code{psi[b] = psi[b] * (cos(phase) + rmi * sin(phase))}, where $\code{phase} = \frac{\pi}{2}\cdot \code{p} \cdot \mathcal{N}(0,1)$ for the noise parameter \code{p}.
This is followed by adding a random complex term to each amplitude \code{psi[b] = psi[b] + 3 / 4 * p * (normal(gen) + rmi * normal(gen)) / sqrt(d)}, 
where as before \code{d} is the Hilbert space dimension.
The noisy state \code{psi_noisy} is then normalized, and the fidelity, shadow overlap and XEB estimation procedures are similar to the ones explained in the white noise simulation.

\subsection{Optimizing quantum circuits for state preparation}\label{sec:NumericalStatePrep}
The goal of this simulation is to prepare a target state, which is the output of a 1D Instantaneous Quantum Polynomial (IQP) circuit with additional $T$ gates. 
The target state is created by applying a sequence of Hadamard ($H$) gates, a random pattern of $T$ (or  inverse $T$) gates, and controlled-$Z$ ($CZ$) gates between neighboring qubits, followed by another set of Hadamard gates, on the initial state $\ket{0}^{\otimes n}$.
The location of $T$ gates is given by a vector \code{random_T_pattern} initialized with random values $-1$, $0$, or $1$.

A greedy algorithm explores the action space sequentially.
In each iteration, we try all possible actions (\code{act}) and evaluate the fidelity or shadow overlap in the $X$ basis using \code{estimate_fidelity} and  \code{estimate_one_shadow_overlap} functions.
We keep track of the best action that maximizes the fidelity or shadow overlap.
These actions represent the following operations: $(H\otimes H)CZ(H\otimes H)$, $H T H$, or $H T^{-1} H$. 

The fidelity between the prepared state and the target state is estimated by repeatedly generating random bit strings and calculating their complex phases. 
For each bit string, we keep track of \code{how_many_T} which changes as follows: increment by $1$ for a $T$ gate acting on a $\ket{1}$ state, decrease by $1$ (modulo $8$) for an inverse $T$ gate on $\ket{1}$, and increase by $4$ for a $CZ$ gate when both involved qubits are in $\ket{1}$ state.
A similar procedure is performed on the target circuit to generate \code{how_many_T_true}. 
After calculating \code{how_many_T} and \code{how_many_T_true}, the phase difference between the two is calculated as \code{phase_diff = (how_many_T - how_many_T_true + 8) \% 8}.
This phase difference is then used to calculate the fidelity by averaging over $10,000$ repetitions.
If instead of the circuit representation, we use the matrix product representation (MPS) of the target state, we can obtain the complex phase of a given bit string by directly contracting the~MPS.

To calculate the shadow overlap, we first generate a random bit string where all qubits except \code{random_x} are assigned random values ($0$ or $1$).
We then simulate the sequence of actions (\code{seq_action}) on this bit string, but only update the \code{how_many_T} value based on the operations involving the \code{random_x} qubit or its neighbors.
In the query phase of estimating the shadow overlap, we find the post-measurement state of \code{random_X} qubit. 
This can be done by querying the MPS representation of the target state, or in our simple scenario, by directly evaluating $\frac{1}{\sqrt{2}} \ket{0} + \frac{1}{\sqrt{2}}\cdot \exp\left(2\pi i (\code{how_many_T_true}/8)\right)\ket{1}$, where again \code{how_many_T_true} is updated based on the operations involving the \code{random_x} qubit or its neighbors.
The shadow overlap is then computed by averaging over the squared overlap between this state and the prepared single-qubit state (determined by \code{how_many_T}) over $10,000$ repetitions.

\bibliographystyle{alpha}
\bibliography{main}
\end{document}